\newcommand{\A}{\ensuremath{\mathbf{A}}}
\newcommand{\B}{\ensuremath{\mathbf{B}}}
\newcommand{\C}{\ensuremath{\mathbf{C}}}
\newcommand{\D}{\ensuremath{\mathbf{D}}}
\newcommand{\E}{\ensuremath{\mathbf{E}}}
\newcommand{\F}{\ensuremath{\mathbf{F}}}
\newcommand{\G}{\ensuremath{\mathbf{G}}}
\newcommand{\I}{\ensuremath{\mathbf{I}}}
\newcommand{\LL}{\ensuremath{\mathbf{L}}}
\newcommand{\M}{\ensuremath{\mathbf{M}}}
\newcommand{\N}{\ensuremath{\mathbf{N}}}
\newcommand{\PP}{\ensuremath{\mathbf{P}}}
\newcommand{\RR}{\ensuremath{\mathbf{R}}}
\renewcommand{\SS}{\ensuremath{\mathbf{S}}}
\newcommand{\U}{\ensuremath{\mathbf{U}}}
\newcommand{\V}{\ensuremath{\mathbf{V}}}
\newcommand{\W}{\ensuremath{\mathbf{W}}}
\newcommand{\X}{\ensuremath{\mathbf{X}}}
\newcommand{\Y}{\ensuremath{\mathbf{Y}}}
\renewcommand{\b}{\ensuremath{\mathbf{b}}}
\renewcommand{\c}{\ensuremath{\mathbf{c}}}
\newcommand{\dd}{\ensuremath{\mathbf{d}}}
\newcommand{\f}{\ensuremath{\mathbf{f}}}
\newcommand{\bk}{\ensuremath{\mathbf{k}}}
\newcommand{\m}{\ensuremath{\mathbf{m}}}
\newcommand{\n}{\ensuremath{\mathbf{n}}}
\newcommand{\sss}{\ensuremath{\mathbf{s}}}  
\newcommand{\uu}{\ensuremath{\mathbf{u}}}
\renewcommand{\v}{\ensuremath{\mathbf{v}}}
\newcommand{\w}{\ensuremath{\mathbf{w}}}
\newcommand{\x}{\ensuremath{\mathbf{x}}}
\newcommand{\y}{\ensuremath{\mathbf{y}}}
\newcommand{\z}{\ensuremath{\mathbf{z}}}
\newcommand{\0}{\ensuremath{\mathbf{0}}}
\newcommand{\1}{\ensuremath{\mathbf{1}}}
\newcommand{\bkappa}{\ensuremath{\boldsymbol{\kappa}}}
\newcommand{\blambda}{\ensuremath{\boldsymbol{\lambda}}}
\newcommand{\bmu}{\ensuremath{\boldsymbol{\mu}}}
\newcommand{\bLambda}{\ensuremath{\boldsymbol{\Lambda}}}
\newcommand{\bSigma}{\ensuremath{\boldsymbol{\Sigma}}}
\newcommand{\bbR}{\ensuremath{\mathbb{R}}}
\newcommand{\bbZ}{\ensuremath{\mathbb{Z}}}
\newcommand{\calD}{\ensuremath{\mathcal{D}}}
\newcommand{\calE}{\ensuremath{\mathcal{E}}}
\newcommand{\calF}{\ensuremath{\mathcal{F}}}
\newcommand{\calL}{\ensuremath{\mathcal{L}}}
\newcommand{\calN}{\ensuremath{\mathcal{N}}}
\newcommand{\calO}{\ensuremath{\mathcal{O}}}
\newcommand{\bydef}{\stackrel{\mathrm{\scriptscriptstyle def}}{=}}
\newcommand{\abs}[1]{\left\lvert#1\right\rvert}
\newcommand{\norm}[1]{\left\lVert#1\right\rVert}
\newcommand{\leftexp}[2]{{\vphantom{#2}}^{#1}{#2}}
\newcommand{\caja}[4][1]{{%
    \renewcommand{\arraystretch}{#1}%
    \begin{tabular}[#2]{@{}#3@{}}%
      #4%
    \end{tabular}%
    }}
\newcommand{\Volume}{\operatorname{vol}}
\newcommand{\volume}[1]{\ensuremath{\Volume\left(#1\right)}}
\newcommand{\sgnop}{\operatorname{sgn}}
\newcommand{\sgn}[1]{\ensuremath{\sgnop\left(#1\right)}}
\newcommand{\im}{\operatorname{im}}
\newcommand{\diagop}{\operatorname{diag}}
\newcommand{\diag}[1]{\ensuremath{\diagop\left(#1\right)}}
\newcommand{\traceop}{\operatorname{tr}}
\newcommand{\trace}[1]{\ensuremath{\traceop\left(#1\right)}}
\newcommand{\rankop}{\operatorname{rank}}
\newcommand{\rank}[1]{\ensuremath{\rankop\left(#1\right)}}
\newcommand{\vectop}{\operatorname{vec}}
\newcommand{\vect}[1]{\ensuremath{\vectop\left(#1\right)}}
\theoremstyle{plain}
\newtheorem{thm}{Theorem}[section]
\newtheorem*{lemma*}{Lemma}
\newtheorem{prop}[thm]{Proposition}
\newtheorem*{prop*}{Proposition}
\newtheorem{cor}[thm]{Corollary}
\theoremstyle{definition}
\newtheorem{defn}{Definition}[section]
\newtheorem*{defn*}{Definition}
\newtheorem*{exmp*}{Example}
\newtheorem*{conj*}{Conjecture}
\newtheorem{rmk}[thm]{Remark}
\theoremstyle{remark}
\newtheorem*{rmk*}{Remark}
\title{Generalised elastic nets\thanks{This paper was written on August 14th, 2003 and has not been updated since then.}}
\author{Miguel {\'A}.\ Carreira-Perpi{\~n}{\'a}n%
  \thanks{Current address: Dept.\ of Computer Science, University of Toronto. 6 King's College Road. Toronto, ON M5S 3H5, Canada. Email: \texttt{miguel@cs.toronto.edu}.}
  \quad \& \quad Geoffrey J.\ Goodhill \\
  Department of Neuroscience \\
  Georgetown University Medical Center \\
  3900 Reservoir Road NW, Washington, DC 20007, US \\
  \texttt{miguel@cns.georgetown.edu}, \texttt{geoff@georgetown.edu}}
\date{}
\begin{document}

\maketitle

\begin{abstract}
The elastic net was introduced as a heuristic algorithm for
combinatorial optimisation and has been applied, among other problems,
to biological modelling. It has an energy function which trades off a
fitness term against a tension term. In the original formulation of
the algorithm the tension term was implicitly based on a first-order
derivative. In this paper we generalise the elastic net model to an
arbitrary quadratic tension term, e.g.\ derived from a discretised
differential operator, and give an efficient learning algorithm. We
refer to these as generalised elastic nets (GENs). We
give a theoretical analysis of the tension term for 1D nets with
periodic boundary conditions, and show that the model is sensitive to
the choice of finite difference scheme that represents the discretised
derivative. We illustrate some of these issues in the context of
cortical map models, by relating the choice of tension term to a
cortical interaction function. In particular, we prove that this
interaction takes the form of a Mexican hat for the original elastic
net, and of progressively more oscillatory Mexican hats for
higher-order derivatives. The results apply not only to generalised
elastic nets but also to other methods using discrete differential
penalties, and are expected to be useful in other areas, such as
data analysis, computer graphics and optimisation problems.
\end{abstract}

The elastic net was first proposed as a method to obtain good
solutions to the travelling salesman problem (TSP;
\citealp{DurbinWillsha87a}) and was subsequently also found to be a
very successful cortical map model
\citep{DurbinMitchis90a,GoodhilWillsh90a,Erwin_95a,Swindal96a,WolfGeisel98a}. Essentially, it trades
off the desire to represent a set of data (cities in the TSP, feature
preferences in cortical maps) with the desire for this representation
to be smooth in the sense of minimising the sum of squared distances
of neighbouring centroids. The relative success of the elastic net in
these applications, its flavour of wirelength minimisation, and its
ease of implementation, are probable reasons why there has so far been
little attempt to generalise the model beyond its original
formulation, specifically in terms of using more complex tension
terms. In the TSP context this is perhaps understandable in that the
original tension term closely approximates the true cost, the sum of
nonsquared distances (some investigations of the effect of using
alternative exponents for the distance function have been performed;
\citealp{DurbinMitchis90a,Yuille_96a}). However, in the context of
cortical map modelling the tension term represents an abstraction of
lateral connections, whose functional form is unknown
biologically. Although the sum-of-square-distances form was motivated
as wirelength of neuronal connections, assumed large between neurons
having very dissimilar stimulus preferences \citep{DurbinMitchis90a},
this begs the question of examining other types of tension terms to
see what difference, if any, they make, as well as to relate the
tension term with biologically significant parameters, such as a
description of the intracortical connectivity
pattern. \citet{Dayan93a} (see also \citealp{Yuille90a,Yuille_96a})
was the first to suggest this and considered the relation of quadratic
tension terms---just like those considered here---with other cortical
map models. In this paper we investigate generalized tension terms in
more detail. Although our motivation is primarily biological, the
extended model is also expected to be generally useful in areas such
as computer vision, computer graphics, image processing, unsupervised
learning and TSP-type problems.

The paper consists of three parts. In the first (sections~\ref{s:GEN}--\ref{s:ann}) we define the
generalised elastic net (GEN) via the introduction of a quadratic penalty
(tension term) in the energy of the net and give efficient learning
algorithms for it. In the second (sections~\ref{s:S}--\ref{s:tension}) we consider a class of quadratic
penalties defined as discretised differential operators, and give a
theoretical analysis that is applicable not only to generalised
elastic nets but also to any model with a Gaussian prior. In the
third part (section~\ref{s:cmap}) we demonstrate the results of the previous parts in the
problem of cortical map modelling and report additional results via
simulation.

Throughout the paper we will often refer to the ``original elastic net'' meaning the original formulation of the elastic net by \citet{DurbinWillsha87a}, with its sum-of-square-distances tension term. The equations for the original elastic net are given in section~\ref{s:origEN}.

\section{Probabilistic formulation of the generalised elastic net (GEN)}
\label{s:GEN}

Given a collection of centroids $\{\y_m\}^M_{m=1} \subset \bbR^D$ that we express as a $D \times M$ matrix $\Y \bydef (\y_1,\dots,\y_M)$ and a scale, or variance, parameter $\sigma \in \bbR^+$, consider a Gaussian-mixture density $p(\x|\Y;\sigma) \bydef \sum^M_{m=1}{\frac{1}{M} p(\x|m)}$ with $M$ components, where $\x|m \sim \calN(\y_m,\sigma^2\I_D)$, i.e., all covariance matrices are equal and isotropic. Consider a Gaussian prior on the centroids $p(\Y;\beta) \propto e^{-\frac{\beta}{2} \trace{\Y^T\Y\SS}}$ where $\beta$ is a regularisation, or inverse variance, parameter and \SS\ is a positive definite or positive semidefinite $M \times M$ matrix---in the latter case the prior being improper. The normalisation constant of this density will not be relevant for our purposes and is given in appendix~\ref{s:density-norm-cte}.

We define a \emph{generalised elastic net (GEN)} by $p(\x|\Y;\sigma) p(\Y;\beta)$. The original elastic net of \citet{DurbinWillsha87a} and \citet{Durbin_89a} is recovered for a particular choice of the matrix \SS\ (section~\ref{s:origEN}). It is also possible to define a prior over the scale $\sigma$, but we will not do so here, since $\sigma$ will play the role of the temperature in a deterministic annealing algorithm.

Without the prior over centroids, the centroids could be permuted at will with no change in the model, since the variable $m$ is just an index. The prior can be used to convey the topologic (dimension and shape) and geometric (e.g.\ curvature) structure of a manifold implicitly defined by the centroids (as if the centroids were a sample from a continuous latent variable model; \citealp{Carreir01a}). This prior can also be seen as a Gaussian process prior (e.g.\ \citealp{Bishop_98b}, p.~219), where our matrix \SS\ is the inverse of the Gaussian process covariance matrix. The semantics of \SS\ is not necessary to develop a learning algorithm and so its study is postponed to section~\ref{s:S}. However, it will be convenient to keep in mind that \SS\ will be typically derived from a discretised derivative based on a finite difference scheme (or \emph{stencil}) and that \SS\ will be a sparse matrix.

\section{Parameter estimation with annealing}
\label{s:ann}

From a statistical learning point of view, one might wish to find the values of the parameters \Y\ and $\sigma$ that maximise the objective function
\begin{equation}
  \label{e:ann:MAP}
  \log{p(\Y,\sigma|\X)} = \log{p(\X|\Y,\sigma)} + \log{p(\Y)} - \log{p(\X)}
\end{equation}
given a training set $\{\x_n\}^N_{n=1} \subset \bbR^D$ expressed as a $D \times N$ matrix $\X \bydef (\x_1,\dots,\x_n)$---that is, maximum-a-posteriori (MAP) estimation. For iid data and ignoring a term independent of \Y\ and $\sigma$, this equation reduces to:
\begin{equation}
  \label{e:ann:MAP2}
  \calL_{\text{MAP}}(\Y,\sigma) = \sum^N_{n=1}{ \log{ \sum^M_{m=1}{e^{-\frac{1}{2}\norm{\frac{\x_n-\y_m}{\sigma}}^2}} } } - ND\log{\sigma} - \frac{\beta}{2} \trace{\Y^T\Y\SS}.
\end{equation}
However, as in \citet{DurbinWillsha87a} and \citet{Durbin_89a}, we are more interested in deterministic annealing algorithms that minimise the energy function
\begin{equation}
  \label{e:ann:E}
  E(\Y,\sigma) \bydef \underbrace{\rule[-3ex]{0pt}{0pt}-\alpha\sigma\sum^N_{n=1}{ \log{ \sum^M_{m=1}{e^{-\frac{1}{2}\norm{\frac{\x_n-\y_m}{\sigma}}^2}} } }}_{\text{fitness term}} + \underbrace{\rule[-3ex]{0pt}{0pt}\frac{\beta}{2} \trace{\Y^T\Y\SS}}_{\text{tension term}}
\end{equation}
over \Y\ alone, starting with a large $\sigma$ (for which the tension term dominates) and tracking the minimum to a small value of $\sigma$ (for which the fitness term dominates). This is so because (1) one can find good solutions to combinatorial optimisation problems such as the TSP (which require $\sigma \rightarrow 0$) and to dimension-reduction problems such as cortical map modelling (which do not require $\sigma \rightarrow 0$); and (2) if considered as a dynamical system for a continuous latent space, the evolution of the net as a function of $\sigma$ and the iteration index may model the temporal evolution of cortical maps. To attain good generalisation to unseen data, the parameter $\beta$ can be considered a hyperparameter and one can look for an optimal value for it given training data, by cross-validation, Bayesian inference (e.g.\ \citealp{Utsugi97a}) or some other means. However, in this paper we will be interested in investigating the behaviour of the model for a range of $\beta$ values, rather than fixing $\beta$ according to some criterion.

We will call the $\alpha$\nobreakdash-term the \emph{fitness term}, arising from the Gaussian mixture $p(\X|\Y,\sigma)$, and the $\beta$\nobreakdash-term the \emph{tension term}, arising from the prior $p(\Y)$.

Eq.~\eqref{e:ann:E} differs from the MAP objective function of eq.~\eqref{e:ann:MAP2} in an immaterial change of sign, in the deletion of a now constant term $N D \log{\sigma}$, and in the multiplication of the fitness term by $\alpha\sigma$, where $\alpha$ is a positive constant. The latter reduces the influence of the fitness term with respect to the tension term as $\sigma$ decreases. Since $\alpha$ can be absorbed in $\beta$, we will take $\alpha = 1$ hereafter. However, it can be useful to have a different $\alpha_n$ for each training point $\x_n$ in order to simulate a training set that overrepresents some data points over others (see section~\ref{s:alpha_n}).

We derive three iterative algorithms for minimising $E$ (gradient descent, matrix iteration and Cholesky factorisation), all based on the gradient of $E$:
\begin{equation}
  \label{e:ann:grad}
  \frac{\partial E}{\partial \Y} = - \frac{\alpha}{\sigma} (\X\W - \Y\G) + \beta \Y \left( \frac{\SS+\SS^T}{2} \right)
\end{equation}
where we define the weight matrix $\W_{N\times M} = (w_{nm})$ and the invertible diagonal matrix $\G_{M\times M} = \diag{g_m}$ as
\begin{equation*}
  w_{nm} \bydef \frac{e^{-\frac{1}{2}\norm{\frac{\x_n-\y_m}{\sigma}}^2}}{\sum^M_{m'=1}{e^{-\frac{1}{2}\norm{\frac{\x_n-\y_{m'}}{\sigma}}^2}}} \qquad \qquad g_m \bydef \sum^N_{n=1}{w_{nm}}.
\end{equation*}
The weight $w_{nm}$ is also the responsibility $p(m|\x_n)$ of centroid $\bmu_m$ for generating point $\x_n$, and so $g_m$ is the total responsibility of centroid $\bmu_m$ (or the average number of training points assigned to it). The matrix $\X\W$ is then a list of average centroids. All three algorithms, particularly the matrix-iteration and Cholesky-factorisation ones, benefit considerably from the fact that the matrix \SS\ will typically be sparse (with a banded or block-banded structure).

\subsection{Gradient descent}

We simply iterate $\Y^{(\tau+1)} \bydef \Y^{(\tau)} + \Delta\Y^{(\tau)}$ with $\Delta\Y^{(\tau)} \bydef - \sigma \left.\frac{\partial E}{\partial \Y}\right|_{\Y^{(\tau)}}$. However, this only converges if the ratio $\beta/\alpha$ is small (for large problems, it needs to be very small). For the original elastic net this is easily seen for a net with two centroids: the gradient of the tension term at each centroid points towards the other, and if the gradient step is larger than the distance between both centroids, the algorithm diverges exponentially (as we have confirmed by simulation). This could be alleviated by taking a shorter gradient step (by taking $\eta \Delta\Y^{(\tau)}$ with $\eta < 1$), but for large $\beta/\alpha$ the step would become very small.

\subsection{Iterative matrix method}

Equating the gradient of eq.~\eqref{e:ann:grad} to zero we obtain the nonlinear system
\begin{equation}
  \label{e:ann:zerograd}
  \Y \A = \X \W \qquad \text{with} \qquad \A \bydef \G + \sigma \frac{\beta}{\alpha} \left( \frac{\SS+\SS^T}{2} \right).
\end{equation}
This equation is similar to others obtained under related modelling assumptions \citep{Yuille_96a,Dayan93a,Bishop_98a,Bishop_98b,Utsugi97a}. \A\ is a symmetric positive definite $M \times M$ matrix (since \SS\ is positive (semi)definite), and both \A\ (through \G) and \W\ depend on \Y. This equation is the basis for a fixed-point iteration, where we solve for \Y\ with \G\ and \W\ fixed, then update \G\ and \W\ for the new \Y, and repeat till convergence. Convergence is guaranteed since we can derive an analogous procedure as an EM algorithm that optimises over \Y\ for fixed $\sigma$ as in \citet{Yuille_94a} and \citet{Utsugi97a}. Essentially, eq.~\eqref{e:ann:zerograd} becomes the familiar Gaussian-mixture update for the means with $\beta = 0$. Thus, from the algorithmic point of view, it is the addition of $\sigma \frac{\beta}{\alpha} \left( \frac{\SS+\SS^T}{2} \right)$ that determines the topology.

For the class of matrices \SS\ studied in this paper, \A\ is a large sparse matrix (of the order of $10^4 \times 10^4$) so that explicitly computing its inverse is out of the question: besides being a numerically ill-posed operation, it is computationally very costly in time and memory, since $\A^{-1}$ is a large, nonsparse matrix. Instead, we can solve the system for \Y\ (with \A\ and \W\ fixed) by an iterative matrix method (see e.g.\ \citealp{IsaacsKeller66,Smith85a}). Rearrange the system~\eqref{e:ann:zerograd} as $\A\x = \b$ (calling \x\ the unknowns) with $\A = \D - \LL - \U$ (diagonal $-$ lower triangular $-$ upper triangular) to give an iterative procedure $\x^{(\tau+1)} = \C \x^{(\tau)} + \c$, a few iterations of which will usually be enough to compute \x\ (or \Y\ in eq.~\eqref{e:ann:grad}) approximately, assuming the procedure converges. The following procedures are common:
\begin{description}
\item[Jacobi] Decomposing the equation as $\D\x = (\LL+\U)\x + \b$ results in the iterative procedure $\x^{(\tau+1)} = \D^{-1} (\LL+\U) \x^{(\tau)} + \D^{-1} \b$. Since \D\ is diagonal, $\D^{-1}(\LL+\U)$ can be efficiently computed and remains sparse.
\item[Gauss-Seidel] Decomposing the equation as $(\D-\LL)\x = \U\x + \b$ results in the iterative procedure $\x^{(\tau+1)} = (\D-\LL)^{-1} \U \x^{(\tau)} + (\D-\LL)^{-1} \b$, which can be implemented without explicitly computing $(\D-\LL)^{-1}$ if instead of computing $\x^{(\tau+1)}$ from $\x^{(\tau)}$, we use the already computed elements $x^{(\tau+1)}_1,\dots,x^{(\tau+1)}_i$ as well as the old ones $x^{(\tau)}_{i+2},\dots$ to compute $x^{(\tau+1)}_{i+1}$; see e.g.\ eq.~\eqref{e:origEN:GS}.
\item[Successive overrelaxation (SOR)] is also possible and can be faster, but requires setting of the relaxation parameter by trial and error.
\end{description}
For sparse matrices, both Jacobi and Gauss-Seidel are particularly fast and respect the sparsity structure at each iteration, without introducing extra nonzero elements. Both methods are quite similar, though for the kind of sparse matrix \A\ that we have with the elastic net, Gauss-Seidel should typically be about twice as fast than Jacobi
and requires keeping just one copy of \Y\ in memory.

The matrix iterates can be interleaved with the updates for \G\ and \W.

\subsection{Direct method by Cholesky factorisation}

We can obtain \Y\ efficiently and robustly by solving the sparse system of equations~\eqref{e:ann:zerograd} by Gaussian elimination via Cholesky decomposition%
\footnote{David Willshaw (pers.~comm.) has also developed independently the idea of the Cholesky factorisation for the original elastic net.},
since \A\ is symmetric and positive definite. Specifically, the procedure consists of \citep{GeorgeLiu81a,Duff_86a}:
\begin{enumerate}
\item \emph{Ordering}: find a good permutation \PP\ of the matrix \A.
\item \emph{Cholesky factorisation}: factorise the permuted matrix $\PP\A\PP^T$ into $\LL\LL^T$ where \LL\ is lower triangular with nonnegative diagonal elements.
\item \emph{Triangular system solution}: solve $\LL\Y_0 = \PP\X\W$ and $\LL^T\Y_1 = \Y_0$ both by Gaussian elimination in $\calO(M)$ and then set $\Y = \PP^T\Y_1$.
\end{enumerate}
Step 1 is not strictly necessary but usually accelerates the procedure for sparse matrices. This is because, although the Cholesky factorisation does not add zeroes outside the bands of \A\ (and thus preserves its banded structure), it may add new zeroes inside (i.e., add new ``fill''), and it is possible to reduce the number of zeroes in \LL\ by reordering the rows of \A. However, the exact minimisation of the fill is NP-complete and so one has to use a heuristic ordering method, a number of which exist, such as minimum degree ordering \citep[pp.~115--137]{GeorgeLiu81a}.

\subsection{Method comparison}

The gradient descent method as defined earlier converges only for $\beta/\alpha$ values smaller than a certain threshold that is extremely small for large nets; e.g.\ \citet{DurbinWillsha87a} used $\beta/\alpha = 10$ for a 2D TSP problem of $N = 100$ cities and a net with $M = 250$ centroids; but \citet{DurbinMitchis90a} used $\beta/\alpha = 0.00025$ for a 4D cortical map model of $M = 1\,600$ centroids. For large problems, a tiny ratio $\beta/\alpha$ means that---particularly for fast annealing---the tension term has very little influence on the final net, and as a result there is no benefit in using one matrix \SS\ over another.

A sufficient condition for the convergence of both matrix iteration methods (Jacobi and Gauss-Seidel) is that the matrix \A\ be positive definite (which it is). Also, the Cholesky factorisation is stable without pivoting for all symmetric positive definite matrices, although pivoting is advisable for positive semidefinite matrices \citep{GolubLoan96a}. Thus, all three methods are generally appropriate. However, for high values of $\beta/\alpha$ and if \SS\ is positive semidefinite, then \A\ becomes numerically close to being positive semidefinite (even in later stages of training, when \G\ has large diagonal elements). In this case, the Jacobi and Gauss-Seidel methods may diverge, as we have observed in practice (and converged if $\beta/\alpha$ was lowered). In contrast, we have never found the Cholesky factorisation method to diverge, at least for the largest problems we have tried with $N,M \approx 2 \cdot 10^4$, $D = 5$, stencils of up to order $p = 4$ and $\beta/\alpha$ up to $10^6$.

The Cholesky factorisation is a direct method, computed in a finite number of operations $\calO(\frac{1}{6}M^3)$ for dense \A\ but much less for sparse \A. This is unlike the Jacobi or Gauss-Seidel methods, which are iterative and in principle require an infinite number of iterations to converge (although in practice a few may be enough). With enough iterations (a few tens, for the problems we tried), the Jacobi method converges to the solution of the Cholesky method; with as few as $5$ to $10$, it gets a reasonably good one. In general, the computation time required for the solution of eq.~\eqref{e:ann:zerograd} by the Cholesky factorisation is usually about twice as high as that of the Jacobi method. The bottleneck of all elastic net learning algorithms is the computation of the weight matrix \W\ of squared distances between cities and centroids, which typically takes several times longer than solving eq.~\eqref{e:ann:zerograd}. Thus, using the Cholesky factorisation only means an increase of around $10$--$20$\% of the total computation time.

Regarding the quality of the iteration, the Cholesky method (and, for enough iterations, the Jacobi and Gauss-Seidel methods) goes deeper down the energy surface at each annealing iteration; this is particularly noticeable when the net collapses into its centre of mass for large $\sigma$. In contrast, gradient descent takes tiny steps and requires many more iterations for a noticeable change to occur. It might then be possible to use faster annealing than with the gradient method, with considerable speedups. On the other hand, the Cholesky method is not appropriate for online learning, where training points come one at a time, because the net would change drastically from one training point to the next. However, this is not a problem practically since the stream of data can always be split into chunks of appropriate size.

In summary, the robustness and efficiency of the (sparse) Cholesky factorisation make it our method of choice and allow us to investigate the behaviour of the model for a larger range of $\beta$ values than has previously been possible. All the simulations in this paper use this method.

\subsection{Practical extensions}

Here we describe two practically convenient modifications of the basic elastic net model.

\subsubsection{Weighting points of the training set}
\label{s:alpha_n}

For some applications it may be convenient to define a separate $\alpha_n \ge 0$ for each data point $\x_n$ (e.g.\ to overrepresent some data points without having to add extra copies of each point, which would make $N$ larger). In this case, the energy becomes
\begin{equation*}
  E(\Y,\sigma) = -\sigma\sum^N_{n=1}{ \alpha_n\log{ \sum^M_{m=1}{e^{-\frac{1}{2}\norm{\frac{\x_n-\y_m}{\sigma}}^2}} } } + \frac{\beta}{2} \trace{\Y^T\Y\SS}
\end{equation*}
and all other equations remain the same by defining the weights as
\begin{equation*}
  w_{nm} \bydef \alpha_n \frac{e^{-\frac{1}{2}\norm{\frac{\x_n-\y_m}{\sigma}}^2}}{\sum^M_{m'=1}{e^{-\frac{1}{2}\norm{\frac{\x_n-\y_{m'}}{\sigma}}^2}}} = \alpha_n p(m|\x_n),
\end{equation*}
that is, multiplying the old $w_{nm}$ times $\alpha_n$.

Over- or underrepresenting training set points is useful in cortical map modelling to simulate deprivation conditions (e.g.\ monocular deprivation, by reducing $\alpha_n$ for the points associated with one eye) or nonuniform feature distributions (e.g.\ cardinal orientations are overrrepresented in natural images). It is also possible to make $\alpha_n$ dependent on $\sigma$, so that e.g.\ the overrepresentation may take place at specific times during learning; this is useful to model critical periods \citep{Carreir_03a}.

\subsubsection{Introducing zero-valued mixing proportions}

We defined the fitness term of the elastic net as a Gaussian mixture with equiprobable components. Instead, we can associate a mixing proportion $\pi_m \bydef p(m)$ with each component $m$, subject to $\pi_m \in [0,1]$ $\forall m$ and $\sum^M_{m=1}{\pi_m} = 1$. In an unsupervised learning setting, we could take $\{\pi_m\}^M_{m=1}$ as parameters and learn them from the training set, just as we do with the centroids $\{\y_m\}^M_{m=1}$. However, we can also use them to disable centroids from the model selectively during training (by setting the corresponding $\pi_m$ to zero). This is a computationally convenient strategy to use non-rectangular grid shapes in 2D nets. Specifically, for each component $m$ to be disabled, we need to:
\begin{itemize}
\item Set $\pi_m = 0$ (and renormalise all $\pi_m$).
\item If using a symmetric matrix $\SS = \D^T\D$, set to zero column $m$ of \D\ and all rows $m'$ of \D\ that had a nonzero in the element corresponding to column $m$. This is equivalent to eliminating from the tension term  all linear combinations that involved $\y_m$. This implicitly assumes a specific type of boundary condition; other types of b.c.\ may be used by appropriately modifying such rows (rather than zeroing them). If \SS\ is not symmetric, the manipulations are more complicated.
\end{itemize}
It is easy to see that the energy is now independent of the value of $\y_m$: no ``force'' is exerted on $\y_m$ either from the fitness or the tension term, and likewise $\y_m$ exerts no tension on any other centroid. Thus, in the training algorithm, we can simply remove it (and the appropriate parts of \SS, etc.) from the update equations. We could also simply leave it there, since its gradient component is zero and its associated equation~\eqref{e:ann:zerograd} is zero both in the RHS and the LHS (for all $d = 1,\dots,D$). However, the latter option is computationally wasteful, since the operations associated with $\y_m$ are still being carried out, and can lead to numerical instability in the iterative-matrix and Cholesky-factorisation methods, since the matrices involved become singular (although this can be easily overcome by setting $g_m$ to some nonzero value).

When mixing proportions $\pi_m$ and training set weights $\alpha_n$ are used, the energy becomes:
\begin{equation*}
  E(\Y,\sigma) = -\sigma\sum^N_{n=1}{ \alpha_n\log{ \sum^M_{m=1}{\pi_m e^{-\frac{1}{2}\norm{\frac{\x_n-\y_m}{\sigma}}^2}} } } + \frac{\beta}{2} \trace{\Y^T\Y\SS}
\end{equation*}
and all other equations remain the same by defining the weights as
\begin{equation*}
  w_{nm} \bydef \alpha_n \frac{\pi_m e^{-\frac{1}{2}\norm{\frac{\x_n-\y_m}{\sigma}}^2}}{\sum^M_{m'=1}{\pi_{m'} e^{-\frac{1}{2}\norm{\frac{\x_n-\y_{m'}}{\sigma}}^2}}} = \alpha_n p(m|\x_n).
\end{equation*}

Selectively disabling centroids is useful in cortical map modelling to use a 2D net that approximates the shape of primary visual cortex and may include lesions (patches of inactive neurons in the cortex). It can also be used to train separate nets on the same training set and so force them to compete with each other (both with 1D or 2D nets); in fact, the central-difference stencil (section~\ref{s:cendiff-family}) leads to a similar situation by separating the tension term into decoupled subterms.

\subsection{Comparison with the original elastic net model}
\label{s:origEN}

The original elastic net results from using the matrix \D\ corresponding to a stencil $(0,\ -1,\ 1)$ and $\SS = \D^T\D$ (see section~\ref{s:D}). For example, for a 1D net with $M = 9$ centroids (where $a = 0$ for nonperiodic b.c.\ and $a = 1$ for periodic b.c.):
\begin{equation}
  \label{e:origEN:D}
  \D = \left(
    \begin{smallmatrix}
      -1 & 1  & 0  & 0  & 0  & 0  & 0  & 0  & 0  \\
      0  & -1 & 1  & 0  & 0  & 0  & 0  & 0  & 0  \\
      0  & 0  & -1 & 1  & 0  & 0  & 0  & 0  & 0  \\
      0  & 0  & 0  & -1 & 1  & 0  & 0  & 0  & 0  \\
      0  & 0  & 0  & 0  & -1 & 1  & 0  & 0  & 0  \\
      0  & 0  & 0  & 0  & 0  & -1 & 1  & 0  & 0  \\
      0  & 0  & 0  & 0  & 0  & 0  & -1 & 1  & 0  \\
      0  & 0  & 0  & 0  & 0  & 0  & 0  & -1 & 1  \\
      a  & 0  & 0  & 0  & 0  & 0  & 0  & 0  & -a
    \end{smallmatrix}
  \right) \qquad
  \SS = \left(
    \begin{smallmatrix}
      a^2+1 & -1 & 0  & 0  & 0  & 0  & 0  & 0  & -a^2  \\
      -1    & 2  & -1 & 0  & 0  & 0  & 0  & 0  & 0     \\
      0     & -1 & 2  & -1 & 0  & 0  & 0  & 0  & 0     \\
      0     & 0  & -1 & 2  & -1 & 0  & 0  & 0  & 0     \\
      0     & 0  & 0  & -1 & 2  & -1 & 0  & 0  & 0     \\
      0     & 0  & 0  & 0  & -1 & 2  & -1 & 0  & 0     \\
      0     & 0  & 0  & 0  & 0  & -1 & 2  & -1 & 0     \\
      0     & 0  & 0  & 0  & 0  & 0  & -1 & 2  & -1    \\
      -a^2  & 0  & 0  & 0  & 0  & 0  & 0  & -1 & a^2+1
    \end{smallmatrix}
  \right).  
\end{equation}
We obtain the original elastic net equations \citep{DurbinWillsha87a,Durbin_89a} as follows:
\begin{subequations}
  \label{e:origEN}
  \begin{alignat}{3}
    \label{e:origEN:E}
    \text{Energy:} & \quad & E(\Y,\sigma) & = -\alpha\sigma\sum^N_{n=1}{ \log{ \sum^M_{m=1}{e^{-\frac{1}{2}\norm{\frac{\x_n-\y_m}{\sigma}}^2}} } } + \frac{\beta}{2} \sum^M_{m=1}{\norm{\y_{m+1} - \y_m}^2} \\
    \label{e:origEN:grad}
    \text{Gradient:} & & \Delta \y_m & = \alpha \sum^N_{n=1}{w_{nm} (\x_n - \y_m)} + \beta \sigma (\y_{m+1} - 2 \y_m + \y_{m-1}) \\
    \label{e:origEN:J}
    \text{Jacobi:} & & \displaystyle \y^{(\tau+1)}_m & = \frac{\alpha \sum^N_{n=1}{w_{nm}\x_n} + \beta \sigma (\y^{(\tau)}_{m+1}+\y^{(\tau)}_{m-1})}{\alpha \sum^N_{n=1}{w_{nm}} + 2 \beta \sigma} \\
    \label{e:origEN:GS}
    \text{Gauss-Seidel:} & & \displaystyle \y^{(\tau+1)}_m & = \frac{\alpha \sum^N_{n=1}{w_{nm}\x_n} + \beta \sigma (\y^{(\tau)}_{m+1}+\y^{(\tau+1)}_{m-1})}{\alpha \sum^N_{n=1}{w_{nm}} + 2 \beta \sigma}.
  \end{alignat}
\end{subequations}
The original elastic net papers used annealing with either gradient descent \citep{DurbinWillsha87a,Durbin_89a} or Gauss-Seidel iteration \citep{DurbinMitchis90a}. For the latter, the updates of $\y_1,\dots,\y_m$ must be done sequentially on the index $m$.

\section{Construction of the matrix \SS}
\label{s:S}

The definition of the model and the optimisation algorithms given depend on the matrix \SS; we are left now with the search for a meaningful matrix \SS\ that incorporates some knowledge of the problem being modelled. For example, for the original elastic net, the tension term embodies an approximation to the length of the net, which is the basis for a heuristic solution of the TSP and for a cortical map model. In this section we discuss general properties of the matrix \SS\ and then, in the next section, concentrate on differential operators.

Firstly, note that the formulation of the tension term (ignoring the $\beta/2$ factor) as $\trace{\Y^T \Y \SS} = \sum_{m,m',d}{s_{mm'} y_{dm} y_{dm'}}$ is not the most general quadratic form of the parameters $\{y_{dm}\}^{M,D}_{m,d=1}$. This is because the cross-terms $y_{dm} y_{d'm'}$ for $d \neq d'$ have weight zero and the terms $y_{dm} y_{dm'}$ have a weight $s_{mm'}$ independent of the dimension $d$. Thus, the different dimensions of the net $d = 1,\dots,D$ (or maps) are independent and formally identical in the tension term (though not so in the fitness term). The most general quadratic form could be represented as $\vect{\Y}^T \SS \vect{\Y}$ where $\vect{\Y}$ is the column concatenation of all elements of \Y\ and \SS\ is now $MD \times MD$, or as $\sum_{m,m',d,d'}{s_{m d m' d'} y_{md} y_{m' d'}}$ where $s_{m d m' d'}$ is a 4D tensor. However, our more particular class of penalties still has a rich behaviour and we restrict ourselves to it.

Secondly, note that it is enough to consider symmetric matrices \SS\ of the form $\D^T \D$ where \D\ is arbitrary. We have that, for a nonsymmetric matrix \SS:
\begin{equation*}
  \trace{\Y^T\Y\left( \frac{\SS+\SS^T}{2} \right)} = \frac{1}{2} \left( \trace{\Y^T\Y\SS} + \trace{\Y^T\Y\SS^T} \right) = \frac{1}{2} \left( \trace{\Y^T\Y\SS} + \trace{(\Y^T\Y\SS^T)^T} \right) = \trace{\Y^T\Y\SS}
\end{equation*}
and so using \SS\ is equivalent to using the symmetric form $\frac{\SS+\SS^T}{2}$. Further, since \SS\ must be positive (semi)definite for the energy to be lower bounded, we can always write $\SS = \D^T \D$ for some real \D\ matrix, without loss of generality, and so the tension term can be written $\trace{\Y^T \Y \SS} = \norm{\D\Y^T}^2$ where $\norm{\A} \bydef \sqrt{\trace{\A^T\A}} = \sqrt{\sum_{ij}{a^2_{ij}}}$ is the Frobenius matrix norm. However, the learning algorithms given earlier can be used for any \SS.

The \D\ matrix has two functions. As \emph{neighbourhood relation}, it specifies the strength of the tension between centroids and thus the expected \emph{metric properties} of the net, such as its curvature. As \emph{adjacency matrix}, it specifies what centroids are directly connected%
\footnote{However, the notion of adjacency becomes blurred when we consider that a sparse stencil such as $(0,\ -1,\ 1)$ can be equivalent (i.e., produce the same \SS) to a nonsparse one, as we show in section~\ref{s:circ2sten}.}
and so the \emph{topology} of the net. Thus, by changing \D, we can turn a given collection of centroids into a line, a closed line, a plane sheet, a torus, etc. Practically, we typically concern ourselves with a fixed topology and are interested in the effects of changing the ``elasticity'' or ``stiffness'' of the net via the tension term.

Thirdly, for net topologies where the neighbourhood relations do not depend on the actual region in question of the net, we can represent the matrix \D\ in a compact form via a \emph{stencil} and suitable boundary conditions (b.c.). Multiplication by \D\ then becomes convolution by the stencil. Further, in section~\ref{s:circ2sten} we will show that with periodic b.c.\ we can always take \D\ to be a symmetric matrix and so $\SS = \D^2$.

In summary, we can represent quadratic tension terms in full generality through an operator matrix \D, and we will concentrate on a particular but important class of matrices \D, where different dimensions are independent and identical in form, and where the operator is assumed translationally invariant so that \D\ results from convolving with a stencil. Before considering appropriate types of stencil, we note some important invariance properties.

\subsection{Invariance of the penalty term with respect to rigid motions of the net}

Consider an orthogonal $D \times D$ matrix \RR\ and an arbitrary $D \times 1$ vector \bmu. In order that the matrix \D\ represent an operator invariant to rotations and translations of the net centroids, we must have the same penalty for \Y\ and for $\RR (\Y + \bmu \1^T)$ (where \1\ is the $M \times 1$ vector of ones):
\begin{equation*}
  \begin{split}
    \trace{\Y^T\Y\D^T\D} &= \trace{(\Y + \bmu \1^T)^T \RR^T \RR (\Y + \bmu \1^T) \D^T \D} \\
    &= \trace{\Y^T\Y\D^T\D} + \trace{\Y^T \bmu \1^T \D^T \D} + \trace{\1 \bmu^T \Y \D^T \D} + \trace{\1 \bmu^T \bmu \1^T \D^T \D}.
  \end{split}
\end{equation*}
Thus, any \D\ will provide invariance to rotations, but invariance to translations requires $\D \1 = \0$, i.e., a differential operator matrix \D\ (see later).

Another way to see this, for the circulant matrix case discussed later, is as follows: a circulant positive definite matrix $\SS = \D^T\D$ can always be decomposed as $\tilde{\SS} + k \1 \1^T$ with $\tilde{\SS}$ circulant positive semidefinite (verifying $\tilde{\SS} \1 = \0$, or $\sum^M_{m=1}{\tilde{\sss}_{1m}} = 0$) and $k = \frac{1}{M^2} \1^T \SS \1 = \frac{1}{M} \sum^M_{m=1}{\tilde{\sss}_{1m}} > 0$. This corresponds to the product of two priors, one given by the differential operator matrix $\tilde{\SS}$ and the other one by the matrix $k \1 \1^T$. The effect of the latter is to penalise non-zero-mean nets, since $\trace{\Y^T\Y(k \1 \1^T)} = k \sum^D_{d=1}{\left( \smash{\sum^M_{m=1}{y_{dm}}} \right)^2}$, which we do not want in the present context, since it depends on the choice of origin for the net. Such matrices can be desirable for smoothing applications, naturally, and the training algorithms still work in this case because \SS\ remains positive semidefinite (and so the energy function is bounded below).

\subsection{Invariance of \SS\ with respect to transformations of \D\ or the stencil}
\label{s:D:inv}

There can be different matrices \D\ that result in the same $\SS = \D^T\D$; in fact, $\D^T$ is a square root of \SS. Any matrix $\D' \bydef \RR \D$ will produce the same \SS\ if \RR\ is a $k' \times k$ matrix verifying $\RR^T \RR = \I_k$. This is the same sort of unidentifiability as in factor analysis (by rotation of the factors; \citealp{Barthol87a}), and in our case it means that the same nonnegative quadratic form can be obtained for many different matrices \D. Particular cases include:
\begin{itemize}
\item Orthogonal rotation of \D: \RR\ is square with $\RR^{-1} = \RR^T$.
\item Sign reversal of any number of rows of \D: \RR\ is diagonal with $\pm 1$ elements.
\item Permutation of rows of \D\ (i.e., reordering of the summands in the tension term): \RR\ is a permutation matrix.
\item Insertion of rows of zeroes to \D: $\RR = \RR_{k' \times k}$ is the identity $\I_k$ with $k'-k$ intercalated rows of zeroes.
\end{itemize}
When \D\ is derived from a stencil, \SS\ is invariant with respect to the following stencil transformations:
\begin{itemize}
\item Sign reversal, e.g.\ $(1,\ -2,\ 1)$ is equivalent to $(-1,\ 2,\ -1)$.
\item Shifting the stencil, since this is equivalent to permuting rows of \D. In particular, padding with zeroes the borders of the stencil (but not inserting zeroes), e.g.\ the forward difference $(0,\ -1,\ 1)$ is equivalent to the backward difference $(-1,\ 1,\ 0)$ but not to the central difference $(-1,\ 0,\ 1)$.
\end{itemize}
These invariance properties are useful in the analysis of section~\ref{s:tension} and in the implementation of code to construct the matrix \SS\ given the stencil.

\section{Construction of the matrix \D\ from a differential stencil $\varsigma$}
\label{s:D}

To represent the matrix \D\ by a stencil consider for example the original elastic net, in which the tension term consists of summing terms of the form $\norm{\y_{m+1} - \y_m}^2$ over the whole net. In this case, the stencil is $(0,\ -1,\ 1)$ and contains the coefficients that multiply $\y_{m-1}$, $\y_m$ and $\y_{m+1}$. In general, a stencil $\varsigma = (\varsigma_{-k},\varsigma_{-k+1},\dots,\varsigma_{0},\dots,\varsigma_{k-1},\varsigma_{k})$ represents a linear combination $\sum^k_{i=-k}{\varsigma_i \y_{m+i}}$ of which then we take the square. A given row of matrix \D\ is obtained by centring the stencil on the corresponding column, and successive rows by shifting the stencil. If the stencil is \emph{sparse}, i.e., has few nonzero elements, then \D\ and \SS\ are sparse (with a banded or block-banded structure).

It is necessary to specify boundary conditions when one of the stencil ends overspills near the net boundaries. In this paper we will consider only the simplest types of boundary conditions: \emph{periodic}, which uses modular arithmetic; and \emph{nonperiodic} or \emph{open}, which simply discards linear combinations (rows of \D) that overspill. In both cases the resulting \D\ matrix is a structured matrix: circulant for periodic b.c., since it is obtained by successively rotating the stencil (see section~\ref{s:tension}); or quasi-Toeplitz for nonperiodic b.c., being almost completely defined by its top row and left column. The following example illustrates these ideas for a stencil $\varsigma = (a,b,c,d,e)$ and a 1D net with $M = 7$:
\begin{equation*}
  \text{Nonperiodic b.c.: } \D = \left(
    \begin{smallmatrix}
      a & b & c & d & e & 0 & 0 \\
      0 & a & b & c & d & e & 0 \\
      0 & 0 & a & b & c & d & e
    \end{smallmatrix}
  \right) \qquad
  \text{Periodic b.c.: } \D = \left(
    \begin{smallmatrix}
      c & d & e & 0 & 0 & a & b \\
      b & c & d & e & 0 & 0 & a \\
      a & b & c & d & e & 0 & 0 \\
      0 & a & b & c & d & e & 0 \\
      0 & 0 & a & b & c & d & e \\
      e & 0 & 0 & a & b & c & d \\
      d & e & 0 & 0 & a & b & c
    \end{smallmatrix}
  \right) \qquad
  \Y^T = \left(
    \begin{smallmatrix}
      \y^T_1 \\
      \y^T_2 \\
      \vdots \\
      \y^T_7
    \end{smallmatrix}
  \right).
\end{equation*}
These ideas generalise directly to elastic nets of two or more dimensions, but the structure of \D\ is more complicated, being circulant or quasi-Toeplitz by blocks. In the analysis of section~\ref{s:tension} we consider only periodic b.c.\ for simplicity, but the examples of section~\ref{s:cmap} will include both periodic and nonperiodic b.c.

Some notational remarks. As in the earlier example, we will assume the convention that the first row of \D\ contains the stencil with its central coefficient in the first column, the coefficients to the right of the stencil occupy columns $2$, $3$, etc.\ of the matrix and the coefficients to the left occupy columns $M$, $M-1$, etc.\ respectively. The remaining rows of \D\ are obtained by successively rotating the stencil. Without loss of generality, the stencil must have an odd number of elements to avoid ambiguity in the assignation of these elements to the net points. Occasionally we will index the stencil starting from the left, e.g.\ $\varsigma = (\varsigma_{0},\varsigma_{1},\varsigma_{2},\varsigma_{3},\varsigma_{4})$. We will always assume that the number of centroids $M$ in the net (the dimension of the vector \y) is larger than the number of coefficients in the stencil and so will often need to pad the stencil with zeroes. Rather than explicitly notating all these circumstances, which would be cumbersome, we will make clear in each context which version of the indexing we are using. This convention will make easy various operations we will need later, such as rotating the stencil to build a circulant matrix \D, convolving the stencil with the net or computing the Fourier transform of the stencil. The tension term separates additively into $D$ terms, one for each row of \Y, so we consider the case $D = 1$; call $\y \bydef \Y^T$ the resulting $M \times 1$ vector. In this formulation, the vector $\D^T \y$ is the discrete convolution of the net and the stencil, while the vector $\D \y$ (which is what we use) is the discrete convolution of the net and the \emph{reversed} stencil $\overleftarrow{\varsigma} \bydef (\dots \varsigma_{2},\varsigma_{1},\varsigma_{0},\varsigma_{-1},\varsigma_{-2} \dots)$. In both cases the tension value is the same. This can be readily seen by noting that $\sum^M_{k=1}{d_{ki} d_{kj}} = \sum^M_{k=1}{d_{ik} d_{jk}}$ if \D\ is circulant, which implies $\D^T\D = \D\D^T$ and so $\norm{\D\Y^T}^2 = \norm{\D^T\Y^T}^2$; or by noting that the stencil and the reverse stencil have the same power spectrum ($\abs{\hat{\varsigma}_k}^2 = \abs{\smash{\hat{\overleftarrow{\varsigma}}_{\!\! k}}}^2$).

\subsection{Types of discretised operator}
\label{s:D:types-op}

Turning now to the choice of stencil, there are two basic types:
\begin{description}
\item[Differential, or roughening] This results when the stencil is a finite-difference approximation to a continuous differential operator, such as the forward difference approximation to the first derivative \citep{ConteBoor80a}:
  \begin{equation*}
    y'(u) \approx \frac{y(u+h) - y(u)}{h} \Longrightarrow \varsigma = (0,\ -1,\ 1)
  \end{equation*}
  where the grid constant $h$ is small. Differential operators characterise the metric properties of the function $y$, such as its curvature, and are \emph{local}, their value being given at the point in question. Consequently, the algebraic sum of the elements of a finite-difference stencil must be zero (otherwise, the operator's value value would diverge in the continuum limit, as $h \rightarrow 0$). This \emph{zero-sum condition} can also be expressed as $\D\1 = \0$, where \1\ is a vector of ones, and has the consequences that: \D\ is rank-defective (having a zero eigenvalue with eigenvector \1); \SS\ is positive semidefinite and the prior on the centroids is improper; and the tension term is invariant to rigid motions of the centroids (translations and rotations). Note the fitness term is invariant to permutations of \Y\ but not to rigid motions.
\item[Integral, or smoothing] This results when the stencil is a Newton-Cotes quadrature formula, such as Simpson's rule \citep{ConteBoor80a}:
  \begin{equation*}
    \int^{u+2h}_{u}{y(v)\, dv} \approx \left(\frac{y(u) + 4y(u+h) + y(u+2h)}{3}\right) h \Longrightarrow \varsigma = \frac{1}{3} (0,\ 0,\ 1,\ 4,\ 1).
  \end{equation*}
Integral operators are \emph{not local}, their value being given by an interval. Thus, the corresponding stencil is not zero-sum.
\end{description}
Integral operators depend on the choice of origin (i.e., the ``DC value'') and so do not seem useful to constrain the geometric form of an elastic net, nor would they have an obvious biological interpretation (although, of course, they are useful as smoothing operators). Differential operators, in contrast, can be readily interpreted in terms of continuity, smoothness and other geometric properties of curves---which are one of the generally accepted principles that govern cortical maps, together with uniformity of coverage of the stimuli space \citep{Swindal91a,Swindal96a,Swindal_00a,CarreirGoodhil02b}. Note that the effect of the tension term is opposite to that of the operator, e.g.\ a differential operator will result in a penalty for nonsmooth nets.

Also, let us briefly consider a prior that is particularly simple and easy to implement: $\SS = \I$, or a tension term proportional to $\sum_{md}{y^2_{md}}$, resulting from a stencil $\varsigma = (1)$. By placing it on the parameters of a mapping, this prior has often been used to regularise mappings, as in neural nets (weight decay, or ridge regression; \citealp{Bishop95a}) or GTM \citep{Bishop_98a}. In these cases, such a prior performs well because even if the weights are forced to small magnitudes, the class of functions they represent is still large and so the data can be modeled well. In contrast, in the elastic net the prior is not on the weights of a parameterized mapping but on the values of the mapping itself, and so the effect is disastrous: first, it biases the centroids towards the origin irrespectively of the location of the training set in the coordinate system; second, there is no topology because the prior factorizes over all $y_{md}$.

We now concentrate on \emph{differential} stencils, i.e., $\sum_n{\varsigma_n f_n}$ will approximate a given derivative of a continuous function $f(u)$ through a regularly spaced sample of $f$. In essence, the stencil is just a finite difference scheme. Particular cases of such differential operators have been applied in related work. \citet{Dayan93a} proposed to construct topology matrices by selecting different ways of reconstructing a net node from its neighbours (via a matrix \calE). This strategy is equivalent to using a \D\ matrix defined as $\D = \I - \calE$. For the examples he used, $\calE = (0,\ 0,\ 1)$ corresponds to $\D = (-1,\ 1,\ 0)$ (the original elastic net: forward difference of order $1$); and $\calE = \left(\frac{1}{2},\ 0,\ \frac{1}{2}\right)$ corresponds to $\D = \left(-\frac{1}{2},\ 1,\ -\frac{1}{2}\right)$ (forward difference of order $2$). The latter was also used by \citet{Utsugi97a} and \citet{Pascual_01a}. Here we seek a more systematic way of deriving stencils that approximate a derivative of order $p$. Tables~\ref{t:D:diff:1D}-\ref{t:D:diff:2D} give several finite-difference schemes for functions (nets) of one and two dimensions, respectively.

\subsection{Truncation error of a differential stencil}

Methods for obtaining finite-difference schemes such as truncated Taylor expansions (the method of undetermined coefficients), interpolating polynomials or symbolic techniques can be found in numerical analysis textbooks (see e.g.\ \citealp[section~7.1]{ConteBoor80a}; \citealp[chapter~4 and appendix~B]{GeraldWheatl94a}; \citealp[\S~11.6]{GodunovRyaben87a}). Here we consider Taylor expansions. Given a differential stencil $\varsigma$, we can determine what derivative it approximates and compute its truncation error by using a Taylor expansion around $u$:
\begin{equation*}
  f(u + mh) = \sum^R_{r=0}{f^{(r)}(u) \frac{(m h)^r}{r!}} + f^{(R+1)}(\xi) \frac{(m h)^{R+1}}{(R+1)!} \qquad m \in \bbZ, \ \xi \in (u,u+mh).
\end{equation*}
Consider a centre-aligned stencil $\varsigma = (\dots,\varsigma_{-1},\varsigma_0,\varsigma_1,\dots)$ and define $\overleftarrow{\varsigma}(u) = \sum_m{\overleftarrow{\varsigma}_{\!\! m} \delta(u - mh)}$ over \bbR\ where $\overleftarrow{\varsigma}_{\!\! m} = \varsigma_{-m}$. Then:
\begin{equation}
  \begin{split}
    \label{e:D:Taylor}
    (\overleftarrow{\varsigma} \ast f)(u) &= \int^{\infty}_{-\infty}{\sum_m{\overleftarrow{\varsigma}_{\!\! m} \delta(v - mh)} f(u-v) \, dv} = \sum_m{\varsigma_{-m} \int^{\infty}_{-\infty}{\delta(v - mh) f(u-v) \, dv}} = \sum_m{\varsigma_{-m} f(u - mh)} \\
    &= \sum_m{\varsigma_m f(u + mh)} = \sum^R_{r=0}{\alpha_r h^r f^{(r)}(u)} + \alpha_{R+1} h^{R+1} f^{(R+1)}(\xi) \qquad \text{with } \alpha_r = \frac{1}{r!} \sum_m{\varsigma_m m^r}.
  \end{split}
\end{equation}
Call $p$ and $q$ the smallest integers such that $\alpha_p, \alpha_q \neq 0$ and $\alpha_r = 0$ for $r < p$ and $p < r < q$. Then, doing $q = R+1$ gives
\begin{equation}
  \label{e:D:trunc}
  f^{(p)}(u) = \frac{(\overleftarrow{\varsigma} \ast f)(u)}{\alpha_p h^p} + \epsilon(h) \qquad \text{with} \qquad \epsilon(h) \bydef -f^{(q)}(\xi) \frac{\alpha_q}{\alpha_p} h^{q-p}
\end{equation}
where $\xi$ is ``near'' $u$. That is, $\varsigma$ represents a derivative of order $p$ with a truncation error $\epsilon(h)$ of order $q-p \ge 1$ (note that for any differential stencil $p \ge 1$ and so $\alpha_0 = \sum_m{\varsigma_m} = 0$, the zero-sum condition). Conversely, to construct a stencil that approximates a derivative of order $p$ with truncation error of order $q - p$ we simply solve for the coefficients $\varsigma_m$ given the values $\alpha_r$.

Obviously, a derivative of order $p$ can be approximated by many different stencils (with different or the same truncation error).
From a numerical analysis point of view, one seeks stencils that have a high error order (so that the approximation is more accurate) and as few nonzero coefficients as possible (so that the convolution $\overleftarrow{\varsigma} \ast f$ can be efficiently computed); for example, for the first-order derivative the central-difference stencil $\left(-\frac{1}{2},\ 0,\ \frac{1}{2}\right)$, which has quadratic error, is preferable to the forward-difference stencil $(0,\ -1,\ 1)$, which has linear error---in fact, the central difference is routinely used by mathematical software to approximate gradients.

However, from the point of view of the GEN, the nonuniqueness of the stencil raises an important question: can we expect different stencils of the same derivative order to behave similarly? Surprisingly, the answer is no (see section~\ref{s:tension:families}).

Before proceeding, it is important to make a clarification. It is well known that estimating derivatives from noisy data (e.g.\ to locate edges in an image) is an ill-posed problem. The derivatives we compute are not ill-posed because the net (given by the location of the centroids) is not a noisy function---the value of the tension term is computed exactly.

\begin{table}[th!]
  \begin{center}
    \renewcommand{\arraystretch}{1.5}
    \begin{tabular}{|c|c|l|c|}
\hline
Order $p$ & Stencil $\times\ h^p$ & \multicolumn{1}{c|}{Error term} & Key \\
\hline
\hline
$1$ & $(0,\ -1,\ 1)$ & $- y''(\xi) \frac{h}{2}$ & 1A \\
$1$ & $(-1,\ 1,\ 0)$ & $\phantom{-} y''(\xi) \frac{h}{2}$ & 1A \\
$1$ & $\frac{1}{2} (0,\ 0,\ -3,\ 4,\ -1)$ & $\phantom{-} y'''(\xi) \frac{h^2}{3}$ & 1C \\
$1$ & $\frac{1}{2} (-1,\ 0,\ 1)$ & $- y'''(\xi) \frac{h^2}{6}$ & 1B \\
$1$ & $\frac{1}{12} (0,\ 0,\ 0,\ 0,\ -25,\ 48,\ -36,\ 16,\ -3)$ & $\phantom{-} y^{\mathrm{v}}(\xi) \frac{h^4}{5}$ & 1D \\
$1$ & $\frac{1}{12} (0,\ 0,\ -3,\ -10,\ 18,\ -6,\ 1)$ & $- y^{\mathrm{v}}(\xi) \frac{h^4}{20}$ & 1E \\
$1$ & $\frac{1}{12} (1,\ -8,\ 0,\ 8,\ -1)$ & $\phantom{-} y^{\mathrm{v}}(\xi) \frac{h^4}{30}$ & 1F \\
$1$ & $\frac{1}{60} (0,\ 0,\ 2,\ -24,\ -35,\ 80,\ -30,\ 8,\ -1)$ & $\phantom{-} y^{\mathrm{vii}}(\xi) \frac{h^6}{105}$ & 1G \\
$1$ & $\frac{1}{60} (-1,\ 9,\ -45,\ 0,\ 45,\ -9,\ 1)$ & $- y^{\mathrm{vii}}(\xi) \frac{h^6}{140}$ & 1H \\
\hline
$2$ & $(0,\ 0,\ 1,\ -2,\ 1)$ & $- y'''(\xi) h$ & 2A \\
$2$ & $(0,\ 0,\ 0,\ 2,\ -5,\ 4,\ -1)$ & $\phantom{-} y^{\mathrm{iv}}(\xi) \frac{11 h^2}{12}$ & 2C \\
$2$ & $\frac{1}{4} (1,\ 0,\ -2,\ 0,\ 1)$ & $- y^{\mathrm{iv}}(\xi) \frac{h^2}{3}$ & 2B \\ 
$2$ & $(1,\ -2,\ 1)$ & $- y^{\mathrm{iv}}(\xi) \frac{h^2}{12}$ & 2A \\
$2$ & $\frac{1}{12} (0,\ 0,\ 0,\ 0,\ 35,\ -104,\ 114,\ -56,\ 11)$ & $- y^{\mathrm{v}}(\xi) \frac{5h^3}{6}$ & 2D \\
$2$ & $\frac{1}{12} (0,\ 0,\ 11,\ -20,\ 6,\ 4,\ -1)$ & $\phantom{-} y^{\mathrm{v}}(\xi) \frac{h^3}{12}$ & 2E \\
$2$ & $\frac{1}{12} (-1,\ 16,\ -30,\ 16,\ -1)$ & $\phantom{-} y^{\mathrm{vi}}(\xi) \frac{h^4}{90}$ & 2F \\
$2$ & $\frac{1}{180} (0,\ 0,\ -13,\ 228,\ -420,\ 200,\ 15,\ -12,\ 2)$ & $- y^{\mathrm{vii}}(\xi) \frac{h^5}{90}$ & 2G \\
$2$ & $\frac{1}{180} (2,\ -27,\ 270,\ -490,\ 270,\ -27,\ 2)$ & $- y^{\mathrm{viii}}(\xi) \frac{h^6}{560}$ & 2H \\
\hline
$3$ & $(0,\ 0,\ 0,\ -1,\ 3,\ -3,\ 1)$ & $- y^{\mathrm{iv}}(\xi) \frac{3 h}{2}$ & 3A \\
$3$ & $\frac{1}{8} (-1,\ 0,\ 3,\ 0,\ -3,\ 0,\ 1)$ & $- y^{\mathrm{v}}(\xi) \frac{h^2}{2}$ & 3B \\ 
$3$ & $\frac{1}{2} (-1,\ 2,\ 0,\ -2,\ 1)$ & $- y^{\mathrm{v}}(\xi) \frac{h^2}{4}$ & 3C \\
$3$ & $\frac{1}{2} (0,\ 0,\ -3,\ 10,\ -12,\ 6,\ -1)$ & $\phantom{-} y^{\mathrm{v}}(\xi) \frac{h^2}{4}$ & 3D \\
$3$ & $\frac{1}{8} (0,\ 0,\ -1,\ -8,\ 35,\ -48,\ 29,\ -8,\ 1)$ & $- y^{\mathrm{vii}}(\xi) \frac{h^4}{15}$ & 3E \\
$3$ & $\frac{1}{8} (1,\ -8,\ 13,\ 0,\ -13,\ 8,\ -1)$ & $\phantom{-} y^{\mathrm{vii}}(\xi) \frac{7h^4}{120}$ & 3F \\
\hline
$4$ & $(0,\ 0,\ 0,\ 0,\ 1,\ -4,\ 6,\ -4,\ 1)$ & $- y^{\mathrm{v}}(\xi) 2h$ & 4A \\ 
$4$ & $(1,\ 0,\ -4,\ 0,\ 6,\ 0,\ -4,\ 0,\ 1)$ & $- y^{\mathrm{vi}}(\xi) \frac{2h^2}{3}$ & 4B \\ 
$4$ & $(1,\ -4,\ 6,\ -4,\ 1)$ & $- y^{\mathrm{vi}}(\xi) \frac{h^2}{6}$ & 4A \\
$4$ & $\frac{1}{6} (-1,\ 12,\ -39,\ 56,\ -39,\ 12,\ -1)$ & $\phantom{-} y^{\mathrm{viii}}(\xi) \frac{7h^4}{240}$ & 4C \\
\hline
    \end{tabular}
    \caption{A gallery of 1D discretised differential operators obtained from finite difference schemes. $y$ is a 1D function of a 1D independent variable $u$ evaluated at points in a regular grid with interpoint separation $h \bydef u_{m+1} - u_m$, so that $y_m \bydef y(u_m)$, and $\xi$ is an unknown point near $u_m$. The key in the last column refers to figure~\ref{f:stencil-families1Db}. Example: for 3A we have $y'''_m = \frac{y_{m+3} - 3y_{m+2} + 3y_{m+1} - y_m}{h^3} - y^{\mathrm{iv}}(\xi) \frac{3 h}{2}$.}
    \label{t:D:diff:1D}
  \end{center}
\end{table}

\begin{table}[th!]
  \begin{center}
    \renewcommand{\arraystretch}{2.5}
    \begin{tabular}{|c|c|c|c|c|}
\hline
Order $p$ & Operator & Stencil $\times\ h^p$ & Error term & Key \\
\hline
\hline
2 & Laplacian $\nabla^2{y}$ & $\left(\begin{smallmatrix}   & 1 &   \\ 1 & -4 & 1 \\   & 1 &   \end{smallmatrix}\right)$ & $\calO(h^2)$ & $\nabla^2_{+}$ \\
2 & Laplacian $\nabla^2{y}$ & $\frac{1}{2} \left(\begin{smallmatrix} 1 &   & 1 \\   & -4 &   \\ 1 &   & 1 \end{smallmatrix}\right)$ & $\calO(h^2)$ & $\nabla^2_{\times}$ \\
2 & Laplacian $\nabla^2{y}$ & $\frac{1}{6} \left(\begin{smallmatrix} 1 & 4 & 1 \\ 4 & -20 & 4 \\ 1 & 4 & 1 \end{smallmatrix}\right)$ & $\calO(h^6)$ & $\nabla^2_9$ \\
2 & Laplacian $\nabla^2{y}$ & $\frac{1}{12} \left(\begin{smallmatrix} & & -1 & & \\ & & 16 & & \\ -1 & 16 & -60 & 16 & -1 \\ & & 16 & & \\ & & -1 & & \end{smallmatrix}\right)$ & $\calO(h^4)$ & $\nabla^2_{\text{a}}$ \\[2ex]
\hline
4 & Biharmonic $\nabla^4{y}$ & $\left(\begin{smallmatrix} & & 1 & & \\ & 2 & -8 & 2 & \\ 1 & -8 & 20 & -8 & 1 \\ & 2 & -8 & 2 & \\ & & 1 & & \end{smallmatrix}\right)$ & $\calO(h^2)$ & $\nabla^4_{\text{a}}$ \\[2ex]
4 & Biharmonic $\nabla^4{y}$ & $\frac{1}{6} \left(\begin{smallmatrix} & & & -1 & & & \\ & & -1 & 14 & -1 & & \\ & -1 & 20 & -77 & 20 & -1 & \\ -1 & 14 & -77 & 184 & -77 & 14 & -1 \\ & -1 & 20 & -77 & 20 & -1 & \\ & & -1 & 14 & -1 & & \\ & & & -1 & & & \end{smallmatrix}\right)$ & $\calO(h^4)$ & $\nabla^4_{\text{b}}$ \\[4ex]
\hline
    \end{tabular}
    \caption{A gallery of 2D discretised differential operators obtained from finite difference schemes. $y$ is a 1D function of a 2D independent variable \uu\ evaluated at points in a regular square grid with interpoint separation $h \bydef \norm{\uu_{i+1,j} - \uu_{ij}} = \norm{\uu_{i,j+1} - \uu_{ij}}$, so that $y_{ij} \bydef y(\uu_{ij})$. The 2D Laplacian operator is $\nabla^2{y} \bydef \frac{\partial^2 y}{\partial u^2_1} + \frac{\partial^2 y}{\partial u^2_2}$ and the 2D biharmonic operator is $\smash[t]{\nabla^4{y} \bydef (\nabla^2)^2{y} = \frac{\partial^4 y}{\partial u^4_1} + 2 \frac{\partial^4 y}{\partial u^2_1\partial u^2_2} + \frac{\partial^4 y}{\partial u^4_2}}$. The key in the last column refers to figure~\ref{f:stencil-families2Db}. Example: for $\nabla^2_{+}$ we have $\nabla^2{y_{ij}} = \frac{y_{i+1,j} + y_{i-1,j} + y_{i,j+1} + y_{i,j-1} - 4y_{ij}}{h^2} + \calO(h^2)$. Zero-valued coefficients are omitted in the stencil representation.}
    \label{t:D:diff:2D}
  \end{center}
\end{table}

\section{Analysis of the tension term in 1D}
\label{s:tension}

We have a way of constructing stencils (or convolution kernels) that represent an arbitrary differential operator, and from a stencil the corresponding \D\ matrix that represents the discrete convolution with the net. The tension term value is then the summed value in norm $L_2$ of this convolution. In this section we theoretically analyse the character of the different stencils. We begin by comparing the discrete net with the continuum limit and noting the extra degree of freedom that the choice of stencil introduces in the discrete case as opposed to the unique definition of derivative in the continuous one. The subsequent analysis is based on the Fourier spectrum of the stencil (or equivalently the eigenspectrum of the \SS\ matrix) for the case of circulant matrices \D\ (i.e., periodic b.c.). We characterise the behaviour of families of stencils, based on the forward and central difference, respectively, and show how the former but not the latter matches the behaviour in the continuous case. In particular, we show that the frequency content of the net (the stripe width for cortical maps) moves towards higher frequencies as the stencil order increases in the forward-difference family, while for the central-difference family the net has the highest frequency for any order. We also show that the different stencils can be rewritten as Mexican-hat kernels with progressively more oscillations as the order increases. In particular, this means that the original elastic net, motivated by wirelength arguments, is equivalent to an excitatory-inhibitory Mexican hat. For the most part, we will consider 1D nets for simplicity, although many of the results carry over to the $L$\nobreakdash-\hspace{0pt}dimensional case.

While the behaviour of the GEN is given by the joint effect of the fitness and tension terms of its energy function, a separate analysis of the tension term gives insight into the character of the minima of the energy. Besides, it makes explicit the differences between the continuous and the discrete formulations of the net.

\subsection{Continuous case vs discrete case}
\label{s:tension:cont}

Let us consider the tension term of the energy function~\eqref{e:ann:E} of the GEN with $\SS \bydef \D^T\D$ and again ignoring the $\frac{\beta}{2}$ factor, $\trace{\Y^T\Y\SS} = \norm{\D\Y^T}^2$. Consider a 1D continuous net $\y(u) \bydef (y_1(u),\dots,y_D(u))^T \in \bbR^D$ depending on a continuous variable $u$ that takes values in \bbR. We can express the tension term as a sum of $D$ terms, one for each function $y_d$, where each term is of the form:
\begin{equation}
  \label{e:tension-cont}
  \int^{\infty}_{-\infty}{(\calD y(u))^2 \, du} = \norm{\calD y}^2_2
\end{equation}
where $\norm{\cdot}_2$ is the $L_2$\nobreakdash-norm in the space of square-integrable functions. \calD\ represents a linear differential operator, such as the derivative of order $p$, $\calD_p \bydef \frac{d^p}{du^p}$. Since the tension term must be kept small, a function $y$ having much local variation over \bbR\ will incur a high penalty and will not likely result in a mimimum of the energy. Any function $y_0$ belonging to the nullspace of the operator \calD, i.e., satisfying $\calD y_0(u) = 0$ for all $u \in \bbR$, incurs zero penalty, and so $y+y_0$ incurs the same penalty as $y$. For example, the nullspace of $\calD_p$ consists of the polynomials of degree less than $p$. However, the penalty due to the fitness term must also be taken into account, so that the minima of the energy generally are not in the nullspace of \calD.

When the fitness term is also quadratic, such as $\int{(y-g)^2 \, du}$ for fixed $g$ in a regression, regularisation problems like this can be approached from the point of view of function approximation in Hilbert spaces. Under suitable conditions, there is a unique minimiser given by a spline \citep{Wahba90a}. However, in our case the fitness term is not quadratic but results from Gaussian-mixture density estimation. In general, the variational problem of density estimation subject to derivative penalties is not solved analytically \citep[pp.~110ff]{Silver86a}.

We can still get insight by working in the Fourier domain. Call $\hat{y}$ the Fourier transform of $y$ (see appendix~\ref{s:fourier}). By applying Parseval's theorem to the continuous tension term~\eqref{e:tension-cont} with $p$th-order derivative $\calD_p$, we obtain that the tension energy is the same in both domains:
\begin{equation*}
  \int^{\infty}_{-\infty}{\left(\frac{d^p y}{du^p}\right)^2 \, du} = \int^{\infty}_{-\infty}{\abs{(i 2 \pi k)^p \hat{y}(k)}^2 \, dk} = \int^{\infty}_{-\infty}{\underbrace{(2 \pi k)^{2p}}_{\text{filter}} \underbrace{\abs{\hat{y}(k)}^2}_{\text{power}} \, dk}
\end{equation*}
since the Fourier transform of $\frac{d^p y}{du^p}$ is $(i 2 \pi k)^p \hat{y}(k)$. This means that $\calD_p$ is acting as a high-pass filter, where the cutoff frequency increases monotonically with $p$; see fig.~\ref{f:1Dcderiv}. Therefore, high-frequency functions will incur a high penalty and the minima of the energy will likely have low frequencies---again, subject to the effect of the fitness term.

Using the convolution theorem, we obtain that the inverse Fourier transform of $(i 2 \pi k)^p$ is $\varsigma = \frac{d^p \delta}{du^p}$, where $\delta$ is the delta function, and so we can write the tension term as a convolution with this filter, $\calD y = \varsigma \ast y$. This makes explicit the relation with the discrete case, which is the one we actually implement in the GEN, by discretising both $\varsigma$ and $y$. However, the discrete case is not an exact correlate of the continuous one because the choice of discrete differential operator (stencil) introduces an extra degree of freedom. This choice can result in unexpected results. In this paper we consider finite-difference stencils $\varsigma$ such that the discrete convolution approximates a derivative of order $p$. In section~\ref{s:ext:design-power} we mention an alternative definition based on the delta function.

\begin{figure}
  \begin{center}
    \begin{tabular}{@{\hspace{.03\textwidth}}c@{\hspace{.07\textwidth}}c@{}}
      \psfrag{p1}[cr][r]{$p = 1$}
      \psfrag{p2}[cr][r]{$p = 2$}
      \psfrag{p3}[cr][r]{$p = 3$}
      \psfrag{p4}[cr][r]{$p = 4$}
      \psfrag{1}{}
      \psfrag{2}{}
      \psfrag{3}{}
      \psfrag{4}{}
      \psfrag{0}[t][]{$0$}
      \psfrag{k}[t][]{$k$}
      \psfrag{P(k)}[b][t]{$P(k)$}
      \psfrag{b}[t][]{$\frac{1}{2\pi}$}
      \includegraphics[width=0.45\textwidth]{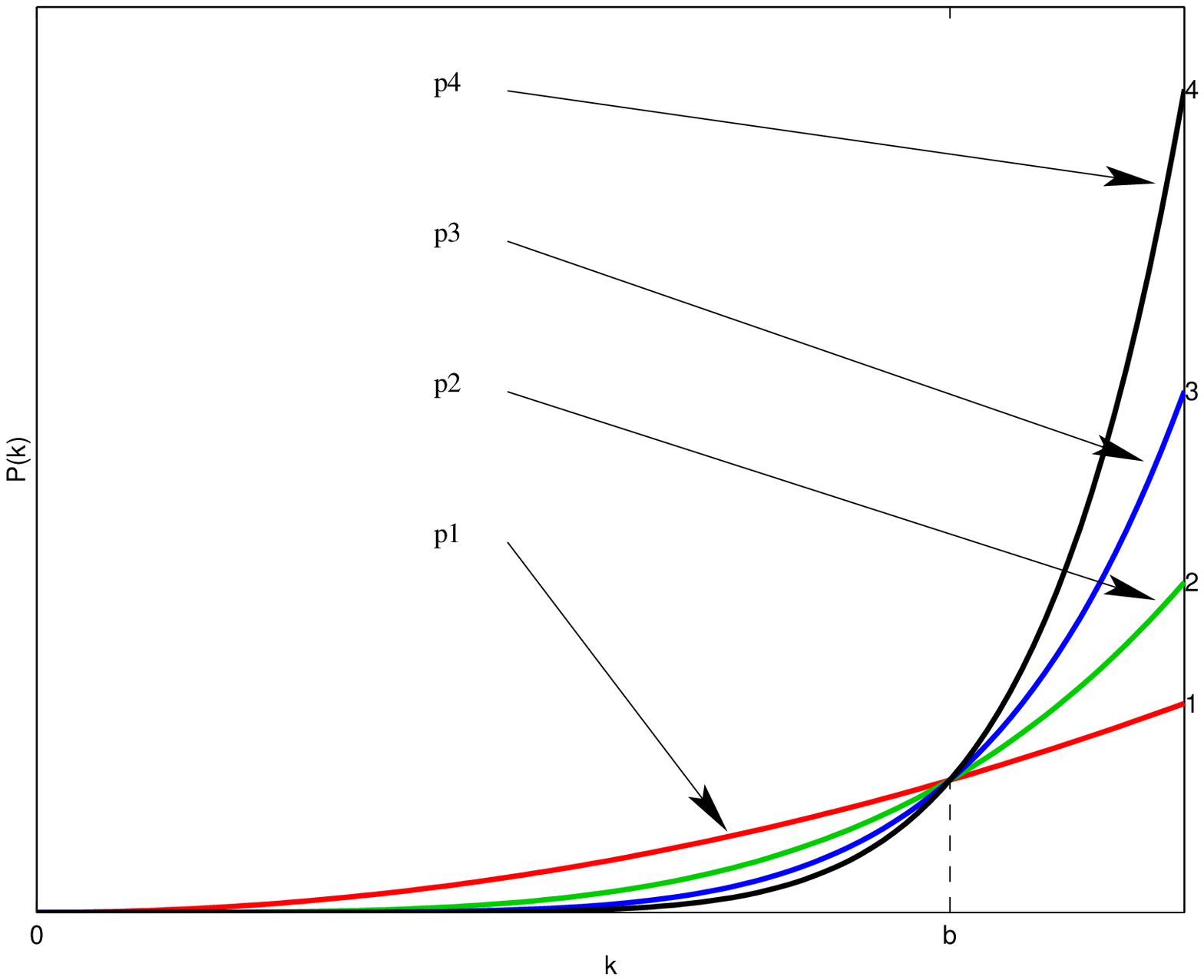} &
      \psfrag{p1}[cr][r]{$(2 \pi k)^{2p}$}
      \psfrag{p2}[r][r]{$\abs{\hat{y}(k)}^2$}
      \psfrag{p3}[r][r]{$(2 \pi k)^{2p} \abs{\hat{y}(k)}^2$}
      \psfrag{0}[t][]{$0$}
      \psfrag{k}[t][]{$k$}
      \psfrag{P(k)}[b][t]{$P(k)$}
      \includegraphics[width=0.45\textwidth]{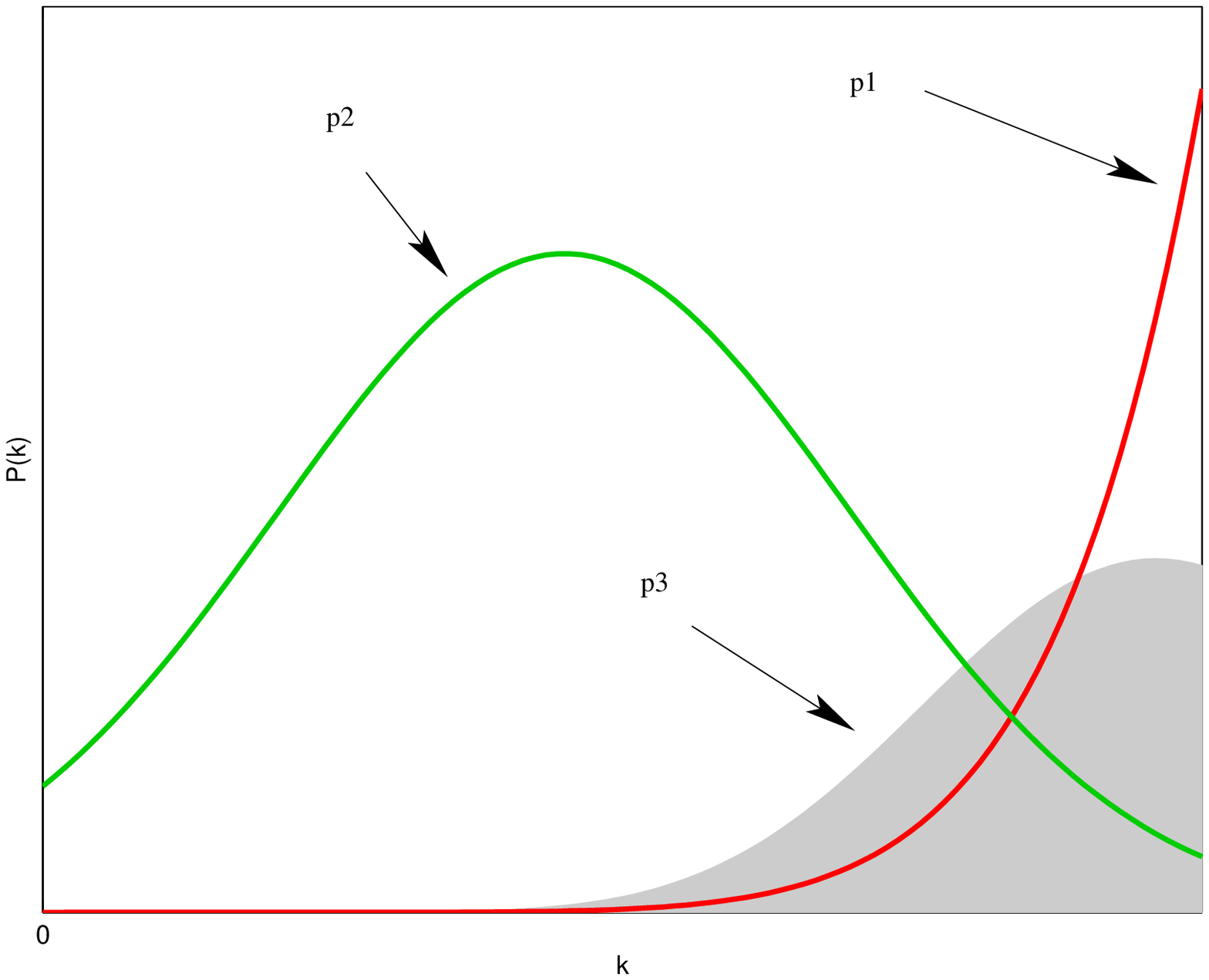}
    \end{tabular}
    \caption{\emph{Left}: power spectrum associated with the derivative of order $p$ of a continuous function, $P(k) = (2 \pi k)^{2p}$, where $k \ge 0$ is the continuous frequency. \emph{Right}: the area under the curve $(2 \pi k)^{2p} \abs{\hat{y}(k)}^2$ is the penalty of the function $y$ (with power spectrum $\abs{\hat{y}(k)}^2$). High frequencies are penalised more than low ones.}
    \label{f:1Dcderiv}
  \end{center}
\end{figure}

\subsection{The tension term and the eigenspectrum of circulant matrices}
\label{s:tension:circ}

In this section we define circulant matrices and prove a few simple but powerful properties that we will need later (see \citet{Davis79a} for general background). For periodic b.c., both our \D\ and \SS\ matrices are circulant, while for nonperiodic b.c.\ they will be approximately Toeplitz and the results below are expected to hold for large nets. The basic idea is that the eigenvalues of a circulant matrix can be computed from its first row via the inverse discrete Fourier transform and the eigenvectors are discrete plane waves. Thus, the spectrum is given directly by the stencil, and so we can characterise the tension term in the Fourier domain.

Notation: we will generally use the subindex $0$ for the first row or column of a matrix. Many of the operations involve complex values, so $\abs{\cdot}$, $\cdot^*$, $\cdot^T$ and $\cdot^H$ will mean the modulus, conjugate, transpose and Hermitian transpose, respectively. Although the results hold generally, when using the symbols $\varsigma$, \D\ and \SS\ we will implicitly assume that $\varsigma$ is the stencil that produces \D\ and that $\SS = \D^T \D$ is the tension term matrix. If a stencil is $\varsigma = (a,\ b,\ c)$ and the \D\ matrix is $M \times M$, the left-aligned stencil (padded with zeroes to the right) will be $(a,\ b,\ c,\ 0,\ 0,\dots,\ 0)_{1 \times M}$ and the centre-aligned stencil will be $(b,\ c,\ 0,\ 0,\dots,\ 0,\ a)_{1 \times M}$. We will call $\omega_m \bydef \exp(i 2 \pi m / M)$ an $M$th root of unity, for $m \in \{0,\dots,M-1\}$. The discrete Fourier transform (DFT) is defined in appendix~\ref{s:fourier}.

The proofs of the propositions are given in appendix~\ref{s:proofs}.

\begin{defn}[Circulant and Toeplitz matrices]
  Consider real, square, $M \times M$ matrices $\D =(d_{nm})$ and $\A = (a_{nm})$. Then \A\ is a Toeplitz matrix if $a_{nm} = a_{m-n}$, where $a_{-(M-1)},\dots,a_{-1},a_0,a_1,\dots,a_{M-1}$ are fixed real values; and \D\ is circulant if $d_{nm} = d_{(m-n) \bmod M}$, where $d_0,d_1,\dots,d_{M-1}$ are fixed real values. Thus, we can represent \A\ by its first column and row and \D\ by its first row (the rest being successive rotations of the first):
  \begin{equation*}
    \text{Toeplitz } \A =
      \begin{pmatrix}
        a_0        & a_1        & a_2        & \dots & a_{M-1} \\
        a_{-1}     & a_0        & a_1        & \dots & a_{M-2} \\
        a_{-2}     & a_{-1}     & a_0        & \dots & a_{M-3} \\
        \hdotsfor{5}                                           \\
        a_{-(M-1)} & a_{-(M-2)} & a_{-(M-3)} & \dots & a_0     \\
      \end{pmatrix}
    \quad
    \text{Circulant } \D =
      \begin{pmatrix}
        d_0     & d_1     & d_2 & \dots & d_{M-1} \\
        d_{M-1} & d_0     & d_1 & \dots & d_{M-2} \\
        d_{M-2} & d_{M-1} & d_0 & \dots & d_{M-3} \\
        \hdotsfor{5}                              \\
        d_1     & d_2     & d_3 & \dots & d_0     \\
      \end{pmatrix}.
  \end{equation*}
\end{defn}

We will need the following properties of circulant matrices.

\begin{prop}[Eigenspectrum of a circulant matrix \D]
  \label{p:eig-circ}
  The eigenvectors $\{\f_m\}^{M-1}_{m=0}$ of a circulant matrix \D\ are complex plane waves, i.e., $f_{mn} = e^{i 2 \pi \frac{nm}{M}}$ for $n = 0,\dots,M-1$, and are associated with complex eigenvalues $\lambda_m = \sum^{M-1}_{n=0}{d_n \omega^n_m}$, respectively.
\end{prop}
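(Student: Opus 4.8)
The plan is to verify the claim by direct computation, exploiting the circulant structure, and then to argue that the proposed plane waves constitute a complete eigenbasis. First I would write the $(j,k)$ entry of \D\ explicitly as $d_{(k-j)\bmod M}$, in accordance with the definition, and take the candidate eigenvector $\f_m$ to be the vector with components $f_{mn} = \omega^n_m$ (recalling $\omega_m = e^{i 2\pi m/M}$). Computing the $j$th component of $\D\f_m$ then gives
\[
  (\D\f_m)_j = \sum^{M-1}_{k=0}{d_{(k-j)\bmod M}\, \omega^k_m}.
\]

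The key step is the change of summation index $l = (k-j)\bmod M$. For fixed $j$, as $k$ ranges over $\{0,\dots,M-1\}$ so does $l$ (the map is a bijection), and $k \equiv l+j \pmod{M}$. Because $\omega^M_m = 1$, the exponent may be reduced modulo $M$ without changing the power, so $\omega^k_m = \omega^j_m \omega^l_m$. Substituting yields
\[
  (\D\f_m)_j = \omega^j_m \sum^{M-1}_{l=0}{d_l \omega^l_m} = \lambda_m\, \omega^j_m = \lambda_m\, f_{mj},
\]
that is, $\D\f_m = \lambda_m \f_m$ with $\lambda_m = \sum^{M-1}_{l=0}{d_l \omega^l_m}$, exactly as claimed.

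Finally, to confirm that these $M$ plane waves exhaust the eigenvectors (i.e.\ form a complete eigenbasis rather than just a collection of eigenvectors), I would note that the $\{\f_m\}$ are the columns of the DFT matrix and are mutually orthogonal: $\f^H_m \f_{m'} = \sum^{M-1}_{n=0}{\omega^n_{m'-m}}$ is a geometric sum of $M$th roots of unity, equal to $M$ when $m = m'$ and to $0$ otherwise. Hence the $\f_m$ are linearly independent, and being $M$ of them in $\bbC^M$ they form a basis, so \D\ is diagonalised by the DFT with the stated eigenvalues.

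The only real obstacle here is bookkeeping: making sure the reindexing $l = (k-j)\bmod M$ is a genuine bijection of $\{0,\dots,M-1\}$ onto itself, and that dropping the modular reduction inside the exponent is justified by $\omega^M_m = 1$. No deeper difficulty arises, since a circulant matrix is by construction a polynomial in the cyclic shift matrix, and all such polynomials share the Fourier eigenbasis; the computation above is simply the explicit manifestation of that fact.
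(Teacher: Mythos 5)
Your proof is correct and follows essentially the same route as the paper's: direct verification that $\D\f_m = \lambda_m\f_m$ via a modular change of summation index, with your single-bijection reindexing $l = (k-j)\bmod M$ being just a tidier packaging of the paper's split into the two ranges $\beta < \alpha$ and $\beta \ge \alpha$. Your closing orthogonality argument also matches the paper, which places exactly that geometric-series observation in the remark immediately following the proposition.
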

\begin{rmk}
  Using the identity $\sum^{M-1}_{k=0}{(e^{i 2 \pi \frac{m}{M}})^k} = M \delta_m$ (proof: sum of a geometric series), it is easy to see that $\{\f_m\}^{M-1}_{m=0}$ are orthogonal, thus linearly independent and so form a basis. Therefore, all circulant matrices of order $M$ are diagonalisable in this common basis, i.e., $\F^{-1} \D \F = \bLambda$ where $\F = (\f_0,\dots,\f_{M-1})$, $\bLambda = \diag{\lambda_0,\dots,\lambda_{M-1}}$ and $\F^{-1} = \frac{1}{M} \F^H = \frac{1}{M} \F^*$ and $\F = \F^T$. Thus, any circulant matrix \D\ can be written as $\D = \frac{1}{M} \F \bLambda \F^{*}$ in terms of its eigenvalues \bLambda. This also shows that if \A\ and \B\ are circulant then $\A\B$, $\A^T$ and $\A^{-1}$ (if it exists) are circulant too, and $\A\B = \B\A$ (circulant matrices commute). We will call $\F \bydef (\f_0,\dots,\f_{M-1})$ the \emph{Fourier matrix}.
\end{rmk}
\begin{rmk}
  The DFT of \y\ is $\F^H \y$. From proposition~\ref{p:eig-circ} we have $\blambda = \F^T \dd$ with $\blambda = (\lambda_0,\dots,\lambda_{M-1})^T$ and $\dd = (d_0,\dots,d_{M-1})^T$ and so we can compute the coefficients of the first row of a circulant matrix \D\ given its eigenvalues \blambda\ as $\dd = \frac{1}{M} \F^{*} \blambda$, or equivalently $d_m = \frac{1}{M} \sum^{M-1}_{n=0}{\lambda_n \omega^{*n}_m}$ for $m = 0,\dots,M-1$ (since \blambda\ is $M$ times the inverse DFT of \dd).
\end{rmk}
\begin{rmk}
  Since \D\ is a real matrix, complex eigenvalues come in conjugate pairs, i.e., for each $\lambda_m$ associated with eigenvector $\f_m$ we have $\lambda^*_m$ associated with $\f^*_m$. However, this does not mean there are $2M$ eigenvalues, because the identity $\omega_m = \omega^*_{M-m}$ implies $\lambda_m = \lambda^*_{M-m}$. If $M$ is odd, then we have $\frac{M-1}{2}$ pairs of distinct (in general) conjugate pairs plus $\lambda_0 = \sum_n{d_n}$ real. If $M$ is even, we have an additional $\lambda_{\frac{M}{2}} = \sum_n{d_n (-1)^n}$ real. Thus, there are at most $M$ distinct eigenvalues.
\end{rmk}
\begin{rmk}
  $\lambda_0$ is the sum of any row or column of \D\ and is associated with the \emph{constant} eigenvector $\f_0 = (1,\dots,1)^T$, while for even $M$, $\lambda_{\frac{M}{2}}$ is associated with the \emph{sawtooth} eigenvector $\f_{\frac{M}{2}} = (1,-1,1,-1,\dots,1,-1)^T$.
\end{rmk}

\begin{prop}[Eigenspectrum of a circulant matrix $\SS = \D^T\D$]
  \label{p:eig-circ2}
  If \D\ is real circulant then $\SS = \D^T\D$ is circulant too with real, nonnegative eigenvalues $\nu_m = \abs{\lambda_m}^2$ associated with eigenvectors $v_{mn} = \cos{\left( 2 \pi \frac{m}{M} n \right)}$ and $w_{mn} = \sin{\left( 2 \pi \frac{m}{M} n \right)}$.
\end{prop}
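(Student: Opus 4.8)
The plan is to leverage Proposition~\ref{p:eig-circ} together with its remarks, which already establish that products and transposes of circulant matrices are circulant and that every circulant matrix is diagonalised in the common Fourier basis $\{\f_m\}^{M-1}_{m=0}$. Circulance of \SS\ is then immediate: since \D\ is circulant, so is $\D^T$, and hence so is the product $\SS = \D^T\D$. Moreover \SS\ is symmetric, $\SS^T = (\D^T\D)^T = \SS$, and positive semidefinite, $\x^T\SS\x = \norm{\D\x}^2 \ge 0$, which already guarantees real nonnegative eigenvalues; it remains to identify them as $\abs{\lambda_m}^2$ and to exhibit the stated real eigenvectors.

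First I would compute the action of $\D^T$ on the plane wave $\f_m$. Because \D\ is real and $\D\f_m = \lambda_m \f_m$, conjugating gives $\D\f^*_m = \lambda^*_m \f^*_m$; combined with the identity $\f^*_m = \f_{M-m}$ and the pairing $\lambda_{M-m} = \lambda^*_m$ noted in the remarks, this shows that $\D^T$ shares the eigenvectors $\f_m$ but with eigenvalue $\lambda^*_m$. (Equivalently, from $\D = \frac{1}{M}\F\bLambda\F^*$ and $\F = \F^T$ one obtains $\D^T = \frac{1}{M}\F^*\bLambda\F$, and a direct computation using $\F\f_m = M\,\e_{(M-m)\bmod M}$ together with $\F^*\e_j = \f^*_j$ yields $\D^T\f_m = \lambda^*_m\f_m$.) Consequently
\begin{equation*}
  \SS\f_m = \D^T\D\f_m = \lambda_m\,\D^T\f_m = \lambda_m\lambda^*_m\,\f_m = \abs{\lambda_m}^2\,\f_m,
\end{equation*}
so that $\nu_m = \abs{\lambda_m}^2$, which both identifies the eigenvalue and reconfirms its nonnegativity.

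Finally I would pass from the complex eigenvectors to the real ones claimed in the statement. Since \SS\ is a real matrix and each $\nu_m$ is real, taking real and imaginary parts of the eigenvalue equation $\SS\f_m = \nu_m\f_m$ shows that $\Re\f_m$ and $\Im\f_m$ are themselves real eigenvectors of \SS\ for the same eigenvalue $\nu_m$. Their components are exactly $v_{mn} = \cos\left(2\pi\frac{m}{M}n\right)$ and $w_{mn} = \sin\left(2\pi\frac{m}{M}n\right)$, as asserted; the degenerate cases $m = 0$ (and $m = \frac{M}{2}$ for even $M$), where $\Im\f_m = \0$ and only the cosine vector survives, are consistent with the eigenvalue pairing and give the correct total count of real eigenvectors.

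I expect the main obstacle to be the bookkeeping in the middle step, namely establishing cleanly that $\D^T$ acts on $\f_m$ through the conjugate eigenvalue $\lambda^*_m$ rather than $\lambda_m$. Keeping the index arithmetic straight ($\f^*_m = \f_{M-m}$, $\lambda_{M-m} = \lambda^*_m$, and the mod-$M$ reflection induced by $\F\F$) is where a sign or index slip would most easily intrude; once that is nailed down, the remainder is a short consequence of the already-established diagonalisation and of \SS\ being real symmetric.
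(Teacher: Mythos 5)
Your proof is correct and follows essentially the same route as the paper's: establish $\D^T\f_m = \lambda^*_m\f_m$, conclude $\SS\f_m = \abs{\lambda_m}^2\f_m$, and pass to the real cosine and sine eigenvectors via $\Re(\f_m)$ and $\Im(\f_m)$. The only difference is one of detail: the paper simply asserts that $\D^T$ has eigenvalues $\lambda^*_m$ as ``straightforward to see,'' whereas you supply the justification (your parenthetical computation via $\D^T = \frac{1}{M}\F^*\bLambda\F$ is the rigorous one; the preceding conjugation-plus-pairing remark alone would not suffice), and you add the symmetry/positive-semidefiniteness observation as an independent check.
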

\begin{rmk}
  The previous proposition allows to obtain the eigenvalue spectrum of \SS\ from the coefficients of the stencil $\varsigma$. We can either construct the eigenvalues of \D, $\lambda_m = \sum^{M-1}_{n=0}{d_n \omega^n_m}$, from its first row $(d_0,d_1,\dots,d_{M-1})$ and those of \SS\ as $\nu_m = \abs{\lambda_m}^2$; or directly construct the eigenvalues of \SS\ as $\nu_m = \sum^{M-1}_{n=0}{s_n \cos{\left( 2 \pi \frac{m}{M} n \right)}}$ with $s_n$ given in eq.~\eqref{e:s-from-d}.
\end{rmk}
\begin{rmk}
  We will call $\v_m \bydef \frac{1}{2}(\f_m + \f^*_m) = \Re(\f_m)$, with $v_{mn} = \cos{\left( 2 \pi \frac{m}{M} n \right)}$, and $\w_m \bydef \frac{1}{2i}(\f_m - \f^*_m) = \Im(\f_m)$, with $w_{mn} = \sin{\left( 2 \pi \frac{m}{M} n \right)}$, the \emph{cosine} and \emph{sine} eigenvectors, respectively. Note that, while $\norm{\f_m}^2 = \f^H_m\f_m = M$, the same does not hold for the cosine and sine eigenvectors $\v_m$ and $\w_m$. For the latter, $\norm{\v_m}^2 = \norm{\w_m}^2 = \frac{M}{2}$ for $m > 0$ and $M$ for $m = 0$.
\end{rmk}
\begin{rmk}
  Since \SS\ is invariant to permutations of the rows of \D\ and each row of \D\ is a rotation of the first row, we are free to take the stencil origin anywhere in that the eigenvalues of \SS\ do not change (those of \D\ do, by a factor $e^{i 2 \pi \frac{m}{M} m'}$ for an origin shift of $m'$).
\end{rmk}
\begin{rmk}
  \SS\ has $\lfloor \frac{M}{2} \rfloor$ distinct eigenvalues at most, since $\nu_m = \nu_{M-m}$ for $m=1,\dots,M-1$: if $M$ is odd, then we have $\nu_1,\dots,\nu_{\frac{M-1}{2}}$, each with a 2D subspace spanned by the cosine and sine eigenvectors, plus $\nu_0 = \sum^{M-1}_{n=0}{s_n} = \big(\sum^{M-1}_{n=0}{d_n}\big)^2$ with a 1D subspace spanned by the constant eigenvector; and if $M$ is even, we have additionally $\nu_{\frac{M}{2}} = \sum^{M-1}_{n=0}{s_n (-1)^n} = \big(\sum^{M-1}_{n=0}{d_n (-1)^n}\big)^2$ with a 1D subspace spanned by the sawtooth eigenvector. This is the generic situation; for specific values of $d_n$ some of the eigenvalues can degenerate.
\end{rmk}
\begin{rmk}
  The relation $\nu_m = \abs{\lambda_m}^2$ implies that \D\ and \SS\ have the same null eigenvalues and the same nullspace.
\end{rmk}
\begin{rmk}
  \label{rm:tension-penalty}
  The penalty due to the tension term that a net incurs can now be computed by expressing the net as a superposition of plane waves, i.e., uniquely decomposing the net $\Y = \y^T$ in the eigenvector basis of \SS:
\begin{equation}
  \label{e:tension-penalty}
  \y = \sum^{M-1}_{m=0}{a_m \f_m} \text{ for } a_m \in \bbR \text{ with } a_m = a_{M-m} \Longrightarrow \norm{\D\Y^T}^2 = \trace{\Y^T \Y \SS} = \y^T \SS \y = M \sum^{M-1}_{m=0}{a^2_m \nu_m}.
\end{equation}
Thus, frequency $m$ contributes a penalty proportional to $a^2_m \nu_m$. The constant eigenvector of ones is the sampled version of a constant net. In the continuous case, when applying a differential operator of any order to a constant function we obtain the zero function and so no penalty in eq.~\eqref{e:tension-cont}, from eq.~\eqref{e:tension-penalty} with $\y = \f_0$. This holds in the discrete case if $\nu_0 = \sum^{M-1}_{n=0}{s_n} = \big(\sum^{M-1}_{n=0}{d_n}\big)^2 = 0$, i.e., if the stencil is zero-sum---a condition demanded by the truncation error discussion too. However, eigenvectors of the form $v_{mn} = n^p$ for fixed $p \in \bbZ^+$, corresponding to a monomial $t^p$, are not nullified by \D\ in the circulant case, since the rest of eigenvalues are nonzero in general. With nonperiodic boundary conditions \D\ is not circulant but quasi-Toeplitz and can nullify polynomials.
\end{rmk}
\begin{prop}[Fourier power spectrum of stencil $\varsigma$]
  \label{p:stencil-power}
  If \D\ is the $M \times M$ circulant matrix associated with the stencil $\varsigma$, the power spectrum of the stencil $\varsigma$ is equal to the eigenspectrum of the matrix $\SS = \D^T\D$.
\end{prop}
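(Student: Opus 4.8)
The plan is to read the statement off directly from Propositions~\ref{p:eig-circ} and~\ref{p:eig-circ2}, once I identify the first row of \D\ with the stencil and recognise the eigenvalue formula as a discrete Fourier transform. First I would recall that, by the construction of section~\ref{s:D} together with the alignment conventions, the first row $(d_0,d_1,\dots,d_{M-1})$ of the circulant matrix \D\ is precisely the centre-aligned, zero-padded stencil $\varsigma$. Hence the eigenvalues of \D\ furnished by Proposition~\ref{p:eig-circ}, namely $\lambda_m = \sum_{n=0}^{M-1} d_n \omega_m^n$ with $\omega_m^n = e^{i 2\pi nm/M}$, are nothing other than the DFT of the stencil evaluated at frequency $m$ (up to the choice of sign in the exponent, which is the standard DFT kernel).

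Next I would invoke Proposition~\ref{p:eig-circ2}, which gives the eigenvalues of $\SS = \D^T\D$ as $\nu_m = \abs{\lambda_m}^2$. Under the DFT convention of the paper (the transform of a vector being $\F^H$ applied to it), the transform of the stencil is $\hat{\varsigma}_m = \sum_{n} d_n \omega_m^{*n} = \lambda_m^*$, so its power spectrum is $\abs{\hat{\varsigma}_m}^2 = \abs{\lambda_m}^2 = \nu_m$. This equates the $m$th eigenvalue of \SS\ with the value of the stencil's power spectrum at frequency $m$, which is exactly the claim.

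The only point requiring care---and the closest thing to an obstacle---is reconciling conventions, since two choices enter: the sign of the exponent in the DFT kernel, and the alignment of the stencil within the first row of \D\ (its origin). Both are harmless for the power spectrum. A change in DFT sign replaces $\lambda_m$ by its conjugate, leaving $\abs{\lambda_m}^2$ unchanged; and a shift of the stencil origin by $m'$ multiplies $\lambda_m$ by the phase factor $e^{i 2\pi m m'/M}$ (as observed in the remark following Proposition~\ref{p:eig-circ2}), again leaving $\abs{\lambda_m}^2$ invariant. I would state these two invariances explicitly so that the identity $\nu_m = \abs{\hat{\varsigma}_m}^2$ holds independently of the bookkeeping used to pass from $\varsigma$ to \D, thereby completing the proof.
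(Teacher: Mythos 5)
Your proposal is correct and follows essentially the same route as the paper's own proof: both identify the eigenvalues $\lambda_m$ of \D\ from Proposition~\ref{p:eig-circ} with the (conjugated) DFT of the stencil, invoke Proposition~\ref{p:eig-circ2} to get $\nu_m = \abs{\lambda_m}^2$, and observe that the alignment/origin of the stencil only contributes a phase factor that the modulus kills. The paper works from the left-aligned stencil and you from the centre-aligned one, but this is exactly the bookkeeping difference you yourself show to be immaterial.
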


Later, we will find useful the quantity $\sum_m{\varsigma^2_m}$, which we call the \emph{squared modulus} of the stencil. The following proposition shows its relation with the power spectrum and the matrix \SS.

\begin{prop}[Squared modulus of stencil $\varsigma$]
  \label{p:stencil-sqmodulus}
  If \D\ is the $M \times M$ circulant matrix associated with the stencil $\varsigma$ and $\SS = \D^T\D$, then the total power of the stencil is $P = M \sum_m{\varsigma^2_m} = \trace{\SS}$.
\end{prop}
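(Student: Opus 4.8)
The plan is to compute $\trace{\SS}$ directly and match it to the two expressions in the statement, leaning on Propositions~\ref{p:eig-circ} and \ref{p:eig-circ2}. There is really only one computation to do, and it can be packaged in two equivalent ways depending on whether one works in the spatial domain (Frobenius norm) or the Fourier domain (Parseval).

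First I would establish $\trace{\SS} = M\sum_m \varsigma^2_m$ by the most elementary route. Since $\SS = \D^T\D$, we have $\trace{\SS} = \trace{\D^T\D} = \norm{\D}^2 = \sum_{n,m}{d^2_{nm}}$, the squared Frobenius norm as defined in section~\ref{s:S}. Because \D\ is circulant, every one of its $M$ rows is a cyclic rotation of the first row $(d_0,\dots,d_{M-1})$ and hence carries the same multiset of entries; each row therefore contributes $\sum^{M-1}_{n=0}{d^2_n}$ to the sum of squares, giving $\trace{\SS} = M\sum^{M-1}_{n=0}{d^2_n}$. Finally, the nonzero coefficients of $(d_0,\dots,d_{M-1})$ are exactly the stencil coefficients $\varsigma_m$ (the remaining entries being the zero-padding to length $M$, which contribute nothing to the sum of squares), so $\sum_n{d^2_n} = \sum_m{\varsigma^2_m}$ and thus $\trace{\SS} = M\sum_m{\varsigma^2_m}$.

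Second, to identify this quantity with the total power $P$ of the stencil, I would invoke Proposition~\ref{p:eig-circ2}: the eigenvalues of \SS\ are $\nu_m = \abs{\lambda_m}^2$, so $\trace{\SS} = \sum^{M-1}_{m=0}{\nu_m}$ as the trace is the sum of eigenvalues. By Proposition~\ref{p:stencil-power} the power spectrum of $\varsigma$ coincides with the eigenspectrum $\{\nu_m\}$ of \SS, so the total power is $P = \sum_m{\nu_m} = \trace{\SS}$. Chaining the two computations yields $P = M\sum_m{\varsigma^2_m} = \trace{\SS}$. Equivalently, one may skip the Frobenius bookkeeping and go straight through the transform of Proposition~\ref{p:eig-circ}: the identity $\sum_m{\abs{\lambda_m}^2} = M\sum_n{d^2_n}$ is precisely discrete Parseval for $\lambda_m = \sum_n{d_n\omega^n_m}$.

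I expect no genuine obstacle here—the result is essentially a one-line trace identity dressed in circulant structure. The only points requiring mild care are the bookkeeping ones: confirming that zero-padding the stencil to length $M$ leaves $\sum_n{d^2_n}$ unchanged, and reading the two equalities in the statement correctly as $P = \trace{\SS}$ together with $\trace{\SS} = M\sum_m{\varsigma^2_m}$, rather than as any per-frequency claim.
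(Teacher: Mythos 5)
Your proof is correct and uses essentially the paper's own ingredients: the paper proves the identity both via Parseval ($P = \sum^{M-1}_{k=0}{\abs{\hat{\varsigma}_k}^2} = M\sum_m{\varsigma^2_m}$, then $\trace{\SS} = \sum_k{\nu_k} = P$) and, as an explicitly stated alternative, via the spatial-domain computation $s_{nn} = s_{00} = \sum_m{d^2_{m0}} = \sum_m{\varsigma^2_m}$, which is exactly your Frobenius-norm bookkeeping. Your assembly (Frobenius norm giving $\trace{\SS} = M\sum_m{\varsigma^2_m}$, then Propositions~\ref{p:eig-circ2} and~\ref{p:stencil-power} giving $P = \sum_m{\nu_m} = \trace{\SS}$) simply combines these two routes in the opposite order, so there is no substantive difference.
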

\begin{rmk}
  The squared modulus of the stencil appears repeated along the diagonal of \SS. For Toeplitz matrices $P = M \sum_m{\varsigma^2_m}$ holds but is in general (slightly) different from $\trace{\SS}$.
\end{rmk}
\begin{rmk}
  The tension term penalty for a delta net $y = (1,\ 0,\ 0,\dots,\ 0)$ equals the stencil squared modulus (for any dimensionality $D$ of the net), since the convolution of the stencil with the delta net equals the stencil. If a $D$\nobreakdash-dimensional tension term is constructed by passing a 1D stencil along each dimension (section~\ref{s:lc:power}), by the additivity of the power the penalty is $D$ times the stencil squared modulus.
\end{rmk}

\subsection{Sawtooth stencils}
\label{s:sawtooth}

The eigenvector $\f_m$, or the real cosine and sine eigenvectors, represent waves $\cos{mt}$ and $\sin{mt}$ for $t = 2 \pi \frac{n}{M} \in [0,2\pi)$ of discrete frequency $m$, which---unlike in the continuous case---is upper bounded by $\frac{M}{2}$. This highest frequency corresponds to a sawtooth, or comb, wave $(1,-1,1,-1,\dots,1,-1)^T$, which plays a significant role with certain stencils (described later). The following proposition gives a condition for a stencil to have zero power at the sawtooth frequency, and we will call sawtooth stencil a stencil satisfying this condition.

\begin{prop}[Sawtooth condition in 1D]
  \label{p:sawtooth1D}
  A differential stencil $\varsigma$ over a 1D net with $M$ centroids ($M$ even) has zero power at the sawtooth frequency if and only if the even coefficients sum zero and the odd coefficients sum zero.
\end{prop}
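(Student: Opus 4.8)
The plan is to reduce ``zero power at the sawtooth frequency'' to the vanishing of a single eigenvalue via the circulant machinery of Propositions~\ref{p:eig-circ}--\ref{p:stencil-power}, and then to translate that algebraic condition on the first row $(d_0,\dots,d_{M-1})$ of \D\ back into a statement about the stencil coefficients $\varsigma_m$. Since $M$ is even, the sawtooth frequency is $m = \frac{M}{2}$, with $\omega_{M/2} = e^{i\pi} = -1$ and sawtooth eigenvector $(1,-1,\dots,1,-1)^T$. By Proposition~\ref{p:stencil-power} the power of $\varsigma$ at frequency $\frac{M}{2}$ equals the eigenvalue $\nu_{M/2}$ of $\SS$, and by Proposition~\ref{p:eig-circ2} we have $\nu_{M/2} = \abs{\lambda_{M/2}}^2$. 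Hence zero power is equivalent to $\lambda_{M/2} = 0$, where by Proposition~\ref{p:eig-circ} (as already noted in the remarks) $\lambda_{M/2} = \sum_{n=0}^{M-1} d_n(-1)^n$.

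The next step is the index bookkeeping that converts $(d_n)$ into $(\varsigma_m)$. Under the paper's convention the first row of the circulant \D\ is the centre-aligned stencil, so $d_n = \varsigma_n$ for coefficients to the right of the origin while a left coefficient $\varsigma_{-k}$ lands in $d_{M-k}$. The key observation is that because $M$ is \emph{even}, $(-1)^{n \bmod M} = (-1)^n$ for every integer $n$: the left coefficient $\varsigma_{-k}$ sitting in column $M-k$ contributes $(-1)^{M-k} = (-1)^{-k} = (-1)^k$, exactly the sign carried by its stencil index. Therefore the sum collapses to $\lambda_{M/2} = \sum_m \varsigma_m (-1)^m = \big(\sum_{m \text{ even}} \varsigma_m\big) - \big(\sum_{m \text{ odd}} \varsigma_m\big)$. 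Writing $S_{\mathrm e} \bydef \sum_{m \text{ even}} \varsigma_m$ and $S_{\mathrm o} \bydef \sum_{m \text{ odd}} \varsigma_m$, the vanishing of the (origin-invariant) power $\nu_{M/2}$ is thus equivalent to $S_{\mathrm e} = S_{\mathrm o}$.

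Finally I would invoke the hypothesis that $\varsigma$ is a \emph{differential} stencil, which by the zero-sum condition of Section~\ref{s:D:types-op} gives $\sum_m \varsigma_m = S_{\mathrm e} + S_{\mathrm o} = 0$. Combining $S_{\mathrm e} = S_{\mathrm o}$ with $S_{\mathrm e} + S_{\mathrm o} = 0$ yields $S_{\mathrm e} = S_{\mathrm o} = 0$, i.e.\ the even coefficients sum to zero and the odd coefficients sum to zero. Conversely, if both partial sums vanish then $\lambda_{M/2} = S_{\mathrm e} - S_{\mathrm o} = 0$ and the power vanishes; this direction needs no extra hypothesis. Thus the differential assumption is used only in the forward implication, to promote the single equality $S_{\mathrm e}=S_{\mathrm o}$ to the two separate vanishing conditions.

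I expect the only genuinely delicate point to be the sign bookkeeping in the second paragraph---making sure the wrap-around of the left half of the stencil into the high-index columns of \D\ carries the correct parity. This is exactly where evenness of $M$ is essential: for odd $M$ the parity argument fails and indeed $\frac{M}{2}$ is not an integer frequency index, consistent with the hypothesis. Everything else is a direct chain of the already-established circulant propositions, with no nontrivial estimation required.
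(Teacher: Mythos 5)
Your proof is correct and follows essentially the same route as the paper's: reduce zero power at the sawtooth frequency to $\lambda_{M/2} = \sum_n d_n(-1)^n = 0$ via the circulant eigenvalue machinery, translate this to $\sum_{m \text{ even}}\varsigma_m = \sum_{m \text{ odd}}\varsigma_m$, and then invoke the zero-sum condition of a differential stencil to split this into the two separate vanishing conditions. Your treatment is somewhat more explicit than the paper's on the wrap-around parity bookkeeping (where evenness of $M$ enters) and on which direction of the equivalence uses the differential hypothesis, but these are refinements of the same argument, not a different one.
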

\begin{prop}[Sawtooth condition in 2D]
  \label{p:sawtooth2D}
  A differential stencil $\varsigma$ over a 2D net with $M \times N$ centroids ($M$, $N$ even) has zero power at the sawtooth frequency if and only if, imagining a checkerboard pattern, the coefficients at the black squares sum zero and the coefficients at the white squares sum zero.
\end{prop}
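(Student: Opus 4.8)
The plan is to mirror the one-dimensional argument of Proposition~\ref{p:sawtooth1D} after lifting the circulant eigenspectrum results of Propositions~\ref{p:eig-circ}, \ref{p:eig-circ2} and~\ref{p:stencil-power} to two dimensions. With periodic boundary conditions in both directions, the operator matrix \D\ for an $M \times N$ net is block-circulant with circulant blocks, hence diagonalised by the two-dimensional Fourier basis $\F_M \otimes \F_N$: its eigenvectors are the discrete plane waves $f_{(m,n)}(j,k) = e^{i 2\pi (jm/M + kn/N)}$ for $m = 0,\dots,M-1$, $n = 0,\dots,N-1$, and the associated eigenvalue of \D\ is the two-dimensional DFT of the stencil, $\lambda_{(m,n)} = \sum_{a,b}{\varsigma_{ab}\, e^{i 2\pi (am/M + bn/N)}}$. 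Exactly as in Propositions~\ref{p:eig-circ2} and~\ref{p:stencil-power}, the power at frequency $(m,n)$---the eigenvalue of $\SS = \D^T\D$, equivalently the squared modulus of the stencil's 2D DFT there---is $\nu_{(m,n)} = \abs{\lambda_{(m,n)}}^2$. I expect this block-circulant diagonalisation (a routine tensor-product lift of the 1D theory, but the only part not already in hand since the stated propositions are all 1D) to be the step requiring real care; everything afterwards is evaluation.

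Next I would identify the two-dimensional sawtooth frequency. The highest attainable frequency is $(m,n) = (\tfrac{M}{2}, \tfrac{N}{2})$---available precisely because $M$ and $N$ are even---with eigenvector $f_{(M/2,N/2)}(j,k) = e^{i\pi(j+k)} = (-1)^{j+k}$, which is exactly the checkerboard pattern on the net. Evaluating the stencil's transform there gives
\begin{equation*}
  \lambda_{(M/2,N/2)} = \sum_{a,b}{\varsigma_{ab}\,(-1)^{a}(-1)^{b}} = \sum_{a,b}{\varsigma_{ab}\,(-1)^{a+b}},
\end{equation*}
a real number, so the sawtooth power is $\nu_{(M/2,N/2)} = \bigl(\sum_{a,b}{\varsigma_{ab}(-1)^{a+b}}\bigr)^2$.

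The third step is to read the checkerboard colouring off the parity of $a+b$: colour cell $(a,b)$ black when $a+b$ is even and white when $a+b$ is odd, so that adjacent cells receive opposite colours. With $B \bydef \sum_{a+b\text{ even}}\varsigma_{ab}$ and $W \bydef \sum_{a+b\text{ odd}}\varsigma_{ab}$, the sawtooth power vanishes if and only if $\lambda_{(M/2,N/2)} = B - W = 0$, i.e.\ $B = W$.

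Finally I would invoke the hypothesis that $\varsigma$ is a \emph{differential} stencil, which forces the zero-sum condition $\sum_{a,b}\varsigma_{ab} = B + W = 0$ (the two-dimensional analogue of the zero-sum discussion of section~\ref{s:D:types-op}). Solving the pair $B - W = 0$, $B + W = 0$ gives $B = W = 0$; conversely $B = W = 0$ yields $B - W = 0$, i.e.\ zero sawtooth power. Hence, for a differential stencil, vanishing power at the sawtooth frequency is equivalent to the black and white checkerboard sums each being zero. The whole argument is the same two-equations-in-$(B,W)$ reduction as in the 1D case, with the 1D parity $(-1)^n$ replaced by the 2D parity $(-1)^{a+b}$ and ``even/odd coefficients'' replaced by ``black/white checkerboard cells.''
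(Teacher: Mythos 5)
Your proposal is correct and follows essentially the same route as the paper's proof: evaluate the 2D DFT of the stencil at the sawtooth frequency $(\tfrac{M}{2},\tfrac{N}{2})$ to get $B-W$ (the difference of the checkerboard sums), then combine with the zero-sum condition $B+W=0$ of a differential stencil to conclude $B=W=0$. Your explicit block-circulant/tensor-product justification of why $\nu_{(m,n)} = \abs{\lambda_{(m,n)}}^2$ merely spells out what the paper compresses into ``analogous to the 1D case.''
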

\begin{rmk}
  The sawtooth conditions can be analogously expressed in terms of the \SS\ matrix generated by the stencil. For example in 1D, if \D\ is the $M \times M$ circulant matrix associated with the stencil $\varsigma$ and $\SS = \D^T\D$, the condition is $\nu_{\frac{M}{2}} = \sum^{M-1}_{m=0}{s_m (-1)^m} = 0$, where $(s_0,\dots,s_{M-1})$ is the first row of \SS. This is a more general condition than that on the stencil, since different stencils can result in the same matrix \SS\ (see section~\ref{s:circ2sten}).
\end{rmk}
\begin{rmk}
  For nets with an odd number of centroids, the highest-frequency wave is not the sawtooth wave, but nearly so, and a stencil satisfying the above condition will have nearly zero power at it (since $\{\nu_m\}^M_{m=1}$ are samples of a continuous function of $m$, if some $\nu_m$ is zero or nearly so at some point, then it must be low in an environment of it). Also, in general it is not necessary that the sawtooth wave has nearly zero power for the net to develop sawteeth. For this, it is usually enough that the other frequencies have more or equal power.
\end{rmk}
For stencils satisfying the sawtooth condition, the highest frequency incurs no penalty (or a negligible one) in the tension term, just as the zero-frequency wave (the constant net) does. This does not correspond to the continuous case, where for a wave of frequency $n$ we have an average penalty of 
\begin{equation*}
  \frac{1}{2 \pi} \int^{2 \pi}_{0}{\abs{\frac{d^p}{du^p}\sin{mu}}^2 \, du} = \frac{1}{2} m^{2p}
\end{equation*}
which grows monotonically with $m$ and $p$ (for $m \ge 1$).

\subsection{Fourier space and stencil normalisation}
\label{s:tension:norm}

The representation of a stencil in the net domain is very variable: two stencils that may look completely different and have coefficients of wildly different magnitude may be representing the same derivative (with different truncation error). However, in the Fourier domain the character of different stencils is obvious: they are high- or band-pass filters. This is because their power spectra happen to have a simple aspect, typically being monotonically increasing or unimodal curves. Naturally, if the power spectrum was not simple, we would have to look at a different representation. From the point of view of the elastic net, then, the truncation error is of secondary importance---after all, the ``step size'' in the net will always remain quite large for 2D nets due to the computational cost involved.

However, in order to compare stencils with each other (either of the same or of different order), and to compare nets with different number of centroids or different length%
\footnote{The ``length'' refers to the space of centroid indices $m = 1,\dots$ considered continuous. For example, in 1D the length is $Mh$ where $h$ is the step size (see appendix~\ref{s:stencil-norm}).}
applied to the same training set, we need to normalise the stencils. This normalisation will result in multiplying the matrix \SS, or alternatively $\beta$, times a constant. This constant will depend on two things: the order and squared modulus of the stencil, and the step size (i.e., the net length divided by the number of centroids). The relevant calculations can be done for nets of any dimension and are given in appendix~\ref{s:stencil-norm}. In this paper we use the normalisation mainly in figures such as~\ref{f:stencil-families1Da}, which plots a family of stencils, and~\ref{f:simul2D:OD}, which plots the resulting maps.

\subsection{Stencil families}
\label{s:tension:families}

We are still left with the question of what stencil to choose, since to represent a derivative of order $p$ one can design stencils of arbitrarily small truncation error by making zero as many coefficients $\alpha_r$ as desired (except $\alpha_p$) in the Taylor expansion of eq.~\eqref{e:D:Taylor}, which in turn will require the stencil to have many nonzero coefficients. We can also derive new stencils of order $p$ as linear combinations of existing stencils of the same order $p$, which we briefly consider in section~\ref{s:lc}. Instead of approaching this general question, we will deal here with a specific, but useful, way of generating a family of stencils: by iterating a fixed stencil. We will analyse the families associated with the first-order \emph{forward-difference} and \emph{central-difference} stencils (the backward-difference one being equivalent to the forward-difference one by the shift invariance). See fig.~\ref{f:stencil-families}. 

First of all, we have the following properties. The convolution is an associative and commutative operator (in both the continuous and discrete cases), which is reflected in the associativity and commutativity of the respective circulant matrices. Repeated application of a first-order differential stencil results in higher-order stencils (recall that when writing a convolution of a stencil $\varsigma$ with a function we must write the reversed stencil $\overleftarrow{\varsigma}$):
\begin{equation*}
  f' \sim \overleftarrow{\varsigma} \ast f \Rightarrow f'' \sim \overleftarrow{\varsigma} \ast f' = \overleftarrow{\varsigma} \ast (\overleftarrow{\varsigma} \ast f) = (\overleftarrow{\varsigma} \ast \overleftarrow{\varsigma}) \ast f.
\end{equation*}
We call $\{\leftexp{(p)}{\varsigma}\}^{\infty}_{p=0}$ the \emph{family} of stencils associated with $\leftexp{(0)}{\varsigma} \bydef \delta$, $\leftexp{(1)}{\varsigma}\bydef \varsigma$ and $\leftexp{(p)}{\varsigma} = \overleftarrow{\varsigma} \ast \leftexp{(p-1)}{\smash{\overleftarrow{\varsigma}}} = \overleftarrow{\varsigma} \ast \dots \ast \overleftarrow{\varsigma} \ast \delta$ ($p$ convolutions), where $\delta$ is the discrete delta function ($\delta_m = 1$ if $m = 0$, zero otherwise). Naturally, one can also construct hybrid stencils by applying successively different first-order stencils.
\begin{prop}[Composition of stencils]
  \label{p:stencil-comp}
  Let $\varsigma$ be a stencil of derivative order $p$ and truncation error $\calO(h^{p'})$, and $\varrho$ a stencil of derivative order $q$ and truncation error $\calO(h^{q'})$. Then $\overleftarrow{\varsigma} \ast \overleftarrow{\varrho} = \overleftarrow{\varrho} \ast \overleftarrow{\varsigma}$ is a stencil of derivative order $p+q$ and truncation error $\calO(h^{\min(p',q')})$.
\end{prop}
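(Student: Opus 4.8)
The plan is to pass from stencils to the differential operators they represent on smooth functions, where convolution of stencils becomes operator composition (equivalently, multiplication of symbols), and then to read off both the derivative order and the truncation order from the Cauchy product of the respective Taylor coefficients. First I would recall from eq.~\eqref{e:D:Taylor} that applying $\varsigma$ to a smooth $f$ is $(\overleftarrow{\varsigma} \ast f)(u) = \sum_{r \ge 0}{\alpha^{\varsigma}_r h^r f^{(r)}(u)}$ with $\alpha^{\varsigma}_r = \frac{1}{r!}\sum_m{\varsigma_m m^r}$; equivalently, writing $D \bydef \frac{d}{du}$, the stencil acts as the operator $\calO_{\varsigma} \bydef \sum_{r\ge 0}{\alpha^{\varsigma}_r (hD)^r}$. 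The hypotheses translate into coefficient conditions: $\alpha^{\varsigma}_r = 0$ for $r < p$, $\alpha^{\varsigma}_p \neq 0$, $\alpha^{\varsigma}_r = 0$ for $p < r < p+p'$, and $\alpha^{\varsigma}_{p+p'} \neq 0$ (and likewise for $\varrho$ with $q,q'$), using the characterisation of $p$ and the error order in eq.~\eqref{e:D:trunc}. The commutativity $\overleftarrow{\varsigma} \ast \overleftarrow{\varrho} = \overleftarrow{\varrho} \ast \overleftarrow{\varsigma}$ is just commutativity of convolution, already noted above; and by associativity the composite acts as $(\overleftarrow{\varsigma} \ast \overleftarrow{\varrho}) \ast f = \overleftarrow{\varsigma} \ast (\overleftarrow{\varrho} \ast f)$, i.e.\ as the operator product $\calO_{\varsigma}\calO_{\varrho}$. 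Since $D$ commutes with itself, these series in $hD$ multiply as ordinary power series, so the composite stencil has Taylor coefficients $\gamma_t = \sum_{r+s=t}{\alpha^{\varsigma}_r \alpha^{\varrho}_s}$.

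Next I would extract the derivative order. Because $\alpha^{\varsigma}_r$ vanishes for $r < p$ and $\alpha^{\varrho}_s$ vanishes for $s < q$, every summand of $\gamma_t$ with $t < p+q$ is zero, while $\gamma_{p+q} = \alpha^{\varsigma}_p \alpha^{\varrho}_q \neq 0$. Hence the smallest index with a nonzero coefficient is $t = p+q$, so by eq.~\eqref{e:D:trunc} the composite represents a derivative of order $p+q$.

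Finally I would pin down the truncation error by locating the next nonzero coefficient. Writing $r = p+a$, $s = q+b$ with $a,b \ge 0$, a summand $\alpha^{\varsigma}_r \alpha^{\varrho}_s$ of $\gamma_{p+q+k}$ (so $a+b = k$) can be nonzero only if $a \in \{0\} \cup [p',\infty)$ and $b \in \{0\} \cup [q',\infty)$. For $0 < k < \min(p',q')$ this forces $a = b = 0$, hence $k = 0$, a contradiction; so $\gamma_t = 0$ for $p+q < t < p+q+\min(p',q')$. Assuming w.l.o.g.\ $p' \le q'$, at $t = p+q+p'$ the pair $(a,b) = (p',0)$ contributes $\alpha^{\varsigma}_{p+p'}\alpha^{\varrho}_q \neq 0$, so $\gamma_{p+q+p'} \neq 0$ and the truncation error is of order $\min(p',q')$, as claimed.

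The one delicate point, which I expect to be the main thing to watch, is the boundary case $p' = q'$: then both $(a,b) = (p',0)$ and $(a,b) = (0,q')$ land at $t = p+q+p'$ and contribute $\alpha^{\varsigma}_{p+p'}\alpha^{\varrho}_q + \alpha^{\varsigma}_p\alpha^{\varrho}_{q+q'}$, a sum that could in principle cancel. Such a cancellation would only raise the error order, so the stated bound $\calO(h^{\min(p',q')})$ remains valid regardless; I would simply remark that generically it is nonzero and the order is then exactly $\min(p',q')$.
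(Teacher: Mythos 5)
Your proof is correct, but it takes a genuinely different route from the paper's. The paper argues by sequential application: it applies $\varrho$ to $f$ to get $f^{(q)} + \calO(h^{q'})$, then applies $\varsigma$ to that result and simply adds the error terms, $\calO(h^{q'}) + \calO(h^{p'}) = \calO(h^{\min(p',q')})$. You instead work entirely at the level of Taylor coefficients: convolution of stencils becomes a Cauchy product $\gamma_t = \sum_{r+s=t}\alpha^{\varsigma}_r\alpha^{\varrho}_s$ (equivalently, multiplication of the symbols $\sum_r \alpha_r (hD)^r$, which one can verify directly with the binomial theorem since $\frac{1}{t!}\sum_{m,n}\varsigma_m\varrho_n(m+n)^t = \sum_{r+s=t}\alpha^{\varsigma}_r\alpha^{\varrho}_s$), and you then locate the first nonzero coefficient and the gap of zeros explicitly. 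What your approach buys: it avoids a subtlety the paper glosses over, namely that its $\calO(h^{q'})$ remainder is a $t$-dependent function (through $\xi(t)$) to which the outer stencil is then applied as if it could be differentiated harmlessly; your coefficient-level argument never faces this. It also yields strictly more information---the exact leading coefficient $\alpha^{\varsigma}_p\alpha^{\varrho}_q$, the precise location of the next possibly-nonzero coefficient, and the sharpness discussion in the boundary case $p'=q'$, where you correctly observe that a cancellation can only raise the error order and so the stated $\calO(h^{\min(p',q')})$ bound (an upper bound, as in the paper's own big-O bookkeeping) still holds. What the paper's approach buys is brevity: three lines of operator bookkeeping versus your case analysis on the index pairs $(a,b)$.
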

\begin{cor}
  If $\varsigma$ is a stencil of derivative order $1$, then $\leftexp{(p)}{\varsigma}$ is of order $p$ with the same truncation error order.
\end{cor}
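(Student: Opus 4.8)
This is a short induction on $p$ whose engine is Proposition~\ref{p:stencil-comp}. Write $e$ for the truncation error order of $\varsigma$, so that $\varsigma$ has derivative order $1$ and error $\calO(h^e)$. The one preliminary observation I would record is that reversing a stencil preserves both its derivative order and its truncation error order: from the Taylor coefficients $\alpha_r = \frac{1}{r!}\sum_m{\varsigma_m m^r}$ of eq.~\eqref{e:D:Taylor}, passing from $\varsigma$ to $\overleftarrow{\varsigma}$ sends $m \mapsto -m$ and hence $\alpha_r \mapsto (-1)^r \alpha_r$, which changes neither the smallest index with $\alpha_r \neq 0$ (the derivative order) nor the next nonzero index, so the error order is unchanged as well. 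Consequently every reversed copy of $\varsigma$ occurring in the family is again a first-order stencil of error order $e$.

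With this in hand, I would prove by induction on $p$ the (uniformly quantified) claim that \emph{for any} first-order stencil of error order $e$, its $p$th family member has derivative order $p$ and error $\calO(h^e)$; stating the claim for an arbitrary base stencil is exactly what lets the hypothesis be applied to the reversed copies. The base case $p=1$ is immediate from $\leftexp{(1)}{\varsigma}=\varsigma$. For the step, the recursion $\leftexp{(p)}{\varsigma} = \overleftarrow{\varsigma} \ast \leftexp{(p-1)}{\smash{\overleftarrow{\varsigma}}}$ exhibits $\leftexp{(p)}{\varsigma}$ as a reversed convolution of $\varsigma$ (derivative order $1$, error $\calO(h^e)$) with $\leftexp{(p-1)}{\smash{\overleftarrow{\varsigma}}}$, which by the inductive hypothesis applied to the base stencil $\overleftarrow{\varsigma}$ (itself first-order of error order $e$, by the preliminary observation) has derivative order $p-1$ and error $\calO(h^e)$. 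Invoking Proposition~\ref{p:stencil-comp} with these two stencils then returns derivative order $1 + (p-1) = p$ and error $\calO(h^{\min(e,e)}) = \calO(h^e)$, closing the induction. The equality $\min(e,e)=e$ is precisely what keeps the truncation error order fixed through all $p$ compositions.

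There is essentially no obstacle here: the entire content of the corollary is already contained in Proposition~\ref{p:stencil-comp}, and the only thing that needs care is the reversal bookkeeping in the family's recursive definition, which the sign computation of the first paragraph dispatches. If one is willing to suppress the reversal notation altogether — legitimate, since $\varsigma$ and $\overleftarrow{\varsigma}$ share both orders — the argument collapses to the single remark that each convolution with a first-order, error-order-$e$ stencil raises the derivative order by one while leaving the error order at $e$, so after $p$ such convolutions the derivative order has climbed to $p$ and the error order has remained $e$.
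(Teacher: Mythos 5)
Your proof is correct and follows essentially the route the paper intends: the corollary is left as an immediate inductive consequence of Proposition~\ref{p:stencil-comp}, which is exactly your argument. Your preliminary observation that reversal maps $\alpha_r \mapsto (-1)^r\alpha_r$ (hence preserves both the derivative order and the truncation error order) is a nice piece of bookkeeping that the paper glosses over in its notation $\leftexp{(p)}{\varsigma} = \overleftarrow{\varsigma} \ast \leftexp{(p-1)}{\smash{\overleftarrow{\varsigma}}}$, but it does not change the substance of the argument.
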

\begin{prop}
  \label{p:stencil-comp2}
  Let \D\ and \E\ be the matrices associated with the stencils $\varsigma$ and $\varrho$. Then $\D\E$ is associated with $\overleftarrow{\varsigma} \ast \overleftarrow{\varrho}$.
\end{prop}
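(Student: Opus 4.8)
The plan is to reduce the claim to the associativity of discrete convolution, leaning on two facts already in hand: that the circulant matrices are closed under products (and commute), and that a matrix generated by a stencil acts on the net by convolving it with the \emph{reversed} stencil. First I would note that $\D$ and $\E$ are both circulant (each being generated from a stencil under periodic b.c.), so by the remark following Proposition~\ref{p:eig-circ} the product $\D\E$ is again circulant, and moreover $\D\E=\E\D$---the matrix counterpart of the commutativity $\overleftarrow{\varsigma}\ast\overleftarrow{\varrho}=\overleftarrow{\varrho}\ast\overleftarrow{\varsigma}$. Since a circulant matrix is completely determined by its action on vectors, it then suffices to exhibit $\y\mapsto\D\E\,\y$ as convolution with a single kernel and read that kernel off.

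Next I would invoke the characterisation from section~\ref{s:D}, namely $\E\y=\overleftarrow{\varrho}\ast\y$ and $\D\z=\overleftarrow{\varsigma}\ast\z$ for every $M\times1$ net. Composing the two maps and using associativity of the (cyclic) convolution yields
\begin{equation*}
  \D\E\,\y=\D\!\left(\overleftarrow{\varrho}\ast\y\right)=\overleftarrow{\varsigma}\ast\left(\overleftarrow{\varrho}\ast\y\right)=\left(\overleftarrow{\varsigma}\ast\overleftarrow{\varrho}\right)\ast\y ,
\end{equation*}
so $\D\E$ convolves the net with the single stencil $\overleftarrow{\varsigma}\ast\overleftarrow{\varrho}$; that is, $\D\E$ is the circulant matrix associated with $\overleftarrow{\varsigma}\ast\overleftarrow{\varrho}$, as claimed. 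As an independent cross-check I would repeat the argument in the Fourier domain: by Proposition~\ref{p:eig-circ} the eigenvalues of a circulant matrix are the transform of its first row, commuting circulants share the eigenbasis $\F$ (so $\D\E$ is diagonalised by $\F$ with eigenvalues the products $\lambda^{\D}_m\lambda^{\E}_m$), and since a product of transforms is the transform of a convolution, inverting the DFT recovers the same composed stencil. Combining this with Proposition~\ref{p:stencil-comp} gives the derivative order $p+q$ for free.

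The one place needing care---and the main obstacle---is the bookkeeping of reversal and alignment: one must keep straight that ``associated with $\varsigma$'' means the matrix acts as $\overleftarrow{\varsigma}\ast(\cdot)$ rather than $\varsigma\ast(\cdot)$, and that the relevant product is \emph{cyclic} convolution because of the periodic b.c. I would therefore pin down the indexing once, recording that the first row of $\D$ is the centre-aligned stencil so that $(\D)_{0m}=\varsigma_{m}$ with indices read $\bmod\,M$, and verify that the zero-padding convention makes the cyclic convolution coincide with the ordinary one whenever $M$ exceeds the combined stencil width, so that no spurious wrap-around terms appear. With that convention fixed, the two derivations above agree and the proposition follows.
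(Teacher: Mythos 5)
Your proposal is correct and is essentially the paper's own proof: the core step in both is the defining property $\D\y=\overleftarrow{\varsigma}\ast\y$, $\E\y=\overleftarrow{\varrho}\ast\y$ followed by associativity of (cyclic) convolution to get $\D\E\y=\overleftarrow{\varsigma}\ast\bigl(\overleftarrow{\varrho}\ast\y\bigr)=\bigl(\overleftarrow{\varsigma}\ast\overleftarrow{\varrho}\bigr)\ast\y$. The circulant-closure remark, the Fourier-domain cross-check and the alignment bookkeeping you add are harmless supplements but not needed beyond that one-line argument.
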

\begin{cor}
  A family of stencils $\{\leftexp{(p)}{\varsigma}\}^{\infty}_{p=0}$ has associated matrices $\{\D^p\}^{\infty}_{p=0}$ with $\D^0 = \I$ and \D\ the matrix associated with $\varsigma = \leftexp{(1)}{\varsigma}$. The matrix \D\ has eigenvalues $\lambda_m = \sum_n{\varsigma_n \omega^n_m}$ (times an unimportant phase factor), the matrix $\leftexp{(p)}{\D} = \D^p$ has eigenvalues $\leftexp{(p)}{\lambda_m} = \lambda^p_m$ and the matrix $\leftexp{(p)}{\SS} = (\D^p)^T \D^p$ has eigenvalues $\leftexp{(p)}{\nu_m} = \abs{\smash{\leftexp{(p)}{\lambda_m}}}^2 = \abs{\lambda_m}^{2p}$ associated with the cosine and sine eigenvectors.
\end{cor}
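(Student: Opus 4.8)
The plan is to assemble the statement from the propositions already established, in three stages: first identify the matrix attached to each family member, then read off its eigenvalues, and finally pass to $\leftexp{(p)}{\SS}$.

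For the identification $\leftexp{(p)}{\D} = \D^p$ I would argue by induction on $p$ using Proposition~\ref{p:stencil-comp2}. The base cases are $\leftexp{(0)}{\varsigma} = \delta$, whose associated matrix is the identity $\I = \D^0$ (convolution with the discrete delta is the identity map), and $\leftexp{(1)}{\varsigma} = \varsigma$, with matrix $\D = \D^1$. For the inductive step I would realise $\leftexp{(p)}{\varsigma}$ as a single further convolution of the first-order stencil applied to $\leftexp{(p-1)}{\varsigma}$ and invoke Proposition~\ref{p:stencil-comp2} to conclude that its matrix is $\D\cdot\leftexp{(p-1)}{\D} = \D\cdot\D^{p-1} = \D^p$.

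With the matrices identified, the eigenvalue formula $\lambda_m = \sum_n \varsigma_n \omega^n_m$ for \D\ is immediate from Proposition~\ref{p:eig-circ}, since the first row of \D\ is the aligned, zero-padded stencil; the ``unimportant phase factor'' is precisely the $e^{i 2 \pi m m'/M}$ produced by a shift of the stencil origin (cf.\ the remark following Proposition~\ref{p:eig-circ2}), which does not affect $\abs{\lambda_m}$. That $\leftexp{(p)}{\lambda_m} = \lambda^p_m$ then follows by iterating the eigenvector relation: from $\D\f_m = \lambda_m \f_m$ we get $\D^p \f_m = \lambda^p_m \f_m$, so $\D^p$ has the same plane-wave eigenvectors with eigenvalues $\lambda^p_m$ (equivalently, diagonalise $\D = \frac{1}{M}\F\bLambda\F^{*}$ as in the remark after Proposition~\ref{p:eig-circ} and use $\F^{*}\F = M\I$ to get $\D^p = \frac{1}{M}\F\bLambda^p\F^{*}$). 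Finally, to obtain $\leftexp{(p)}{\SS}$ I would apply Proposition~\ref{p:eig-circ2} with $\D^p$ in the role of \D: the matrix $\D^p$ is real (a product of real matrices) and circulant (products of circulants are circulant, by the remark after Proposition~\ref{p:eig-circ}), so $(\D^p)^T\D^p$ is circulant with eigenvalues $\abs{\leftexp{(p)}{\lambda_m}}^2 = \abs{\lambda_m}^{2p}$ carried by the cosine and sine eigenvectors.

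The substance is entirely contained in the cited propositions; the only delicate point is the first stage, where the family is defined through convolutions of the \emph{reversed} stencil $\overleftarrow{\varsigma}$. I expect the main (and minor) obstacle to be the bookkeeping that confirms the $p$-fold reversed-convolution definition yields exactly $\D^p$ rather than $(\D^T)^p$ or a conjugate; this reconciles once one observes that reversing a convolution reverses each factor, and in any event the distinction is killed by the $\abs{\cdot}^2$ in passing to $\leftexp{(p)}{\SS}$, so it leaves the eigenspectrum $\leftexp{(p)}{\nu_m}$ unchanged.
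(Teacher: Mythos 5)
Your proposal is correct and takes essentially the same route the paper intends: the paper states this corollary without a separate proof, as an immediate consequence of Propositions~\ref{p:stencil-comp2}, \ref{p:eig-circ} and~\ref{p:eig-circ2}, which is exactly the induction-plus-diagonalisation argument you spell out. Your resolution of the reversed-stencil bookkeeping is also the right one---reversal corresponds to transposing the circulant matrix, which merely conjugates the eigenvalues $\lambda_m$ and is erased by the modulus in $\leftexp{(p)}{\nu_m} = \abs{\lambda_m}^{2p}$.
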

\begin{prop}[Sawtooth dominance]
  \label{p:sawtooth-dom}
  Let $\varsigma$ be a stencil that has zero power at the sawtooth frequency. Then $\overleftarrow{\varrho} \ast \overleftarrow{\varsigma} = \overleftarrow{\varsigma} \ast \overleftarrow{\varrho}$ also have zero power at the sawtooth frequency for any stencil $\varrho$.
\end{prop}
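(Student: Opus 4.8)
The plan is to reduce the statement to a product relation between eigenvalues (equivalently, Fourier coefficients) and exploit the fact that the sawtooth power of a composite stencil factorizes. First I would recall that for an $M \times M$ circulant matrix \D\ with stencil $\varsigma$, Proposition~\ref{p:stencil-power} together with Proposition~\ref{p:eig-circ2} identifies the power spectrum of $\varsigma$ with the eigenspectrum $\nu_m = \abs{\lambda_m}^2$ of $\SS = \D^T\D$, where $\lambda_m = \sum^{M-1}_{n=0}{d_n \omega^n_m}$. The sawtooth frequency is $m = \frac{M}{2}$ (for even $M$), at which $\omega_{\frac{M}{2}} = e^{i\pi} = -1$; hence the hypothesis ``$\varsigma$ has zero power at the sawtooth frequency'' is exactly the condition $\nu_{\frac{M}{2}} = \abs{\lambda_{\frac{M}{2}}}^2 = 0$, i.e.\ $\lambda_{\frac{M}{2}} = \sum_n{d_n (-1)^n} = 0$.

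Next I would pass to the composite stencil. By Proposition~\ref{p:stencil-comp2}, if \D\ and \E\ are the circulant matrices of $\varsigma$ and $\varrho$, then $\overleftarrow{\varsigma} \ast \overleftarrow{\varrho}$ is the stencil of the product $\D\E$. Since all circulant matrices of order $M$ are simultaneously diagonalised in the common Fourier eigenbasis \F\ (the first remark following Proposition~\ref{p:eig-circ}), the $m$th eigenvalue of $\D\E$ is simply $\lambda_m \mu_m$, where $\lambda_m$ and $\mu_m$ are the eigenvalues of \D\ and \E\ at frequency $m$. Applying Proposition~\ref{p:eig-circ2} to $\D\E$, the power of $\overleftarrow{\varsigma} \ast \overleftarrow{\varrho}$ at the sawtooth frequency is
\[
  \abs{\lambda_{\frac{M}{2}}\, \mu_{\frac{M}{2}}}^2 = \abs{\lambda_{\frac{M}{2}}}^2 \abs{\mu_{\frac{M}{2}}}^2 = 0,
\]
because the first factor vanishes by hypothesis. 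The equality $\overleftarrow{\varrho} \ast \overleftarrow{\varsigma} = \overleftarrow{\varsigma} \ast \overleftarrow{\varrho}$ is just the commutativity of convolution (equivalently, $\D\E = \E\D$ for circulant matrices), already noted, so the conclusion holds for both orderings.

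Equivalently, one can argue directly through the convolution theorem: the DFT of $\overleftarrow{\varsigma} \ast \overleftarrow{\varrho}$ is the pointwise product $\hat{\overleftarrow{\varsigma}} \cdot \hat{\overleftarrow{\varrho}}$, so its power spectrum is $\abs{\hat{\overleftarrow{\varsigma}}}^2 \abs{\hat{\overleftarrow{\varrho}}}^2$, and a zero of the first factor at the sawtooth frequency forces a zero of the product there. I do not expect a genuine obstacle: the content is simply that power multiplies under composition, so any ``sawtooth-killing'' factor propagates to the product, regardless of whether $\varrho$ is differential. The only points requiring a little care are bookkeeping ones---identifying the sawtooth frequency with the index $m = \frac{M}{2}$ and with the root $\omega_{\frac{M}{2}} = -1$, and observing that stencil reversal leaves the power spectrum unchanged (since $\abs{\hat{\varsigma}_k}^2 = \abs{\smash{\hat{\overleftarrow{\varsigma}}_{\!\! k}}}^2$), so it is immaterial whether the hypothesis is phrased for $\varsigma$ or for $\overleftarrow{\varsigma}$.
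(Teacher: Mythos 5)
Your proof is correct, but it follows a different route from the paper's. The paper proves the proposition entirely in the net domain, in two lines: writing $\overleftarrow{\varpi} = \overleftarrow{\varrho} \ast \overleftarrow{\varsigma}$, it computes the alternating sum $\sum_m{\overleftarrow{\varpi}_{\!\! m} (-1)^m}$ by exchanging the order of summation, so that it factors as $\bigl(\sum_n{\overleftarrow{\varrho}_{\!\! n} (-1)^{-n}}\bigr)\bigl(\sum_m{\overleftarrow{\varsigma}_{\!\! m} (-1)^m}\bigr) = 0$, invoking the characterisation of zero sawtooth power from Proposition~\ref{p:sawtooth1D}. This is, in effect, a hand verification of the convolution theorem at the single frequency $m = \frac{M}{2}$, and it needs nothing beyond rearranging a double sum. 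You instead work in the Fourier/eigenvalue domain: simultaneous diagonalisation of circulant matrices (the remark after Proposition~\ref{p:eig-circ}) plus Proposition~\ref{p:stencil-comp2} gives that the eigenvalue of $\D\E$ at frequency $m$ is $\lambda_m \mu_m$, so the power of the composite stencil is $\abs{\lambda_m}^2\abs{\mu_m}^2$ at every frequency, and the zero at $m = \frac{M}{2}$ propagates. Your argument is heavier on machinery but buys more: it shows the entire power spectrum multiplies under composition, so a zero at \emph{any} frequency (not just the sawtooth) is inherited by the composite, and it makes transparent why the hypothesis on $\varrho$ can be completely arbitrary. The paper's argument is more elementary and self-contained, at the price of being special to the sawtooth frequency. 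Both are sound; your bookkeeping points (identifying the sawtooth frequency with $\omega_{\frac{M}{2}} = -1$, and invariance of the power spectrum under stencil reversal) are handled correctly and consistently with the paper's conventions.
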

This means that if a stencil is the result of the convolution of several stencils, then if any one of these is a sawtooth stencil, the convolution will also be a sawtooth stencil.

\begin{figure}
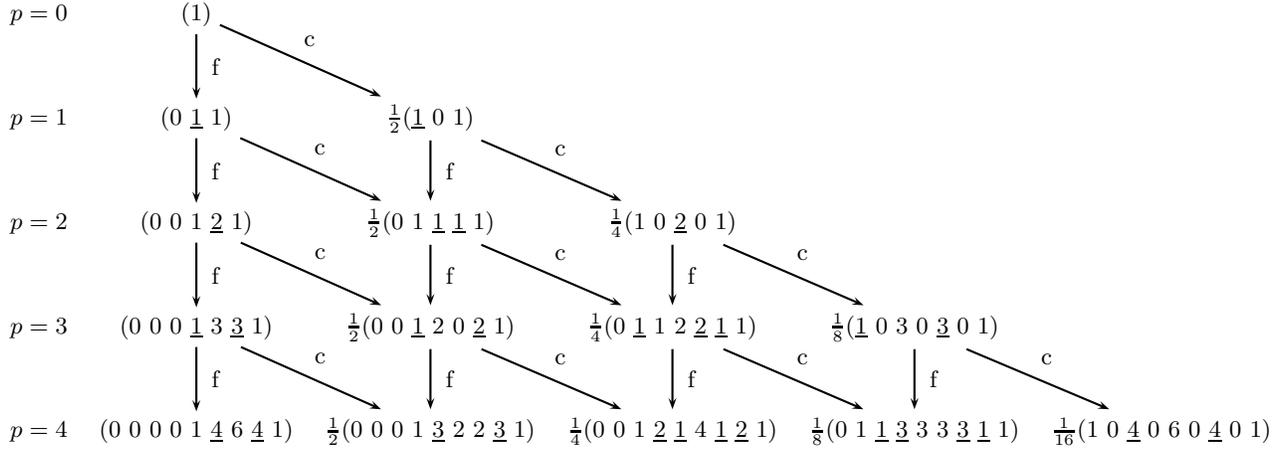

  \small
  \begin{center}
    \begin{tabular}{@{}lccccc@{}}
      $p = 0$ & \rnode{n:00}{$(1)$} \\[1cm]
      $p = 1$ & \rnode{n:10}{$(0\ \underline{1}\ 1)$} & \rnode{n:11}{$\frac{1}{2}(\underline{1}\ 0\ 1)$} \\[1cm]
      $p = 2$ & \rnode{n:20}{$(0\ 0\ 1\ \underline{2}\ 1)$} & \rnode{n:21}{$\frac{1}{2}(0\ 1\ \underline{1}\ \underline{1}\ 1)$} & \rnode{n:22}{$\frac{1}{4}(1\ 0\ \underline{2}\ 0\ 1)$} \\[1cm]
      $p = 3$ & \rnode{n:30}{$(0\ 0\ 0\ \underline{1}\ 3\ \underline{3}\ 1)$} & \rnode{n:31}{$\frac{1}{2}(0\ 0\ \underline{1}\ 2\ 0\ \underline{2}\ 1)$} & \rnode{n:32}{$\frac{1}{4}(0\ \underline{1}\ 1\ 2\ \underline{2}\ \underline{1}\ 1)$} & \rnode{n:33}{$\frac{1}{8}(\underline{1}\ 0\ 3\ 0\ \underline{3}\ 0\ 1)$} \\[1cm]
      $p = 4$ & \rnode{n:40}{$(0\ 0\ 0\ 0\ 1\ \underline{4}\ 6\ \underline{4}\ 1)$} & \rnode{n:41}{$\frac{1}{2}(0\ 0\ 0\ 1\ \underline{3}\ 2\ 2\ \underline{3}\ 1)$} & \rnode{n:42}{$\frac{1}{4}(0\ 0\ 1\ \underline{2}\ \underline{1}\ 4\ \underline{1}\ \underline{2}\ 1)$} & \rnode{n:43}{$\frac{1}{8}(0\ 1\ \underline{1}\ \underline{3}\ 3\ 3\ \underline{3}\ \underline{1}\ 1)$} & \rnode{n:44}{$\frac{1}{16}(1\ 0\ \underline{4}\ 0\ 6\ 0\ \underline{4}\ 0\ 1)$}
    \end{tabular}
    \psset{nodesep=3pt}
    \ncline{->}{n:00}{n:10}\Aput{f}
    \ncline{->}{n:00}{n:11}\Aput{c}
    \ncline{->}{n:10}{n:20}\Aput{f}
    \ncline{->}{n:10}{n:21}\Aput{c}
    \ncline{->}{n:11}{n:21}\Aput{f}
    \ncline{->}{n:11}{n:22}\Aput{c}
    \ncline{->}{n:20}{n:30}\Aput{f}
    \ncline{->}{n:20}{n:31}\Aput{c}
    \ncline{->}{n:21}{n:31}\Aput{f}
    \ncline{->}{n:21}{n:32}\Aput{c}
    \ncline{->}{n:22}{n:32}\Aput{f}
    \ncline{->}{n:22}{n:33}\Aput{c}
    \ncline{->}{n:30}{n:40}\Aput{f}
    \ncline{->}{n:30}{n:41}\Aput{c}
    \ncline{->}{n:31}{n:41}\Aput{f}
    \ncline{->}{n:31}{n:42}\Aput{c}
    \ncline{->}{n:32}{n:42}\Aput{f}
    \ncline{->}{n:32}{n:43}\Aput{c}
    \ncline{->}{n:33}{n:43}\Aput{f}
    \ncline{->}{n:33}{n:44}\Aput{c}
    \caption{Stencils $\varsigma = (\dots,\varsigma_{-1},\varsigma_0,\varsigma_1,\dots)$ obtained by repeated application of the forward-difference $\varsigma_{\text{f}} = (0,\ -1,\ 1)$ and the central-difference $\varsigma_{\text{c}} = \left(-\frac{1}{2},\ 0,\ \frac{1}{2}\right)$ up to order $p = 4$. The forward-difference family appears on the left vertical lines; note that the leading zeroes are irrelevant for the creation of the \SS\ matrix and so are usually omitted elsewhere in the paper. The backward-difference family (not drawn) results from $\varsigma_{\text{b}} = (-1,\ 1,\ 0)$ and produces the same stencils but with trailing zeroes, and thus results in the same \SS\ as the forward-difference family (although the truncation error is not exactly the same). The central-difference family appears on the top diagonal lines. Hybrid stencils, obtained by applying $\varsigma_{\text{f}}$ and $\varsigma_{\text{c}}$, appear in the middle. All stencils in this figure are sawtooth except for the forward-difference family. Since the convolution is commutative, $\protect\overleftarrow{\varsigma} \ast \protect\overleftarrow{\varrho} = \protect\overleftarrow{\varrho} \ast \protect\overleftarrow{\varsigma}$, and so are the \D\ matrices, $\D_{\varsigma}\D_{\varrho} = \D_{\varrho}\D_{\varsigma}$, the diagram is commutative.}
    \label{f:stencil-families}
  \end{center}
\end{figure}

\subsubsection{Forward-difference family: the continuous-case correlate}
\label{s:fwddiff-family}

This is defined by the first-order forward-difference stencil $\varsigma = (0,\ -1,\ 1)$, so \D\ has eigenvalues $\lambda_m = -1 + e^{i 2 \pi \frac{m}{M}}$ and \SS\ has eigenvalues $\nu_m = 2 \left( 1 - \cos{\left( 2 \pi \frac{m}{M} \right)} \right) = 4 \sin^2{\left( \pi \frac{m}{M} \right)}$. Thus the $p$th\nobreakdash-order derivative stencil has eigenvalues $\leftexp{(p)}{\nu_m} = \left( 2 \sin{\left( \pi \frac{m}{M} \right)} \right)^{2p} \in [0,2^{2p}]$ $\forall m$. Note that $\sum_m{\varsigma_m} = 0$ but $\sum_m{\varsigma_m (-1)^m} \neq 0$ and so it is not a sawtooth stencil (as also indicated by $\nu_{\frac{M}{2}} = 4 \neq 0$).
The case $p = 2$ corresponds to the small vibrations of a string with periodic b.c., the sinewaves $\f_0,\dots,\f_{M-1}$ being the modes of vibration.

The following propositions show that this family forms the Pascal triangle when piled up without the $p$ leading zeroes (see fig.~\ref{f:stencil-families}), and that the stencil squared modulus (which appears along the diagonal of $\leftexp{(p)}{\SS}$) equals $\binom{2p}{p}$.
\begin{prop}
  \label{p:fwd-Pascal}
  For $m = -p,\dots,p$: $\leftexp{(p)}{\varsigma_m} = (-1)^{p+m} \binom{p}{m}$ for $m \ge 0$, zero otherwise.
\end{prop}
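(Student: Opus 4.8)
The plan is to recognise that $\leftexp{(p)}{\varsigma}$ is nothing but the $p$-fold convolution of the first-order forward-difference stencil $\varsigma_{\text{f}} = (0,\ -1,\ 1)$ with itself, and then to read off its coefficients. Regarding $\varsigma_{\text{f}}$ as the integer-indexed sequence $a$ with $a_0 = -1$, $a_1 = 1$ and $a_m = 0$ otherwise, the recursive definition $\leftexp{(p)}{\varsigma} = \varsigma_{\text{f}} \ast \leftexp{(p-1)}{\varsigma}$ (up to the harmless reversal and leading-zero padding discussed in section~\ref{s:tension:families}, which do not affect the coefficient values) says exactly this. I would establish the closed form $\leftexp{(p)}{\varsigma_m} = (-1)^{p+m}\binom{p}{m}$ by either a generating-function argument or a direct induction; the generating-function route is the cleaner of the two.

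For the generating-function argument, associate to any finitely supported stencil $\varrho$ the polynomial $\hat{\varrho}(z) \bydef \sum_m \varrho_m z^m$. Convolution of stencils corresponds to multiplication of these polynomials, so that $\hat{a}(z) = z - 1$ gives $\widehat{\leftexp{(p)}{\varsigma}}(z) = (z-1)^p$. The binomial theorem then yields
\begin{equation*}
  (z-1)^p = \sum_{m=0}^{p} \binom{p}{m} (-1)^{p-m} z^m,
\end{equation*}
and matching coefficients of $z^m$ gives $\leftexp{(p)}{\varsigma_m} = (-1)^{p-m}\binom{p}{m} = (-1)^{p+m}\binom{p}{m}$ for $0 \le m \le p$ and zero for all other $m$ (in particular for $m < 0$), since $(-1)^{p-m} = (-1)^{p+m}$. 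This is precisely the asserted formula, and the support $\{0,\dots,p\}$ confirms that the family, stacked by rows, produces the Pascal triangle without the leading zeroes.

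Alternatively, one can argue by induction on $p$. The base case $p = 1$ is immediate from $\varsigma_{\text{f}} = (0,\ -1,\ 1)$. For the inductive step, convolving with $a$ gives $\leftexp{(p)}{\varsigma_m} = -\leftexp{(p-1)}{\varsigma_m} + \leftexp{(p-1)}{\varsigma_{m-1}}$; substituting the inductive hypothesis and collecting the common sign $(-1)^{p+m}$ reduces the claim to Pascal's rule $\binom{p-1}{m} + \binom{p-1}{m-1} = \binom{p}{m}$, with the endpoint cases handled by the conventions $\binom{p-1}{-1} = \binom{p-1}{p} = 0$. I do not anticipate a genuine obstacle here: the only points requiring care are the bookkeeping of the reversal convention $\overleftarrow{\varsigma}$ versus $\varsigma$ (which fixes the support on $\{0,\dots,p\}$ rather than $\{-p,\dots,0\}$ but is otherwise immaterial, since both give the same $\leftexp{(p)}{\SS}$) and the sign identity $(-1)^{p-m} = (-1)^{p+m}$.
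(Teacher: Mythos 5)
Your proposal is correct, and your primary route is genuinely different from the paper's. The paper proves the proposition exactly as in your fallback argument: induction on $p$ via the recursion $\leftexp{(p+1)}{\varsigma_m} = -\leftexp{(p)}{\varsigma_m} + \leftexp{(p)}{\varsigma_{m-1}}$ and Pascal's rule, with the delta function as base case at $p=0$. Your generating-function argument is cleaner: encoding a stencil as the polynomial $\sum_m{\varsigma_m z^m}$ turns the $p$-fold convolution into $(z-1)^p$, and the binomial theorem delivers the closed form, support included, in one line with no endpoint case analysis---which is precisely where the paper's own write-up has minor slips (its $m=0$ entry reads $(-1)^p$ where the recursion gives $(-1)^{p+1}$, and its last case is labelled $m=p$ rather than $m=p+1$); these are harmless typos, but your route avoids the issue entirely. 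The polynomial encoding also buys more in the context of this paper: evaluated at the roots of unity $z=\omega_m$ it is the eigenvalue $\lambda_m = -1 + e^{i 2 \pi m/M}$ of the circulant \D\ (section~\ref{s:fwddiff-family}); applied to $\frac{1}{2}(z^2-1)$ it yields proposition~\ref{p:cen-Pascal} for the central-difference family with no further work; and the squared modulus $\binom{2p}{p}$ of proposition~\ref{p:fwd-sqmodulus} drops out as the constant coefficient of $(z-1)^p(z^{-1}-1)^p$. Your handling of the reversal convention is also adequate: the paper's own induction silently uses the unreversed convolution, and, as you note, reversal merely mirrors the support without changing $\leftexp{(p)}{\SS}$.
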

\begin{prop}
  \label{p:fwd-sqmodulus}
  $\sum_m{\leftexp{(p)}{\varsigma^2_m}} = \binom{2p}{p}$ and $\trace{\smash{\leftexp{(p)}{\SS}}} = M \binom{2p}{p}$.
\end{prop}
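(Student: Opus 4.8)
The plan is to reduce everything to the explicit coefficient formula already established in Proposition~\ref{p:fwd-Pascal} and then invoke a standard binomial identity. By that proposition, the only nonzero entries of the $p$th forward-difference stencil are
\begin{equation*}
  \leftexp{(p)}{\varsigma_m} = (-1)^{p+m}\binom{p}{m} \qquad (m = 0,\dots,p),
\end{equation*}
so squaring kills the sign and leaves $(\leftexp{(p)}{\varsigma_m})^2 = \binom{p}{m}^2$. Hence the squared modulus is
\begin{equation*}
  \sum_m{\leftexp{(p)}{\varsigma^2_m}} = \sum^p_{m=0}{\binom{p}{m}^2},
\end{equation*}
and the first claim reduces to the identity $\sum^p_{m=0}{\binom{p}{m}^2} = \binom{2p}{p}$.

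For this identity I would use coefficient extraction (equivalently, Vandermonde convolution). Writing $(1+x)^{2p} = (1+x)^p (1+x)^p$ and comparing the coefficient of $x^p$ on both sides gives $\binom{2p}{p}$ on the left and $\sum_m{\binom{p}{m}\binom{p}{p-m}}$ on the right; using the symmetry $\binom{p}{p-m} = \binom{p}{m}$ turns the latter into $\sum_m{\binom{p}{m}^2}$, which establishes the claim. This is the entire combinatorial content of the proposition, and there is no real obstacle here beyond recalling the identity; the point worth flagging is simply that the alternating signs of the forward-difference coefficients are irrelevant to the squared modulus, which is why the sum over the signed Pascal-triangle row reduces to the central binomial coefficient.

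The trace statement then follows immediately from Proposition~\ref{p:stencil-sqmodulus}. Applying that proposition to the family member $\leftexp{(p)}{\varsigma}$, whose associated matrix is $\leftexp{(p)}{\D} = \D^p$ with $\leftexp{(p)}{\SS} = (\D^p)^T\D^p$, gives
\begin{equation*}
  \trace{\smash{\leftexp{(p)}{\SS}}} = M \sum_m{\leftexp{(p)}{\varsigma^2_m}} = M \binom{2p}{p},
\end{equation*}
where the last equality is the squared-modulus result just proved. One should only check that $\leftexp{(p)}{\D}$ is indeed the circulant matrix generated by the stencil $\leftexp{(p)}{\varsigma}$, so that Proposition~\ref{p:stencil-sqmodulus} applies; this is exactly the content of the corollary to Proposition~\ref{p:stencil-comp2}, which identifies the family matrices as the powers $\D^p$. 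Thus both assertions follow with no computation beyond the single binomial identity.
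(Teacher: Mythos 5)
Your proof is correct and takes essentially the same route as the paper: both reduce the squared modulus via Proposition~\ref{p:fwd-Pascal} to $\sum^p_{m=0}{\binom{p}{m}^2} = \binom{2p}{p}$ and then obtain the trace statement from Proposition~\ref{p:stencil-sqmodulus} applied to $\leftexp{(p)}{\SS} = (\D^p)^T\D^p$. The only (harmless) difference is that the paper cites the binomial identity from a handbook (GR~0.157:1), whereas you prove it inline by Vandermonde convolution.
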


Fig.~\ref{f:stencil-families1Da} shows that the forward-difference family forms a progression with $p$ similar to that of the continuous case (fig.~\ref{f:1Dcderiv}) where the curves slope up more slowly for larger $p$. Note that even though the nullspace is strictly that of the constant waves, since the only null eigenvalue is $\nu_0$, as $p$ increases there are more and more near-zero eigenvalues for the low frequencies. In other words, the \emph{effective} nullspace increases with $p$, consistently with the nonperiodic case, where the \emph{actual} nullspace increases with $p$. Thus, in this family low frequencies are practically not penalised for high $p$.

The combination of the power curves of figures~\ref{f:stencil-families1Da} and~\ref{f:stencil-families2Da} with a qualitative argument based on a \emph{drifting cutoff} allows a partial explanation of the behaviour of the GEN during the deterministic annealing algorithm (though it does not allow us to compute the actual preferred frequency as a function of the stencil and $\beta$). The predictions are demonstrated by our simulations in fig.~\ref{f:simul1D-periodic} for 1D nets and fig.~\ref{f:simul2D:OD} for 2D nets. The idea is as follows. The fitness term would like to have access to all frequencies so that the net matches the data points, while the tension term penalises frequencies proportionally to the power spectrum (remark~\ref{rm:tension-penalty}). If we assume that an optimal net can have at most a certain tension, then it will have at most a certain power $p_{k^*}$ (where $k^*$ is the largest frequency with power less or equal than that power). This power will increase as $\sigma$ decreases, since the relative influence of the tension term decreases. We call the power $p_{k^*}(\sigma)$ the drifting cutoff, since frequencies whose power exceeds $p_{k^*}(\sigma)$ cannot be present in the net at scale $\sigma$.

The drifting cutoff predicts that the nets that arise are expected to show higher and higher frequency as $p$ increases. For large $\sigma$ the tension term dominates, so that even a small tension penalty is not allowed, and the net collapses to a point at the centre of mass of the training set. This, which corresponds to the fact that the energy has a single minimum, was proven by \citet{Durbin_89a} for the original elastic net and is easy to extend to our general case. As $\sigma$ is decreased, the influence of the tension term gradually decreases in favour of the fitness term; the energy experiences a series of bifurcations and develops more and more minima which correspond to the net stretching more and more in various ways to approach the training set points. Now imagine a cutoff power $p_{k^*}$ in fig.~\ref{f:stencil-families1Da} (the horizontal dashed line) that corresponds to a maximum allowed tension penalty at a given $\sigma$. For large $\sigma$, the cutoff is at $p_{k^*} = 0$ and the only frequency possible for any stencil order is $k = 0$, i.e., a constant net (all centroids at the centre of mass). As $\sigma$ decreases, the cutoff power increases (as in the figure), so that for a small $p_{k^*}$ the corresponding cutoff frequency $k^*$ is given by the power curves; thus, $k^*$ increases with $p$, i.e., the early behaviour of the nets shows higher frequency for high $p$ (narrower stripes in cortical maps). Note that the increase of the cutoff proceeds in jumps (corresponding to the bifurcations of the energy) rather than uniformly. As $\sigma$ is further decreased and the cutoff power further increases, higher frequencies appear in the net, but they are constrained to develop on the already existing low-frequency net, and so result in local quirks and stretchings towards the training set points, superimposed on a lower-frequency structure.

It follows that if one starts the training at a low value of $\sigma$, so that the cutoff corresponds to a high frequency $k^*$, the emerging net loses its smooth structure with little difference between orders $p$, as we have confirmed in simulations. The same occurs if annealing too fast. This can also be understood without recourse to the power curves by noting that at low $\sigma$ the fitness term dominates and the centroids are drawn towards the training set points with little regard to the tension.

From fig.~\ref{f:stencil-families1Da} we see the cutoff frequencies at low $\sigma$ cluster for high $p$, i.e., $p = 1$ and $2$ differ much more from each other than $p = 3$ and $4$. This is also confirmed by the simulations and has been quantitatively evaluated in cortical maps by \citet{CarreirGoodhil03b}. Specifically, the original elastic net ($p = 1$) often comes out as qualitatively different from nets with $p > 1$.

The drifting cutoff argument also explains the effect of $\beta$: at any $\sigma$ or $p$, increasing $\beta$ strengthens the tension term penalty and so lowers the cutoff power $p_{k^*}$, which results in lower cutoff frequencies (wider stripes in cortical maps, see fig.~\ref{f:simul2D:OD}).

In 2D nets, the drifting cutoff argument explains not only the behaviour of the preferred frequency of the plane waves in the net but also the preferred direction. Note in fig.~\ref{f:stencil-families2Da} how for $p = 1$ the contour lines near the $\bk = (0,0)$ frequency are approximately circular. Thus, for a given power cutoff the largest frequency $\norm{\bk}$ is attained approximately equally in any direction, and the resulting stripes in cortical map simulations show no preferred direction (fig.~\ref{f:simul2D:OD}, $p = 1$). But for $p > 1$ (again in fig.~\ref{f:stencil-families2Da}) the contour lines near $\bk = (0,0)$ become approximately square. Thus, the largest frequency occurs along the line $k_1 = k_2$ (or $k_1 = -k_2$), and the resulting stripes run preferentially along the diagonals. Other 2D stencils (e.g.\ $\nabla^2_9$ in fig.~\ref{f:stencil-families2Db}) are more isotropic around $\bk = \0$ and do not result in preferred stripe directions.

\newlength{\MACPlengthA}
\setlength{\MACPlengthA}{.24\textheight}
\begin{figure}[th!]
  \psfrag{k}[t]{$k$}
  \psfrag{P(k)}[r][l][1][-90]{$p_k$}
  \begin{center}
    \begin{tabular}{@{}c@{\hspace{0.5cm}}c@{\hspace{1.5cm}}c@{}}
      \rotatebox{90}{\makebox[\MACPlengthA][c]{Forward-difference family}} &
      \psfrag{1}{$p = 1$}
      \psfrag{2}{$p = 2$}
      \psfrag{3}{$p = 3$}
      \psfrag{4}{$p = 4$}
      \includegraphics[height=\MACPlengthA]{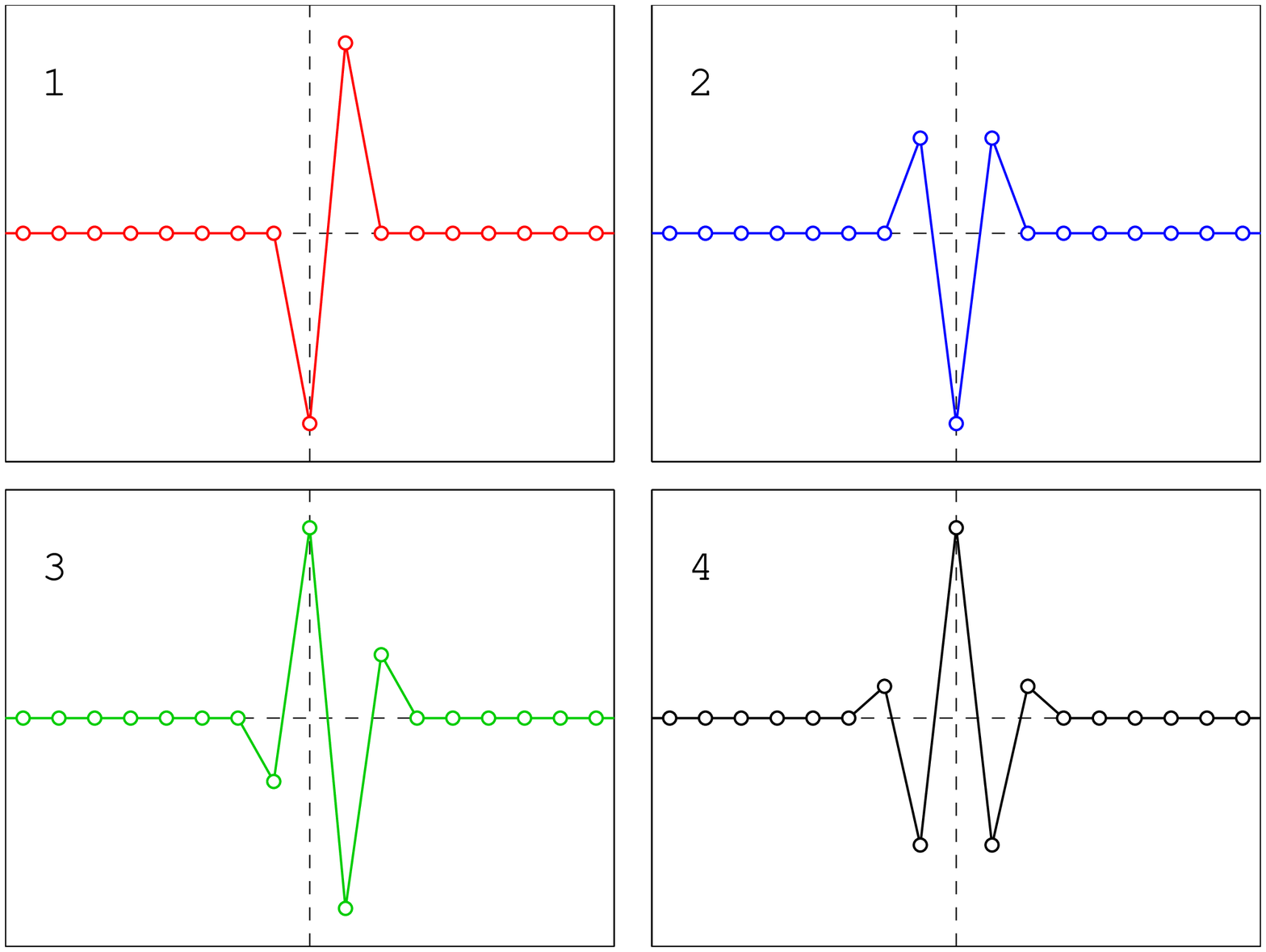} &
      \psfrag{1}{$1$}
      \psfrag{2}{$2$}
      \psfrag{3}{$3$}
      \psfrag{4}{$4$}
      \includegraphics[height=\MACPlengthA]{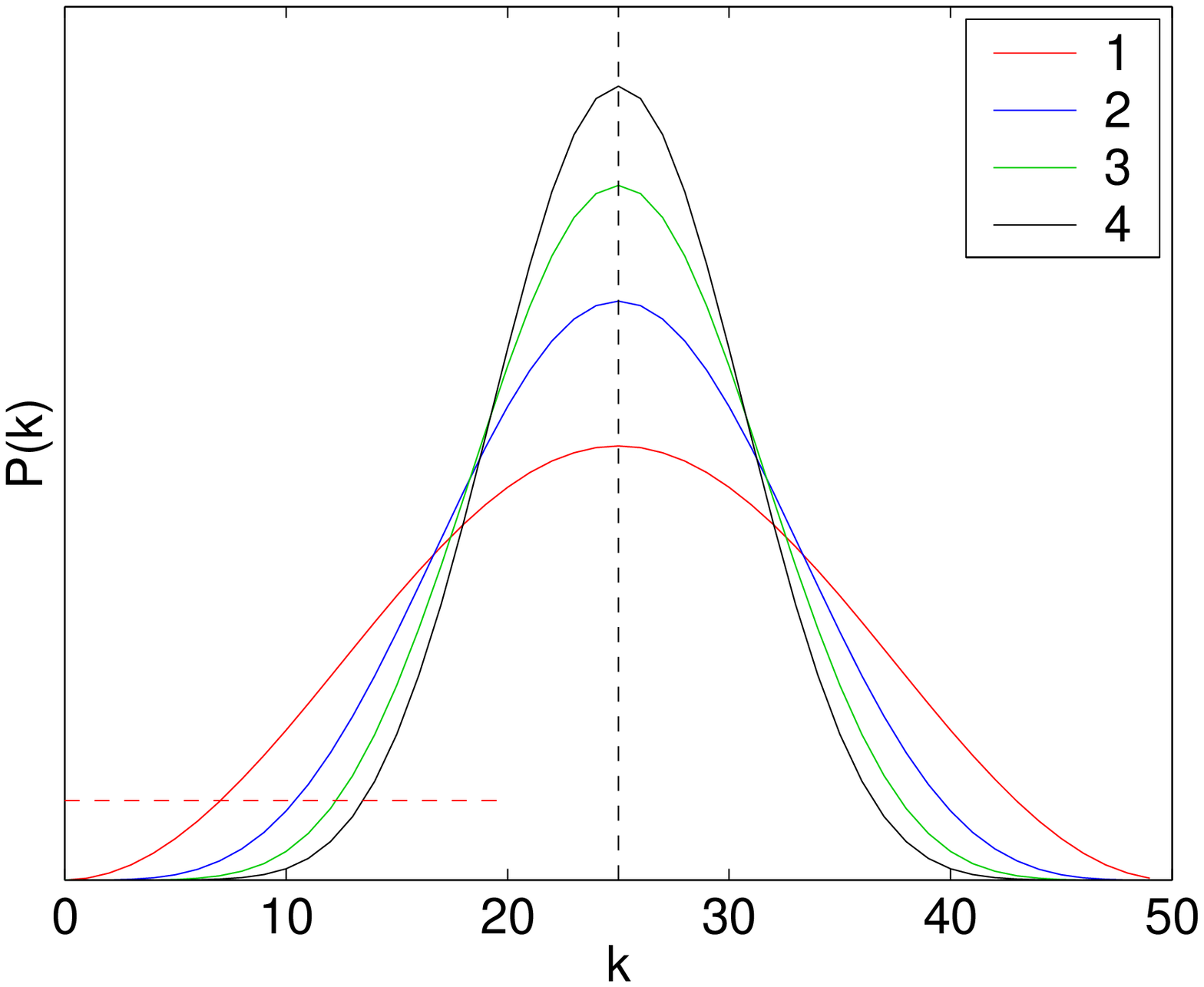} \\[2ex]
      \rotatebox{90}{\makebox[\MACPlengthA][c]{Central-difference family}} &
      \psfrag{1}{$p = 1$}
      \psfrag{2}{$p = 2$}
      \psfrag{3}{$p = 3$}
      \psfrag{4}{$p = 4$}
      \includegraphics[height=\MACPlengthA]{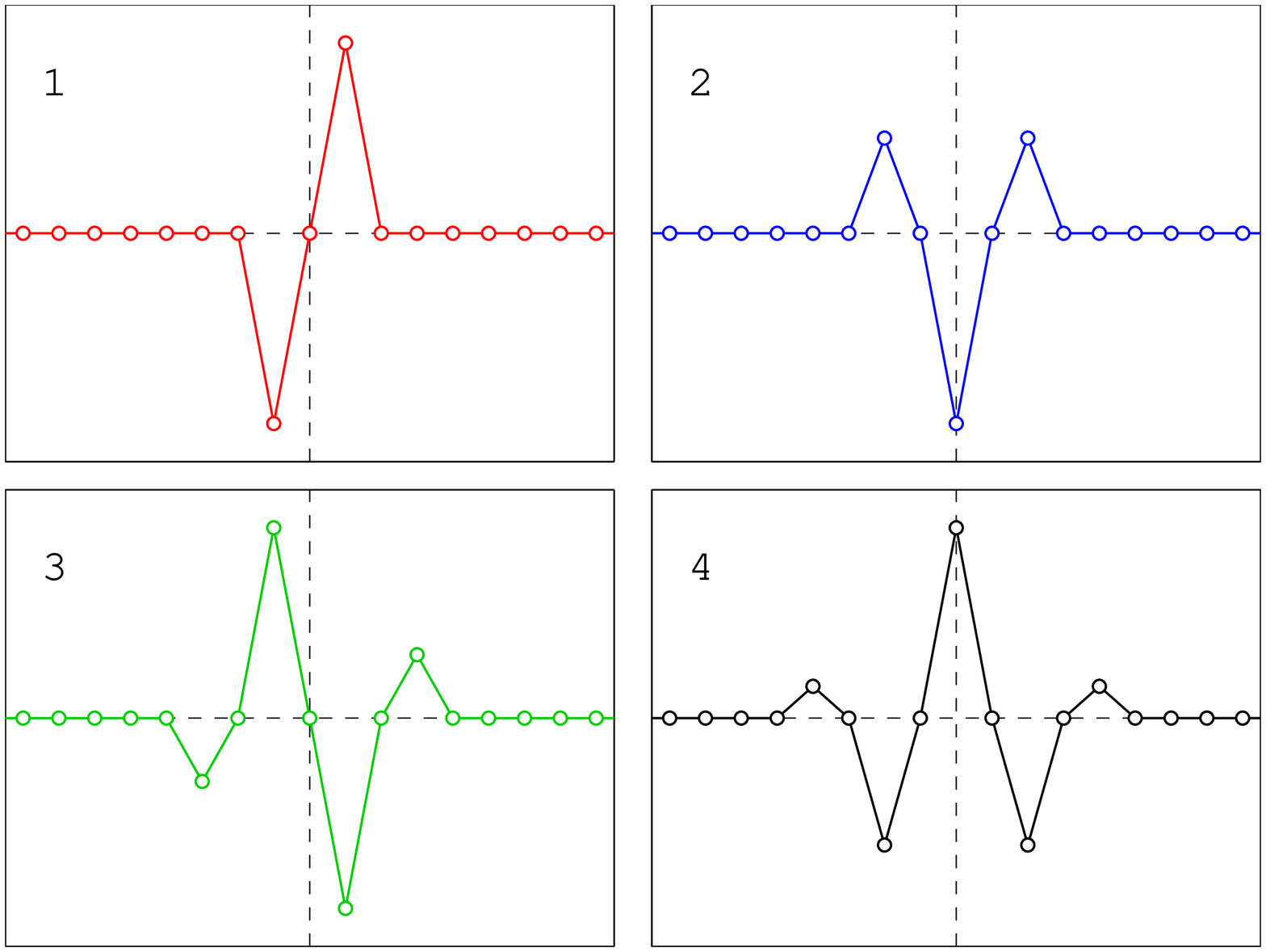} &
      \psfrag{1}{$1$}
      \psfrag{2}{$2$}
      \psfrag{3}{$3$}
      \psfrag{4}{$4$}
      \includegraphics[height=\MACPlengthA]{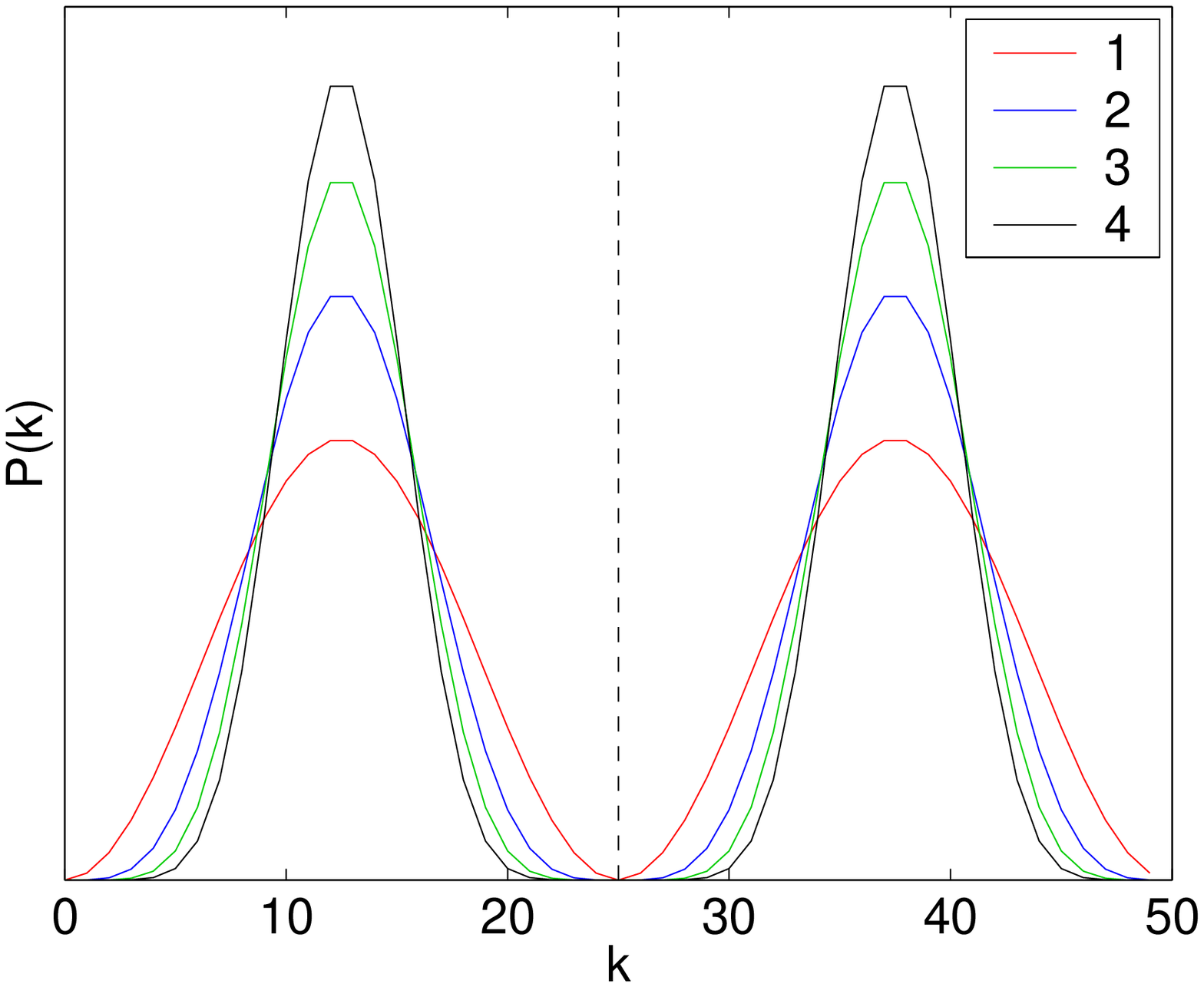}
    \end{tabular}
    \caption{1D stencils for the forward- and central-difference families, for a net with $M = 50$ centroids. The left panels show the stencils (in the net domain), normalised in maximal amplitude. The right panel shows the power spectrum (in the Fourier domain), normalised to integrate to $1$ (see appendix~\ref{s:stencil-norm}); only its left half, $k \in [0,\frac{M}{2}]$, is nonredundant. A relatively small $M$ is used to stress the fact that the power is discrete, although underlying it is a continuous curve. The horizontal dashed line represents a cutoff power corresponding to a certain value of the annealing parameter $\sigma$ (see section~\ref{s:fwddiff-family}).}
    \label{f:stencil-families1Da}
  \end{center}
\end{figure}

\subsubsection{Central-difference family: nets with sawtooth waves}
\label{s:cendiff-family}

This is defined by the first-order central-difference stencil $\varsigma = \left(-\frac{1}{2},\ 0,\ \frac{1}{2}\right)$, so \D\ has eigenvalues $\lambda_m = \frac{1}{2} \left( -1 + e^{i 4 \pi \frac{m}{M}} \right)$ and \SS\ has eigenvalues $\nu_m = \frac{1}{2} \left(1 - \cos{\left( 4 \pi \frac{m}{M} \right)} \right) = \sin^2{\left( 2 \pi \frac{m}{M} \right)}$. Thus the $p$th\nobreakdash-order derivative stencil has eigenvalues $\leftexp{(p)}{\nu_m} = \sin^{2p}{\left( 2 \pi \frac{m}{M} \right)} \in [0,1]$ $\forall m$. Note that $\sum_m{\varsigma_m} = \sum_m{\varsigma_m (-1)^m} = 0$ and so it is a sawtooth stencil (as also indicated by $\nu_{\frac{M}{2}} = 0$); thus, all stencils in this family are sawtooth.

The following proposition shows that the stencil of order $p$ of this family can be obtained from the forward-difference stencil of order $p$ by intercalating zero every two components and dividing by $2^p$ (see fig.~\ref{f:stencil-families}).
\begin{prop}
  \label{p:cen-Pascal}
  For $m = 0,\dots,2p$: $\leftexp{(p)}{\varsigma_m} = \frac{1}{2^p} (-1)^{p+n} \binom{p}{n}$ for $m = 2n$ even, zero otherwise.
\end{prop}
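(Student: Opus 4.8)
The plan is to package each stencil in its generating (Laurent) polynomial $\hat\varsigma(z)\bydef\sum_m \varsigma_m z^m$ and exploit the fact that convolution of stencils corresponds to multiplication of these polynomials (equivalently, that the eigenvalues $\lambda_m=\hat\varsigma(\omega_m)$ of the associated circulant \D\ multiply, as recorded in the corollary to Proposition~\ref{p:stencil-comp2}). Because \SS\ is invariant under shifts of the stencil (Section~\ref{s:D:inv}), I am free to use the left-aligned representation throughout, which is exactly what makes the target indexing $m=0,\dots,2p$ natural. In this convention the first-order forward-difference stencil $(0,\,-1,\,1)$ reads $(-1,\,1)$ at positions $m=0,1$, so $\hat\varsigma_{\mathrm f}(z)=z-1$; and the first-order central-difference stencil $\left(-\tfrac12,\,0,\,\tfrac12\right)$ sits at positions $m=0,1,2$, giving $\hat\varsigma_{\mathrm c}(z)=\tfrac12(z^2-1)$.

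The single observation that drives everything is the relation $\hat\varsigma_{\mathrm c}(z)=\tfrac12\,\hat\varsigma_{\mathrm f}(z^2)$, i.e.\ the order-$1$ central stencil is the order-$1$ forward stencil with a zero intercalated ($z\mapsto z^2$) and halved. Raising to the $p$th power and using multiplicativity of convolution then yields
\[
  \leftexp{(p)}{\hat\varsigma_{\mathrm c}}(z)=\hat\varsigma_{\mathrm c}(z)^p=\frac{1}{2^p}\,\hat\varsigma_{\mathrm f}(z^2)^p=\frac{1}{2^p}\,(z^2-1)^p .
\]
Expanding by the binomial theorem, $(z^2-1)^p=\sum_{n=0}^p(-1)^{p-n}\binom{p}{n}z^{2n}=\sum_{n=0}^p(-1)^{p+n}\binom{p}{n}z^{2n}$, and reading off the coefficient of $z^m$ gives exactly $\leftexp{(p)}{\varsigma_m}=\tfrac{1}{2^p}(-1)^{p+n}\binom{p}{n}$ when $m=2n$ is even (with $0\le n\le p$) and $0$ otherwise, which is the claim. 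Simultaneously, the identity $\leftexp{(p)}{\hat\varsigma_{\mathrm c}}(z)=\tfrac1{2^p}\,\leftexp{(p)}{\hat\varsigma_{\mathrm f}}(z^2)$ together with Proposition~\ref{p:fwd-Pascal} confirms the stated recipe: take the order-$p$ forward-difference stencil, intercalate a zero between consecutive entries, and divide by $2^p$.

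There is essentially no analytic obstacle here; once the generating-polynomial dictionary is in place the content reduces to one binomial expansion. The only point demanding care is the bookkeeping---fixing a single alignment convention for both families and invoking the shift-invariance of \SS\ so that the left-aligned forms may be used without loss of generality. An alternative, fully equivalent route avoids generating polynomials: start from the eigenvalues $\leftexp{(p)}{\nu_m}=\sin^{2p}\!\left(2\pi m/M\right)$ found at the start of Section~\ref{s:cendiff-family}, note $\leftexp{(p)}{\lambda_m}=\bigl(\tfrac12(e^{i4\pi m/M}-1)\bigr)^{p}$, and recover the first row of \D\ by the inverse DFT (remark after Proposition~\ref{p:eig-circ}); this again produces $\tfrac1{2^p}(z^2-1)^p$ after the same binomial expansion. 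I would present the generating-polynomial version as the cleaner of the two.
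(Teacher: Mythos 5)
Your proof is correct, but it follows a genuinely different route from the paper's. The paper argues by induction on $p$: it writes one application of the left-aligned central difference as the recursion $\leftexp{(p+1)}{\varsigma_m} = -\tfrac{1}{2}\leftexp{(p)}{\varsigma_m} + \tfrac{1}{2}\leftexp{(p)}{\varsigma_{m-2}}$, checks the boundary cases $m=0$, $m$ odd, and $m=2p+2$ separately, and closes the interior even case with Pascal's rule $\binom{p}{n}+\binom{p}{n-1}=\binom{p+1}{n}$, exactly parallel to its proof of Proposition~\ref{p:fwd-Pascal}. You instead pass to generating polynomials, note $\hat{\varsigma}_{\mathrm{c}}(z)=\tfrac{1}{2}(z^2-1)=\tfrac{1}{2}\hat{\varsigma}_{\mathrm{f}}(z^2)$, and read the whole family off $\tfrac{1}{2^p}(z^2-1)^p$ in one binomial expansion. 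The combinatorial content is the same (the binomial theorem is Pascal's rule applied $p$ times), but your packaging buys real advantages: it dispatches Propositions~\ref{p:fwd-Pascal} and~\ref{p:cen-Pascal} simultaneously from $(z-1)^p$ and $\tfrac{1}{2^p}(z^2-1)^p$, and the substitution $z\mapsto z^2$ is a one-line formal proof of the ``intercalate zeros and divide by $2^p$'' relation between the two families, which the paper obtains only a posteriori by comparing the two coefficient formulas. Two bookkeeping points, both of which you handled correctly: (i) the sign convention---the paper's family is built from reversed stencils, and the central difference is antisymmetric, so signs flip at each step; since \SS\ is invariant to sign reversal and your choice of $-\tfrac{1}{2}$ at position $0$ matches the stated formula (and the paper's own proof), your convention is the right one; (ii) your alternative route via eigenvalues and the inverse DFT is coefficient-faithful only when $2p+1 \le M$ (no wraparound), which the paper assumes throughout; your primary polynomial argument needs no such caveat.
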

\begin{prop}
  \label{p:cen-sqmodulus}
  $\sum_m{\leftexp{(p)}{\varsigma^2_m}} = \frac{1}{2^{2p}} \binom{2p}{p}$ and $\trace{\smash{\leftexp{(p)}{\SS}}} = M \frac{1}{2^{2p}} \binom{2p}{p}$.
\end{prop}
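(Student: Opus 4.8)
The plan is to reduce both claims to results already in hand: the explicit coefficients of the central-difference stencil from Proposition~\ref{p:cen-Pascal}, the forward-difference squared modulus from Proposition~\ref{p:fwd-sqmodulus}, and the general trace formula of Proposition~\ref{p:stencil-sqmodulus}. No genuinely new computation is required.

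The slickest route, which I would present first, exploits the structural statement of Proposition~\ref{p:cen-Pascal}: the order-$p$ central-difference stencil is obtained from the order-$p$ forward-difference stencil by intercalating a zero between consecutive coefficients and multiplying by $\frac{1}{2^p}$. Intercalating zeroes leaves the multiset of nonzero coefficients unchanged and hence does not affect the sum of squared coefficients, while the prefactor $\frac{1}{2^p}$ contributes a factor $\frac{1}{2^{2p}}$ upon squaring. Since the forward-difference stencil has $\sum_m{\leftexp{(p)}{\varsigma^2_m}} = \binom{2p}{p}$ by Proposition~\ref{p:fwd-sqmodulus}, this immediately gives the first claim.

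For readers who prefer a direct verification, I would also carry out the computation straight from the coefficients. By Proposition~\ref{p:cen-Pascal} the nonzero entries sit at the even positions $m = 2n$ with value $\frac{1}{2^p}(-1)^{p+n}\binom{p}{n}$, so
\begin{equation*}
  \sum_m{\leftexp{(p)}{\varsigma^2_m}} = \sum^p_{n=0}{\left( \frac{1}{2^p} (-1)^{p+n} \binom{p}{n} \right)^2} = \frac{1}{2^{2p}} \sum^p_{n=0}{\binom{p}{n}^2} = \frac{1}{2^{2p}} \binom{2p}{p},
\end{equation*}
where the last equality is the Vandermonde identity $\sum^p_{n=0}{\binom{p}{n}^2} = \binom{2p}{p}$. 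Both derivations agree, and the second simply makes the combinatorial identity behind the first explicit.

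Finally, the trace assertion follows at once from Proposition~\ref{p:stencil-sqmodulus}, which states that for the circulant $\leftexp{(p)}{\SS} = (\D^p)^T\D^p$ associated with the stencil one has $\trace{\leftexp{(p)}{\SS}} = M \sum_m{\leftexp{(p)}{\varsigma^2_m}}$; substituting the squared modulus just computed yields $\trace{\leftexp{(p)}{\SS}} = M \frac{1}{2^{2p}} \binom{2p}{p}$. I do not anticipate any real obstacle: the only nontrivial ingredient is the binomial identity, and even that is avoidable via the zero-intercalation argument, so the proof is essentially a one-line application of the cited propositions.
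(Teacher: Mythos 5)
Your proof is correct and matches the paper's treatment: the paper gives no separate proof of this proposition because it is the exact analogue of Proposition~\ref{p:fwd-sqmodulus}, whose proof is precisely your direct computation — the coefficients from Proposition~\ref{p:cen-Pascal}, the identity $\sum^p_{n=0}{\binom{p}{n}^2} = \binom{2p}{p}$, and Proposition~\ref{p:stencil-sqmodulus} for the trace. Your zero-intercalation shortcut reducing to Proposition~\ref{p:fwd-sqmodulus} is a pleasant alternative phrasing but carries the same content.
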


This family also has a progression with decreasing slopes at low frequencies (see fig.~\ref{f:stencil-families1Da}), but every one of its stencils is a sawtooth stencil. Thus, both the low and high frequencies are practically not penalised. Given that the fitness term will favour high frequencies, because this generally allows to match training set points better, the elastic nets resulting from this family of stencils very often contain sawtooth patterns (for low enough $\sigma$). Such sawtooth patterns may take all the net or part of it, and can appear superimposed on a lower-frequency wave for some values of $\sigma$. One can also understand why this happens by noting that the tension term decouples into two terms, one for the even centroids and the other for the odd centroids (the zero coefficients of the stencil alternate with nonzero ones). However, in general it may not be obvious from the structure of the stencil whether it is sawtooth, as in $\frac{1}{2} (-1,\ 2,\ 0,\ -2,\ 1)$ in fig.~\ref{f:sawtooth}; of course, the sawtooth conditions of section~\ref{s:sawtooth} can always be used.

Naturally, 2D stencils obtained by combining a 1D central-difference stencil along the horizontal and vertical directions are also sawtooth. As a different example of sawtooth stencil in 2D, consider $\nabla^2_{\times} \bydef \frac{1}{2} \left(\begin{smallmatrix} 1 &   & 1 \\   & -4 &   \\ 1 &   & 1 \end{smallmatrix}\right)$, which is a well-known finite-difference approximation to the 2D Laplacian with quadratic truncation error. This stencil verifies the sawtooth condition and so develops sawtooth in 2D, which appear as grating- or checkerboard-like patterns in the OD and OR maps. One can also understand why this happens in an analogous way to the 1D case, as follows. The linear combinations of consecutive points in the matrix \D\ do not have any element in common; this can be visualised by shifting the stencil either horizontally or vertically and seeing that no nonzero coefficients overlap. Consequently, the tension term becomes the sum of two uncoupled terms, one for the ``black squares'' and the other for the ``white'' ones (imagining again a checkerboard).

The fact that the central-difference stencil will typically result in a net with sawteeth suggests that it should be avoided in applications where continuity of representation is important, such as cortical maps.  However note that, while the whole net develops sawteeth, the two uncoupled subnets show individually a smooth structure (see e.g.\ fig.~\ref{f:sawtooth}A,B). This suggests we can use the central-difference stencil to solve a multiple TSP, where the training set ``cities'' must be visited by a given number of salesmen (see fig.~\ref{f:sawtooth-TSP} and section~\ref{e:discussion}).

The same type of techniques can be applied to any matrix \D, not necessarily a differential operator (although, as discussed in section~\ref{s:D:types-op}, non-differential operators are not desirable because they bias the centroids towards the origin of coordinates). For example, for $\varsigma = (1)$, which corresponds to $\D = \SS = \I$, the eigenvalues of \SS\ are $\nu_m = 1$ (DFT of a delta). Thus, all frequencies are equally penalized, in agreement with the fact that the prior $p(\Y)$ factorizes and all centroids are independent. For $\varsigma = \frac{1}{3} (1,\ 4,\ 1)$ (Simpson's integration rule) we get $\nu_m = \smash{\left( \frac{2}{3} \left( 2 + \cos{\left( 2 \pi \frac{m}{M} \right)} \right) \right)}^2$, which decreases from a maximum at frequency $0$ to a minimum at frequency $\frac{M}{2}$, the opposite to the forward difference. Thus, even though $\nu_{\frac{M}{2}} > 0$, the sawtooth frequency is the least penalized and so the net develops sawteeth---consistent with the fact that this is an integral, or smoothing, stencil.

\begin{FPfigure}
    \psfrag{k}[t]{$k$}
    \psfrag{P(k)}[r][l][1][-90]{$p_k$}
    \psfrag{1A}{\small 1A}
    \psfrag{1B}{\small 1B}
    \psfrag{1C}{\small 1C}
    \psfrag{1D}{\small 1D}
    \psfrag{1E}{\small 1E}
    \psfrag{1F}{\small 1F}
    \psfrag{1G}{\small 1G}
    \psfrag{1H}{\small 1H}
    \psfrag{1I}{\small 1I}
    \psfrag{1J}{\small 1J}
    \psfrag{1K}{\small 1K}
    \psfrag{2A}{\small 2A}
    \psfrag{2B}{\small 2B}
    \psfrag{2C}{\small 2C}
    \psfrag{2D}{\small 2D}
    \psfrag{2E}{\small 2E}
    \psfrag{2F}{\small 2F}
    \psfrag{2G}{\small 2G}
    \psfrag{2H}{\small 2H}
    \psfrag{2I}{\small 2I}
    \psfrag{2J}{\small 2J}
    \psfrag{2K}{\small 2K}
    \psfrag{3A}{\small 3A}
    \psfrag{3B}{\small 3B}
    \psfrag{3C}{\small 3C}
    \psfrag{3D}{\small 3D}
    \psfrag{3E}{\small 3E}
    \psfrag{3F}{\small 3F}
    \psfrag{3G}{\small 3G}
    \psfrag{3H}{\small 3H}
    \psfrag{3I}{\small 3I}
    \psfrag{3J}{\small 3J}
    \psfrag{3K}{\small 3K}
    \psfrag{4A}{\small 4A}
    \psfrag{4B}{\small 4B}
    \psfrag{4C}{\small 4C}
    \psfrag{4D}{\small 4D}
    \psfrag{4E}{\small 4E}
    \psfrag{4F}{\small 4F}
    \psfrag{4G}{\small 4G}
    \psfrag{4H}{\small 4H}
    \psfrag{4I}{\small 4I}
    \psfrag{4J}{\small 4J}
    \psfrag{4K}{\small 4K}
    \begin{tabular}{@{}c@{\hspace{0.5cm}}c@{\hspace{1.5cm}}c@{}}
      \rotatebox{90}{\makebox[\MACPlengthA][c]{First-order differences}} &
      \includegraphics[height=\MACPlengthA]{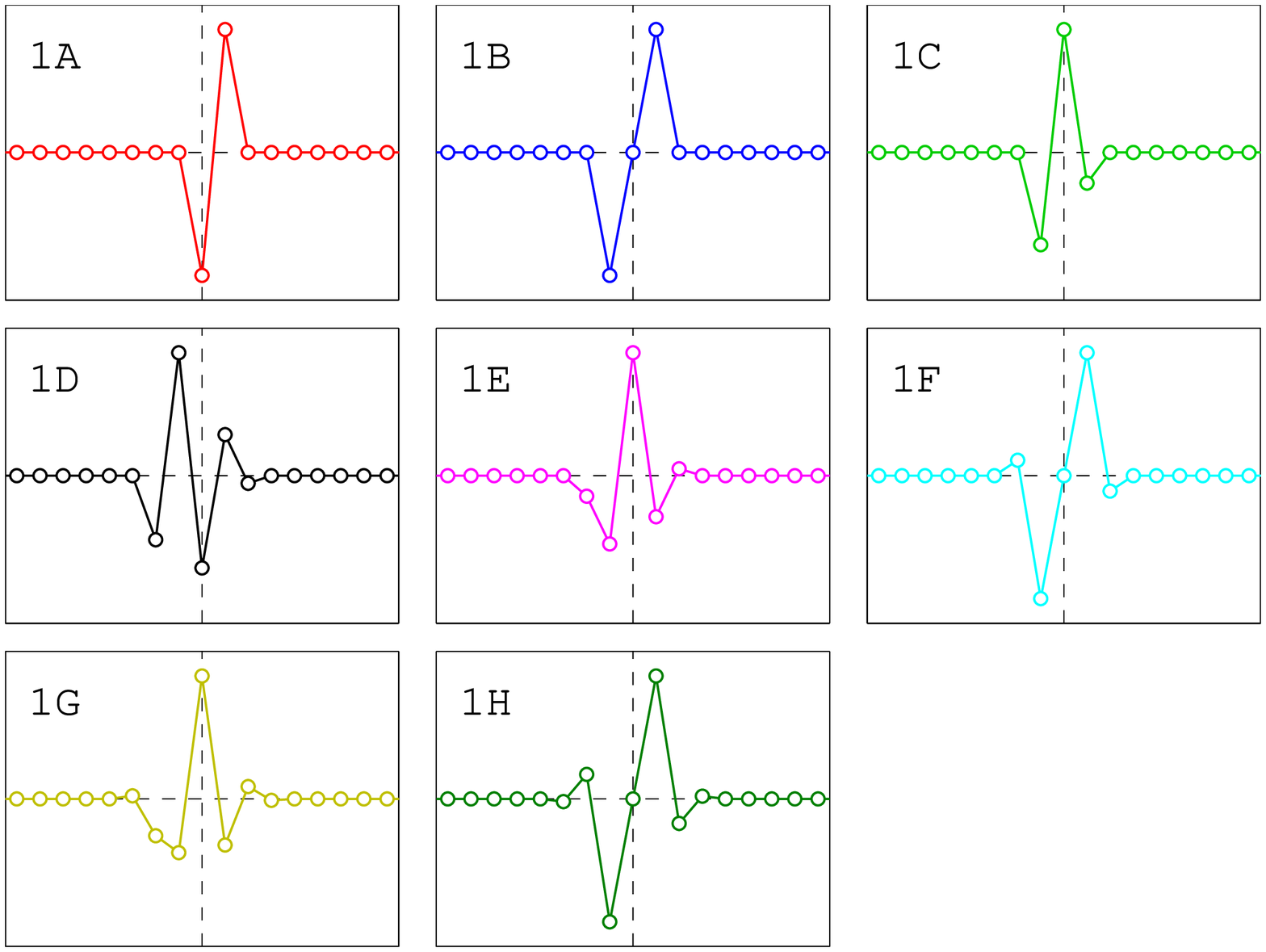} &
      \includegraphics[height=\MACPlengthA]{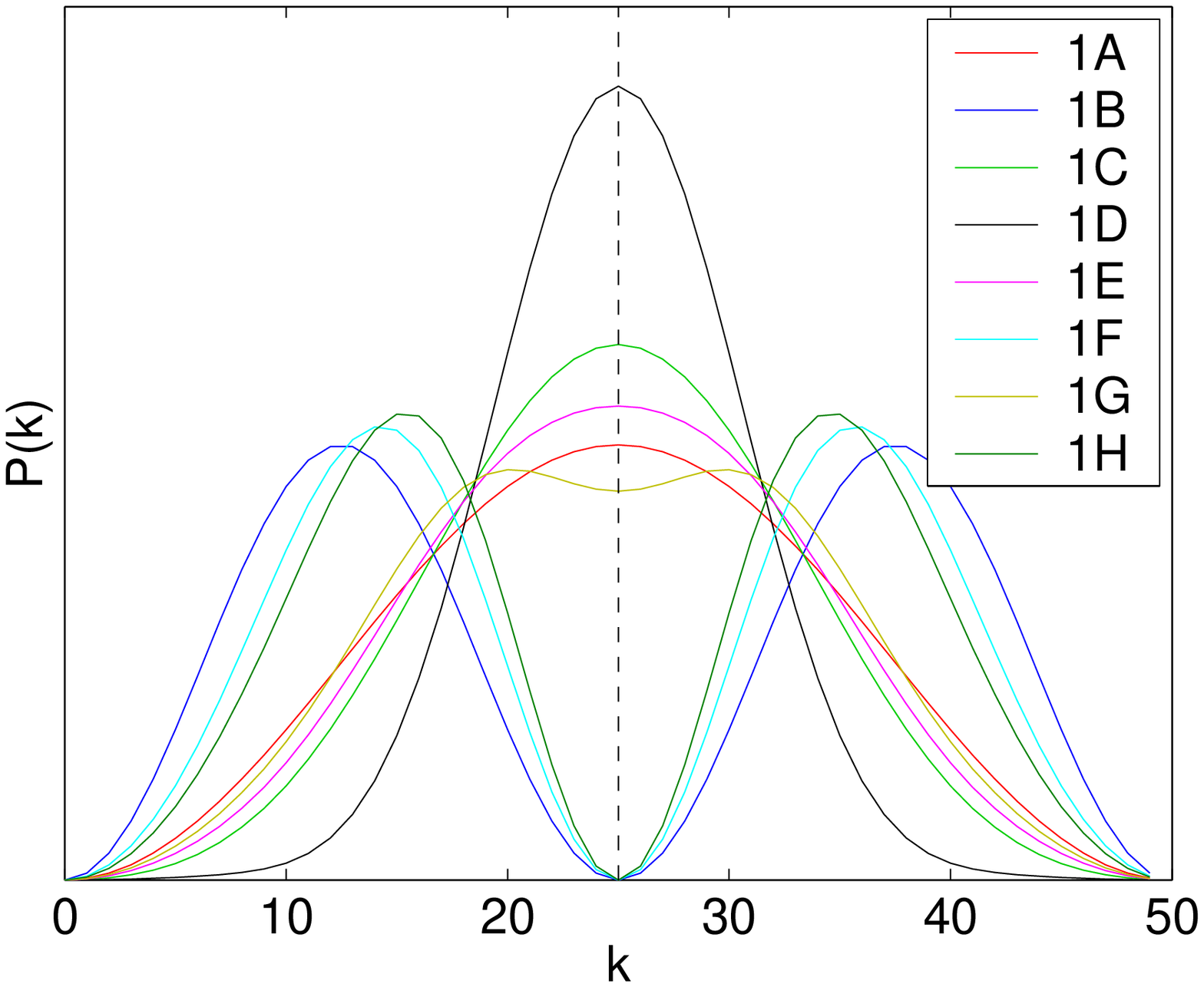} \\[2ex]
      \rotatebox{90}{\makebox[\MACPlengthA][c]{Second-order differences}} &
      \includegraphics[height=\MACPlengthA]{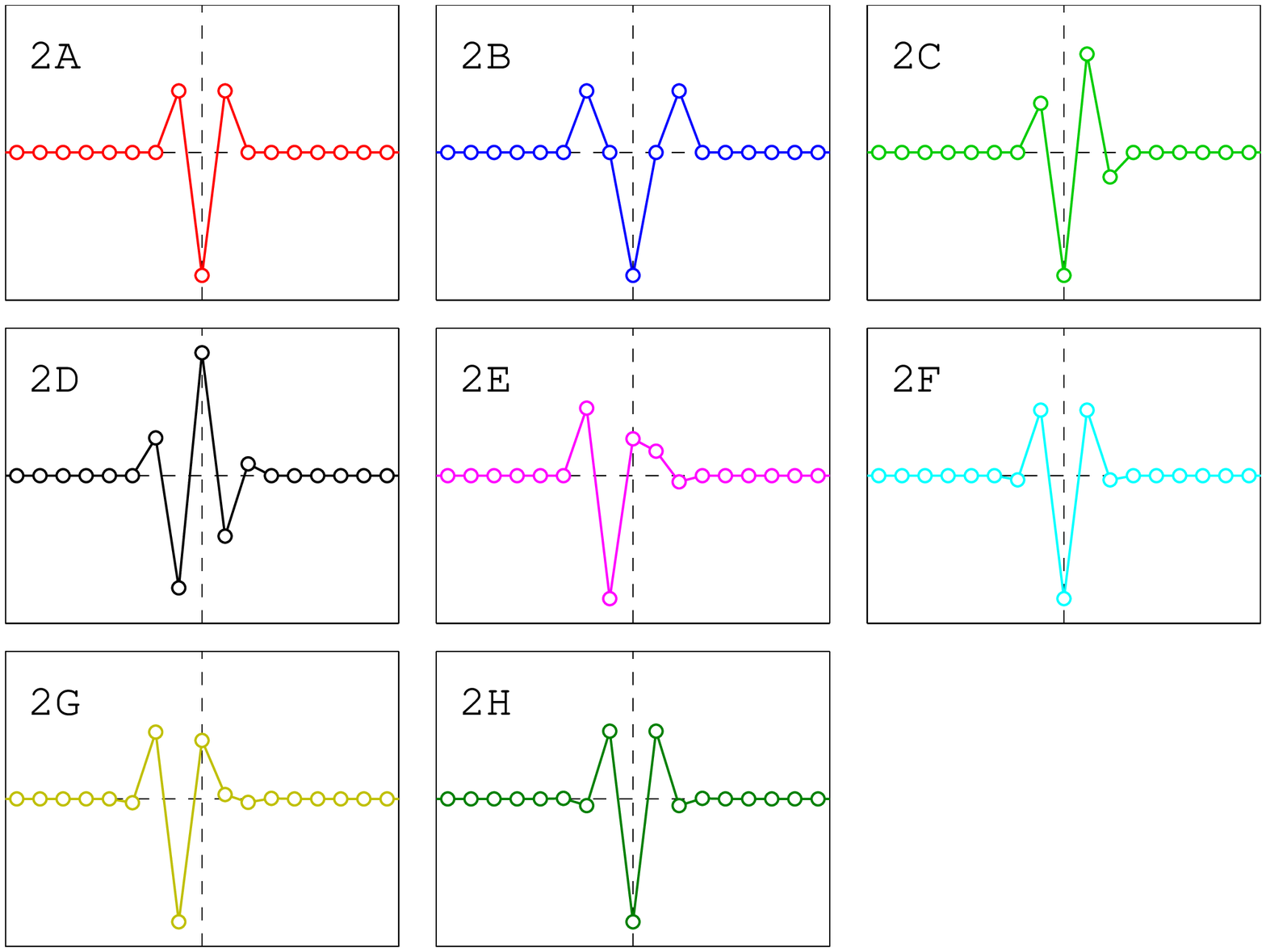} &
      \includegraphics[height=\MACPlengthA]{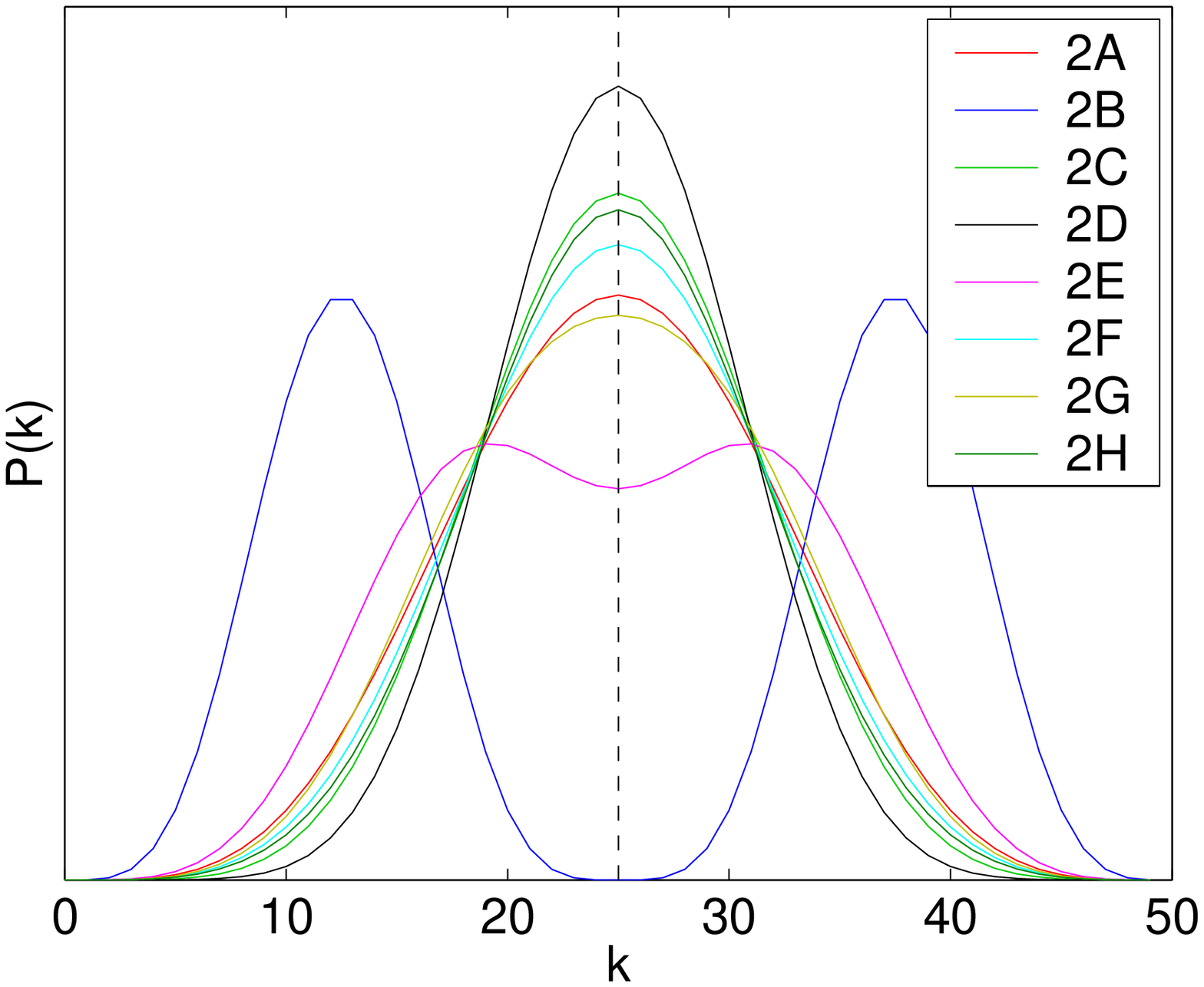} \\[2ex]
      \rotatebox{90}{\makebox[\MACPlengthA][c]{Third-order differences}} &
      \includegraphics[height=\MACPlengthA]{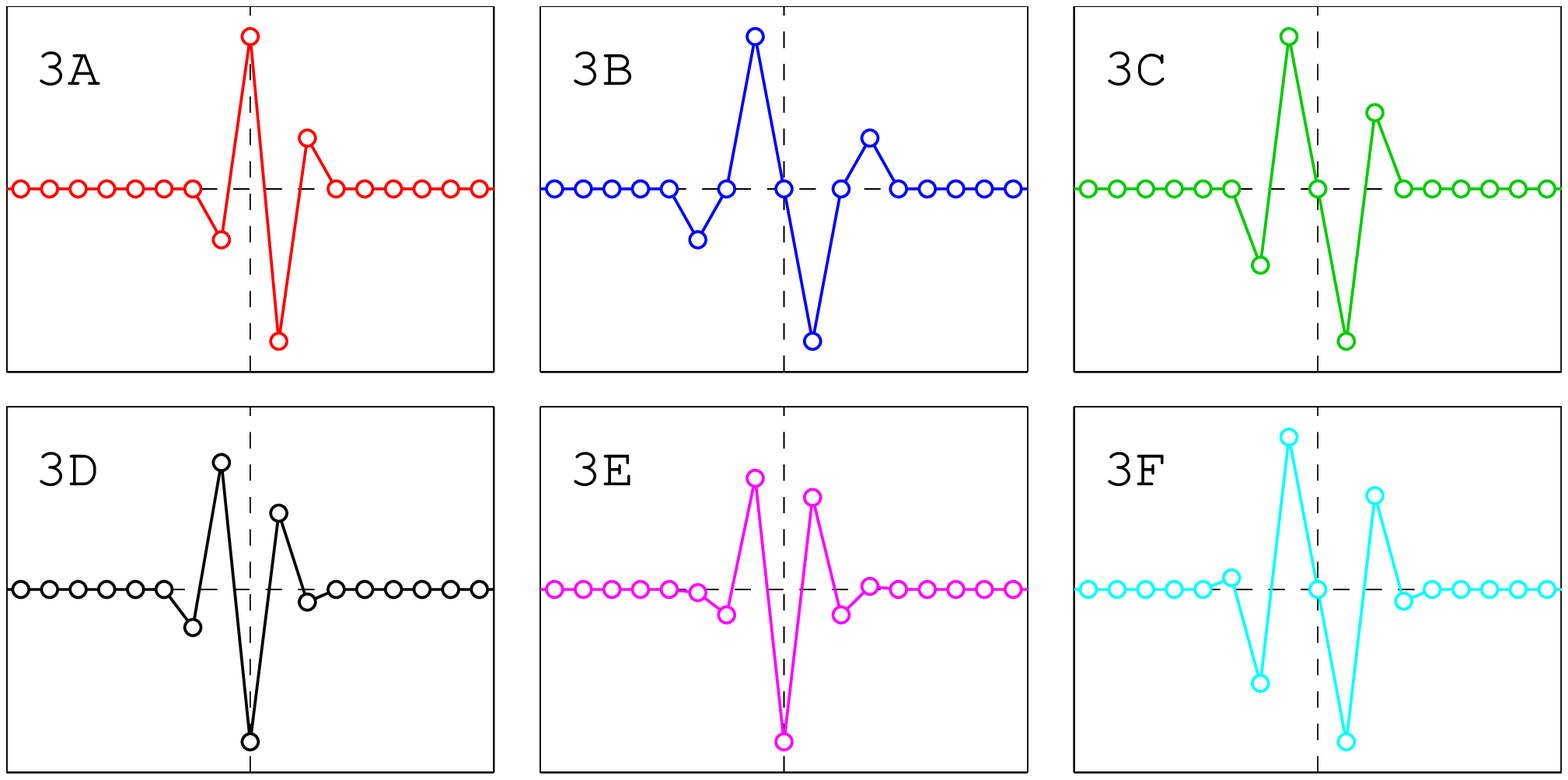} &
      \includegraphics[height=\MACPlengthA]{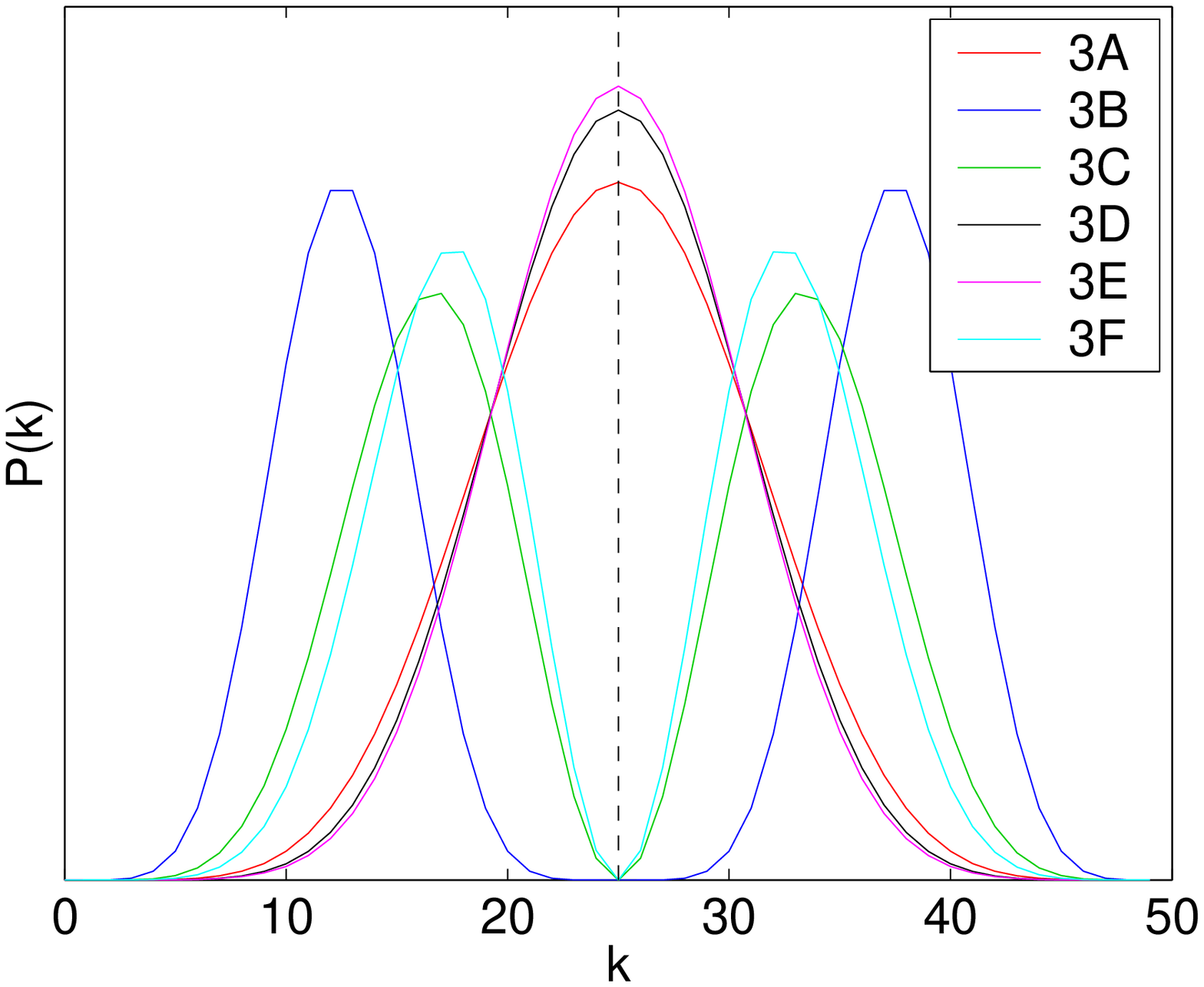} \\[2ex]
      \rotatebox{90}{\makebox[\MACPlengthA][c]{Fourth-order differences}} &
      \includegraphics[height=\MACPlengthA]{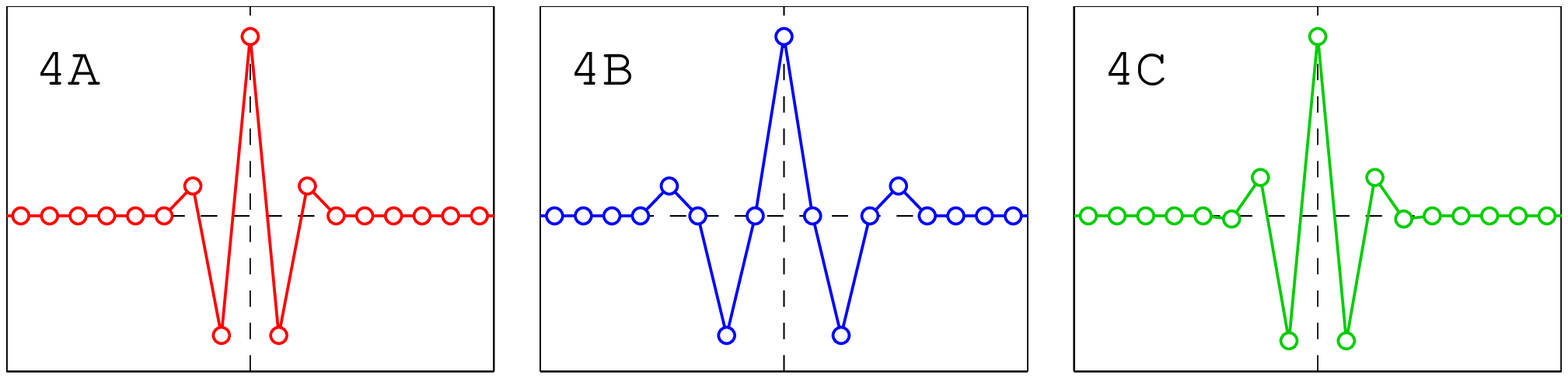} &
      \includegraphics[height=\MACPlengthA]{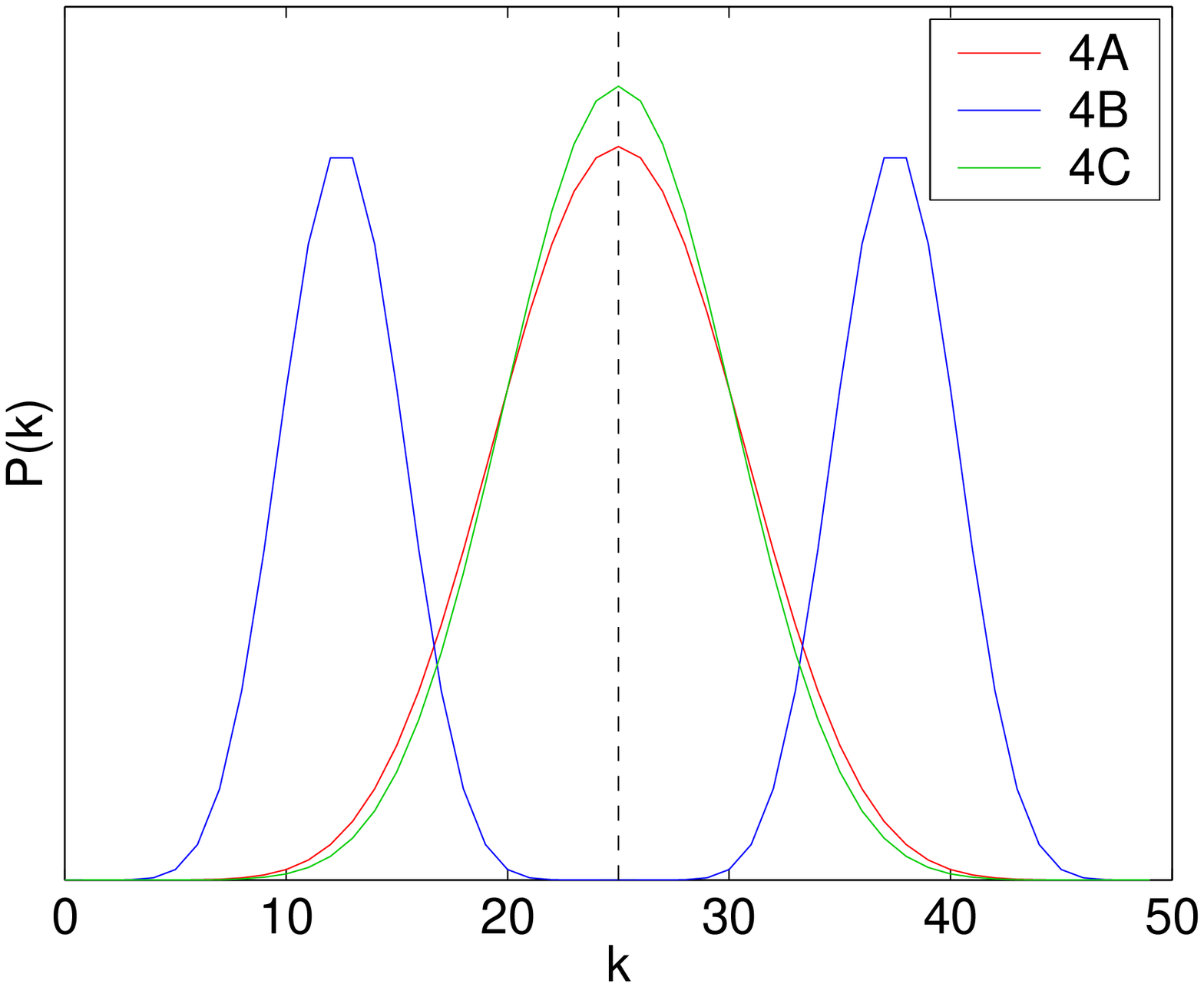}
    \end{tabular}
    \caption{A gallery of 1D stencils of orders $p = 1$ to $4$, as in fig.~\ref{f:stencil-families1Da}. The keys 1A, etc.\ correspond to those in table~\ref{t:D:diff:1D}. While in the net domain the stencils may look very different, the power curves look similar, typically either peaking at $p_{\frac{M}{2}}$ (like the forward-difference stencils) or dipping at $p_{\frac{M}{2}}$ (like the central-difference ones), the latter being sawtooth stencils. Also note that, while for a fixed $p$ the non-sawtooth power curves differ little from each other, there are occasional outliers (e.g.\ 1D).}
    \label{f:stencil-families1Db}
\end{FPfigure}

\begin{figure}
  \psfrag{k1}[t]{}
  \psfrag{k2}[t]{}
  \begin{center}
    \begin{tabular}{@{}c@{\hspace{.02\textwidth}}c@{}c@{}c@{}c@{}}
      & $p = 1$ & $p = 2$ & $p = 3$ & $p = 4$ \\
      \rotatebox{90}{\makebox[.24\textwidth][c]{Forward-difference family}} &
      \includegraphics[height=.24\textwidth]{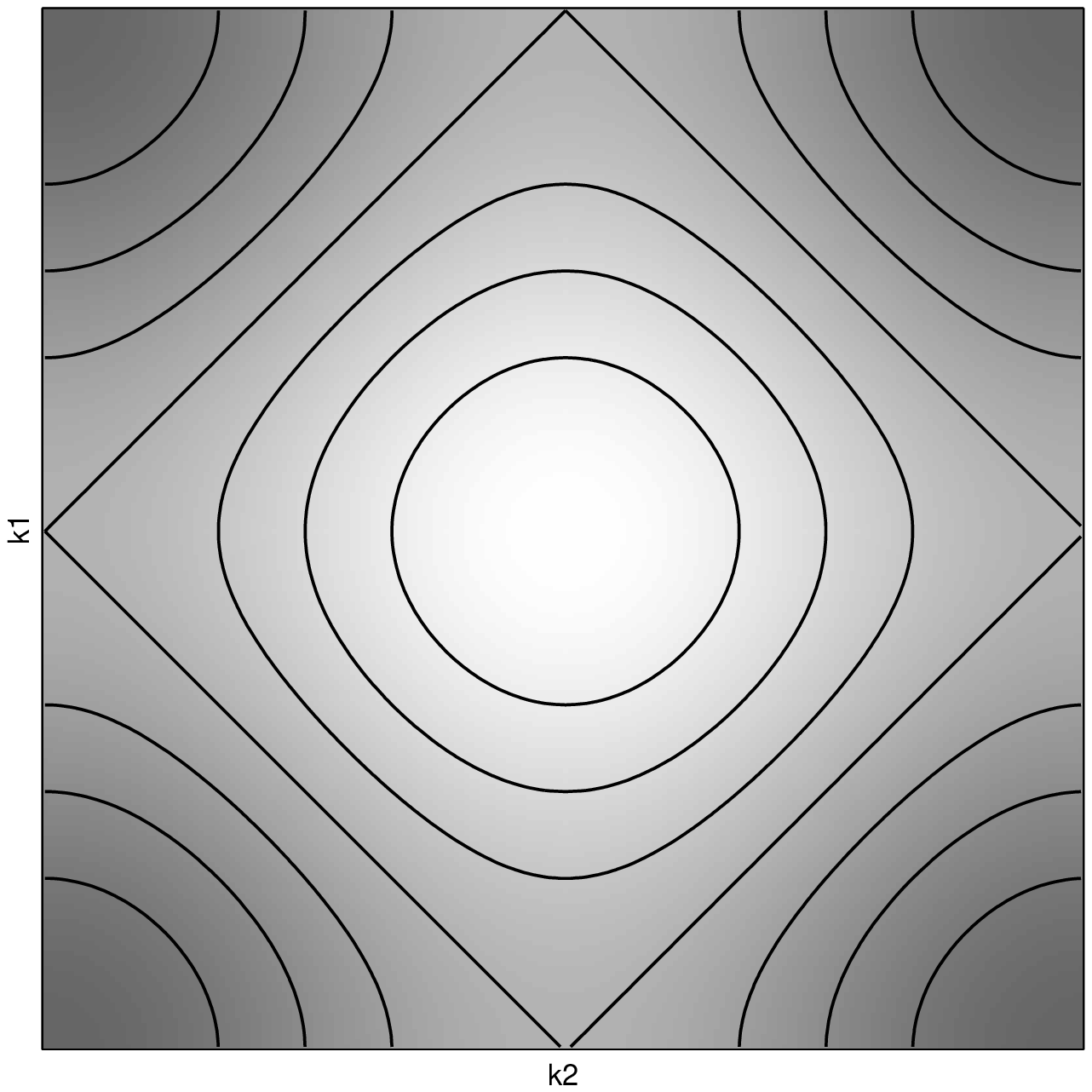} &
      \includegraphics[height=.24\textwidth]{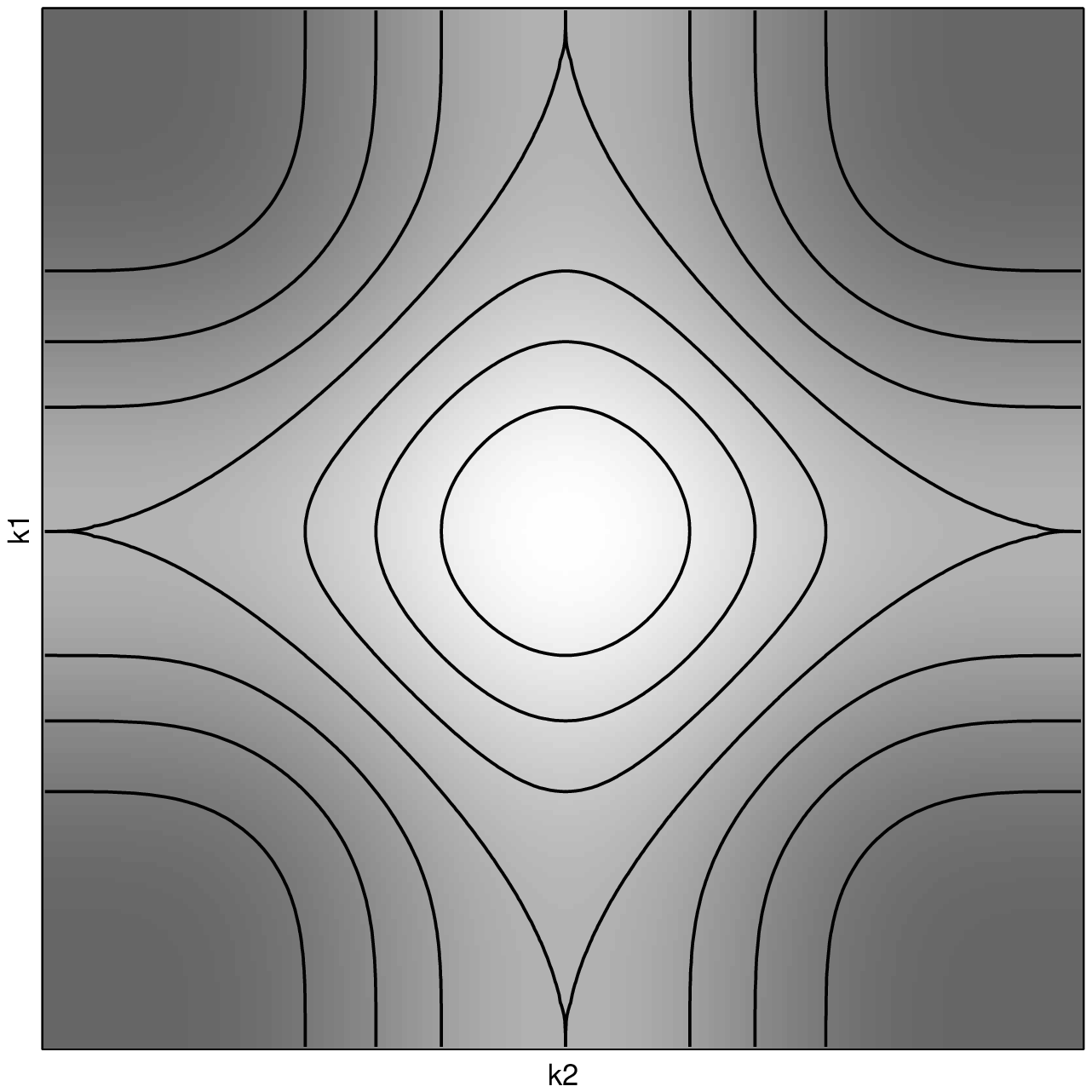} &
      \includegraphics[height=.24\textwidth]{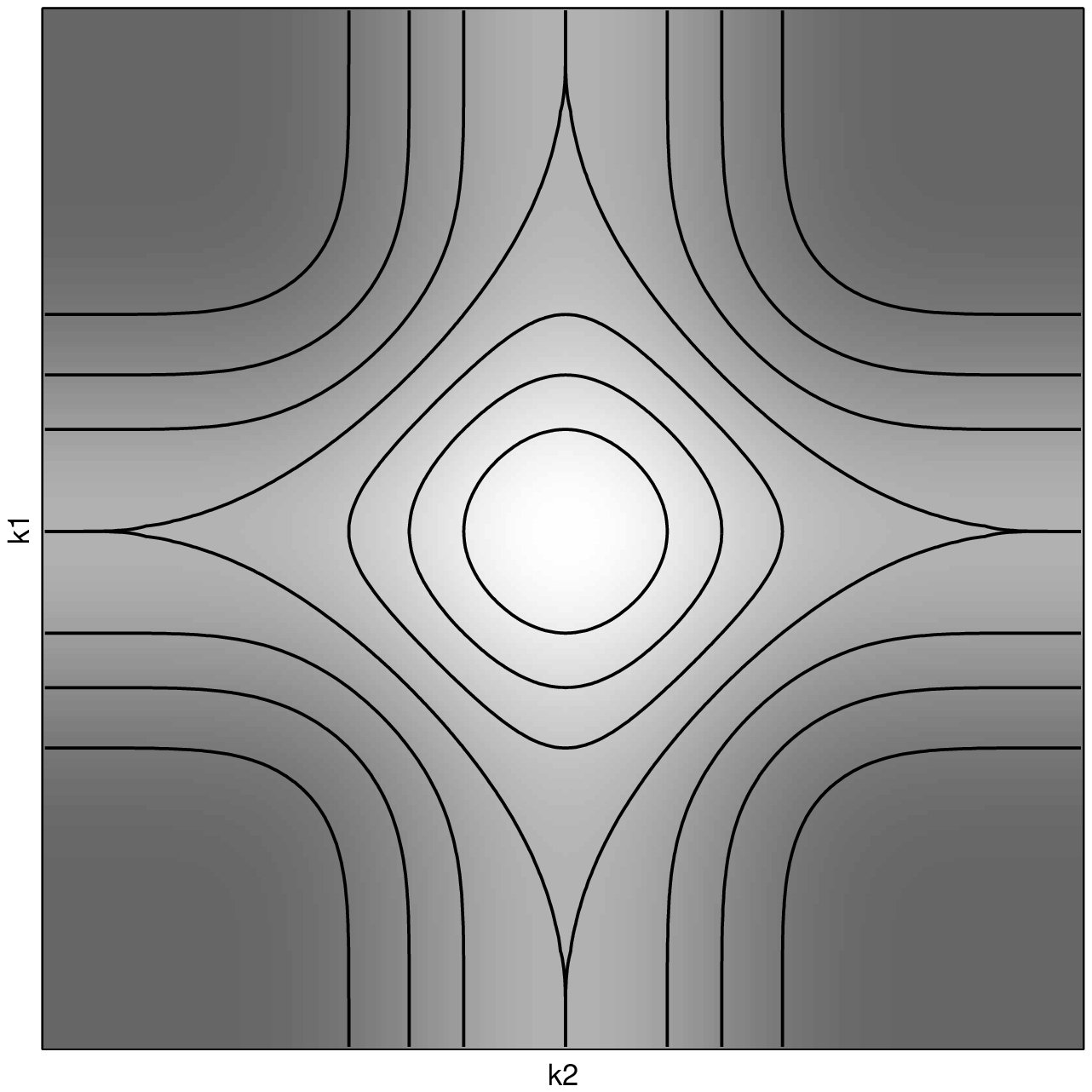} &
      \includegraphics[height=.24\textwidth]{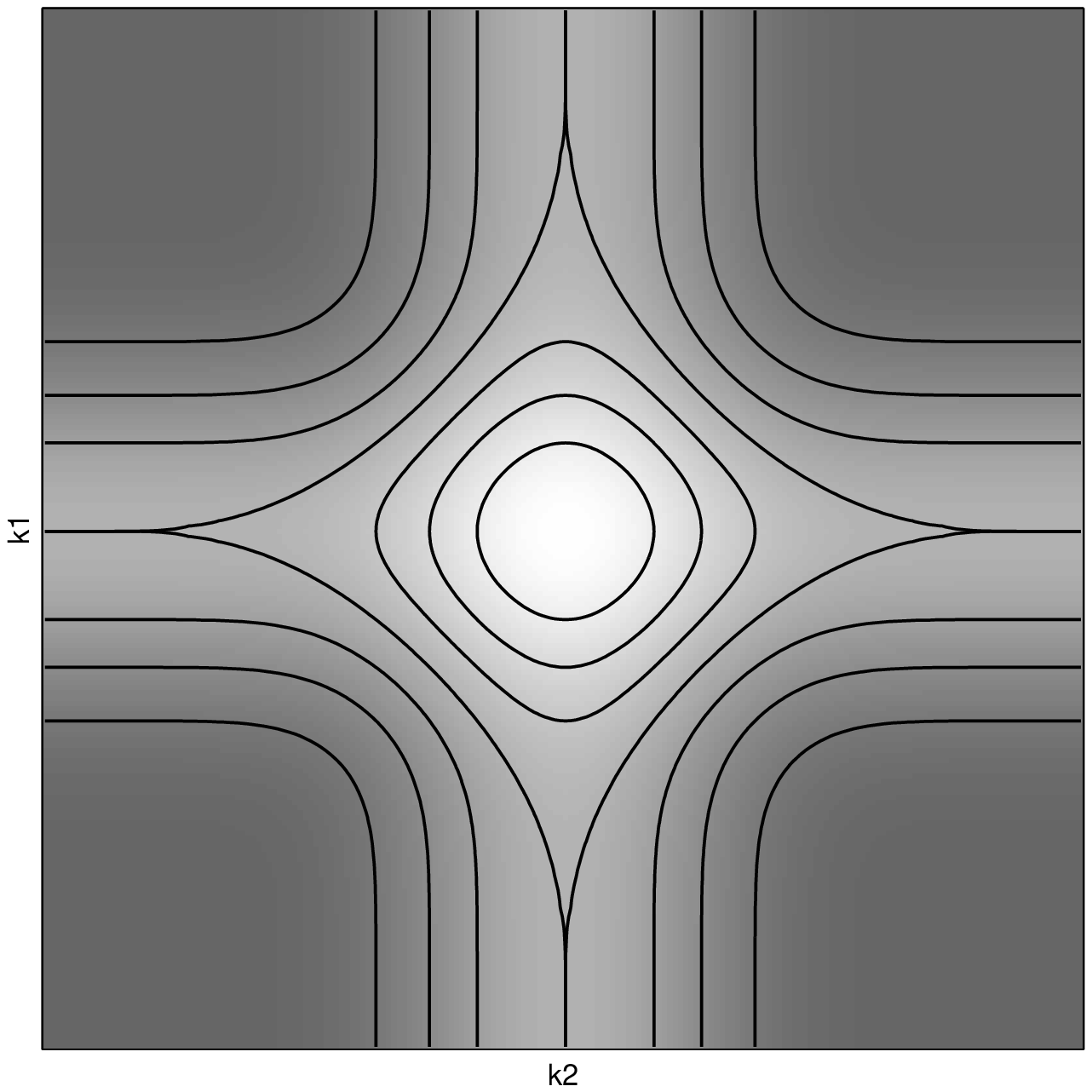} \\
      \rotatebox{90}{\makebox[.24\textwidth][c]{Central-difference family}} &
      \includegraphics[height=.24\textwidth]{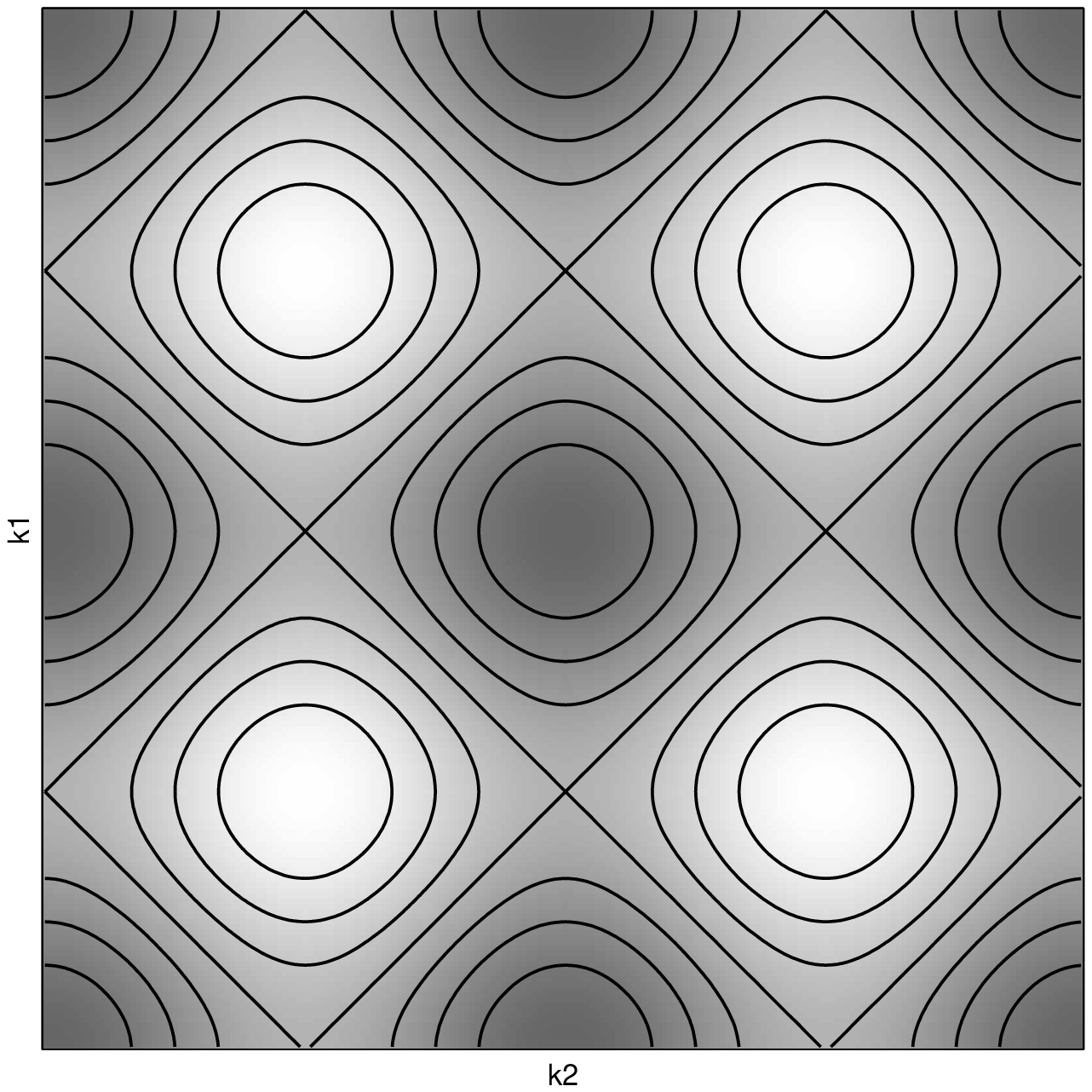} &
      \includegraphics[height=.24\textwidth]{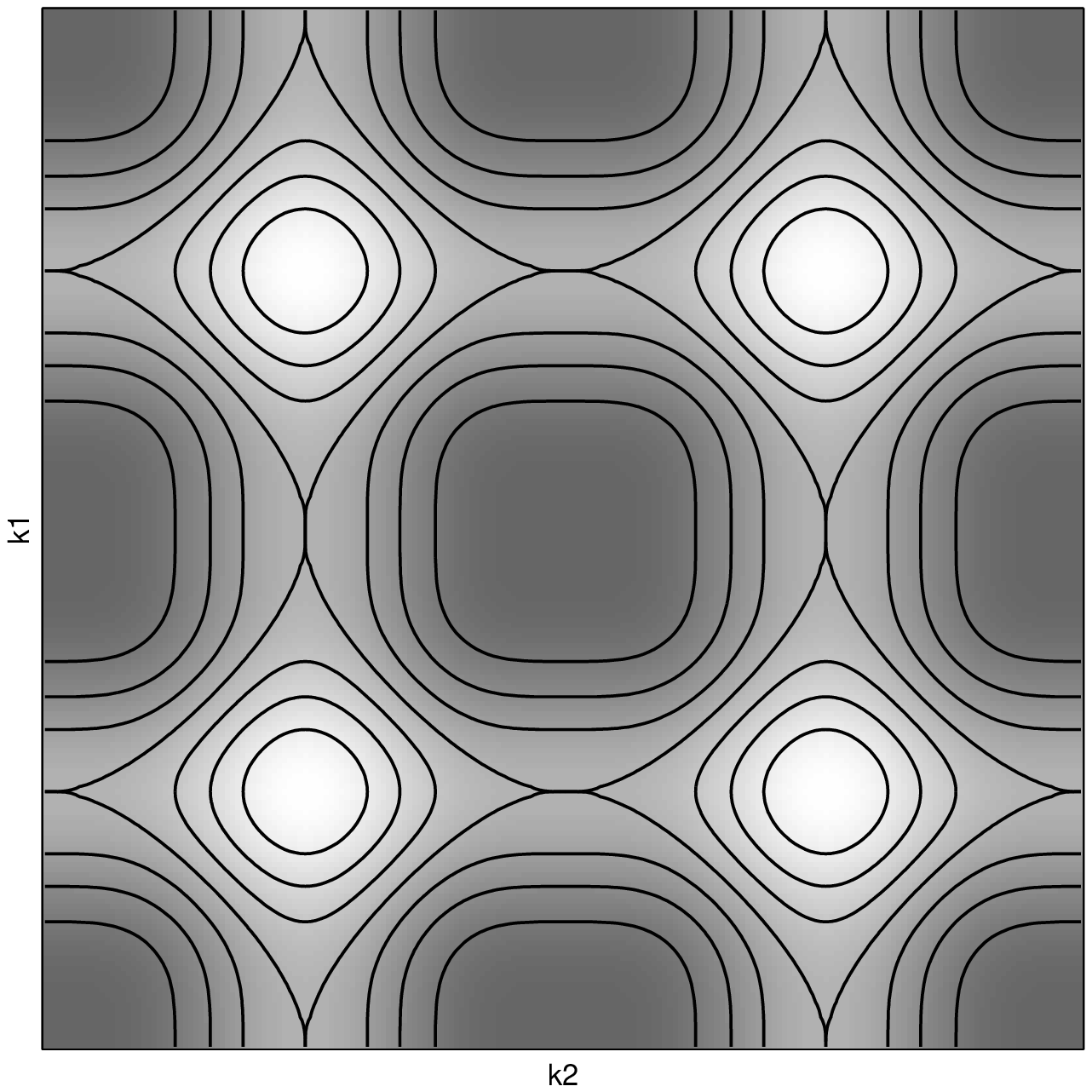} &
      \includegraphics[height=.24\textwidth]{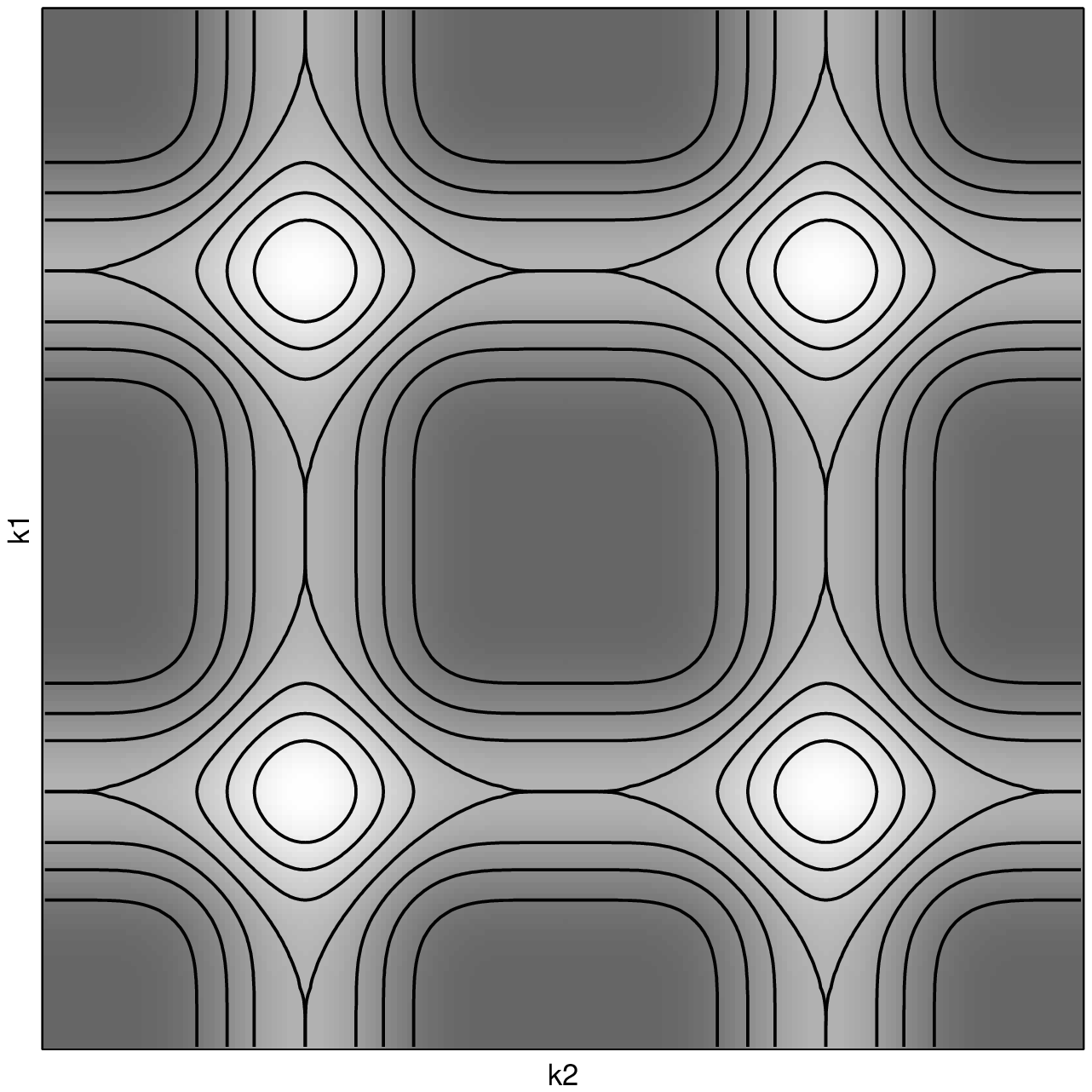} &
      \includegraphics[height=.24\textwidth]{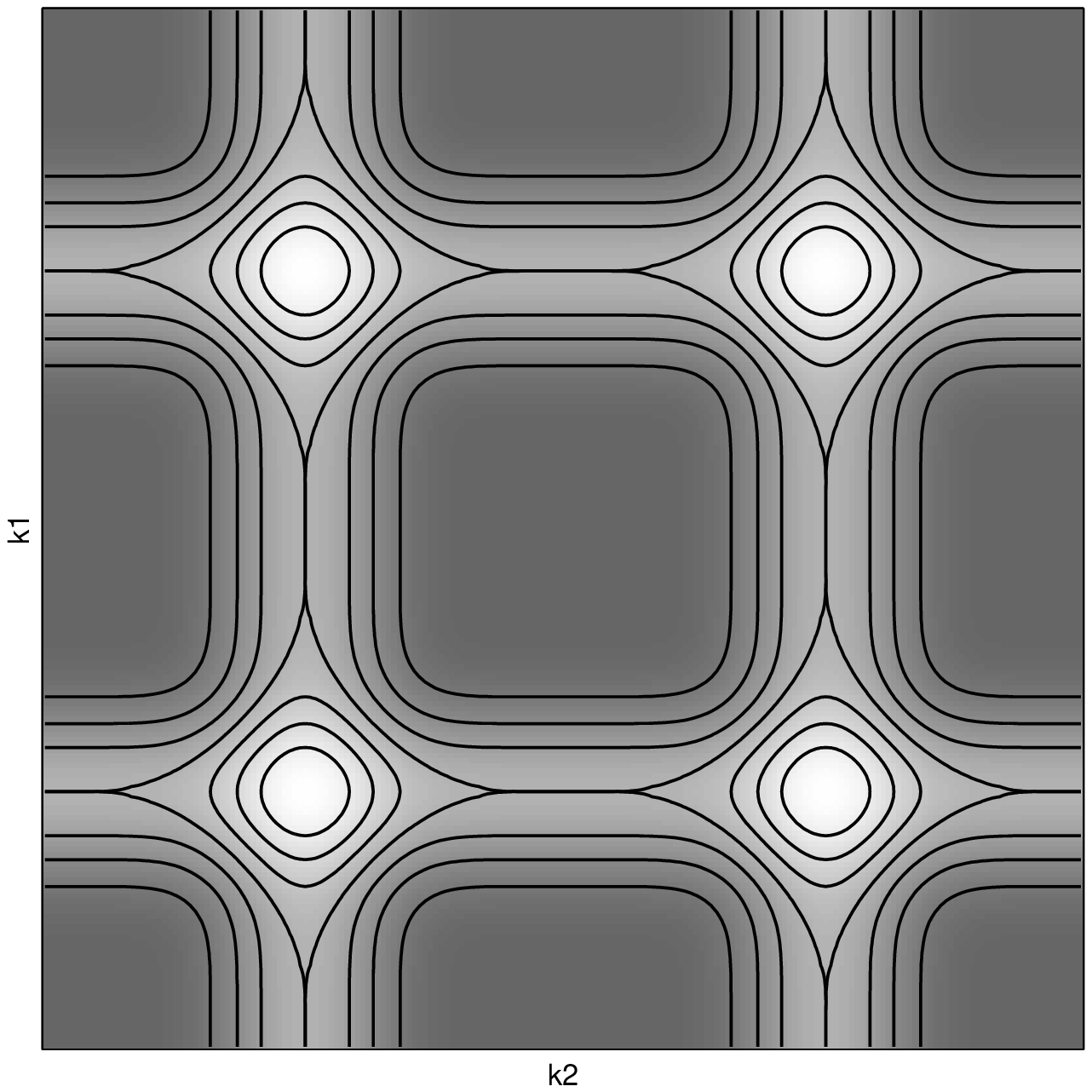}
    \end{tabular}
    \caption{2D stencils for the forward- and central-difference families (resulting from passing the respective 1D stencil horizontally and vertically). The graphs show the power spectrum in the space $(k_1,k_2) \in [0,M]^2$, with $(k_1,k_2) = (0,0)$ at the lower left corner, for a square net with $M \times M = 50 \times 50$ centroids; white means higher power. The nonredundant power spectrum is in its first quadrant $\left[0,\frac{M}{2}\right]^2$.}
    \label{f:stencil-families2Da}
  \end{center}
\end{figure}

\begin{figure}
  \psfrag{k1}[t]{}
  \psfrag{k2}[t]{}
  \begin{center}
    \begin{tabular}{@{}c@{\hspace{.02\textwidth}}c@{}c@{}c@{}c@{}}
      & $\nabla^2_{+}$ & $\nabla^2_{\times}$ & $\nabla^2_9$ & $\nabla^2_{\text{a}}$ \\
      \rotatebox{90}{\makebox[.24\textwidth][c]{Laplacian ($p = 2$)}} &
      \includegraphics[height=.24\textwidth]{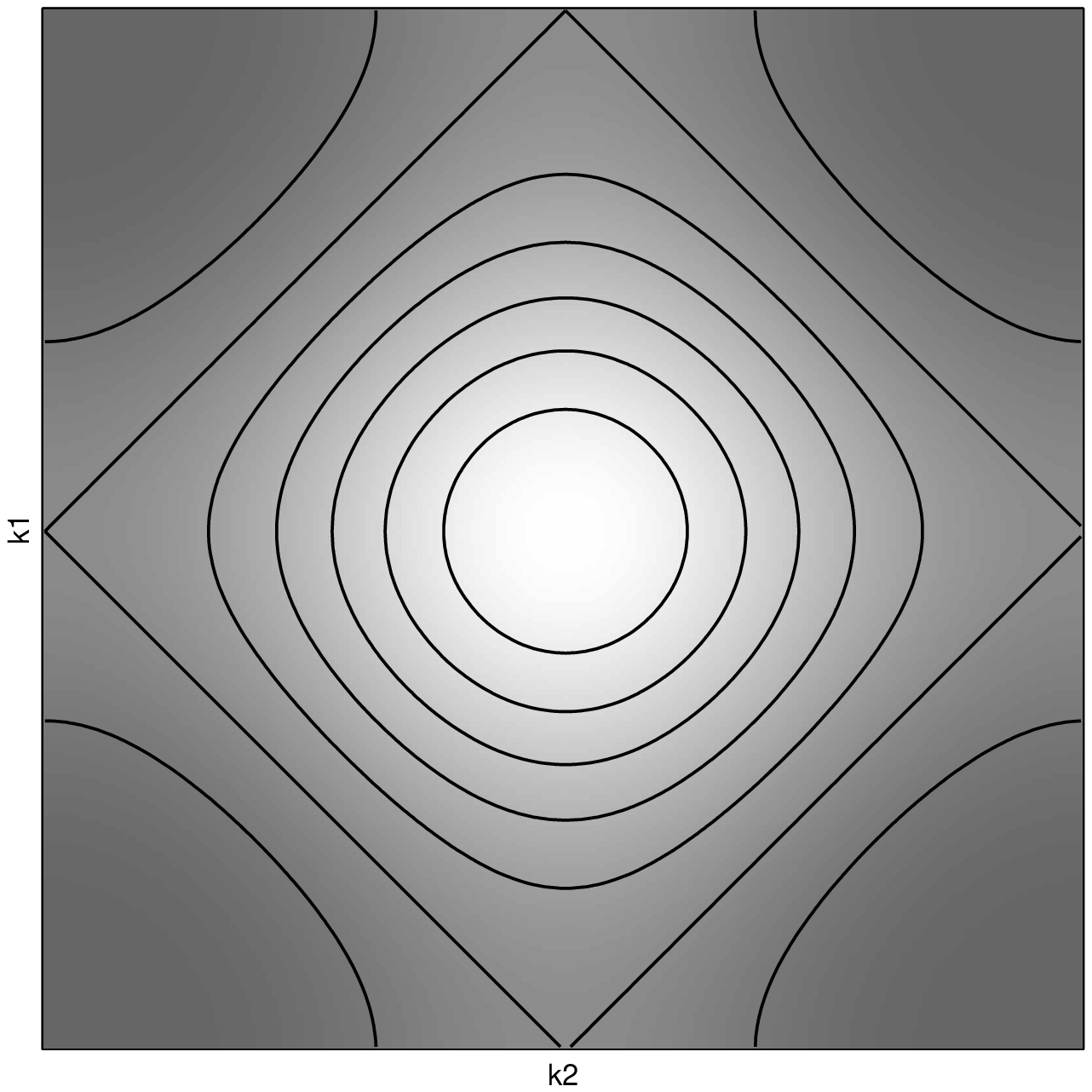} &
      \includegraphics[height=.24\textwidth]{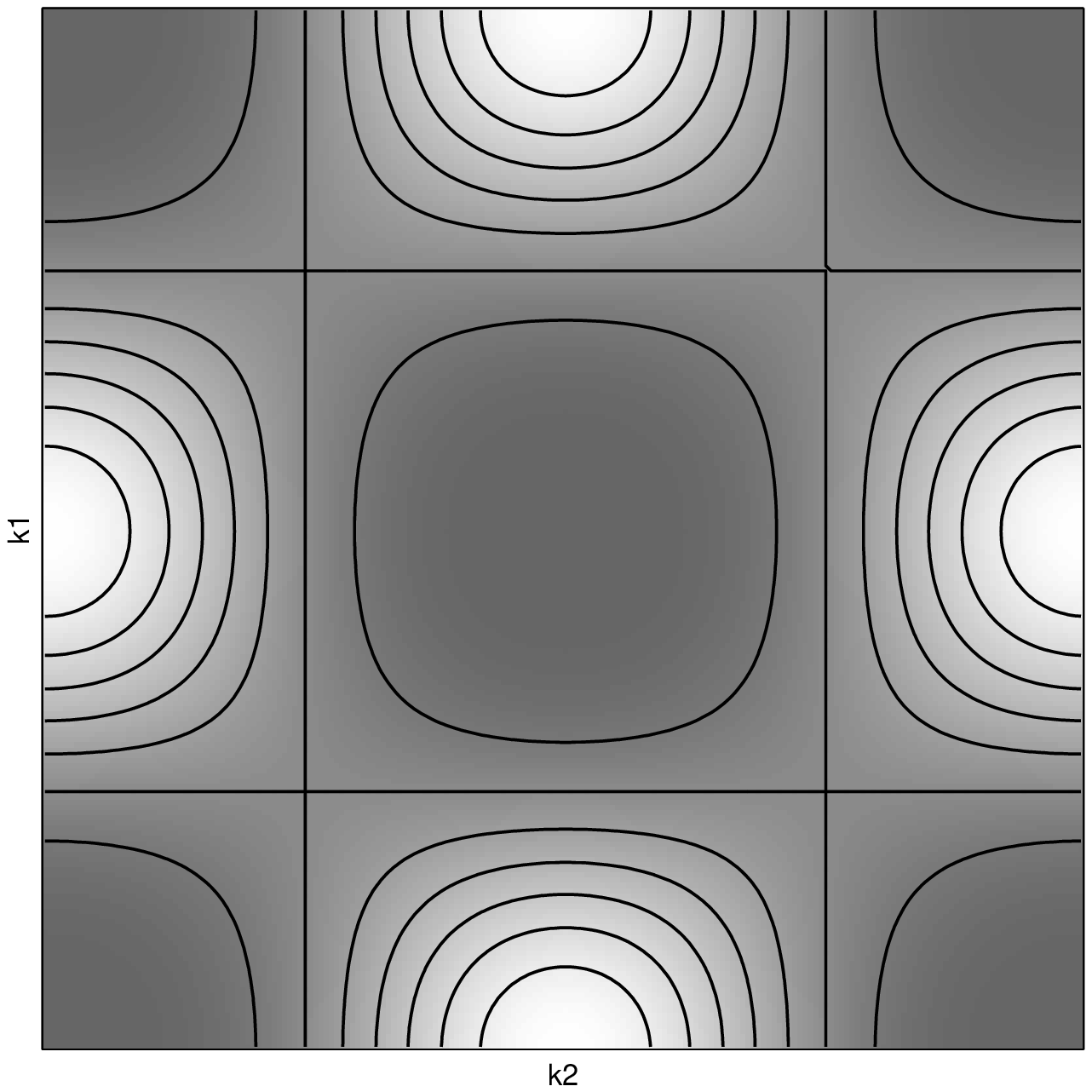} &
      \includegraphics[height=.24\textwidth]{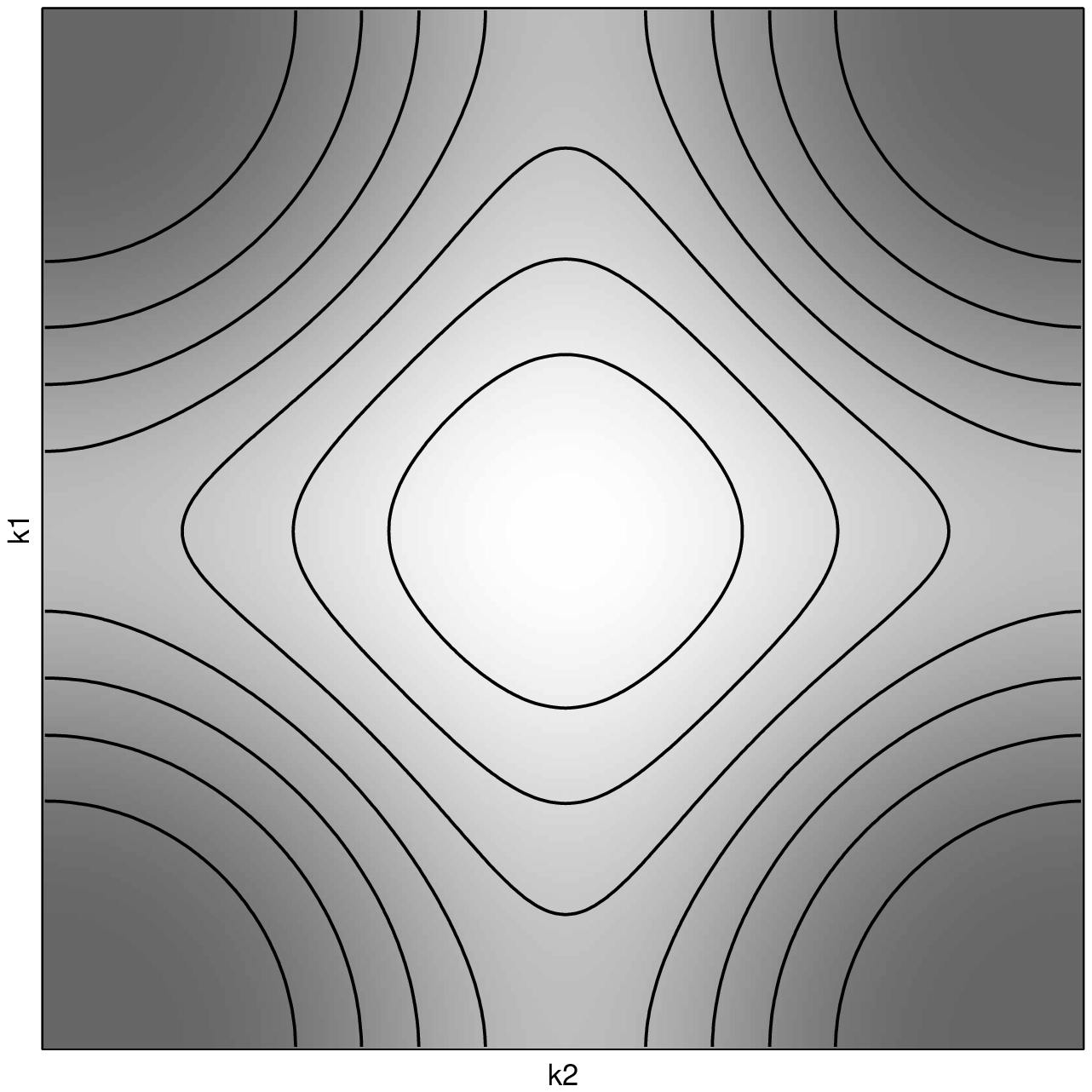} &
      \includegraphics[height=.24\textwidth]{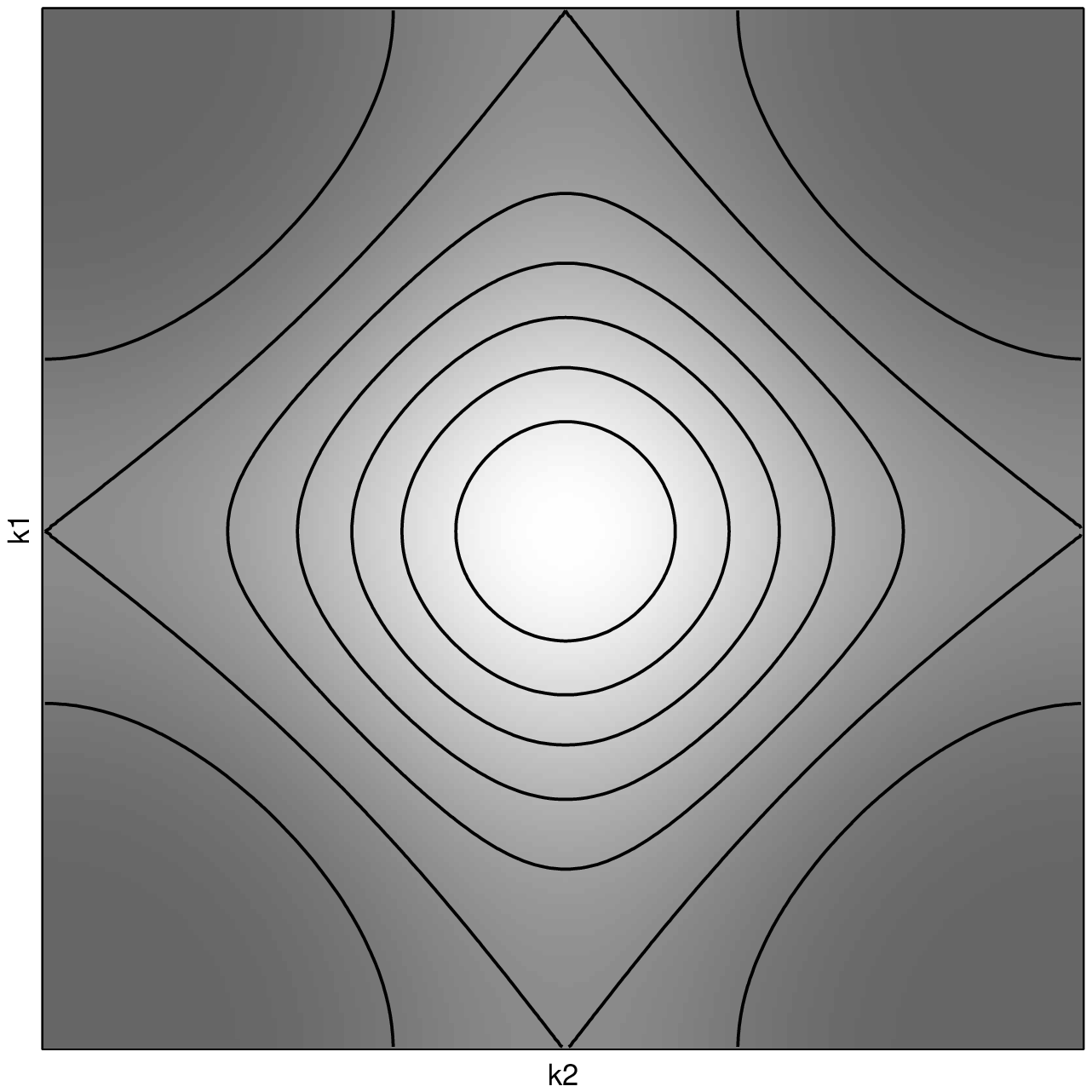} \\
      & $\nabla^4_{\text{a}}$ & $\nabla^4_{\text{b}}$ \\
      \rotatebox{90}{\makebox[.24\textwidth][c]{Biharmonic ($p = 4$)}} &
      \includegraphics[height=.24\textwidth]{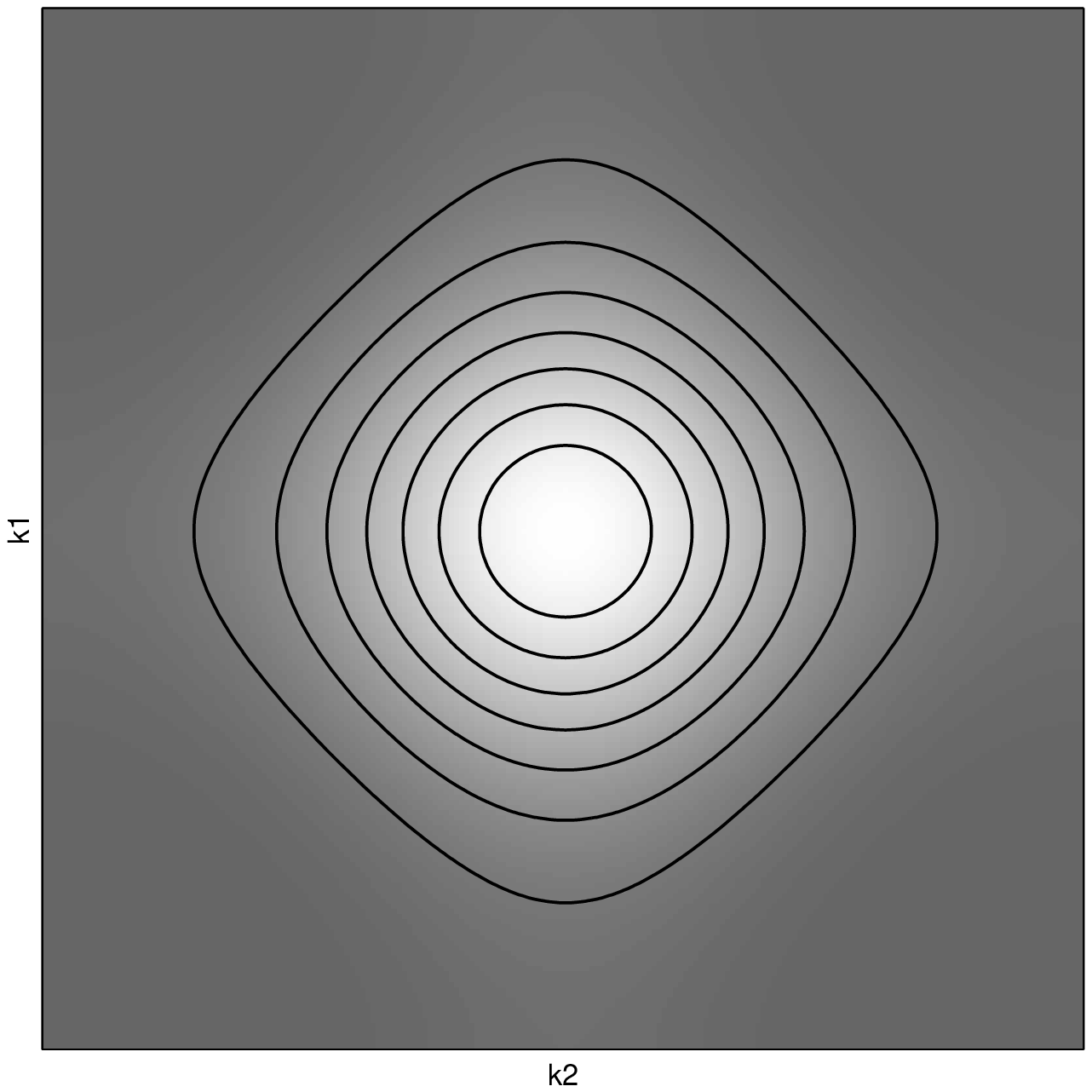} &
      \includegraphics[height=.24\textwidth]{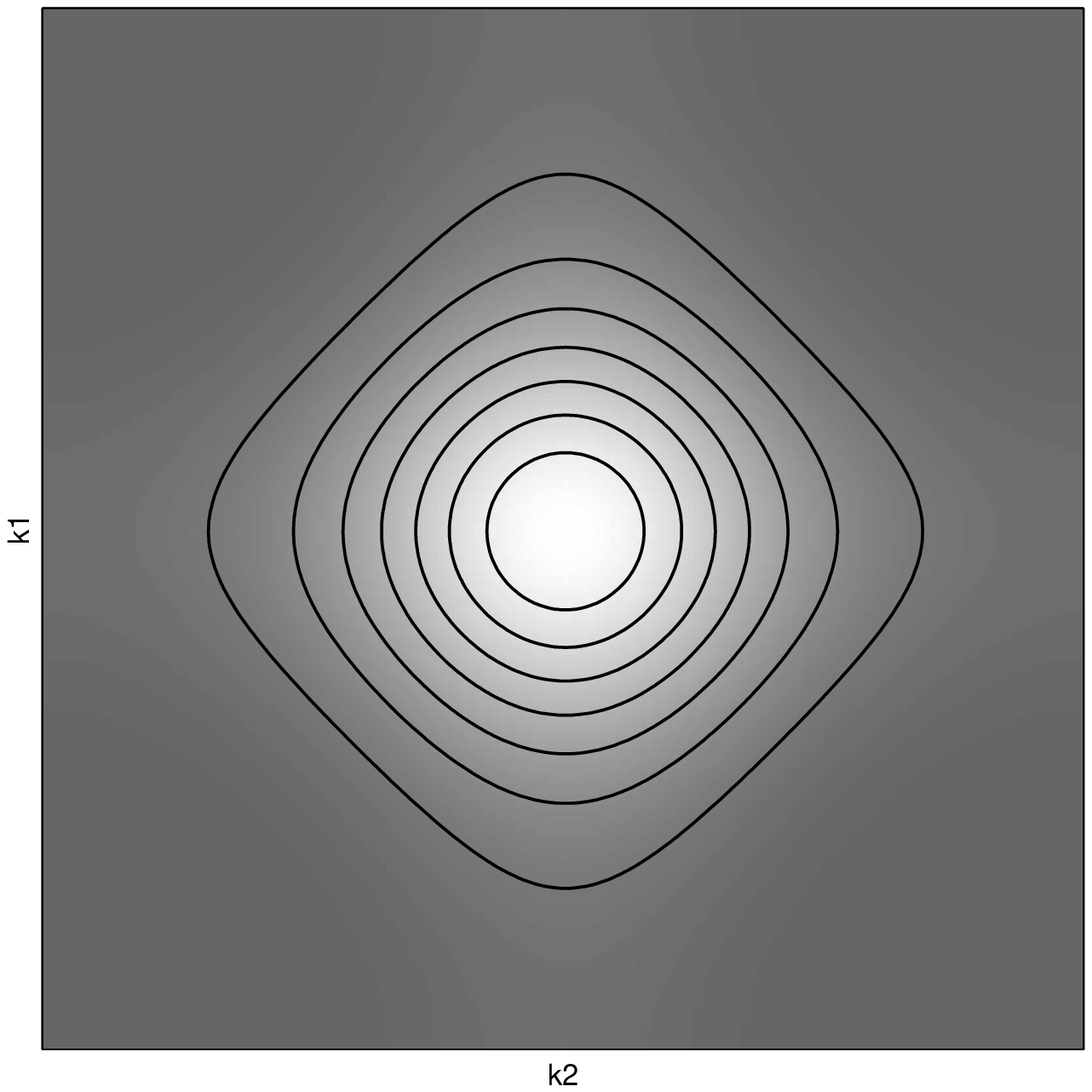}
    \end{tabular}
    \caption{A gallery of 2D stencils of orders $p = 2$ and $4$ from table~\ref{t:D:diff:2D}, as in fig.~\ref{f:stencil-families2Da}. \emph{Top row}: Laplacian stencils $\nabla^2_{+}$, $\nabla^2_{\times}$, $\nabla^2_9$ and $\nabla^2_{\text{a}}$. \emph{Bottom row}: biharmonic stencils $\nabla^4_{\text{a}}$, $\nabla^4_{\text{b}}$.}
    \label{f:stencil-families2Db}
  \end{center}
\end{figure}


\begin{figure}
  \begin{center}
    \psfrag{A}[rB][rB]{A}
    \psfrag{B}[rB][rB]{B}
    \psfrag{C}[rB][rB]{C}
    \psfrag{D}[rB][rB]{D}
    \includegraphics[width=\textwidth]{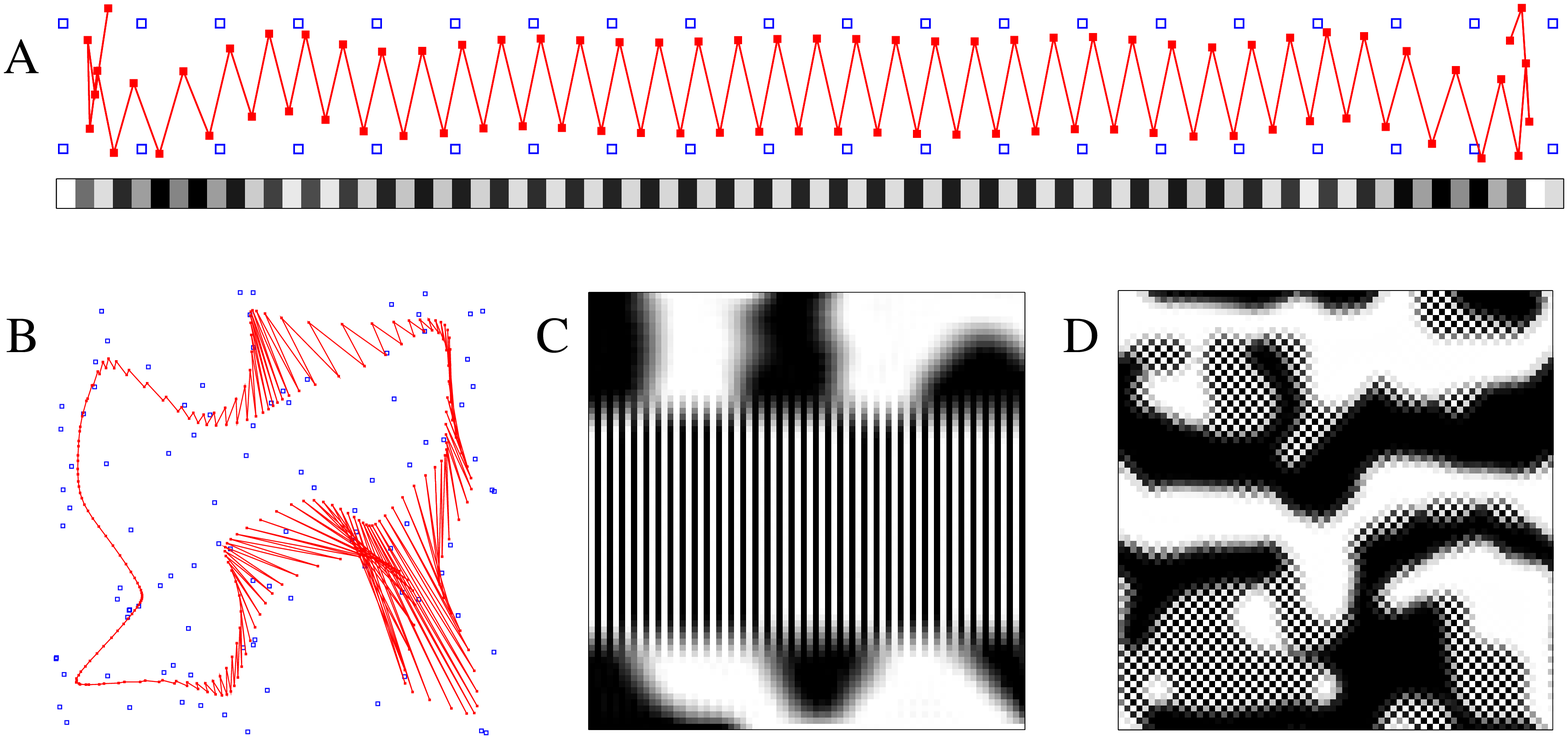}
    \caption{Typical nets resulting from sawtooth stencils. \emph{A}: sawtooth arising in a 1D nonperiodic net for ocular dominance and retinotopy with the third-order stencil $\frac{1}{2} (-1,\ 2,\ 0,\ -2,\ 1)$ (3C in table~\ref{t:D:diff:1D}), which results from applying twice the forward-difference $(0,\ -1,\ 1)$ and once the central-difference $\frac{1}{2} (-1,\ 0,\ 1)$. \emph{B}: sawtooth arising in a 1D periodic net for a TSP, with the central-difference stencil $\frac{1}{2} (-1,\ 0,\ 1)$. \emph{C}: sawtooth arising in 2D nets for ocular dominance with the third-order stencil $\frac{1}{2} (-1,\ 2,\ 0,\ -2,\ 1)$. This stencil does not penalise the sawtooth frequency either horizontally, vertically or diagonally (see fig.~\ref{f:stencil-families2Da}). \emph{D}: sawtooth arising in 2D nets for ocular dominance with the Laplacian stencil $\nabla^2_{\times}$. This stencil does not penalise the sawtooth frequency diagonally, but it does horizontally and vertically (see fig.~\ref{f:stencil-families2Db}), so a grid pattern as in \emph{C} is not possible, but a checkerboard pattern is. Compare with the maps resulting from the forward-difference family in fig.~\ref{f:simul2D:OD}.}
    \label{f:sawtooth}
  \end{center}
\end{figure}

\begin{figure}
  \begin{center}
    \begin{tabular}{@{}c@{\hspace{.1\textwidth}}c@{}}
      \includegraphics[width=.3\textwidth]{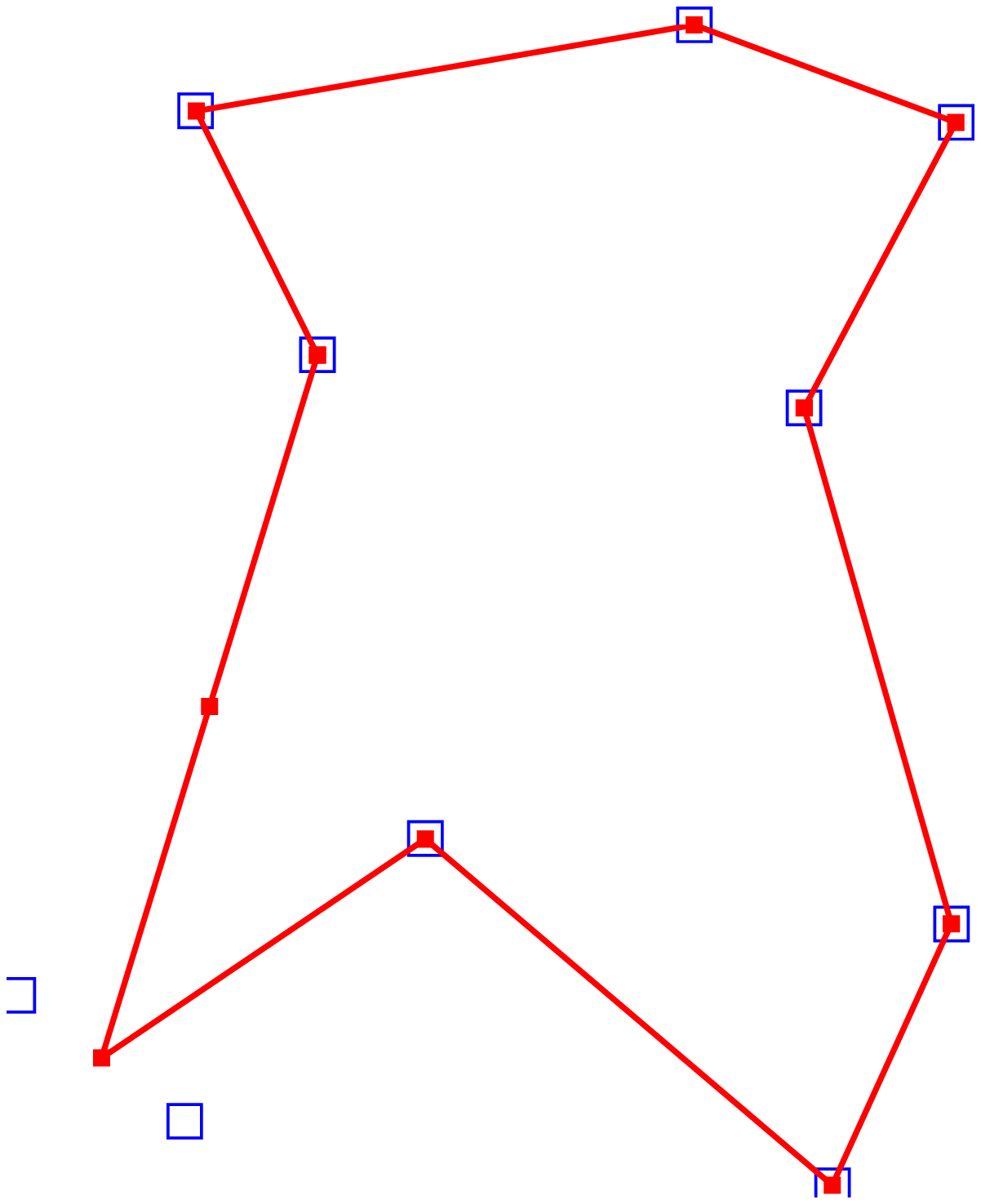} &
      \includegraphics[width=.3\textwidth]{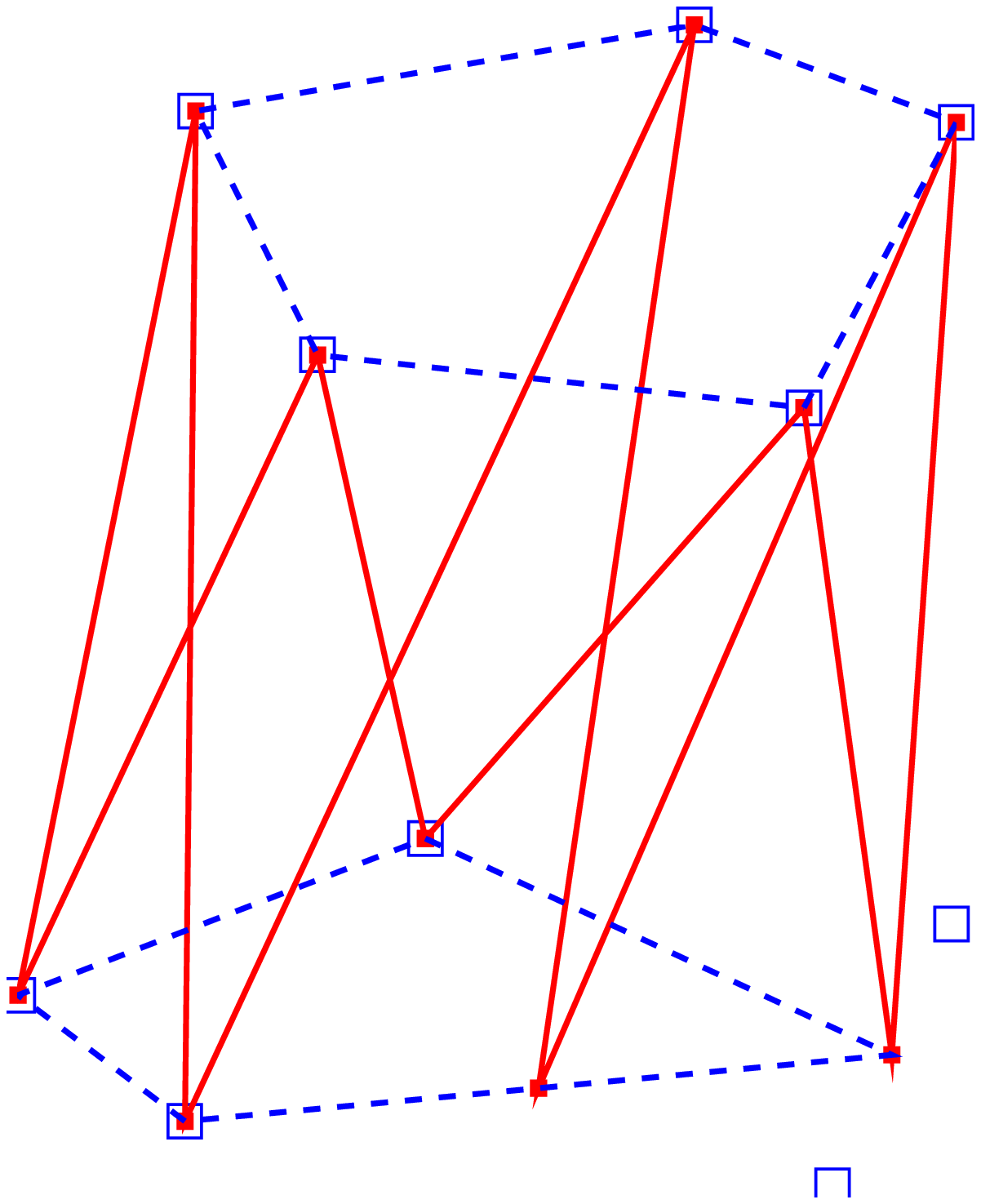}
    \end{tabular}
    \caption{A possible application of sawtooth stencils. \emph{Left}: TSP with original elastic net, stencil $\varsigma = (0,\ -1,\ 1)$ (forward difference). \emph{Right}: the same city set but using a sawtooth stencil $\varsigma = \frac{1}{2}(-1,\ 0,\ 1)$ (central difference). The tours defined by the even and odd centroids (represented by the dotted lines) can be used to obtain a good solution to a multiple TSP. In both cases the b.c.\ are periodic and the figures correspond to a small but nonzero value of the annealing parameter $\sigma$. Compare with fig.~\ref{f:sawtooth}B, which shows (for a different city set) the central-difference stencil at a larger $\sigma$ (earlier development stage).}
    \label{f:sawtooth-TSP}
  \end{center}
\end{figure}

\subsection{The inverse problem and ``evenised'' stencils}
\label{s:circ2sten}

In section~\ref{s:D:inv} we saw that many different matrices \D\ could result in the same $\SS = \D^T \D$, thus having the same power spectrum and the same tension term value for any net. However, many of these do not correspond to a stencil, i.e., their rows are not cyclic permutations of each other, or their elements are complex numbers. In this section we (1) show that two nontrivially different stencils (i.e., discounting shifts and sign reversals) can result in the same \SS\ and (2) show how to obtain a stencil that produces a given positive (semi)definite circulant matrix \SS\ (the inverse problem). As a corollary of (1) for the forward-difference family we rewrite the sum-of-square-distances term of the original elastic net as a Mexican-hat stencil and show that higher-order stencils add more oscillations to this Mexican hat.

We start with the inverse problem, in which we want to obtain a stencil that results in a given power spectrum $\{p_m\}^{M-1}_{m=0}$. In general, there are infinitely many (complex) stencils that would have that power because the discrete Fourier transform at every frequency must only agree in modulus and can have any phase (once we fix the DFT, the stencil is of course unique). That is, taking the eigenvalues $\lambda_m = \sqrt{p_m} e^{i \theta_m}$ for any phase $\theta_m \in [0,2\pi)$ we satisfy $\abs{\lambda_m}^2 = p_m$, but then the corresponding stencil is complex for most choices of the phases. However, in the particular case where the power spectrum is even, i.e., $p_m = p_{M-m}$ for $m = 1,\dots,M-1$, there are several real solutions, thanks to the fact that both the DFT and the inverse DFT of a real, even collection of numbers is also real and even (as can easily be seen). If we take the eigenvalues of \D\ as $\lambda_m = \pm \sqrt{p_m}$ (any sign is valid as long as $\lambda_m = \lambda_{M-m}$) then the first row of \D\ (i.e., the stencil) is the DFT of $\{\lambda_m\}^{M-1}_{m=0}$ divided by $M$, $d_m = \frac{1}{M} \sum^{M-1}_{n=0}{\lambda_n e^{-i 2 \pi \frac{m}{M} n}}$, which is real and even. Note that if $\{\lambda_m\}^{M-1}_{m=0}$ are real but not even then $\{d_m\}^{M-1}_{m=0}$ may be complex.

The inverse problem can be stated equivalently in terms of circulant matrices as follows: given a circulant, symmetric positive (semi)definite matrix \SS\ whose eigenvalues are $\{p_m\}^{M-1}_{m=0}$, decompose $\SS = \D^T\D$ with \D\ real circulant. To solve the problem, we first prove that the eigenvalues of \SS\ are a real, even collection, and then obtain \D\ from them.
\begin{prop}
  \label{p:sym-circ}
  Let \SS\ be a real circulant matrix with eigenvalues $\{\nu_0,\dots,\nu_{M-1}\}$ associated with the Fourier eigenvectors $\f_0,\dots,\f_{M-1}$. Then $\nu_m = \nu_{M-m}$, $m = 1,\dots,M-1 \Leftrightarrow \SS$ is symmetric.
\end{prop}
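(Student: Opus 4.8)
The plan is to collapse both sides of the biconditional onto a single intermediate condition, namely that every eigenvalue $\nu_m$ be real, and to translate symmetry of $\SS$ into an evenness condition on its first row $(s_0,\dots,s_{M-1})$. Concretely, I want to prove the chain $\SS$ symmetric $\iff$ first row even ($s_k = s_{M-k}$) $\iff$ all $\nu_m$ real $\iff$ $\nu_m = \nu_{M-m}$.

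First I would dispose of the last link using machinery already in hand. Because $\SS$ is real, the reasoning of the remark following Proposition~\ref{p:eig-circ} (applied to $\SS$ in place of $\D$) gives the conjugate pairing $\nu_{M-m} = \nu_m^*$, which follows from $\omega_{M-m} = \omega_m^*$ together with realness of the $s_n$. Hence the stated condition $\nu_m = \nu_{M-m}$ for $m = 1,\dots,M-1$ is equivalent to $\nu_m = \nu_m^*$, i.e.\ to $\nu_m \in \bbR$ for every $m$ (the frequency $\nu_0 = \sum_n s_n$ being real automatically, and, when $M$ is even, so is $\nu_{M/2} = \sum_n s_n (-1)^n$; these are exactly the self-paired frequencies for which the condition is vacuous). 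So it remains only to show that $\SS$ is symmetric if and only if all its eigenvalues are real.

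Next I would characterise symmetry via the first row. Since $\SS$ is circulant, $s_{nm} = s_{(m-n)\bmod M}$, so $\SS = \SS^T$ is equivalent to $s_{(m-n)\bmod M} = s_{(n-m)\bmod M}$ for all $n,m$, i.e.\ to $s_k = s_{M-k}$ for $k = 1,\dots,M-1$. I would then use the discrete Fourier pair between first row and eigenvalues recorded in the remarks after Proposition~\ref{p:eig-circ}, namely $\nu_m = \sum_{n=0}^{M-1} s_n \omega_m^n$ with inverse $s_m = \frac{1}{M}\sum_{n=0}^{M-1}\nu_n \omega_m^{*n}$. For the forward direction, pairing the summands $n$ and $M-n$ in the expression for $\nu_m$ and using $s_n = s_{M-n}$ converts each pair into $2 s_n \cos(2\pi mn/M)$, which is manifestly real, so every $\nu_m$ is real. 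For the converse, assuming all $\nu_n$ real, I would conjugate the inversion formula to obtain $s_m^* = \frac{1}{M}\sum_n \nu_n \omega_m^n = s_{M-m}$; since $s_m$ is itself real this forces $s_m = s_{M-m}$, i.e.\ the first row is even and $\SS$ is symmetric.

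The whole argument is bookkeeping over the DFT, so the only real care needed is the index arithmetic modulo $M$ and the separate, automatically real, treatment of the $m=0$ and (for even $M$) $m = M/2$ frequencies; once the pairing $\nu_{M-m} = \nu_m^*$ is invoked, no genuinely hard step remains. I expect the most error-prone point to be keeping the two inversion identities for $s_m$ and $s_{M-m}$ straight, since a naive ``real matrix with real eigenvalues'' shortcut is false in general and only works here because the eigenvectors are fixed to the Fourier basis. If one preferred a basis-free route one could instead write $\SS = \frac{1}{M}\F\bLambda\F^*$ and compute $\SS^T$ directly, but the first-row/eigenvalue Fourier pair keeps the mod-$M$ evenness most transparent.
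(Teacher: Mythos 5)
Your proof is correct, and it takes a genuinely different route from the paper's. Both arguments open identically, by noting that realness of \SS\ gives the conjugate pairing $\nu_{M-m} = \nu^*_m$, so that the stated condition is equivalent to all eigenvalues being real. But from there the paper works at the matrix level: for $(\Rightarrow)$ it diagonalises $\SS = \frac{1}{M}\F\N\F^*$ with \N\ real and computes $\SS^T = \SS^H = \frac{1}{M}\F\N\F^* = \SS$, and for $(\Leftarrow)$ it simply invokes the spectral theorem (a real symmetric matrix has real eigenvalues) and applies the pairing. You instead insert the intermediate characterisation ``\SS\ symmetric $\iff$ first row even ($s_k = s_{M-k}$)'' and close the loop with DFT bookkeeping: pairing the terms $n$ and $M-n$ in $\nu_m = \sum_n s_n\omega^n_m$ to get real cosine sums in one direction, and conjugating the inversion formula $s_m = \frac{1}{M}\sum_n \nu_n\omega^{*n}_m$ to get $s^*_m = s_{M-m}$ in the other. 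What your route buys is self-containedness: you never appeal to the spectral theorem, everything reduces to the explicit Fourier pair between the first row and the eigenvalues, and you obtain as a byproduct exactly the evenness property $s_m = s_{M-m}$ that the paper has to state and verify separately in the remark following the proposition. What the paper's route buys is brevity---three lines---at the cost of leaning on standard linear-algebra facts. Your closing caution is also well placed: ``real with real eigenvalues $\Rightarrow$ symmetric'' is false for general real matrices, and both proofs evade it only because the eigenbasis is pinned to the Fourier matrix; the paper does this through the identity $\SS^H = \SS$, you through the inversion formula.
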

\begin{rmk}
  Likewise, if \SS\ is real circulant symmetric, then its first row satisfies $s_m = s_{M-m}$, $m = 1,\dots,M-1$:
  \begin{equation*}
    s_m \bydef s_{0m} = \frac{1}{M} \sum^{M-1}_{k=0}{\nu_k \Re\left(e^{-i 2 \pi \frac{m}{M} k}\right)} = \frac{1}{M} \sum^{M-1}_{k=0}{\nu_k \Re\left(e^{-i 2 \pi \left(\frac{M-m}{M}\right) k}\right)} = s_{M-m}.
  \end{equation*}
\end{rmk}
\begin{rmk}
  Any real circulant matrix $\SS = \frac{1}{M} \F \N \F^{*}$ (symmetric or not) can be decomposed as $\SS = \A^2$ with $\A = \frac{1}{M} \F \N^{\frac{1}{2}} \F^{*}$ (where $\nu^{1/2}_n$ can have an arbitrary sign), but \A\ need not be real
. Proposition~\ref{p:sym-circ} proves that a \emph{symmetric} real circulant matrix \SS\ can be decomposed as $\SS = \A^T\A$ with symmetric real circulant $\A = \A^T = \frac{1}{M} \F \N^{\frac{1}{2}} \F^{*}$ (where the roots $\sqrt{\nu_m}$ must be chosen with a sign that preserves $\sqrt{\nu_m} = \sqrt{\nu_{M-m}}$). Thus, we can associate one (or more) real stencils with the first row of \SS, and then \SS\ is a penalty matrix for that stencil, i.e., $\y^T \SS \y = \norm{\A \y}^2 = \norm{\overleftarrow{\varsigma} \ast \y}^2$.
\end{rmk}
\begin{rmk}
  One cannot use the spectral decomposition of \SS\ as $\SS = \U \N \U^T$ with \U\ real orthogonal (made up of the sine and cosine eigenvectors) and \N\ real diagonal because then $\N^{\frac{1}{2}} \U^T$ is not circulant in general.
\end{rmk}
In summary, we have proven that, given a power spectrum $\{p_m\}^{M-1}_{m=0}$ verifying $p_m = p_{M-m}$ for $m = 1,\dots,M-1$ (or a real symmetric circulant positive (semi)definite penalty matrix \SS), we can always obtain an even, real stencil $\varsigma$ for it (i.e., verifying $\varsigma_m = \varsigma_{M-m}$). This is done by simply taking the eigenvalues of the \D\ matrix as $\lambda_m = +\sqrt{p_m}$ and obtaining the stencil as the inverse DFT of them. This also gives a method to \emph{evenise} any stencil that does not verify $\varsigma_m = \varsigma_{M-m}$ (such as $\varsigma = (0,\ -1,\ 1)$). We apply it to the forward- and central-difference families in the next section.

Note that the evenisation procedure also implies that we can write any \SS\ matrix that results from a stencil as $\SS = \D^2$ (since when using evenised stencils, \D\ is symmetric). Thus, the $p$th\nobreakdash-order matrix is simply $\leftexp{(p)}{\SS} = \D^{2p} = \frac{1}{M} \F \bLambda^{2p} \F^*$ if $\D = \frac{1}{M} \F \bLambda \F^*$.

\subsubsection{Forward-difference family}

From section~\ref{s:fwddiff-family} we have that the eigenvalues of $\leftexp{(p)}{\SS}$ are $\leftexp{(p)}{\nu_m} = \left( 2 \sin{\left( \pi \frac{m}{M} \right)} \right)^{2p}$, $m = 0,\dots,M-1$. Since $\sin{\left( \pi \frac{m}{M} \right)} = \sin{\left( \pi \left(\frac{M-m}{M}\right) \right)} > 0$ for $m = 1,\dots,M-1$, we can take $\sqrt{\nu_m} = \abs{2 \sin{\left( \pi \frac{m}{M} \right)}}^p = \left( 2 \sin{\left( \pi \frac{m}{M} \right)} \right)^p$. Thus the even, real stencil is
\begin{equation}
  \label{e:evenised-fwddiff}
  \leftexp{(p)}{d_m} = \frac{1}{M} \sum^{M-1}_{k=0}{\leftexp{(p)}{\nu}^{\frac{1}{2}}_k \cos{\left( 2 \pi \frac{m}{M} k \right)}} = \frac{2^p}{M} \sum^{M-1}_{k=0}{\sin^p{\left( \pi \frac{k}{M} \right)} \cos{\left( 2 \pi \frac{m}{M} k \right)}} \qquad m = 0,\dots,M-1
\end{equation}
verifying $\leftexp{(p)}{d_m} = \leftexp{(p)}{d_{M-m}}$ for $m = 1,\dots,M-1$.

We do not have a formula for this sum in general, but the following proposition characterises the stencils asymptotically on the number of centroids $M$ (the limit stencils are good approximations for $M \gtrsim 10$ already).
\begin{prop}
  \label{p:evenised-fwddiff-inf}
  For the stencil defined by~\eqref{e:evenised-fwddiff}, we have
  \begin{equation*}
    \leftexp{(p)}{d^{\infty}_m} \bydef \lim_{M \rightarrow \infty}{\leftexp{(p)}{d_m}} =
    \begin{cases}
      p \text{ even:} &
      \begin{cases}
        (-1)^m \, \binom{p}{\frac{p}{2} - m}, & m \le \frac{p}{2} \\
        0, & m > \frac{p}{2}
      \end{cases} \\[3ex]
      p \text{ odd:} &
      \displaystyle
      \frac{(-1)^{\frac{p+1}{2}} \, 2^{\frac{3p+1}{2}} \left( \frac{p-1}{2} \right)! \, p!!}{\pi \, \prod^{\frac{p-1}{2}}_{k=0}{( 4m^2 - (2k+1)^2)}},\quad m = 0,1,2,\dots,\infty
    \end{cases}
  \end{equation*}
  where we define $(2n)!! = 2 \cdot 4 \cdot 6 \cdots (2n)$, $(2n+1)!! = 1 \cdot 3 \cdot 5 \cdots (2n+1)$, $0!! = (-1)!! = 1$.
\end{prop}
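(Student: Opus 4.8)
The plan is to recognise the defining sum as a Riemann sum, pass to a definite integral, and then evaluate that integral in closed form.

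\emph{Step 1 (Riemann sum).} Fix $m$ and $p$ and let $M\to\infty$. The summand in~\eqref{e:evenised-fwddiff} is $\frac1M g(k/M)$ with $g(x)\bydef 2^p\sin^p(\pi x)\cos(2\pi m x)$, a continuous function on $[0,1]$, so the left Riemann sum converges to $\int_0^1 g$. Substituting $u=\pi x$ gives
\[
  \leftexp{(p)}{d^{\infty}_m} = 2^p\int_0^1\sin^p(\pi x)\cos(2\pi m x)\,dx = \frac{2^p}{\pi}\int_0^{\pi}\sin^p u\,\cos(2m u)\,du,
\]
so the statement reduces to evaluating $J_{p,m}\bydef\int_0^\pi\sin^p u\cos(2mu)\,du$. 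No analytic subtlety arises here: the Fourier expansions used below are finite sums, so termwise integration is immediate.

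\emph{Step 2 (even $p$).} Writing $\sin u=(e^{iu}-e^{-iu})/2i$ and expanding binomially, $\sin^{2n}u=\frac{1}{4^n}\big[\binom{2n}{n}+2\sum_{l=1}^n(-1)^l\binom{2n}{n+l}\cos(2lu)\big]$ is a finite cosine polynomial. Since $\{\cos(2lu)\}_{l\ge0}$ is orthogonal on $[0,\pi]$, the integral $J_{2n,m}$ selects the single harmonic $l=m$, giving $J_{2n,m}=\frac{\pi}{4^n}(-1)^m\binom{2n}{n+m}$ for $0\le m\le n$ and $0$ for $m>n$. Multiplying by $2^{2n}/\pi$ and using $\binom{2n}{n+m}=\binom{p}{p/2-m}$ yields the even-$p$ branch, the vanishing for $m>p/2$ coming directly from orthogonality.

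\emph{Step 3 (odd $p$; the main work).} The analogous de Moivre expansion gives $\sin^{2n+1}u=\frac{1}{4^n}\sum_{k=0}^n(-1)^k\binom{2n+1}{n-k}\sin((2k+1)u)$, and termwise integration via $\int_0^\pi\sin((2k+1)u)\cos(2mu)\,du=\frac{2(2k+1)}{(2k+1)^2-4m^2}$ expresses $J_{2n+1,m}$ as a sum of simple fractions. The crux is that this partial-fraction sum collapses to a single term over the full product $\prod_{k=0}^n(4m^2-(2k+1)^2)$, i.e.\ that the combined numerator is constant in $m$. The efficient way to see this is to evaluate $J_{p,m}$ uniformly as a Beta integral,
\[
  \leftexp{(p)}{d^{\infty}_m}=\frac{(-1)^m\,p!}{\Gamma\!\left(1+\tfrac p2+m\right)\Gamma\!\left(1+\tfrac p2-m\right)},
\]
which also reproduces Step 2 (for even $p$ the reciprocal $\Gamma$ vanishes automatically once $m>p/2$). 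For $p=2n+1$ I would use the reflection formula $\Gamma(z)\Gamma(1-z)=\pi/\sin\pi z$ to write $\Gamma(n+\tfrac32-m)=\pi(-1)^{m+n+1}/\Gamma(m-n-\tfrac12)$; the denominator then becomes $\pi(-1)^{m+n+1}\,\Gamma(n+\tfrac32+m)/\Gamma(m-n-\tfrac12)$, and this ratio of gammas, whose arguments differ by the integer $2n+2$, telescopes into $\prod_{k=0}^n\big(m^2-\tfrac{(2k+1)^2}{4}\big)=4^{-(n+1)}\prod_{k=0}^n(4m^2-(2k+1)^2)$. Collecting factors gives $\leftexp{(p)}{d^{\infty}_m}=\dfrac{(-1)^{n+1}(2n+1)!\,4^{n+1}}{\pi\prod_{k=0}^n(4m^2-(2k+1)^2)}$, and the stated form follows from $(2n+1)!=2^n n!\,(2n+1)!!$, since then $(2n+1)!\,4^{n+1}=2^{3n+2}\,n!\,(2n+1)!!$.

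The routine parts are Steps 1 and 2; the genuine obstacle is Step 3, namely showing that the odd-$p$ partial-fraction sum has constant numerator. I expect the reflection-formula/Beta-integral computation to be the clean path, the alternative being a direct combinatorial proof that every coefficient of $m^2,m^4,\dots,m^{2n}$ in the combined numerator vanishes.
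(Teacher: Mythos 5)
Your proposal is correct, and it reaches the result by a somewhat different route than the paper in the part that matters. Step 1 is identical: the paper also passes from the Riemann sum to $\leftexp{(p)}{d^{\infty}_m} = \frac{2^p}{\pi}\int_0^{\pi}\sin^p x\,\cos(2mx)\,dx$. Where you diverge is in evaluating this integral. The paper simply cites two table entries of Gradshteyn--Ryzhik (GR~3.631:12 for even $p$, GR~3.631:13 for odd $p$) and then has to simplify the odd-case entry by an induction on $n$ to reach the single-product form $\prod_{k=0}^{n}(4m^2-(2k+1)^2)$ in the statement; it even flags in a footnote that the competing table (Prudnikov et al.) gets one of these integrals wrong. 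You instead handle the even case in a fully self-contained way (finite de Moivre expansion of $\sin^{2n}u$ plus orthogonality of $\{\cos 2lu\}$ on $[0,\pi]$, with the vanishing for $m>p/2$ falling out automatically), and for the odd case you invoke the single uniform formula $\leftexp{(p)}{d^{\infty}_m}=(-1)^m p!/\bigl(\Gamma(1+\tfrac p2+m)\Gamma(1+\tfrac p2-m)\bigr)$ and extract the product via the reflection formula: the ratio $\Gamma(n+\tfrac32+m)/\Gamma(m-n-\tfrac12)$, whose arguments differ by the integer $2n+2$, telescopes into $4^{-(n+1)}\prod_{k=0}^{n}(4m^2-(2k+1)^2)$, and the constant collects to $(2n+1)!\,4^{n+1}=2^{3n+2}n!\,(2n+1)!!$, exactly matching the stated $(-1)^{(p+1)/2}2^{(3p+1)/2}\bigl(\tfrac{p-1}{2}\bigr)!\,p!!$. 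Your route buys a unified treatment of both parities (the even-case vanishing for $m>p/2$ comes from poles of $\Gamma$) and makes transparent \emph{why} the partial-fraction sum collapses to a constant numerator, which in the paper is hidden inside an unspelled induction; the paper's route is shorter on the page but leans entirely on table lookups. The one step you state without proof---the Gamma-function evaluation of the integral (a standard Cauchy/Beta-integral result, essentially GR~3.631:9)---is cited at the same level of rigour as the paper's own reliance on GR, so it does not constitute a gap.
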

We also give the exact expression for the case $p = 1$. It is easy to confirm that~\eqref{e:evenised-fwddiff1} tends to the corresponding expression in proposition~\ref{p:evenised-fwddiff-inf} for $M \rightarrow \infty$.
\begin{prop}
  \label{p:evenised-fwddiff}
  For the stencil defined by~\eqref{e:evenised-fwddiff}, we have for the case $p = 1$:
\begin{equation}
  \label{e:evenised-fwddiff1}
  \leftexp{(1)}{d_m} = \frac{2}{M} \sum^{M-1}_{k=0}{\sin{\left( \pi \frac{k}{M} \right)} \cos{\left( 2 \pi \frac{m}{M} k \right)}} = \frac{2}{M} \frac{\sin{\frac{\pi}{M}}}{\cos{\frac{2 \pi m}{M}} - \cos{\frac{\pi}{M}}} \qquad m = 0,\dots,M-1.
\end{equation}
\end{prop}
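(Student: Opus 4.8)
The plan is to evaluate the finite trigonometric sum
\begin{equation*}
  S \bydef \sum^{M-1}_{k=0}{\sin\left(\pi \frac{k}{M}\right) \cos\left(2 \pi \frac{m}{M} k\right)}
\end{equation*}
in closed form, since $\leftexp{(1)}{d_m} = \frac{2}{M} S$ by definition~\eqref{e:evenised-fwddiff} with $p = 1$. First I would apply the product-to-sum identity $\sin A \cos B = \frac{1}{2}(\sin(A+B) + \sin(A-B))$ with $A = \pi k/M$ and $B = 2 \pi m k/M$, which writes $S = \frac{1}{2}\sum^{M-1}_{k=0}{(\sin(k\theta_1) + \sin(k\theta_2))}$ where $\theta_1 = \pi(1+2m)/M$ and $\theta_2 = \pi(1-2m)/M$. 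This reduces the problem to evaluating two sums of the generic shape $\sum^{M-1}_{k=0}{\sin(k\theta)}$.

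Next I would compute each such sum via the geometric series $\sum^{M-1}_{k=0}{e^{ik\theta}} = (e^{iM\theta}-1)/(e^{i\theta}-1)$ and take imaginary parts. The key observation is that $M\theta_1 = \pi(1+2m)$ and $M\theta_2 = \pi(1-2m)$ are both odd integer multiples of $\pi$ (because $m \in \bbZ$), so $e^{iM\theta_j} = -1$ and the numerator collapses to $-2$. Factoring $e^{i\theta}-1 = 2i\sin(\theta/2)\,e^{i\theta/2}$ then shows directly that the imaginary part equals $\cot(\theta/2)$, giving $\sum^{M-1}_{k=0}{\sin(k\theta_j)} = \cot(\theta_j/2)$ and hence
\begin{equation*}
  S = \frac{1}{2}\left(\cot\frac{\pi(1+2m)}{2M} + \cot\frac{\pi(1-2m)}{2M}\right).
\end{equation*}

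Finally I would combine the two cotangents using $\cot x + \cot y = \sin(x+y)/(\sin x \sin y)$, with $x+y = \pi/M$ and $x-y = 2\pi m/M$, and expand the denominator by the product-to-sum rule $\sin x \sin y = \frac{1}{2}(\cos(x-y) - \cos(x+y))$. This yields $S = \sin(\pi/M)/(\cos(2\pi m/M) - \cos(\pi/M))$, so that $\leftexp{(1)}{d_m} = \frac{2}{M} S$ matches~\eqref{e:evenised-fwddiff1} exactly. The manipulations are elementary, so the only real care needed is in checking the degenerate cases: I would verify that the cotangent arguments $\pi(1\pm 2m)/(2M)$ are never integer multiples of $\pi$ (since $1 \pm 2m$ is odd while $2Mj$ is even), and that the final denominator $\cos(2\pi m/M) - \cos(\pi/M)$ never vanishes for integer $m$ (which would force $2m \equiv \pm 1 \pmod{2M}$, impossible for parity reasons). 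This confirms that the closed form is valid for every $m = 0, \dots, M-1$, which is the one step I expect to require genuine attention rather than routine algebra.
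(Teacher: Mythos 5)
Your proof is correct and follows essentially the same route as the paper's: a product-to-sum identity to split the sum into two sine sums, the geometric series $\sum_k e^{ik\theta}$ to evaluate each as a cotangent (the paper states this as a lemma for $\sum_m \sin(\pi mn/M)$ with $n$ odd), and a trigonometric recombination at the end. The only difference is that you spell out the final simplification (via $\cot x + \cot y = \sin(x+y)/(\sin x \sin y)$) and the non-vanishing of the denominators, which the paper compresses into ``and simplifying.''
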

Consequently, for even $p$ the limit stencil coincides with the original forward-difference stencil (with an immaterial sign change) and is therefore sparse (or compact-support) since only a finite number of elements is nonzero. In fact, the stencils must coincide not only in the limit but for finite $M$, since they are already even, but we do not have an expression for $\sum^{M-1}_{k=0}{\sin^{2p}{\left( \pi \frac{k}{M} \right)} \cos{\left( 2 \pi \frac{m}{M} k \right)}}$ to prove it. For odd $p$ the evenised stencils are not sparse, but decay as $\calO(m^{p+1})$ with a finite number of sign changes, or oscillations, for small $m$, as given by the following proposition.
\begin{prop}
  \label{p:evenised-fwddiff-signs}
  For $p = 2n+1$ odd, $\leftexp{(2n+1)}{d^{\infty}_m}$ has $n+1$ sign changes in $m = 0,1,2,\dots,n$ and constant sign for $m > n$, the sign being positive for $n$ odd and negative for $n$ even. That is:
  \begin{equation*}
    \sgn{\leftexp{(2n+1)}{d^{\infty}_m}} =
    \begin{cases}
      (-1)^m, & m \le n \\
      (-1)^{n+1}, & m > n.
    \end{cases}
  \end{equation*}
\end{prop}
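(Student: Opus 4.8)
The plan is to derive the sign pattern directly from the closed-form expression for odd $p$ already established in Proposition~\ref{p:evenised-fwddiff-inf}, so that the whole statement reduces to an elementary sign-counting exercise. Writing $p = 2n+1$, that formula gives
\begin{equation*}
  \leftexp{(2n+1)}{d^{\infty}_m} = \frac{(-1)^{n+1}\, 2^{3n+2}\, n!\,(2n+1)!!}{\pi \prod^{n}_{k=0}{\left(4m^2 - (2k+1)^2\right)}},
\end{equation*}
in which every factor of the numerator other than $(-1)^{n+1}$ is strictly positive and the denominator never vanishes (since $(2k+1)^2$ is odd while $4m^2$ is divisible by $4$, they can never coincide). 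Hence $\sgn{\leftexp{(2n+1)}{d^{\infty}_m}}$ equals $(-1)^{n+1}$ times the sign of the product $\prod_{k=0}^n (4m^2-(2k+1)^2)$, and the entire proposition follows once I determine that sign.

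To count the sign of the product I would factor each term as $4m^2-(2k+1)^2 = (2m-(2k+1))(2m+(2k+1))$. For $m,k \ge 0$ the second factor is always strictly positive, while the first factor $2m-(2k+1)$ is positive exactly when $m \ge k+1$ and negative exactly when $m \le k$ (equality being impossible by the same parity argument). Thus the $k$th term is negative iff $k \ge m$, and in the product over $k = 0,\dots,n$ the number of negative factors is simply the number of integers $k$ in that range with $k \ge m$.

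Two cases then arise. If $m > n$, no index $k \le n$ satisfies $k \ge m$, so all factors are positive and the product is positive; combined with the prefactor this gives $\sgn{\leftexp{(2n+1)}{d^{\infty}_m}} = (-1)^{n+1}$, matching the claim (and positive for $n$ odd, negative for $n$ even). If $m \le n$, the negative factors are exactly those with $k = m, m+1, \dots, n$, i.e.\ $n-m+1$ of them, so the product has sign $(-1)^{n-m+1}$; multiplying by the prefactor $(-1)^{n+1}$ yields $(-1)^{2n-m+2} = (-1)^m$, again as claimed. The assertion about sign changes is then immediate: for $m = 0,\dots,n$ the sign alternates as $(-1)^m$, giving $n$ changes among those indices, and it changes once more at the transition from $m=n$ to $m=n+1$ (from $(-1)^n$ to $(-1)^{n+1}$), after which it is constant — $n+1$ changes in total.

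I expect no serious obstacle here; the only place demanding care is the parity bookkeeping — confirming that the numerator sign is exactly $(-1)^{(p+1)/2} = (-1)^{n+1}$ after the substitution $p = 2n+1$, that $4m^2 = (2k+1)^2$ is genuinely impossible so that no factor vanishes and the sign is everywhere well defined, and that the two exponent computations collapse correctly (in particular that $(-1)^{2n-m+2} = (-1)^m$). All of this is routine, so the real content of the proposition resides entirely in the closed form of Proposition~\ref{p:evenised-fwddiff-inf}.
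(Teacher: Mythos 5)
Your proof is correct, but it takes a different route from the paper's. The paper disposes of the statement by induction on $n$: from proposition~\ref{p:evenised-fwddiff-inf} it extracts the ratio of consecutive odd-order terms, $\leftexp{(2n+1)}{d^{\infty}_m} = \frac{-K}{4m^2-(2n+1)^2}\,\leftexp{(2n-1)}{d^{\infty}_m}$ with $K>0$, observes that this multiplier is positive for $m \le n$ and negative for $m > n$ (so the sign is inherited below the threshold and flipped above it), and then the pattern propagates from the base case $p=1$. You instead determine the sign of each term in closed form, in one pass: factoring $4m^2-(2k+1)^2 = (2m-(2k+1))(2m+(2k+1))$, counting the negative factors (exactly those with $k \ge m$, equality being excluded by parity), and combining with the prefactor $(-1)^{n+1}$. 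Both arguments rest entirely on proposition~\ref{p:evenised-fwddiff-inf}, and both are elementary; the trade-off is that your direct count gives the sign of $\leftexp{(2n+1)}{d^{\infty}_m}$ for each pair $(n,m)$ independently with no base case or inductive bookkeeping, whereas the paper's recursion is shorter to state but leaves the base case and the threshold behaviour of the multiplier implicit. Your parity checks (no vanishing factor, $(-1)^{(p+1)/2}=(-1)^{n+1}$, $(-1)^{2n-m+2}=(-1)^m$) are all sound, and your tally of $n+1$ sign changes ($n$ alternations among $m=0,\dots,n$ plus one more at the transition to $m=n+1$) matches the statement.
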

The sign alternation has then the pattern shown in table~\ref{t:evenised-fwddiff-inf-odd}. In particular, $\leftexp{(1)}{d^{\infty}_m}$ (the original elastic net) is positive only at $m = 0$, the central element, and is thus a discrete Mexican hat (or centre-surround kernel), with higher-order stencils being oscillatory versions of Mexican hats (i.e., with more rings). Stencils with $p = 1,5,\dots,4n+1$ for $n = 0,1,2,\dots$ have long-range inhibitory connections while stencils with $p = 3,7,\dots,4n+3$ have long-range excitatory connections. Fig.~\ref{f:evenised-fwddiff} plots the evenised forward-difference stencils for the first few orders $p$ (for $M = 50$); note how for even $p$ the stencils are sparse and coincide with those of fig.~\ref{f:stencil-families}.

From proposition~\ref{p:evenised-fwddiff-inf} for even $p = 2n$ we have that
\begin{equation*}
  \sgn{\leftexp{(2n)}{d^{\infty}_m}} = 
  \begin{cases}
    (-1)^m, & m \le n \\
    0, & m > n
  \end{cases}
\end{equation*}
just as for odd $p$. The sign alternation has the pattern shown in table~\ref{t:evenised-fwddiff-inf-even}. The stencils are, as before, oscillatory Mexican hats but with compact support, i.e., strictly short-range connections. The case $p = 2$ is similar to $p = 1$ in having only the central element as excitatory connection.

The analysis of the stencils in 2D becomes more complicated, with the \SS\ matrix being block-circulant, though the results should carry over in an analogous way.

\begin{table}
  \begin{center}
    \begin{tabular}{c|c|ccccccccc|l}
$p = 2n+1$ & $n$ & sign for $m =$ & $0$ & $1$ & $2$ & $3$ & $4$ & $5$ & $6$ & $\dots$ & \multicolumn{1}{c}{$\pi 2^{-p} \leftexp{(p)}{d^{\infty}_m}$} \\
\hline
$1$ & $0$ & & $+$ & $-$ & $-$ & $-$ & $-$ & $-$ & $-$ & & $-2 / (4m^2-1)$ \\
$3$ & $1$ & & $+$ & $-$ & $+$ & $+$ & $+$ & $+$ & $+$ & & $12 / (4m^2-1) (4m^2-9)$ \\
$5$ & $2$ & & $+$ & $-$ & $+$ & $-$ & $-$ & $-$ & $-$ & & $-240 / (4m^2-1) (4m^2-9) (4m^2-25)$ \\
$7$ & $3$ & & $+$ & $-$ & $+$ & $-$ & $+$ & $+$ & $+$ & & etc.
    \end{tabular}
    \caption{Pattern of sign alternation of the 1D ``evenised'' stencils of the forward-difference family, for odd order $p$.}
    \label{t:evenised-fwddiff-inf-odd}
  \end{center}
\end{table}

\begin{table}
  \begin{center}
    \begin{tabular}{c|c|ccccccccc|l}
$p = 2n$ & $n$ & sign for $m =$ & $0$ & $1$ & $2$ & $3$ & $4$ & $5$ & $6$ & $\dots$ & \multicolumn{1}{c}{$\leftexp{(p)}{d^{\infty}_m}$} \\
\hline
$2$ & $1$ & & $+$ & $-$ & $0$ & $0$ & $0$ & $0$ & $0$ & & $(2,\ -1,\ 0,\ \dots$ \\
$4$ & $2$ & & $+$ & $-$ & $+$ & $0$ & $0$ & $0$ & $0$ & & $(6,\ -4,\ 1,\ 0,\ \dots$ \\
$6$ & $3$ & & $+$ & $-$ & $+$ & $-$ & $0$ & $0$ & $0$ & & $(20,\ -15,\ 6,\ -1,\ 0,\ \dots$ \\
$8$ & $4$ & & $+$ & $-$ & $+$ & $-$ & $+$ & $0$ & $0$ & & $(70,\ -56,\ 28,\ -8,\ 1,\ 0,\ \dots$
    \end{tabular}
    \caption{Pattern of sign alternation of the 1D ``evenised'' stencils of the forward-difference family, for even order $p$.}
    \label{t:evenised-fwddiff-inf-even}
  \end{center}
\end{table}

\begin{figure}
  \begin{center}
    \psfrag{1}{$p = 1$}
    \psfrag{2}{$p = 2$}
    \psfrag{3}{$p = 3$}
    \psfrag{4}{$p = 4$}
    \begin{tabular}{@{}c@{\hspace{1.5cm}}c@{}}
      Forward-difference family & Central-difference family \\
      \includegraphics[height=\MACPlengthA]{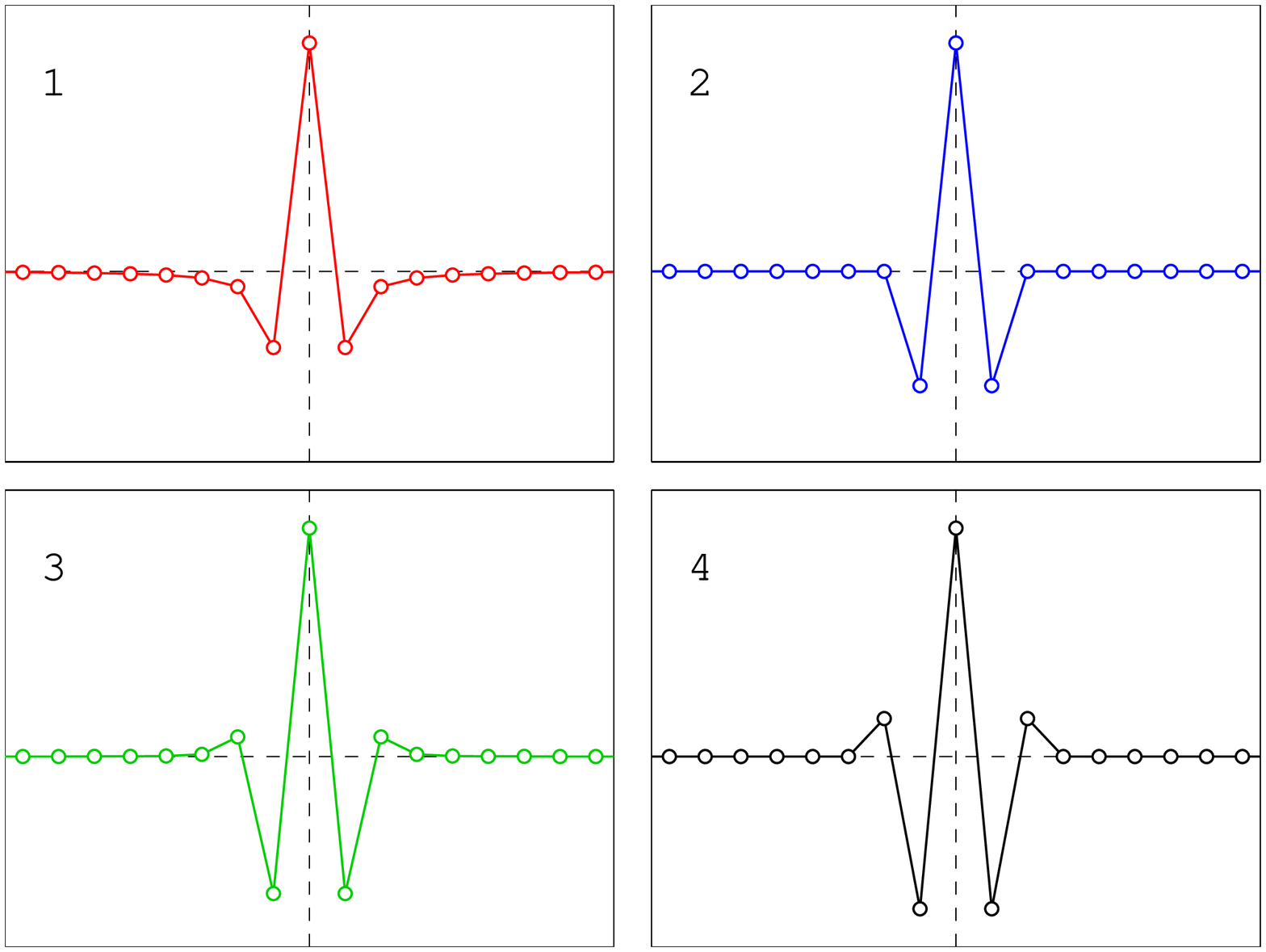} &
      \includegraphics[height=\MACPlengthA]{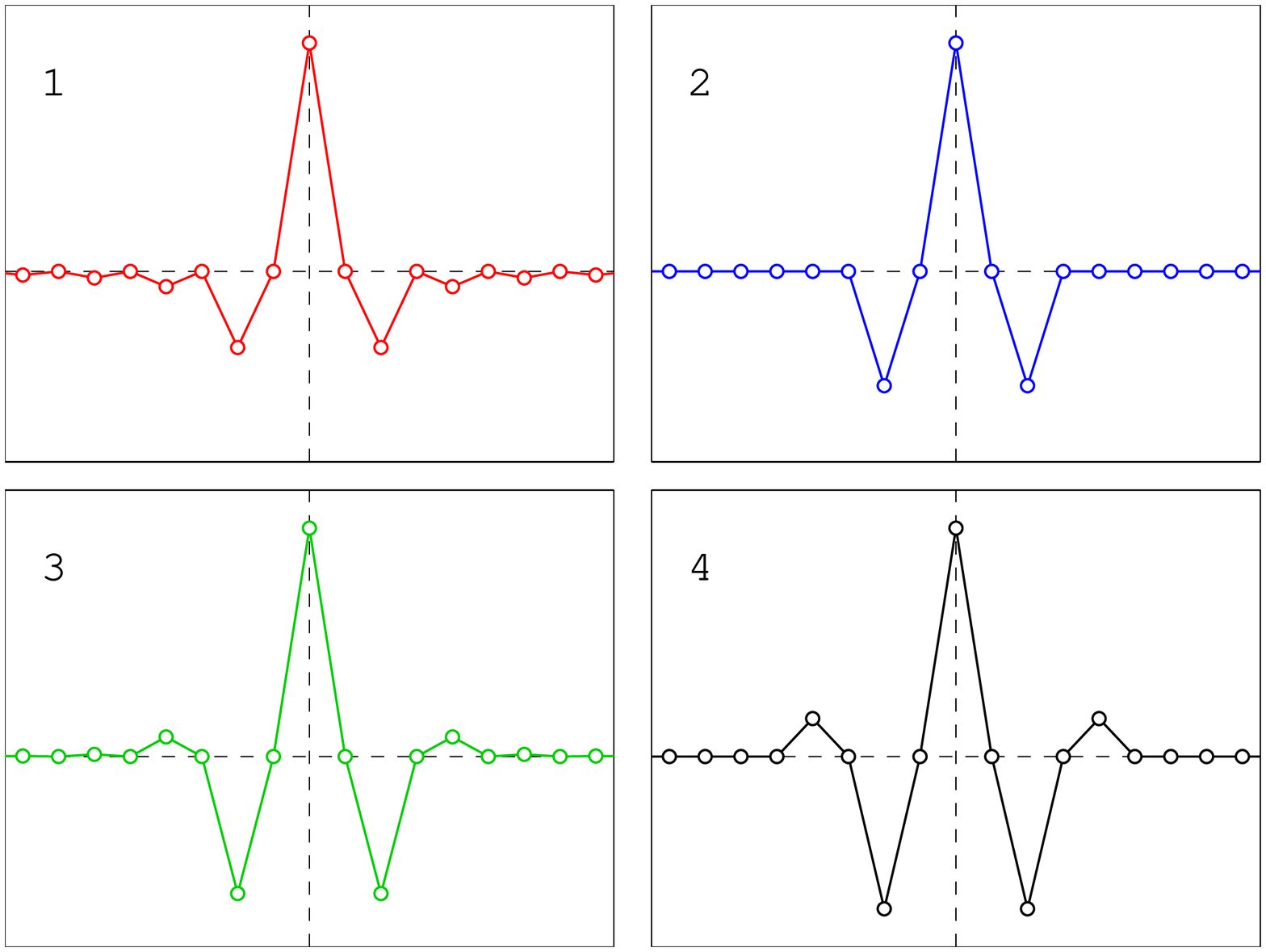}
    \end{tabular}
    \caption{1D ``evenised'' stencils of the forward-  and central-difference families, normalised in maximal amplitude. Compare with the net-domain plots of fig.~\ref{f:stencil-families1Da} (the power spectra are the same by definition). For even $p$ they coincide with the original ones in fig.~\ref{f:stencil-families1Da}, except for a change of sign.}
    \label{f:evenised-fwddiff}
  \end{center}
\end{figure}

\subsubsection{Central-difference family}

From section~\ref{s:cendiff-family} we have that the eigenvalues of $\leftexp{(p)}{\SS}$ are $\leftexp{(p)}{\nu_m} = \sin^{2p}{\left( 2 \pi \frac{m}{M} \right)}$, $m = 0,\dots,M-1$. Thus the even, real stencil is
\begin{equation}
  \label{e:evenised-cendiff}
  \leftexp{(p)}{d_m} = \frac{1}{M} \sum^{M-1}_{k=0}{\abs{ \sin^p{\left( 2 \pi \frac{k}{M} \right)} } \cos{\left( 2 \pi \frac{m}{M} k \right)}} \qquad m = 0,\dots,M-1
\end{equation}
and the following proposition characterises it asymptotically.
\begin{prop}
  \label{p:evenised-cendiff-inf}
  For the stencil defined by~\eqref{e:evenised-cendiff}, we have
  \begin{equation*}
    \leftexp{(p)}{d^{\infty}_m} \bydef \lim_{M \rightarrow \infty}{\leftexp{(p)}{d_m}} =
    \begin{cases}
      p \text{ even:} &
      \begin{cases}
        (-1)^{\frac{m}{2}} \, 2^{-p} \, \binom{p}{\frac{p}{2} - \frac{m}{2}}, & m \text{ even and } m \le p \\
        0, & m \text{ odd or } m > p
      \end{cases} \\[3ex]
      p \text{ odd:} &
      \begin{cases}
        \displaystyle
        \frac{(-1)^{\frac{p+1}{2}} \, 2^{\frac{p+1}{2}} \left( \frac{p-1}{2} \right)! \, p!!}{\pi \, \prod^{\frac{p-1}{2}}_{k=0}{( m^2 - (2k+1)^2)}}, & m \text{ even } = 0,2,4,\dots,\infty \\
        0, & m \text{ odd}.
      \end{cases}
    \end{cases}
  \end{equation*}
\end{prop}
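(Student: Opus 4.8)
The plan is to recognise the finite sum~\eqref{e:evenised-cendiff} as a Riemann sum and pass to the limit, obtaining a Fourier coefficient of the $\pi$-periodic function $\abs{\sin t}^p$, and then to evaluate that coefficient by two different routes according to the parity of $p$.

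First I would set $t_k \bydef 2\pi k/M$, so that eq.~\eqref{e:evenised-cendiff} reads $\leftexp{(p)}{d_m} = \frac{1}{M}\sum_{k=0}^{M-1} g(t_k)$ with $g(t)\bydef\abs{\sin t}^p\cos(mt)$ continuous and $2\pi$-periodic. Since $\frac{1}{M} = \frac{1}{2\pi}(t_{k+1}-t_k)$, the sum is the rectangle-rule approximation of $\frac{1}{2\pi}\int_0^{2\pi} g$, and for a continuous periodic integrand it converges to that integral, so
\[ \leftexp{(p)}{d^\infty_m} = \frac{1}{2\pi}\int_0^{2\pi} \abs{\sin t}^p \cos(mt)\, dt. \]
(To be careful about aliasing, expanding $\abs{\sin t}^p$ in its Fourier series and using discrete orthogonality shows the spurious surviving modes have index $\equiv\pm m \pmod M$, whose coefficients vanish as $M\to\infty$ because $\abs{\sin t}^p$ is $(p-1)$-times continuously differentiable.) Next, because $\abs{\sin t}^p$ has period $\pi$, the substitution $t\mapsto t+\pi$ on $[\pi,2\pi]$ introduces a factor $\cos(m\pi)=(-1)^m$; hence the coefficient vanishes for odd $m$ and equals $\frac{1}{\pi}\int_0^{\pi}\sin^p t\,\cos(mt)\,dt \bydef \frac{1}{\pi} J_p(m)$ for even $m$. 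This already accounts for the ``$0$ for $m$ odd'' branches.

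For even $p$ the cleanest route is direct expansion: $\sin^p t$ (which equals $\abs{\sin t}^p$ here) is a finite cosine polynomial, $\sin^p t = 2^{-p}\binom{p}{p/2} + 2^{1-p}\sum_{k=1}^{p/2}(-1)^k\binom{p}{p/2-k}\cos(2kt)$, obtained by expanding $\bigl(\frac{e^{it}-e^{-it}}{2i}\bigr)^p$ with the binomial theorem and pairing conjugate terms. Reading off the coefficient of $\cos(mt)$ by orthogonality gives $\leftexp{(p)}{d^\infty_m} = (-1)^{m/2}2^{-p}\binom{p}{p/2-m/2}$ for even $m$, which is automatically zero once $m>p$; this is the stated even-$p$ branch. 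For odd $p$ I would instead use a recursion for $J_p(m)=\int_0^\pi\sin^p t\cos(mt)\,dt$ obtained by integrating by parts twice (first with $dv=\cos(mt)\,dt$, then with $dv=\sin(mt)\,dt$); the boundary terms vanish for $p\ge2$ and one gets $J_p(m) = \frac{p(p-1)}{p^2-m^2}\,J_{p-2}(m)$. Starting from the elementary $J_1(m)=\frac{2}{1-m^2}$ (for even $m$) and telescoping for $p=2n+1$ gives $J_p(m) = \frac{2}{1-m^2}\prod_{j=1}^{n}\frac{2j(2j+1)}{(2j+1)^2-m^2}$; absorbing $1-m^2=-(m^2-1^2)$ into the product as the $k=0$ factor, collecting $\prod_{j=1}^n 2j = 2^n n!$ and $\prod_{j=1}^n(2j+1)=(2n+1)!!$, and tracking the $n$ sign flips from $(2j+1)^2-m^2=-(m^2-(2j+1)^2)$, yields $\frac1\pi J_p(m)=\frac{(-1)^{(p+1)/2}2^{(p+1)/2}(\frac{p-1}{2})!\,p!!}{\pi\prod_{k=0}^{(p-1)/2}(m^2-(2k+1)^2)}$ after substituting $n=\frac{p-1}{2}$, matching the odd-$p$ branch. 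I would sanity-check $p=1$ against a direct product-to-sum evaluation of $J_1$.

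The main subtlety, and the reason even and odd $p$ are treated by different methods, is that the recursion factor $\frac{p(p-1)}{p^2-m^2}$ is singular precisely when $p=m$: for even $p$ with even $m\le p$ the iteration would run into this pole (reflecting the compact support of the even-$p$ stencils), whereas for odd $p$ the index $m$ is even while $p$ is odd, so $p\ne m$ always and the recursion is harmless and never hits a zero denominator (consistent with the absence of compact support). The remaining work is purely bookkeeping: keeping the signs and the double-factorial constants straight so that the collapsed product reproduces exactly the claimed normalisation.
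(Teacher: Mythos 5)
Your proposal is correct, and it shares the paper's skeleton: you recognise eq.~\eqref{e:evenised-cendiff} as a Riemann sum converging to $\frac{1}{\pi}\int_0^{\pi}\abs{\sin^p{2x}}\cos{2mx}\,dx$, kill odd $m$ by the period-$\pi$ symmetry of $\abs{\sin t}^p$ (the paper phrases this as oddness of the integrand about $x=\frac{\pi}{2}$), and reduce even $m$ to $J_p(m)=\int_0^{\pi}\sin^p{t}\,\cos(mt)\,dt$. Where you genuinely diverge is in evaluating that integral: the paper simply cites the tabulated integrals GR~3.631:12--13 of \citet{GradshRyzhik94a} (the same ones used for proposition~\ref{p:evenised-fwddiff-inf}, with $4m^2$ replaced by $m^2$) and simplifies the odd-$p$ case by induction, whereas you derive both cases from scratch --- for even $p$ the finite cosine-polynomial expansion of $\sin^p{t}$ plus orthogonality, and for odd $p$ the double integration-by-parts recursion $J_p=\frac{p(p-1)}{p^2-m^2}J_{p-2}$ telescoped from $J_1(m)=\frac{2}{1-m^2}$. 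Your route is longer but self-contained, and it buys a structural insight the paper's table lookup hides: the recursion's pole at $p=m$ explains why the even-$p$ branch (compact support, hence the vanishing for $m>p$) and the odd-$p$ branch (slow $\calO(m^{-(p+1)})$ decay) must be handled by different methods, with the parity mismatch ($p$ odd, $m$ even) guaranteeing the recursion never degenerates. Two small points to tidy: the integration by parts that produces the recursion divides by $m$, so $m=0$ needs a separate word (the standard Wallis reduction gives the same factor $\frac{p(p-1)}{p^2}$, so the final formula is uniform in $m$); and the aliasing digression is unnecessary, since plain Riemann-sum convergence for a continuous periodic integrand --- which is all the paper invokes, via \citet{Apostol57a} --- already suffices.
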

Again, for even $p$ the limit stencil coincides with the original central-difference stencil (with an immaterial sign change) and is therefore sparse, while for odd $p$ it is not sparse, decaying as $\calO(m^{p+1})$ with a finite number of oscillations. The pattern of sign alternation is like that of the forward-difference one but over the even indices $m$ (since the odd ones are zero). It is possible to obtain the stencil exactly for $p = 1$ as in eq.~\eqref{e:evenised-fwddiff1} but we shall not dwell in precise statements.

\subsection{Linear combinations of stencils}
\label{s:lc}

Another way to create new stencils is to combine linearly stencils of the same or different order. We have two options: either to combine in the net domain or in the power domain. We consider both cases for stencils of the same order; for stencils of different order, we also need to include the step size $h$ (since it is powered to the order).

\subsubsection{Net domain}

Consider without loss of generality a linear combination (l.c.) stencil $\varsigma \bydef a \varrho + b \varpi$ where both $\varrho$ and $\varpi$ are of order $p$. From the truncation error expression~\eqref{e:D:trunc}, in order for $\varsigma$ to be a differential stencil of order $p$ we must have $a + b = 1 = \alpha_p$ (the $\alpha_p$ coefficient for $\varsigma$). If both $a$ and $b$ were nonnegative we could further restrict them to be in $[0,1]$ and so we would have a convex l.c.\ $\varsigma = \gamma \varrho + (1 - \gamma) \varpi$, $\gamma \in [0,1]$; but we accept negative values for either $a$ or $b$. Also note that if $a + b = 0$ then $\varsigma$ would be of order higher than $p$.

Some examples using first-order stencils:
\begin{center}
  \renewcommand{\arraystretch}{1.2}
  \begin{tabular}{ll}
    $2 (-\frac{1}{2},\ 0,\ \frac{1}{2}) - (0,\ -1,\ 1) = (-1,\ 1,\ 0)$ & ($2 \times \text{central} - 1 \times \text{forward} = \text{backward}$) \\
    $2 \left( (0,\ -1,\ 1) - (-\frac{1}{2},\ 0,\ \frac{1}{2}) \right) = (1,\ -2,\ 1)$ & ($2 \times \left( \text{forward} - \text{central} \right) = \text{forward, with}$ $p = 2$, since $a + b = 0$) \\
    $\gamma (0,\ -1,\ 1) + (1 - \gamma) (-\frac{1}{2},\ 0,\ \frac{1}{2})$ & (smooth transition between forward and central)
  \end{tabular}
\end{center}
We see that the l.c.\ with a sawtooth stencil need no result in a sawtooth stencil.

By the linearity of the convolution and the Fourier transform, combining in the net domain is the same as combining in the Fourier domain and so we have:
\begin{equation*}
  \D_{\varsigma} = a \D_{\varrho} + b \D_{\varpi} \qquad \lambda_{\varsigma,m} = a \lambda_{\varrho,m} + b \lambda_{\varpi,m} \qquad \SS_{\varsigma} = \D^T_{\varsigma} \D_{\varsigma} \qquad \nu_{\varsigma,m} = \abs{a \lambda_{\varrho,m} + b \lambda_{\varpi,m}}^2.
\end{equation*}
This is very different from combining the power spectra, as we show below. Note that a l.c.\ in the net domain can be designed from scratch, since it is just a particular way of zeroing $\alpha_r$'s coefficients.

Studying l.c.'s of stencils provides another way to characterise the space of stencils. For example, the differential stencils having at most $3$ consecutive non-zero coefficients $(a,\ b,\ c)$ form a 2D vector subspace (since $a+b+c = 0$) spanned by the forward- and central-difference stencils. We will not pursue this approach further here.

Note that using a l.c.\ of stencils of different order, such as $\sum^P_{p=1}{a_p \, \leftexp{(p)}{\varsigma}}$, is the discrete correlate of penalising functions $f$ that do not satisfy the differential equation $\sum^P_{p=1}{a_p \frac{d^p f}{dx^p}} = 0$.

\subsubsection{Power domain}
\label{s:lc:power}

The l.c.\ should now be convex in general because the power cannot be negative. Then, for $\abs{\hat{\varsigma}_{k}}^2 = \gamma \abs{\hat{\varrho}_{k}}^2 + (1 - \gamma) \abs{\hat{\varpi}_{k}}^2$ for each $k = 0,\dots,M-1$ we have $\nu_{\varsigma,k} = \gamma \nu_{\varrho,k} + (1 - \gamma) \nu_{\varpi,k}$ and we can express the matrix $\D_{\varsigma}$ as resulting from two matrices, $\D_{\varsigma} = \left( \begin{smallmatrix} \sqrt{\gamma} \, \D_{\varrho} \\ \sqrt{1 - \gamma} \, \D_{\varpi}\end{smallmatrix} \right)$, so that $\SS_{\varsigma} = \D^T_{\varsigma} \D_{\varsigma} = \gamma \SS_{\varrho} + (1 - \gamma) \SS_{\varpi}$. Basically, we are adding more rows to \D, which end up being further squared l.c.'s to add to the tension term.

This is exactly what we do to create a 2D stencil out of a 1D one: we first pass the 1D stencil horizontally to produce a matrix $\D_{\text{h}}$; and then pass it vertically to produce $\D_{\text{v}}$. The resulting 2D stencil has a matrix $\SS = \D^T_{\text{h}} \D_{\text{h}} + \D^T_{\text{v}} \D_{\text{v}}$ (see figure~\ref{f:stencil-families2Da} for the power spectrum and figures~\ref{f:simul2D:OD}--\ref{f:simul2D:ODOR} for simulations from cortical maps resulting from them). For example, for the first-order forward-difference stencil, we obtain terms of the form $\norm{\y_{m+1,n} - \y_{mn}}^2 + \norm{\y_{m,n+1} - \y_{mn}}^2$, whose derivative in the gradient descent algorithm is of the form $4 \y_{mn} - (\y_{m+1,n} + \y_{m-1,n} + \y_{m,n+1} + \y_{m,n-1})$ and has been used in the past in cortical map models. Note that the term $\norm{\y_{m+1,n} - \y_{mn}}^2 + \norm{\y_{m,n+1} - \y_{mn}}^2$ cannot be expressed as the square of a single l.c.\ $\norm{A \y_{mn} + B \y_{m+1,n} + C \y_{m,n+1}}^2$ with real coefficients $A$, $B$, $C$ (though it can with complex coefficients, e.g.\ $A = -\sqrt{2}$, $B = (1+i)/\sqrt{2} = C^*$). Although we could find a (nonsparse) evenised 2D stencil for \SS, in practice it is simpler to use $\D_{2M \times M} = \left( \begin{smallmatrix} \sqrt{\gamma} \D_{\text{h}} \\ \sqrt{1 - \gamma} \D_{\text{v}}\end{smallmatrix} \right)$.

For 2D stencils it is also possible to design the stencil directly in the net domain by forcing the relevant terms of the 2D Taylor expansion to vanish, e.g.\ as for the Laplacian stencils $\nabla^2_{+}$ or $\nabla^2_{\times}$. This also allows one to introduce other constraints, such as isotropy in some sense---deriving a 2D stencil from horizontal and vertical passes of a 1D one may result in nets that heavily emphasise those directions. Compare the contour lines around $\bk = (0,0)$ of the power spectrum of $(1,\ -2,\ 1)$ in fig.~\ref{f:stencil-families2Da} (which align with the coordinate axes) and those of any of $\nabla^2_{+}$, $\nabla^2_{9}$ or $\nabla^2_{\text{a}}$ in fig.~\ref{f:stencil-families2Db} (which are roughly circular). However, the horizontal-vertical method results in simple b.c.\ (basically the same as in 1D), while stencils such as $\nabla^2_{+}$ require a more careful definition of b.c. For example, for $\nabla^2_{+}$ one cannot simply zero all rows of \D\ for which the stencil overspills, because the corner points of the net would end up in no l.c.\ (their associated columns in \D\ would be zeroes) and thus be unconstrained.

In the power domain, the l.c.\ $\SS_{\varsigma} = \gamma \SS_{\varrho} + (1 - \gamma) \SS_{\varpi}$ of a sawtooth stencil $\varrho$ with a nonsawtooth one $\varpi$ smoothly transitions as $\gamma$ takes values from $0$ to $1$. An interesting question we leave for future research is for what $\gamma$ the l.c.\ starts showing sawtooth behaviour.

\subsection{Extensions}
\label{s:ext}

We have given a general methodology to characterise the spectral properties of families of stencils with periodic b.c.\ and studied in detail two important families. The space of stencils is, however, much larger. Here we discuss additional ways of defining families and nonperiodic b.c.

\subsubsection{The nonperiodic case}

Nonperiodic b.c.\ can be defined in different ways, perhaps using ideas from partial differential equations (such as Dirichlet or Neumann b.c.; \citealp{Press_92a,GodunovRyaben87a}). They are particularly necessary when modelling domains that are not rectangular or have holes, but also when modelling adjacent domains with different properties (e.g.\ to investigate the lateral connections at the border between V1 and V2). In the simulations of section~\ref{s:cmap} we use a simple type of nonperiodic b.c.\ by analogy with the original elastic net: if (a nonzero coefficient of) a stencil falls out of the net when applied at a given centroid, we discard it by making zero the corresponding row of \D. If the stencil size is small compared to the net dimensions, only a few rows are zeroed.

With nonperiodic b.c.\ the \D\ and \SS\ matrices are not circulant anymore, but almost Toeplitz (see e.g.\ eq.~\eqref{e:origEN:D}). The analysis becomes harder because in general there are no formulas for the eigenvalues (though there are asymptotic results for the distribution of the eigenvalues, such as the Szeg{\H{o}} formula; \citealp{GrenanSzego58a}). The eigenvectors are not plane waves anymore; in fact, for matrices such as those of eq.~\eqref{e:origEN:D} (for $a = 0$) the nullspace of \D\ and \SS\ becomes larger and can contain discrete polynomials of degree less than $p$, as in the continuous case. However, as $M \rightarrow \infty$, we woould expect our analysis to hold for nonperiodic b.c.

Many of our results can be considered as an extension of spectral graph theory \citep{Chung97a} to higher-order derivatives in regular, periodic grids. However, one could think more generally of defining the tension term not on a regular grid in dimension $D$, but on arbitrary graphs. Higher-order derivatives defined on graphs may have applications in related areas, such as spectral clustering, of recent interest (e.g.~\citealp{ShiMalik00a}).

Also note that, even for the forward-difference family, the discrete case is not exactly equivalent to the continuous one. For example, while the second-derivative annihilates any linear function, in the discrete case the centroids must not only be collinear but also equispaced for the net to have zero penalty.

\subsubsection{Stencil design in power space}
\label{s:ext:design-power}

So far we have designed stencils in the net space, either from scratch (truncation error equations) or by repeated application of known stencils (families). Instead, we can design the stencil in the Fourier domain by specifying desirable power spectra (satisfying $p_0 = 0$ for the stencil to be differential and $p_m = p_{M-m}$ for it to be real) and then invert it to obtain an even stencil in the net domain. In particular, we can use intermediate power curves that would correspond to noninteger derivative orders $p$ (fig.~\ref{f:stencil-families1Da}). A desirable requirement would probably be that the power is a nondecreasing curve, since otherwise some high frequencies would be less penalised than the lower ones. One problem with this approach is that inverting a power spectrum to obtain an stencil results in general in nonsparse stencils (with many, or typically $M$, nonzero coefficients). This means that the matrix \SS\ is not sparse anymore and the computational cost of the learning algorithm becomes very large in both memory and time. One possibility would be to find the stencil that most closely approximates a given power spectrum subject to having a few nonzero coefficients (for a related problem see \citealp{ChuPlemmon03a}).

Note that these noninteger-order stencils cannot be obtained as linear combinations of stencils in the net space; the power spectra must be averaged instead. In other words, as we saw in section~\ref{s:lc}, the linear superposition of stencils in the power space is very different from that in the net space (analogously to the interference of incoherent and coherent light, respectively). For example, the average of the forward and backward differences in net space is the central difference, while in the power space it is the forward difference.

This also suggests a way of defining noninteger derivative orders $p$ in the continuous case by taking the inverse Fourier transform of $(i 2 \pi k)^p$ for $p \in \bbR^+$.

\subsubsection{Stencils with a scale parameter}

The differential stencils we have used are strictly local in that they do not depend on a scale parameter. For example, the forward difference involves only two neighbouring points, irrespectively of the net size $M$. One way (inspired by scale-space theory; \citealp{Lindeb94a}) of introducing a scale parameter in the stencil, and thus define a different class of stencils, is as follows. Consider a sequence of continuous functions indexed by a scale parameter $s \in \bbR^+$ that converges to the delta function, such as Gaussians $g_s(\uu) = \exp{(-\norm{\uu}^2/2 s)}$. Now, instead of taking the derivative of a function $f$, we take the derivative of $f$ convolved (smoothed) with $g_s$ (or, since the derivative and convolution operators commute, smooth the derivative of $f$). We have:
\begin{equation*}
  g_s \ast \frac{d^p f}{du^p} = \frac{d^p}{du^p}(g_s \ast f) = \frac{d^p g_s}{du^p} \ast f.
\end{equation*}
Thus, we can define a stencil by discretising $g^{(p)}_s = \frac{d^p g_s}{du^p}$. For the Gaussian case, $g^{(p)}_s$ has a similar shape as the forward-difference family and is given by the Hermite polynomial of order $p$ times the Gaussian kernel $g_s$.

\section{Application to cortical map modelling}
\label{s:cmap}

In this section we demonstrate by simulation the behaviour of generalised elastic nets, corroborating the previous analysis (section~\ref{s:fwddiff-family} in particular) and also pointing out characteristics that need theoretical explanation. All the simulations were obtained for the forward-difference family (orders $p = 1$ to $4$) using the Cholesky factorisation method. We consider the problem of cortical maps in primary visual cortex, to which the elastic net was originally applied by \citet{DurbinMitchis90a} and \citet{GoodhilWillsh90a}. In the low-dimensional version of this problem, the training set of ``cities'' are values of stimuli to which primary visual cortex cells respond (we consider the orientation of an edge OR, its position in the visual field VF and its eye of origin OD) and the centroids are the preferred values (receptive fields) of the cells, arranged as a 2D rectangular grid. The stimulus space is densely populated with training points, arranged for computational convenience as a grid%
\footnote{This grid acts as a scaffolding for the stimulus space and is computationally convenient for batch learning. It is also possible to use online training and generate a random data point uniformly in the stimulus space (e.g.\ \citealp{WolfGeisel98a}). This results in very similar maps, though for fixed $\sigma$ the online method takes very long to stabilise. Note that the annealing schedule may need to be adjusted if different values of $\beta$, the training set or the stencil are used.}
whose respective dimensions determine the development of the net. The typical development sequence with annealing of the $\sigma$ parameter from high to low values is the same as for the original elastic net, as follows. At large $\sigma$ the energy $E$ has a single minimum at the centre of mass of the training set. At a value $\sigma_0$ the net topographically expands along the principal components of the training set, chosen to be the VF axes; $\sigma_0$ depends on the training set geometry and the \SS\ matrix, and can be calculated as in \citet{Durbin_89a} as the value of $\sigma$ where the Hessian matrix of $E$ stops being positive definite (the energy surface bifurcates), noting that the Hessian of the tension term is here $\frac{\beta}{2} \I_D \otimes (\SS + \SS^T)$ (in Kronecker product notation). At a critical value $\sigma_{\text{c}} < \sigma_0$ the net expands along the OD and/or OR axis (whichever has the largest training set variance) and develops stripes. Fig.~\ref{f:simul1D-nonperiodic} shows this sequence for stencil orders $1$ and $2$, for a simplified problem where the net is 1D and the stimulus space consists of the 1D VF position and the OD. Note how, unlike the 1st-order net, the 2nd-order net avoids sharp corners, which have locally high curvature, and goes slightly out of the convex hull of the training set. These two characteristics always occur with order $p > 1$ and are most apparent for intermediate values of $\sigma$; when $\sigma \rightarrow 0$, the fitness term overwhelms the tension term and the net develops kinks anyway in an effort to interpolate the training set (high frequencies are less and less penalised as $\sigma$ decreases). Fig.~\ref{f:simul1D-periodic} shows the decrease of the stripe width with the stencil order for $p = 1$ to $4$ (with constant $\beta$), this time using periodic b.c.\ for the net (not for the training set).

In our simulations for $p = 1$ (at least in dimension $D \le 3$, which can be visualised) the net centroids (and therefore the links between them) always remain inside the convex hull of the training set. This can be intuitively understood by noting the following. (1) If a centroid lies outside the convex hull, the fitness term value can be increased by bringing it towards the convex hull (since the Gaussian is isotropic and monotonically decreasing from its centre). (2) The tension term encourages centroids to concentrate. Thus, maxima of $E$ should have no centroids outside the convex hull. However, for higher-order tension terms, step (2) is not true anymore, since stretched nets can have zero penalty and the centroids can lie outside the convex hull (the tendency increasing with the order $p$). For example, in fig.~\ref{f:simul1D-nonperiodic} the rounded ends of the net that exceed the convex hull result in a lower tension than flattening the ends and producing two sharp corners with a heavy cost in the second derivative. In cortical map models one would rather not have centroids outside the convex hull, since they result in stimulus preferences outside their valid range (for VF, OD and OR modulus). If annealing slowly enough and if stopping the training shortly after all maps have arisen, the centroids are inside the convex hull except for a few that are slightly outside; further, one can define the ranges to exceed the training set range.
Another possibility would be to constrain the learning algorithm to keep the net inside the convex hull. However, apart from being a difficult task, this would probably result in rather different nets.

We investigate now the effect of $\beta$ and the stencil order $p$ on the ocular dominance and orientation maps with a 2D net. As is well known from earlier work on dimension reduction models such as the elastic net and self-organising map applied to cortical maps \citep{WolfGeisel98a,Scherf_99a}, the space of $\beta$ and $\sigma$ can be decomposed in regions as in a phase diagram. In each region or phase the resulting cortical maps of OD and OR have qualitatively different properties. Here we have an additional variable%
\footnote{The stencil order $p$ as used here is discrete, but could be considered continuous if interpolating the power spectra as described in section~\ref{s:ext:design-power}.}, 
the stencil order $p$, and we discuss the resulting phase diagram obtained by simulation (assuming the forward-difference family with normalisation to unit power). For a given $p$, the OD and OR maps arise at a given critical value $\sigma_{\text{c}}$ of $\sigma$ (where $\sigma_{\text{c}}$ decreases with $\beta$), and do not arise at all if $\beta$ is too large (though for $p > 2$ we have not obtained unsegregated maps for the range of $\beta$ considered). As shown in fig.~\ref{f:phase} (right), $\sigma_{\text{c}}$ increases with $p$. Figures~\ref{f:simul2D:OD}--\ref{f:simul2D:ODOR} show the maps on a slice of the space $(\beta,p,\sigma)$ for a value of $\sigma$ small enough that the OD and OR maps have arisen. Fig.~\ref{f:phase} (left) shows a schematic phase diagram over $(\beta,p)$. Region $1$ (for small $\beta$) contains salt-and-pepper maps, without continuity among neighbouring centroids; intuitively this reflects the fact that if $\beta$ is very small then the smoothing effect of the tension term disappears. Region $2$ contains unsegregated maps, where the net stretches (to some extent) along the VF variables but not along the OD and OR variables, even as $\sigma$ tends to $0$. Region $3$ contains columnar maps with local geometric characteristics (such as stripe width, pinwheel density, crossing angles or gradient matching) that depend on $\beta$ and $p$. The stripe width increases with $\beta$ and decreases with $p$, as predicted in section~\ref{s:fwddiff-family}. For fixed $\beta$, \citet{CarreirGoodhil03b} have determined the following effects dependent on $p$: (1) the distribution of crossing angles between the contours of the OD and OR maps is biased towards orthogonality for $p = 1$ but quickly flattens to become uniform as $p$ increases; (2) the OR pinwheels tend to lie away from OD borders for $p = 1$ but gradually occupy OD borders (like beads on a string) as $p$ increases; and (3) the contours of OD and OR align preferentially along the diagonals as $p$ increases (due to the anisotropy of the stencil explained in section~\ref{s:fwddiff-family}). The maps that have been characterised experimentally so far are best matched in region 3 for $p = 1$.

In the $\sigma$ variable there is a narrow interval $I_{\text{L}}$ just below the critical $\sigma_{\text{c}}$ in which the following phenomenon occurs: if training at a fixed $\sigma \in I_{\text{L}}$, the stripe widths of OD and OR increase and the number of pinwheels decrease (where pinwheels of opposite sign approach and annihilate) as a function of the iteration number. \emph{Pinwheel annihilation} has been noted in the literature of cortical map models \citep{WolfGeisel98a}. If training for a very long time, it is possible to get rid of nearly all OD borders and OR pinwheels. The reduction of the number of OD borders and OR pinwheels are really the same phenomenon, which we call \emph{loop elimination} (see fig.~\ref{f:loop-elim}), and seems particularly strong for $p > 1$. It corresponds to the fact that the energy function $E$ presents a deep minimum at the net without loops that is only reachable for $\sigma \in I_{\text{L}}$; for smaller $\sigma$, $E$ develops another minimum to which the algorithm is always attracted, corresponding to the striped net. Thus, a fast annealing will result in striped maps (which are observed biologically) while a very slow one will result in maps without stripes or pinwheels (which have not been observed biologically). The phase diagram for the slow annealing case changes as follows. Region $1$ (for small $\beta$) contains a constant map, practically identical for all values of $p$ and $\beta$, which is severely distorted and contains stripes of different widths. Region $2$ contains unsegregated maps as before. Region $3$ contains columnar maps with loop elimination, i.e., wide stripes and few pinwheels (depending on the slowness of the annealing).

Another intriguing issue is the fact that the global structure of the maps can sometimes be very similar for different values of $p$ in part of region $3$ of the $(\beta,p)$ space. For example, the maps in fig.~\ref{f:simul2D:OD} for $(\beta,p) \in \{(10^1,1),\ (10^3,2),\ (10^5,3)\}$; and for the diagonal from $(10^0,1)$ to $(10^3,4)$. This effect can be partially explained by the drifting cutoff argument of section~\ref{s:fwddiff-family} as follows. For the forward-difference family the power is $p_k = \bigl( \smash{2 \sin{\left( \pi \frac{k}{M} \right)}} \bigr)^{2p}$ and in 2D we normalise by dividing it by twice the stencil squared modulus, $2 \binom{2p}{p}$ (see appendix~\ref{s:stencil-norm}), so $p_k \approx \frac{1}{2} \sqrt{\pi p} \sin^{2p}{\left( \frac{\pi k}{M} \right)}$. Assuming the emerging net \Y\ has frequency $k^*$, the tension term value will be $\frac{\beta}{2} \trace{\Y^T\Y\SS} \propto \frac{\beta}{2} p_{k^*} = \frac{1}{4} \beta \sqrt{\pi p} \sin^{2p}{ \left( \frac{\pi \smash{k^*}}{M} \right) }$, and any combination of $\beta$ and $p$ for which this tension equals a given constant (dependent on $k^*$) should result in roughly the same map. So we would expect that similar maps occur along the family of curves
\begin{equation*}
  \log{\beta} = -2 p \log{\sin{ \left( \frac{\pi k^*}{M} \right) }} -\frac{1}{2} \log{p} + \text{constant}
\end{equation*}
where all the logarithms are in base $10$. For $k^* << M$ (wide stripes) these are straight lines with positive slope $2 \log{(M/\pi k^*)}$. This expression indeed matches well the slopes for the cases mentioned above, and also justifies the use of a logarithmic scale for $\beta$ (otherwise evident from fig.~\ref{f:simul2D:OD}). Note that the slope is at most $2 \log{(M/\pi)}$ and occurs for a single-striped map ($k^* = 1$). The region for $\log{\beta} > 2 p \log{(M/\pi)}$ should correspond to maps that are either single-striped or unsegregated. In the region of thin stripes (for $k^*$ close to $M/2$) the curves become horizontal and logarithmic in $p$.

However, even though some OD maps may be similar for different $p$, detailed examination shows that the geometric relations between the OD and OR maps still differ as mentioned above (e.g.\ the pinwheels colocate with OD borders as $p$ increases). A rigorous theoretical explanation for the structure of the 3D phase map $(\beta,p,\sigma)$ both in the continuous and discrete cases is left for future research.

The GEN could be further extended in two useful ways for cortical map modelling. (1) Instead of using deterministic annealing, we could learn both \Y\ and $\sigma$ (e.g.\ with an EM algorithm) and interpret the resulting $\sigma$ as a receptive field size. However, a prior on $\sigma$ that penalises high values will be necessary, because otherwise the result is typically a net with large $\sigma$ that stretches along the principal component of the training set but not along orthogonal directions with lower variance. In effect, the freedom in the selection of the annealing schedule becomes the freedom in the selection of the initial net and the prior on $\sigma$ (biologically interpretable as a constraint on the receptive field size). (2) We could consider a stencil $\varsigma_m$ or tension strength $\beta_m$ that depends on the centroid index. For example, the stripe width of OD is smaller at the fovea than at the periphery of the visual field and this might be due to a dependence of the lateral connection pattern on the cortical location. Both things can be readily incorporated in the model without modifying the learning algorithm, by e.g.\ defining manually the rows of the matrix \D, or defining a tension term $\frac{1}{2} \trace{\Y^T \Y \SS}$ with $\SS = \D^T \B \D$, where $\B = \diag{\beta_1,\dots,\beta_M}$. However, it would be more parsimonious to express explicitly the dependence on $m$, e.g.\ by defining $\beta_m = a m + \beta_0$ or $\beta_m = \beta_0 e^{am}$ for some parameters $\beta_0$ and $a$ (which could also be learned from the data).

\begin{figure}
  \begin{center}
    \psfrag{p1}[B][B]{$p = 1$, $\leftexp{(1)}{\varsigma} = (0,\ -1,\ 1)$}
    \psfrag{p2}[B][B]{$p = 2$, $\leftexp{(2)}{\varsigma} = (1,\ -2,\ 1)$}
    \psfrag{small_s}[l][Bl]{\footnotesize small $\sigma$}
    \psfrag{large_s}[l][Bl]{\footnotesize large $\sigma$}
    \includegraphics[width=\textwidth]{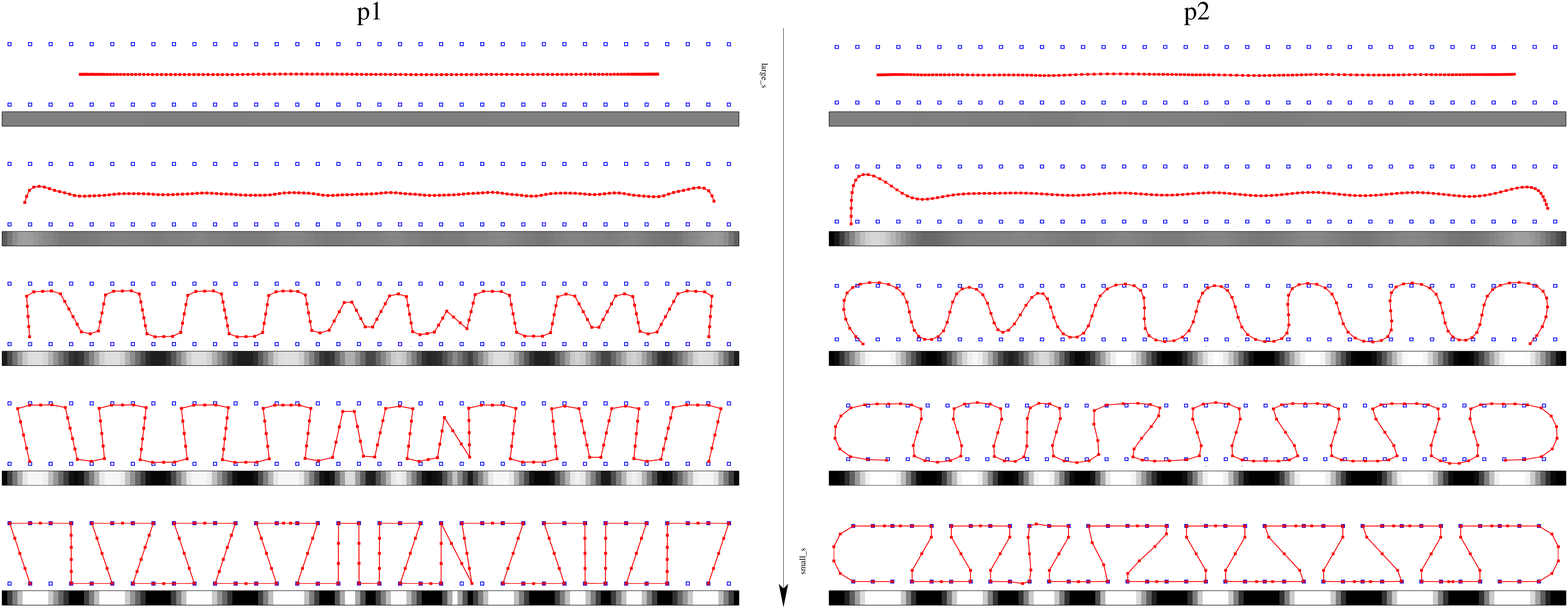}
    \caption{Typical sequence of map development for a 1D generalized elastic net in a 2D stimulus space of VF (X axis) and OD (Y axis) for the forward-difference family (stencils of order $1$ and $2$), with nonperiodic b.c.\ and annealing. The thin greyscale bar below the plots represents the OD value (black for one eye, white for the other) along the centroids $m = 1,\dots,M$ as a 1D ocular dominance map. The net is initially unselective to ocularity, after which an initial wave appears, and the final state is a set of stripes. Note the difference in curvature between both stencils, e.g.\ the absence of corners in the second-order stencil (for intermediate $\sigma$).}
    \label{f:simul1D-nonperiodic}
  \end{center}
\end{figure}

\begin{figure}
  \begin{center}
    \psfrag{p1}{$p = 1$, $\leftexp{(1)}{\varsigma} = (0,\ -1,\ 1)$}
    \psfrag{p2}{$p = 2$, $\leftexp{(2)}{\varsigma} = (1,\ -2,\ 1)$}
    \psfrag{p3}{$p = 3$, $\leftexp{(3)}{\varsigma} = (0,\ -1,\ 3,\ -3,\ 1)$}
    \psfrag{p4}{$p = 4$, $\leftexp{(4)}{\varsigma} = (1,\ -4,\ 6,\ -4,\ 1)$}
    \includegraphics[width=\textwidth]{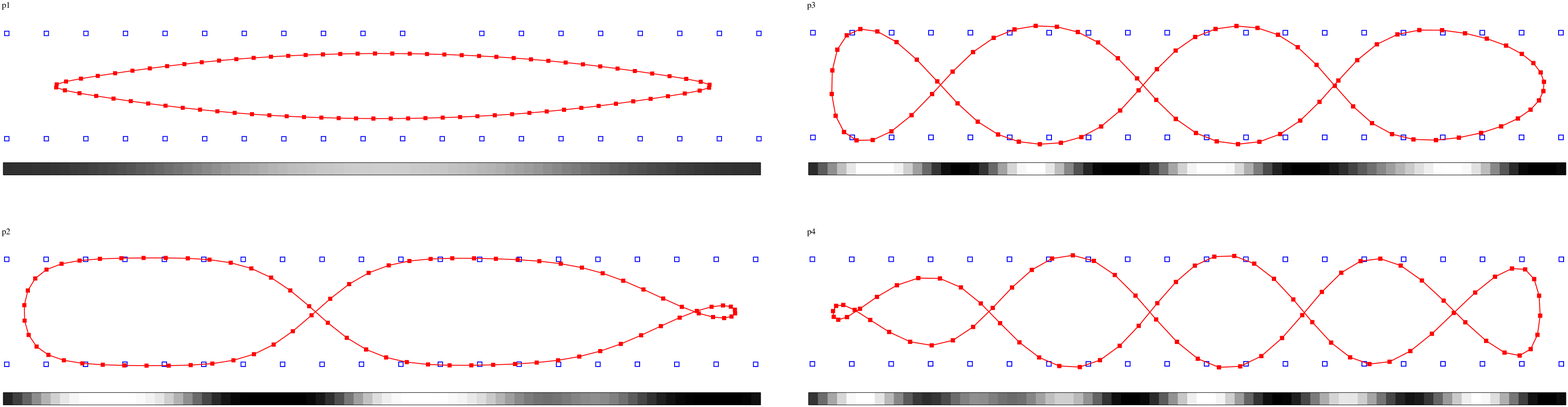}
    \caption{Selected subset of cortical map simulations with a generalized elastic net with the forward-difference family, for the same problem type as in fig.~\ref{f:simul1D-nonperiodic}, with periodic b.c.\ (for the net, but not the training set) and $\beta = 10^{2}$. Except for $p$, all other parameters and the initial conditions are the same. The nets are shown at a value of the annealing parameter $\sigma$ shortly after the net expands along the OD axis. Note how the frequency increases with $p$.}
    \label{f:simul1D-periodic}
  \end{center}
\end{figure}

\begin{figure}
  \begin{center}
      \psfrag{t4}[l][Bl]{\scriptsize \caja{c}{c}{$\tau =$ \\ $4$}}
      \psfrag{t20}[l][Bl]{\scriptsize \caja{c}{c}{$\tau =$ \\ $20$}}
      \psfrag{t200}[r][Br]{\scriptsize \caja{c}{c}{ $\tau =$ \\ $200$}}
      \psfrag{t250}[l][Bl]{\scriptsize \caja{c}{c}{$\tau =$ \\ $250$}}
      \psfrag{t300}[r][Br]{\scriptsize \caja{c}{c}{$\tau =$ \\ $300$}}
      \psfrag{t800}[r][Br]{\scriptsize \caja{c}{c}{$\tau =$ \\ $800$}}
      \psfrag{t900}[r][Br]{\scriptsize \caja{c}{c}{$\tau =$ \\ $900$}}
      \psfrag{t920}[l][Bl]{\scriptsize \caja{c}{c}{$\tau =$ \\ $920$}}
      \psfrag{t1020}[l][Bl]{\scriptsize \caja{c}{c}{$\tau =$ \\ $1020$}}
      \psfrag{t1080}[l][Bl]{\scriptsize \caja{c}{c}{$\tau =$ \\ $1080$}}
      \psfrag{t1450}[l][Bl]{\scriptsize \caja{c}{c}{$\tau =$ \\ $1450$}}
      \psfrag{t1460}[l][Bl]{\scriptsize \caja{c}{c}{$\tau =$ \\ $1460$}}
      \psfrag{t3000}[l][Bl]{\scriptsize \caja{c}{c}{$\tau =$ \\ $3000$}}
      \psfrag{t3010}[l][Bl]{\scriptsize \caja{c}{c}{$\tau =$ \\ $3010$}}
      \psfrag{t3400}[r][Br]{\scriptsize \caja{c}{c}{$\tau =$ \\ $3400$}}
      \psfrag{t3500}[r][Br]{\scriptsize \caja{c}{c}{$\tau =$ \\ $3500$}}
      \psfrag{t4100}[r][Br]{\scriptsize \caja{c}{c}{$\tau =$ \\ $4100$}}
      \psfrag{t5000}[l][Bl]{\scriptsize \caja{c}{c}{$\tau =$ \\ $5000$}}
      \includegraphics[width=\textwidth]{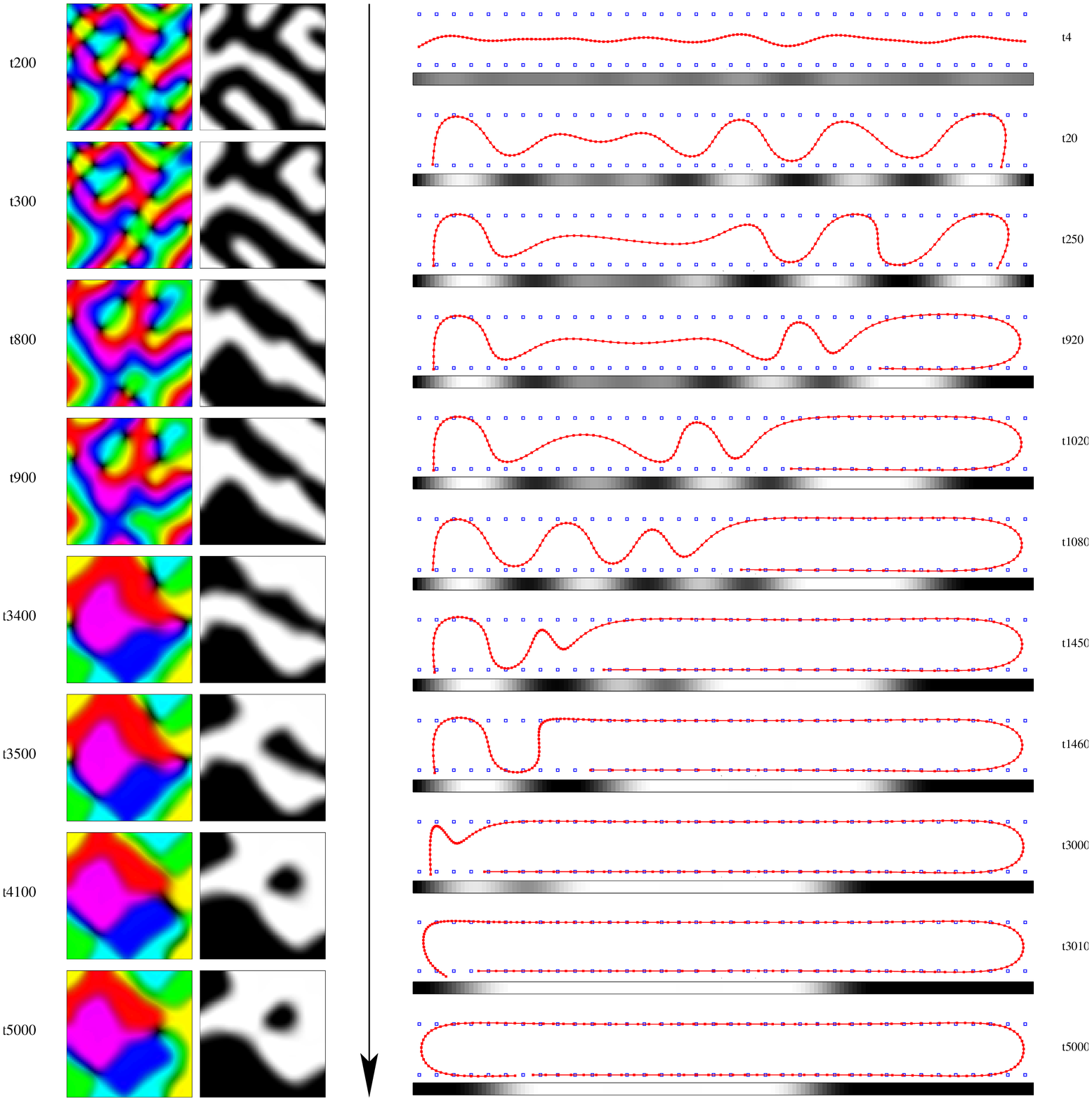}
    \caption{The phenomenon of loop elimination without annealing (for fixed $\sigma$), for $p = 2$ (forward difference). $\tau$ indicates the number of iterations (solutions of the system~\eqref{e:ann:zerograd} with the Cholesky factorisation method). \emph{Left}: pinwheel annihilation and stripe widening for a 2D cortical map model ($\sigma = 0.05$). The training set is a grid in the rectangle $[0,1] \times [0,1] \times [-0.14,0.14] \times [-\frac{\pi}{2},\frac{\pi}{2}] \times \{0.2\}$ with $N = 10 \times 10 \times 2 \times 6 \times 1$ points and the net has $M = 72 \times 72$ centroids and $\beta = 100$ with nonperiodic b.c. \emph{Right}: loop elimination for a 1D cortical map model ($\sigma = 0.016$). For this particular case loop elimination happened only for $\sigma \in [0.016,0.0245]$. For other $\sigma$ the net converged to a striped configuration. The training set is a grid in the rectangle $[0,1] \times [-0.0422,0.0422]$ with $N = 36 \times 2$ points and the net has $M = 144$ centroids and $\beta = 5\,000$ with nonperiodic b.c. In both cases, the net that appears initially is striped; then, as the training continues over a long term, loops are eliminated in sudden but widely separated events.}
    \label{f:loop-elim}
  \end{center}
\end{figure}

\begin{figure}
  \begin{center}
    \begin{tabular}{@{}c@{\hspace{1.5cm}}c@{}}
      \psfrag{p}[][B]{$p$}
      \psfrag{b}[][B]{$\log{\beta}$~}
      \psfrag{1}[][B]{\Large $1$}
      \psfrag{2}[][B]{\Large $2$}
      \psfrag{3}[][B]{\Large $3$}
      \psfrag{k1}[][]{$k^*_1$}
      \psfrag{k2}[][]{$k^*_2$}
      \psfrag{k3}[][]{$k^*_3$}
      \includegraphics[height=.3\textheight]{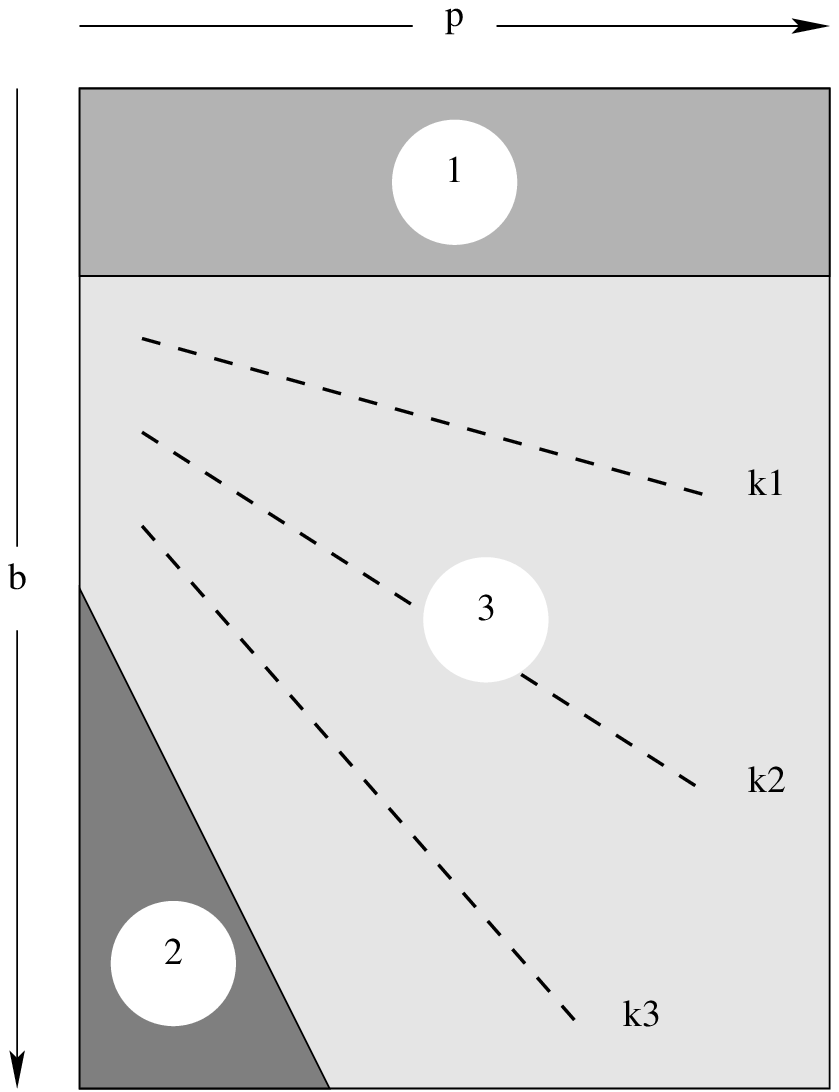} &
      \footnotesize
      \psfrag{p1}[lb][l]{~~~$p = 1$}
      \psfrag{p2}[lb][l]{~~~$p = 2$}
      \psfrag{p3}[lt][lB]{~~~$p = 3$}
      \psfrag{p4}[lb][lt]{~~~$p = 4$}
      \psfrag{b1}[][]{\raisebox{-3ex}{$10^{-1}$}}
      \psfrag{b2}[][]{\raisebox{-3ex}{$10^{0}$}}
      \psfrag{b3}[][]{\raisebox{-3ex}{$10^{1}$}}
      \psfrag{b4}[][]{\raisebox{-3ex}{$10^{2}$}}
      \psfrag{b5}[][]{\raisebox{-3ex}{$10^{3}$}}
      \psfrag{b6}[][]{\raisebox{-3ex}{$10^{4}$}}
      \psfrag{b7}[][]{\raisebox{-3ex}{$10^{5}$}}
      \psfrag{b8}[][]{\raisebox{-3ex}{$10^{6}$}}
      \psfrag{0}[r][r]{$0$}
      \psfrag{0.05}[r][r]{$0.05$}
      \psfrag{0.1}[r][r]{$0.1$}
      \psfrag{beta}[t][b]{\normalsize $\beta$}
      \psfrag{Kc}[][][1][-90]{\normalsize \raisebox{1cm}{$\sigma_{\text{c}}$}}
      \includegraphics[height=.3\textheight]{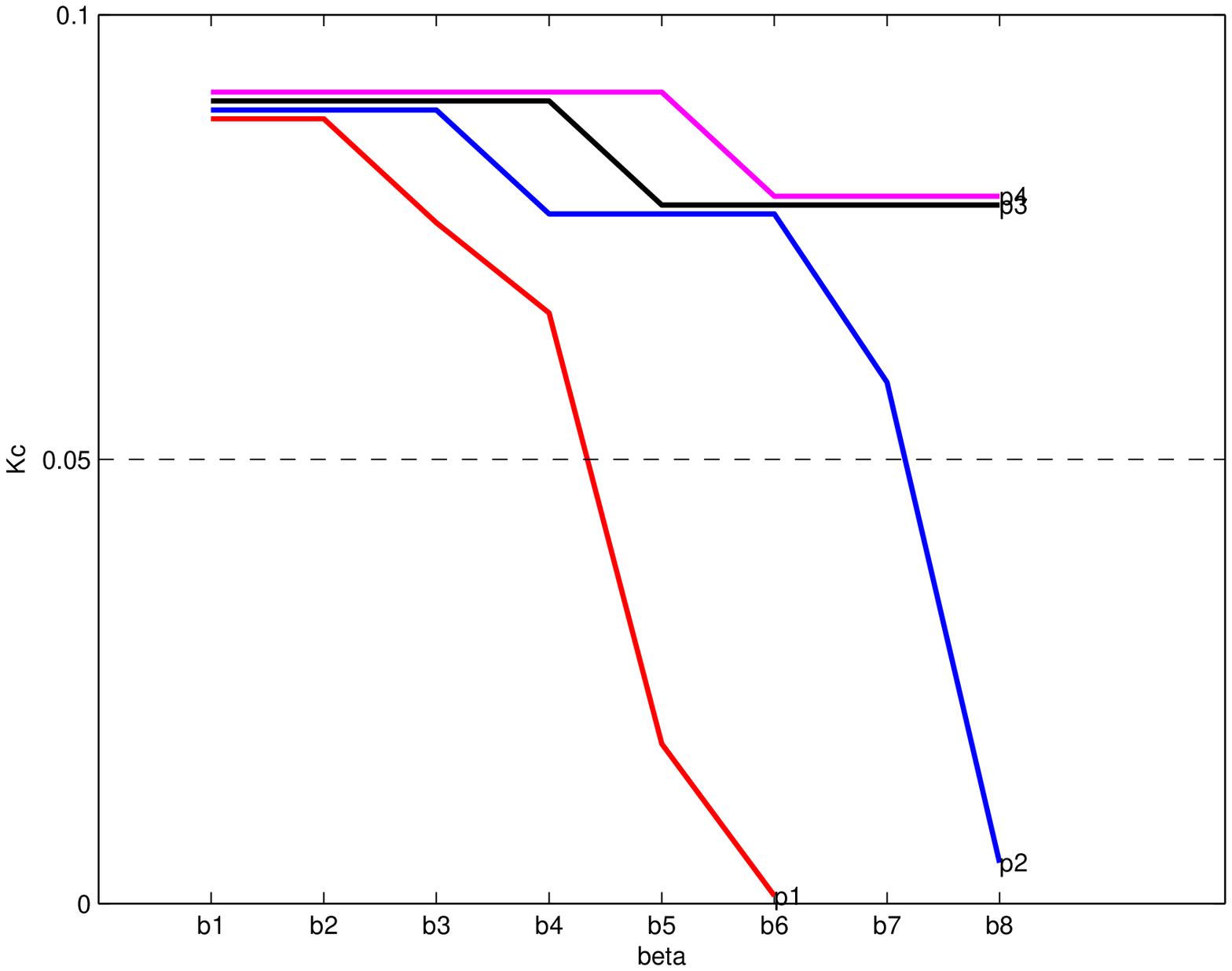}
    \end{tabular}
    \caption{\emph{Left}: schematic phase diagram in the space of $(\beta,p)$ for a value of $\sigma$ small enough that the OD and OR maps have arisen ($\sigma = 0.05$ in the actual simulations). For fast annealing the regions mean: salt-and-pepper maps ($1$), striped map ($2$) and unsegregated map ($3$); compare with fig.~\ref{f:simul2D:OD}. For slow annealing, region $1$ contains a map independent of $p$ and $\beta$ (but not necessarily salt-and-pepper) and region $3$ contains maps with wide stripes and few pinwheels. The boundaries between regions are drawn approximately. The dotted lines in region $3$ represent curves along which the stripe width is constant, for $\frac{M}{2} > k^*_1 > k^*_2 > k^*_3 > 1$. \emph{Right}: critical value $\sigma_{\text{c}}$ for which the OD and OR maps start to arise, as a function of $\beta$ for each $p$. For each $p$, values of $\sigma$ above the curve represent unsegregated maps (region $3$). The horizontal line at $\sigma = 0.05$ marks the point when maps are plotted in figures~\ref{f:simul2D:OD}--\ref{f:simul2D:ODOR}. The curves for $p = 2$ to $4$ have been raised slightly to distinguish them.}
    \label{f:phase}
  \end{center}
\end{figure}

\begin{FPfigure}
    \begin{tabular}{@{}c@{\hspace{0.5cm}}c@{\hspace{0.5cm}}c@{\hspace{0.5cm}}c@{\hspace{0.5cm}}c@{}}
      & $p = 1$ & $p = 2$ & $p = 3$ & $p = 4$ \\
      \rotatebox{90}{\makebox[.1200\textheight][c]{$\beta = 10^{-1}$}} &
      \includegraphics[height=.1200\textheight]{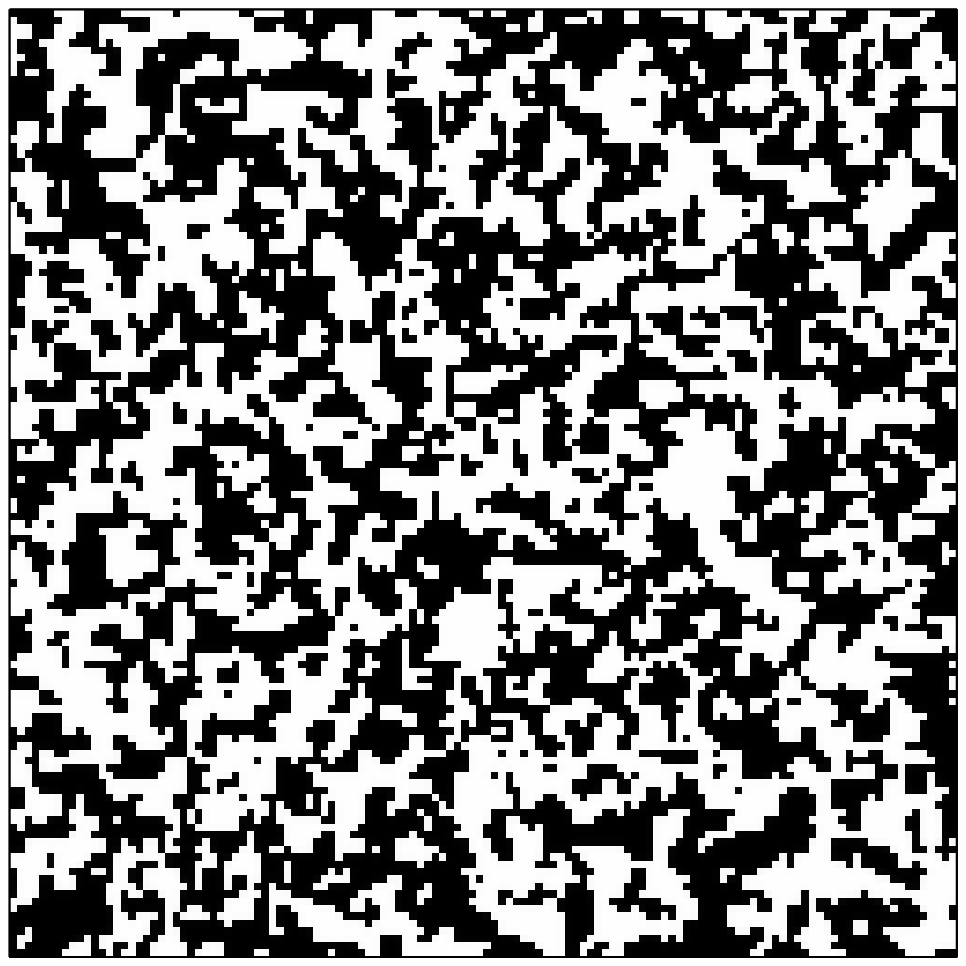} &
      \includegraphics[height=.1200\textheight]{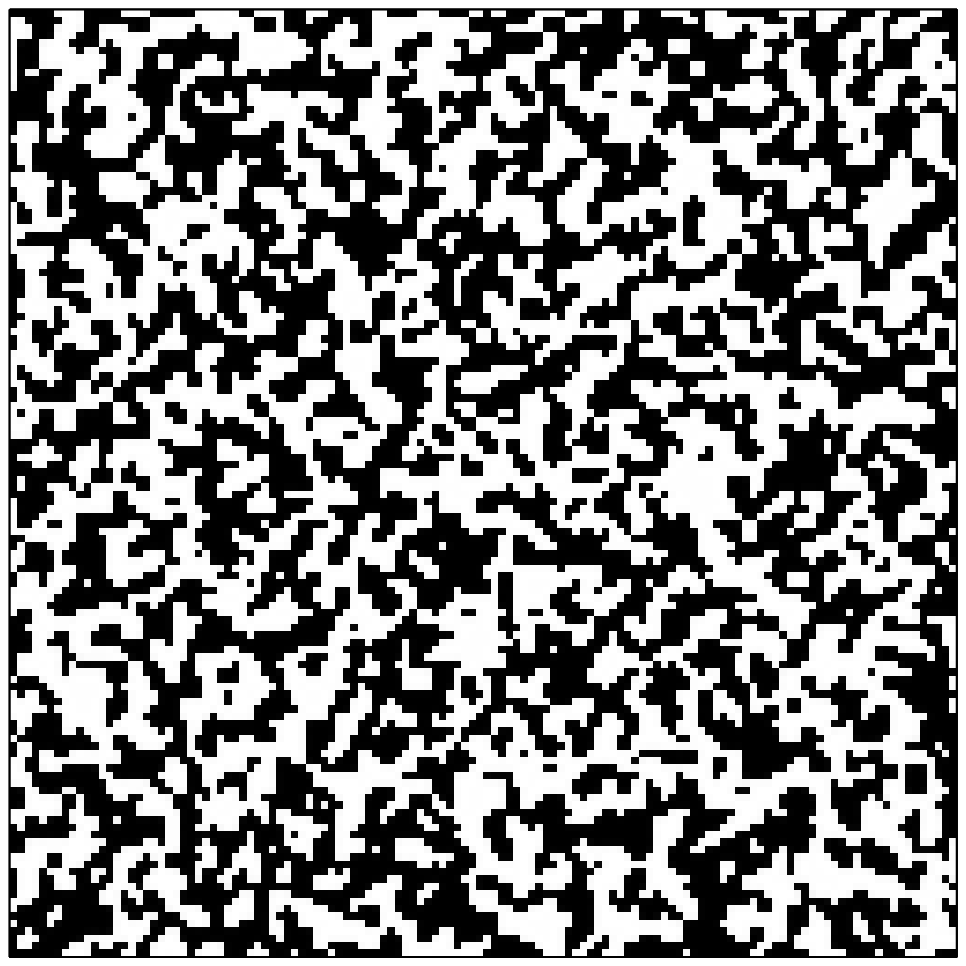} &
      \includegraphics[height=.1200\textheight]{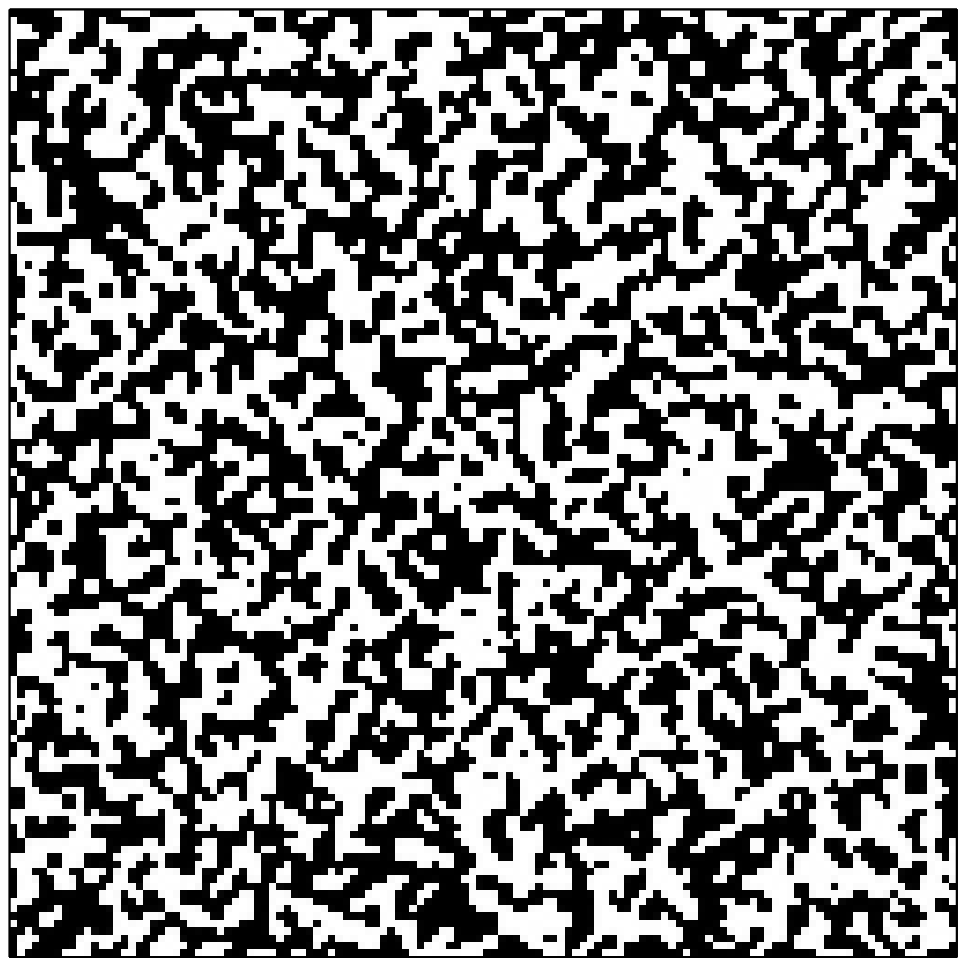} &
      \includegraphics[height=.1200\textheight]{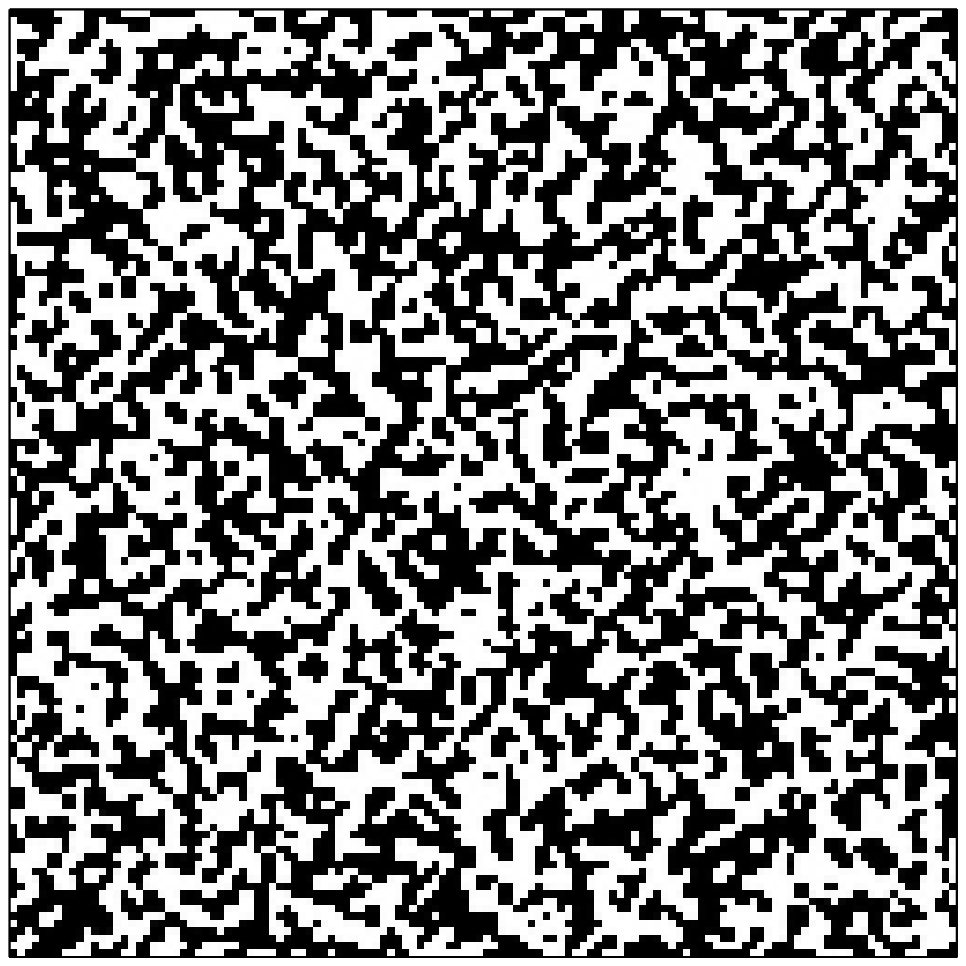} \\[-1ex]
      \rotatebox{90}{\makebox[.1200\textheight][c]{$\beta = 10^{0}$}} &
      \includegraphics[height=.1200\textheight]{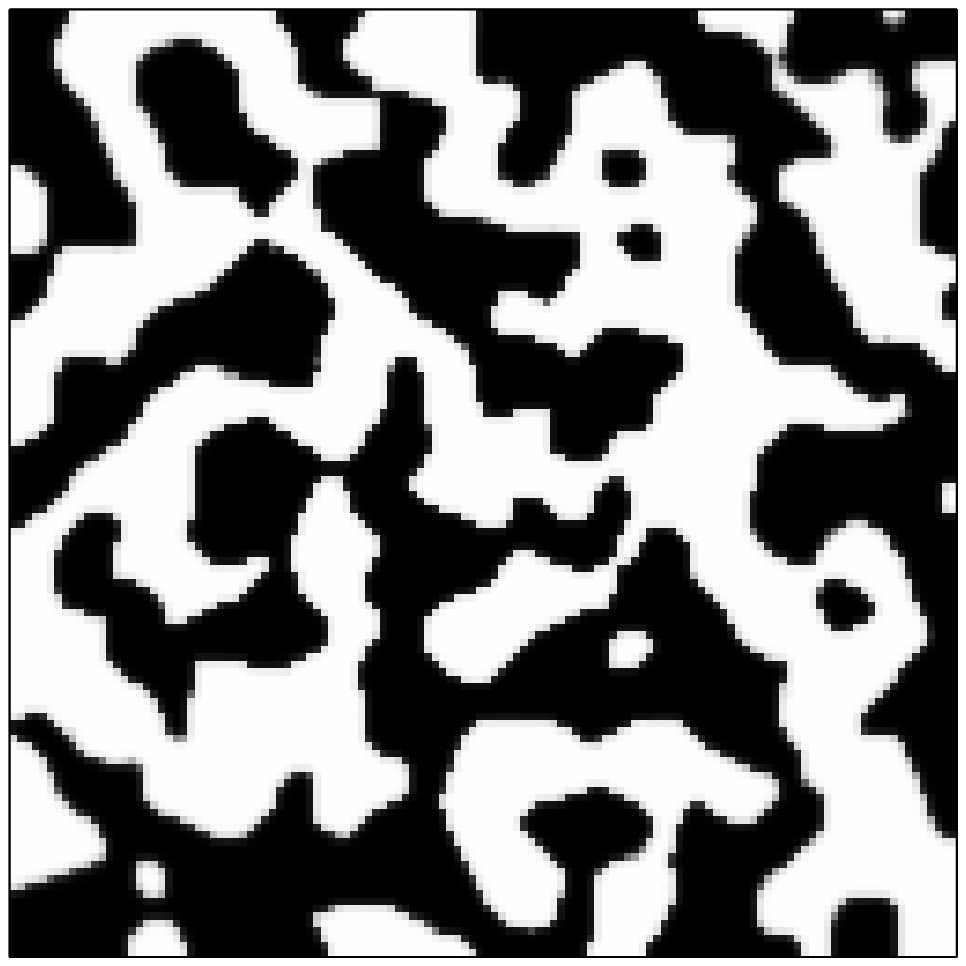} &
      \includegraphics[height=.1200\textheight]{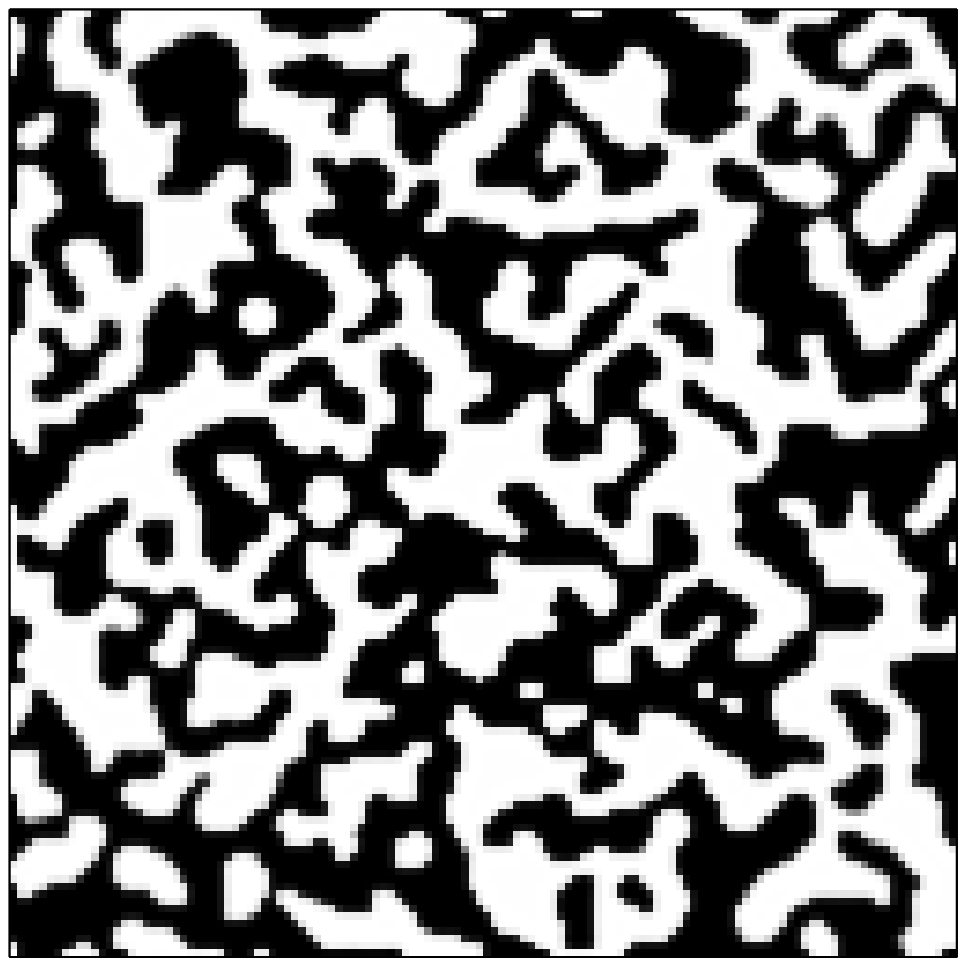} &
      \includegraphics[height=.1200\textheight]{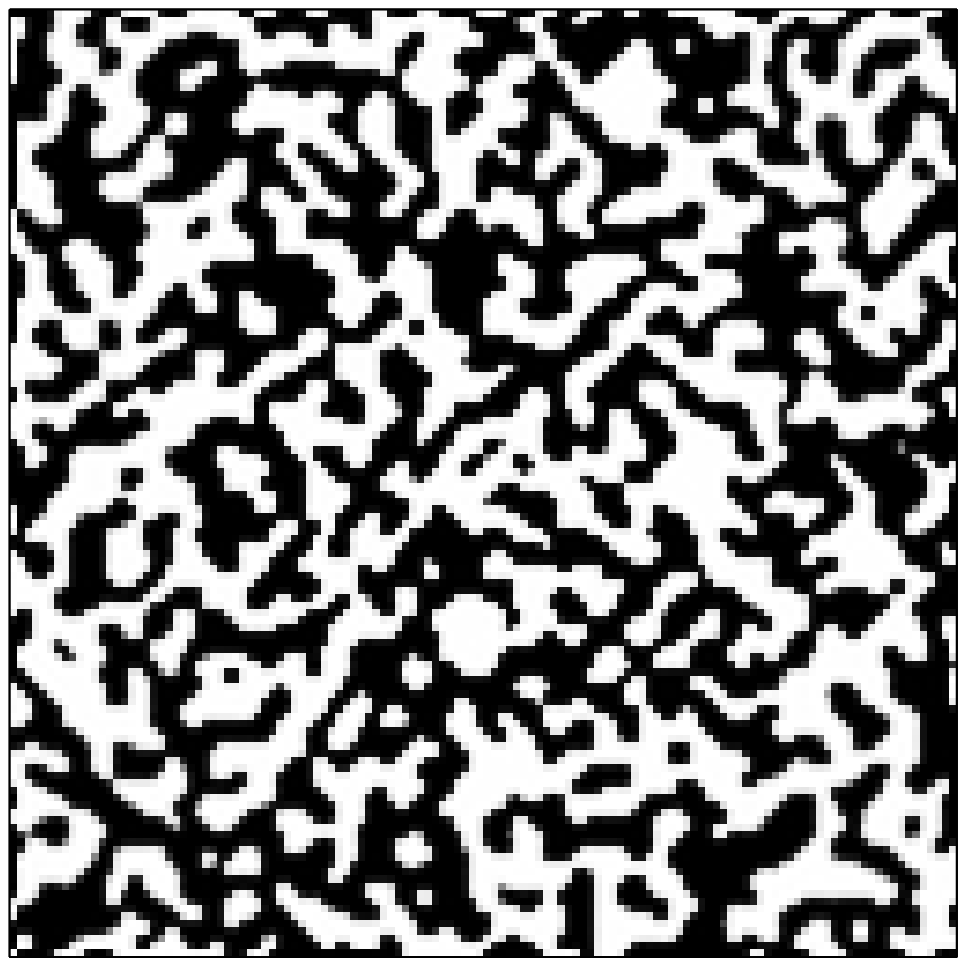} &
      \includegraphics[height=.1200\textheight]{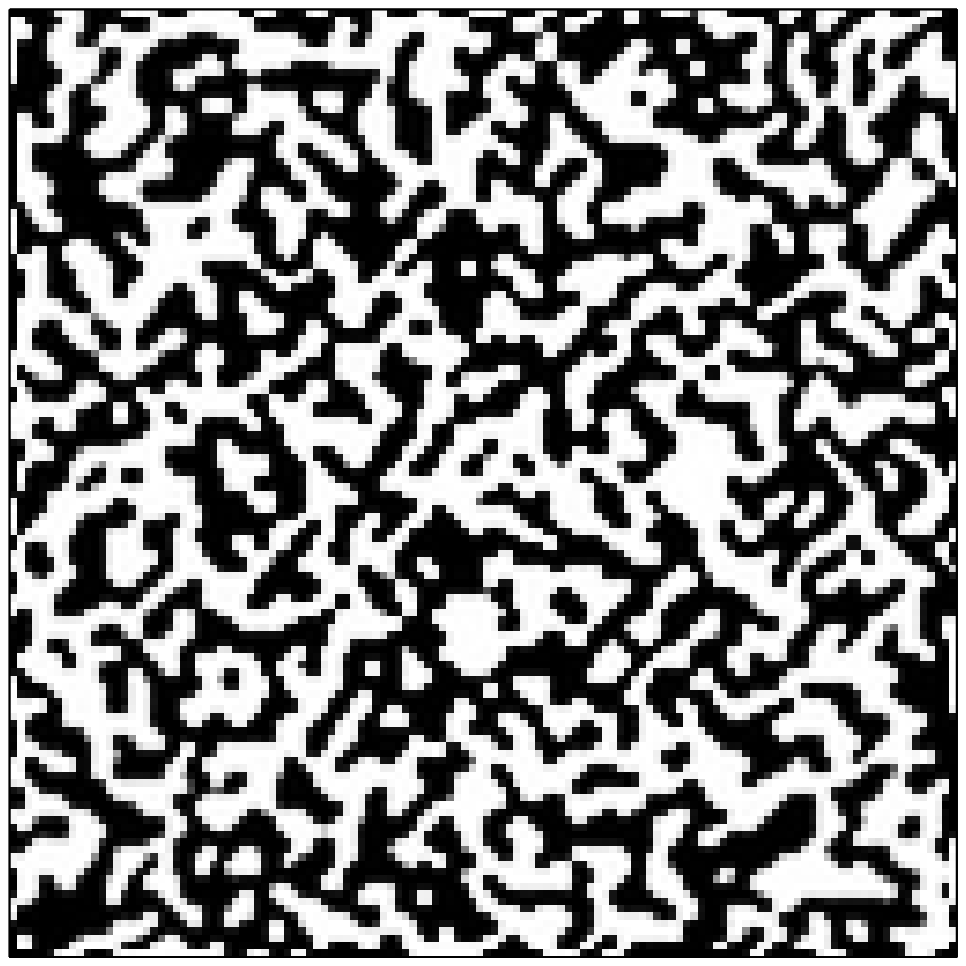} \\[-1ex]
      \rotatebox{90}{\makebox[.1200\textheight][c]{$\beta = 10^{1}$}} &
      \includegraphics[height=.1200\textheight]{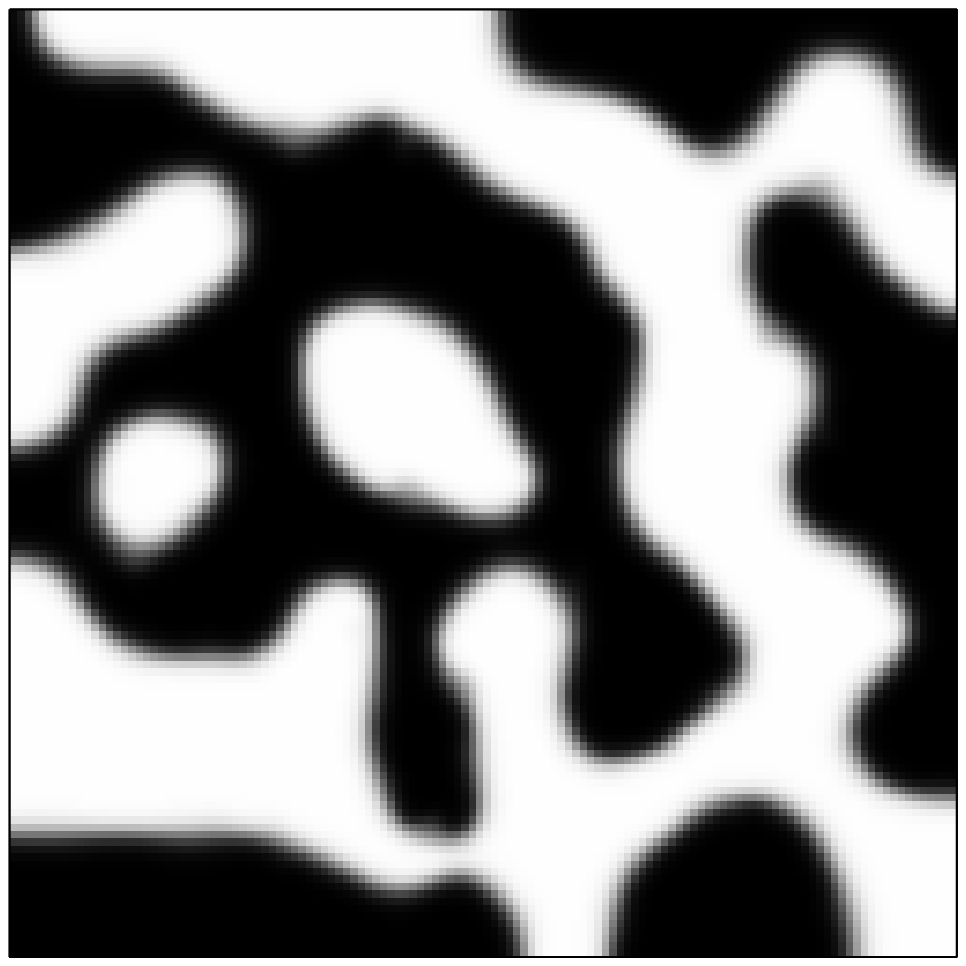} &
      \includegraphics[height=.1200\textheight]{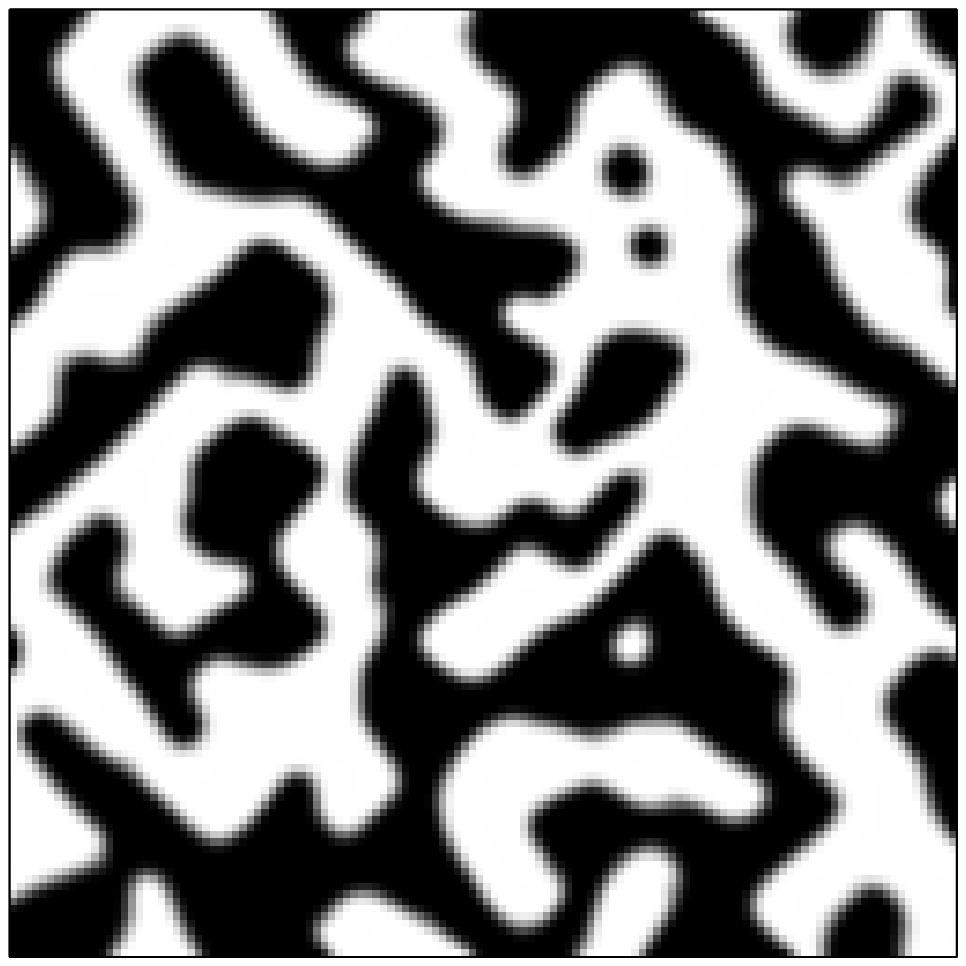} &
      \includegraphics[height=.1200\textheight]{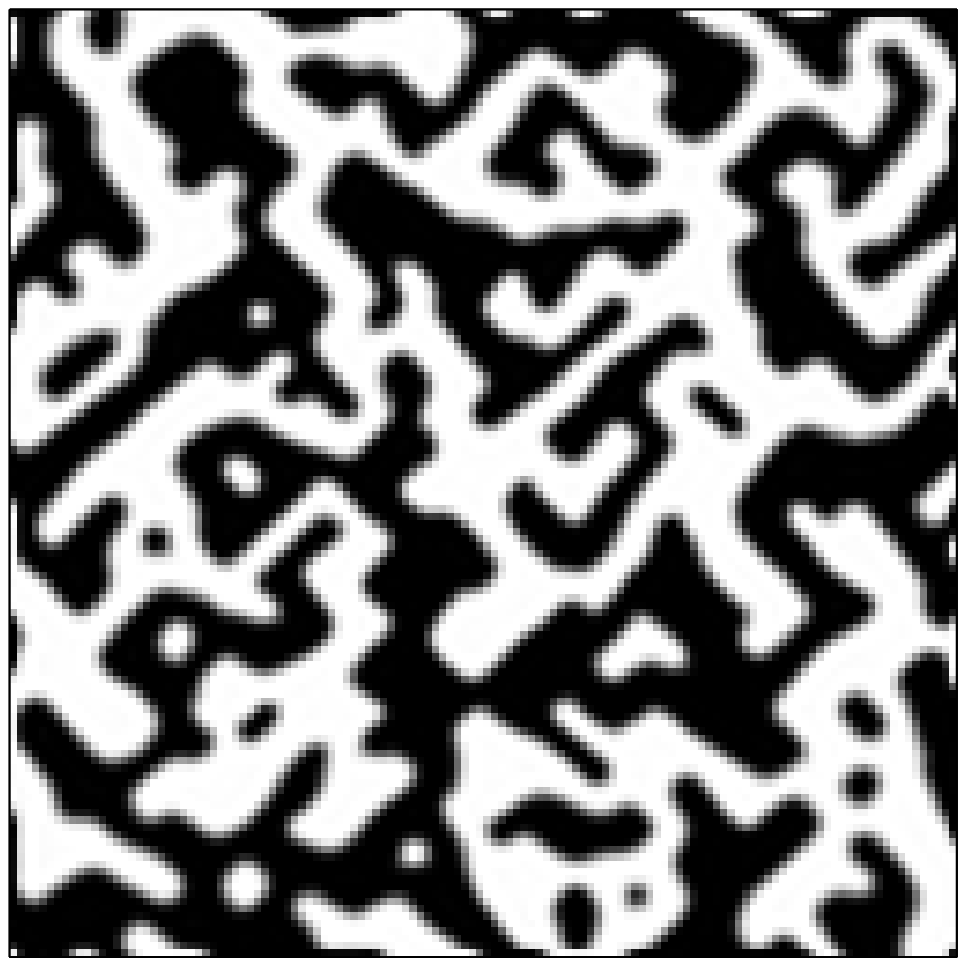} &
      \includegraphics[height=.1200\textheight]{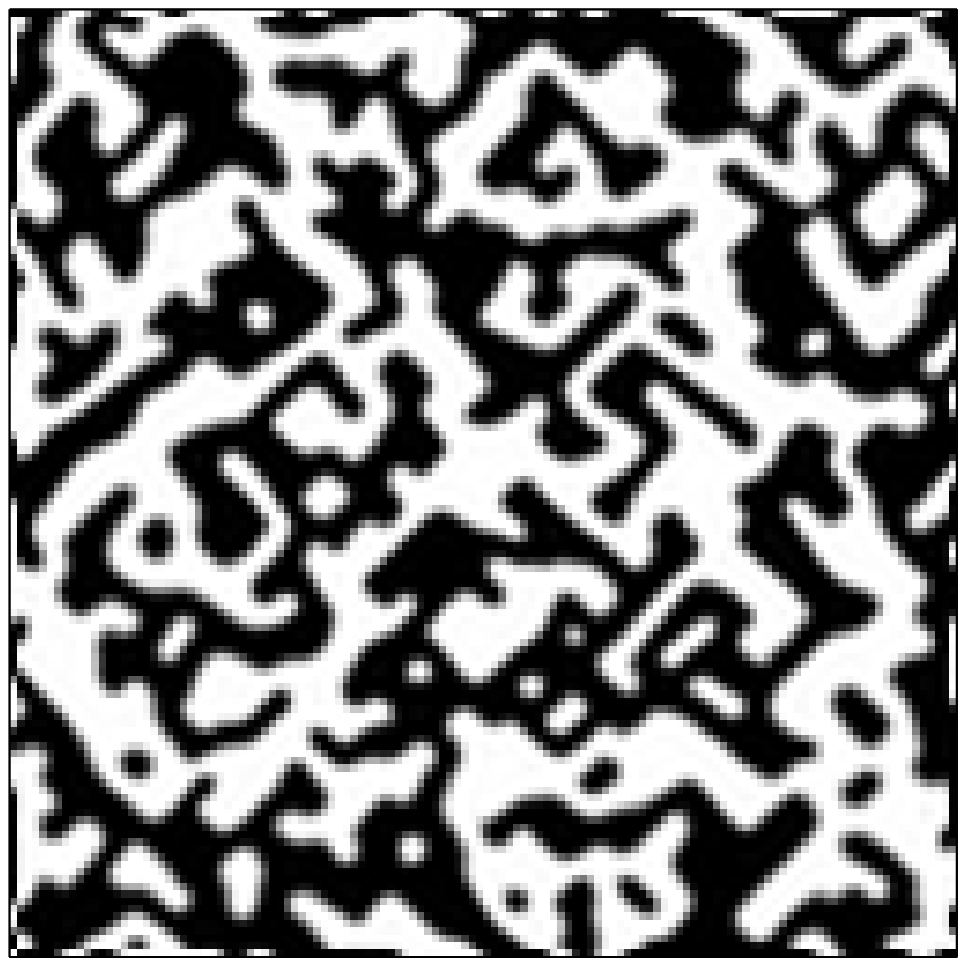} \\[-1ex]
      \rotatebox{90}{\makebox[.1200\textheight][c]{$\beta = 10^{2}$}} &
      \includegraphics[height=.1200\textheight]{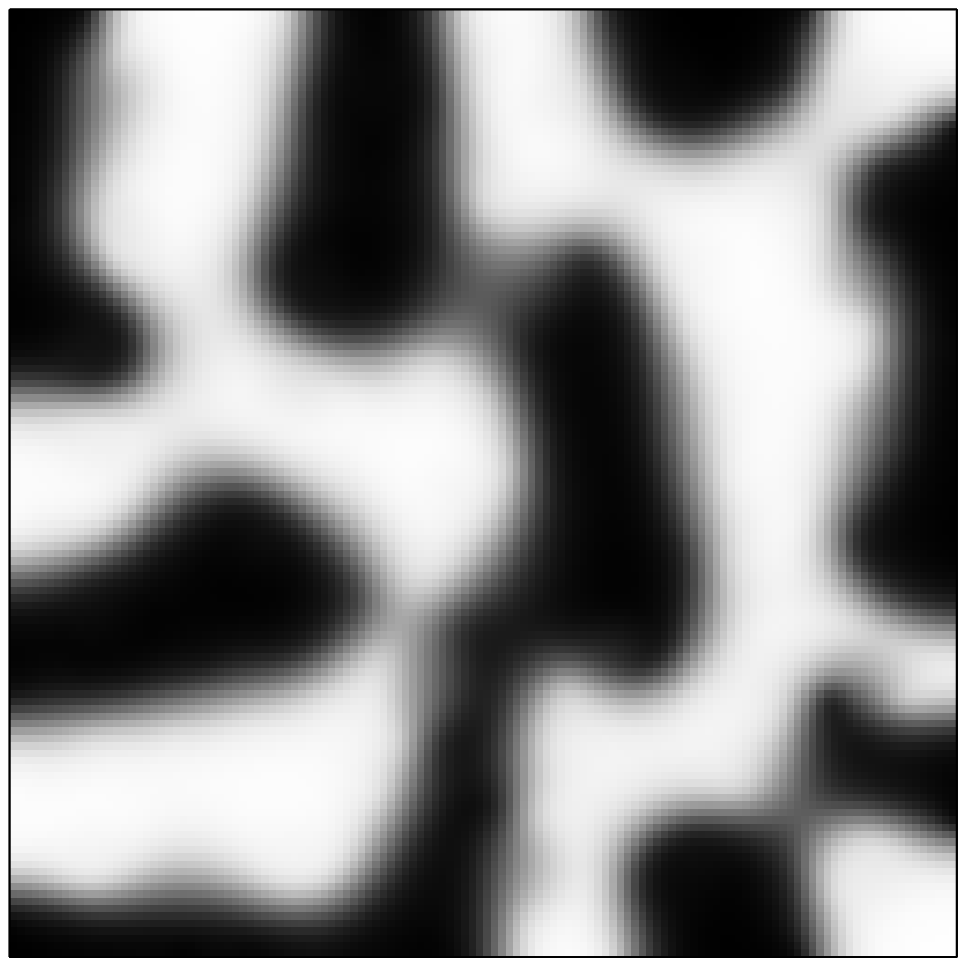} &
      \includegraphics[height=.1200\textheight]{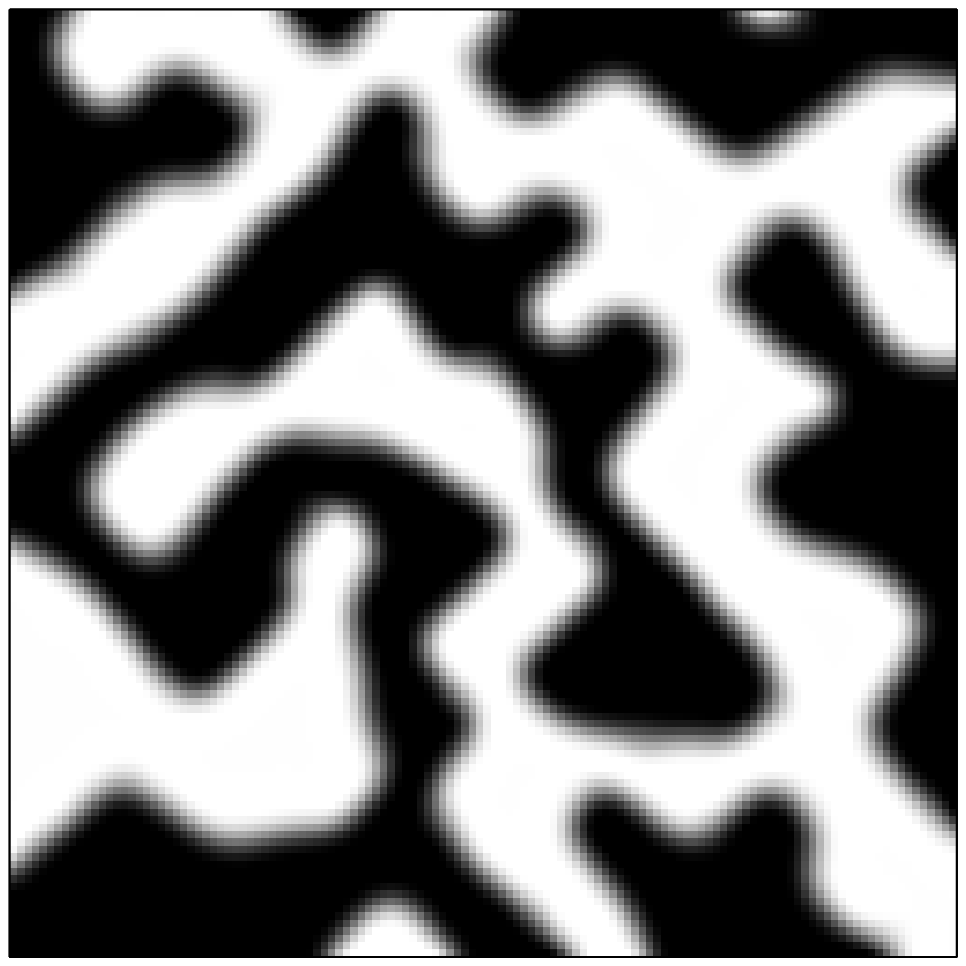} &
      \includegraphics[height=.1200\textheight]{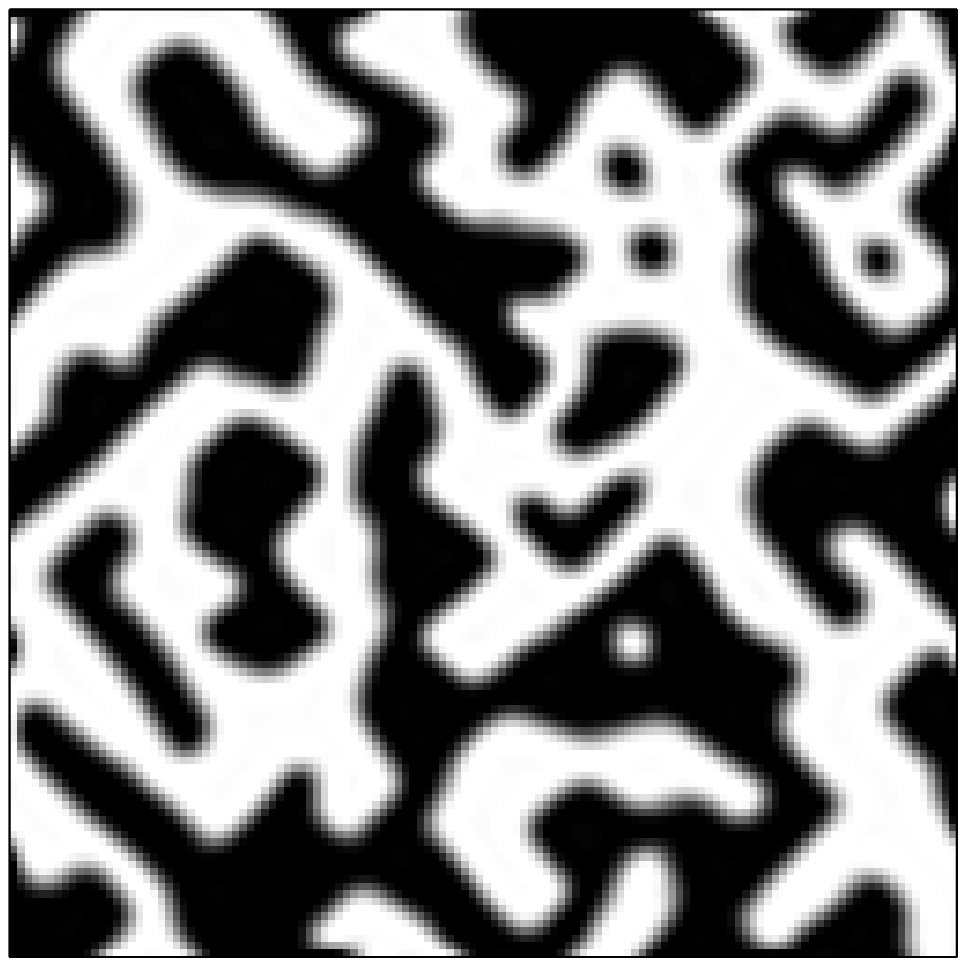} &
      \includegraphics[height=.1200\textheight]{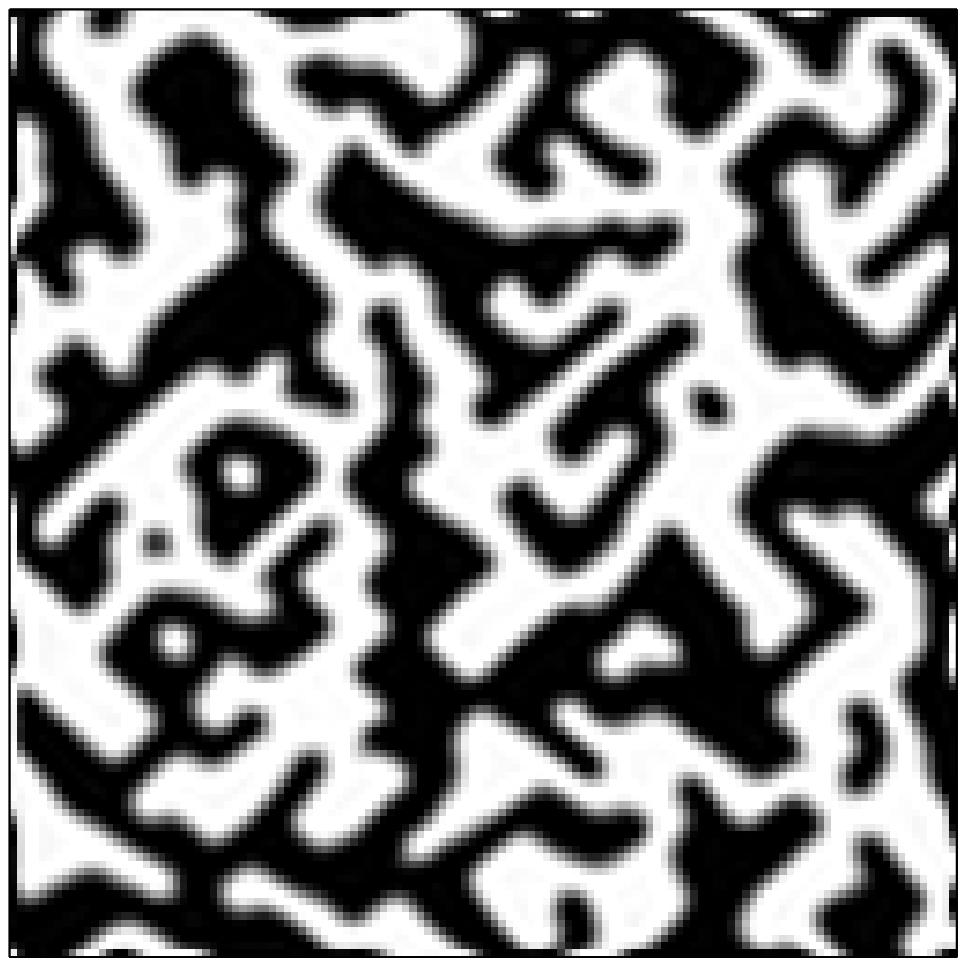} \\[-1ex]
      \rotatebox{90}{\makebox[.1200\textheight][c]{$\beta = 10^{3}$}} &
      \includegraphics[height=.1200\textheight]{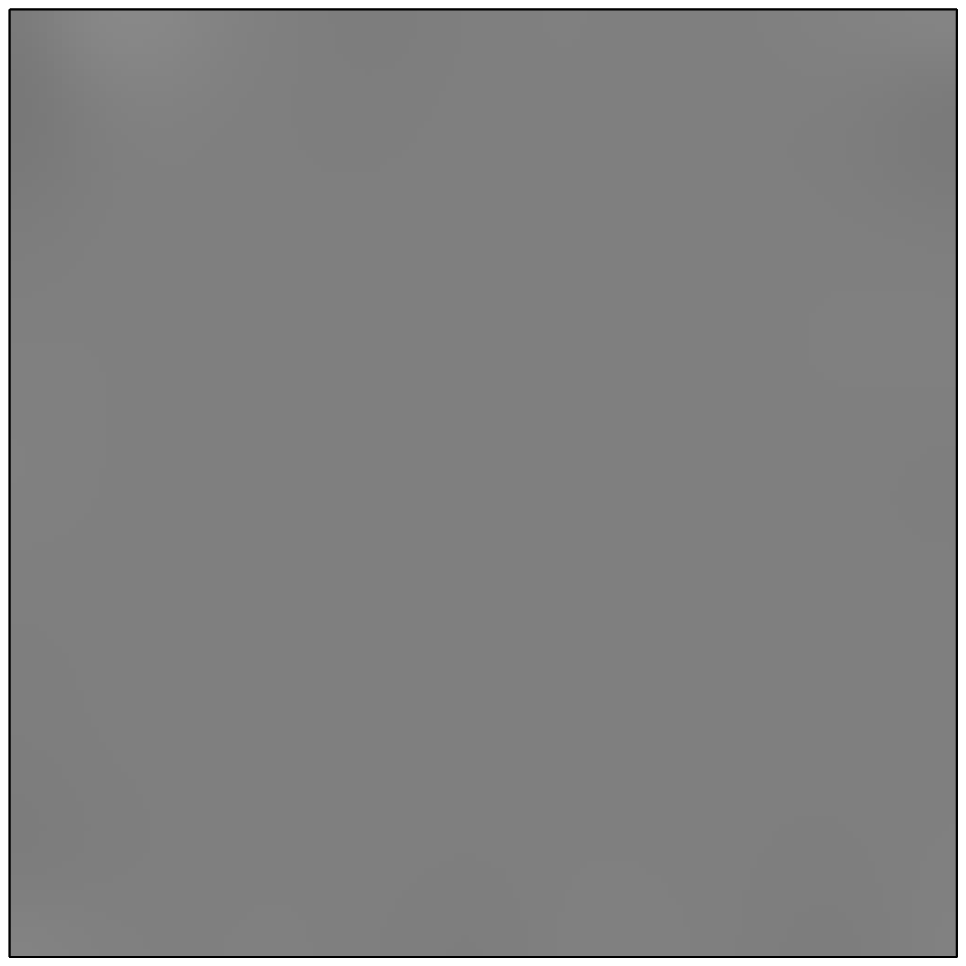} &
      \includegraphics[height=.1200\textheight]{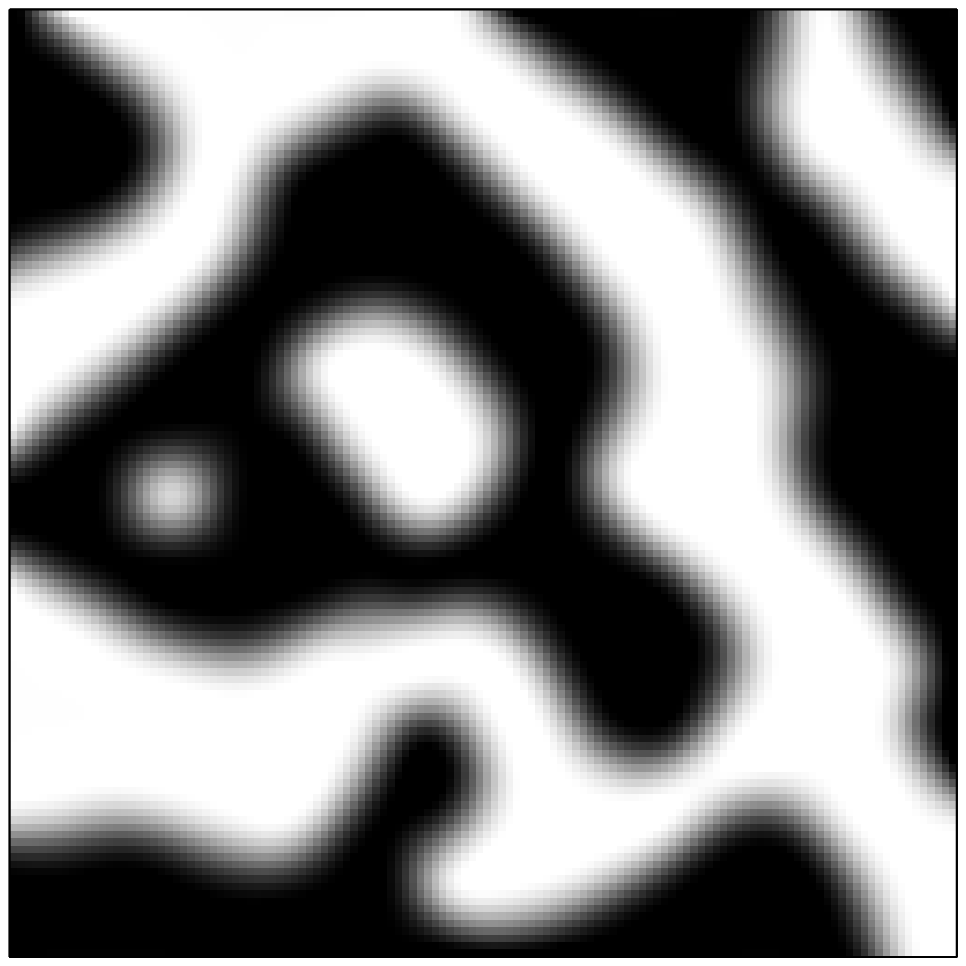} &
      \includegraphics[height=.1200\textheight]{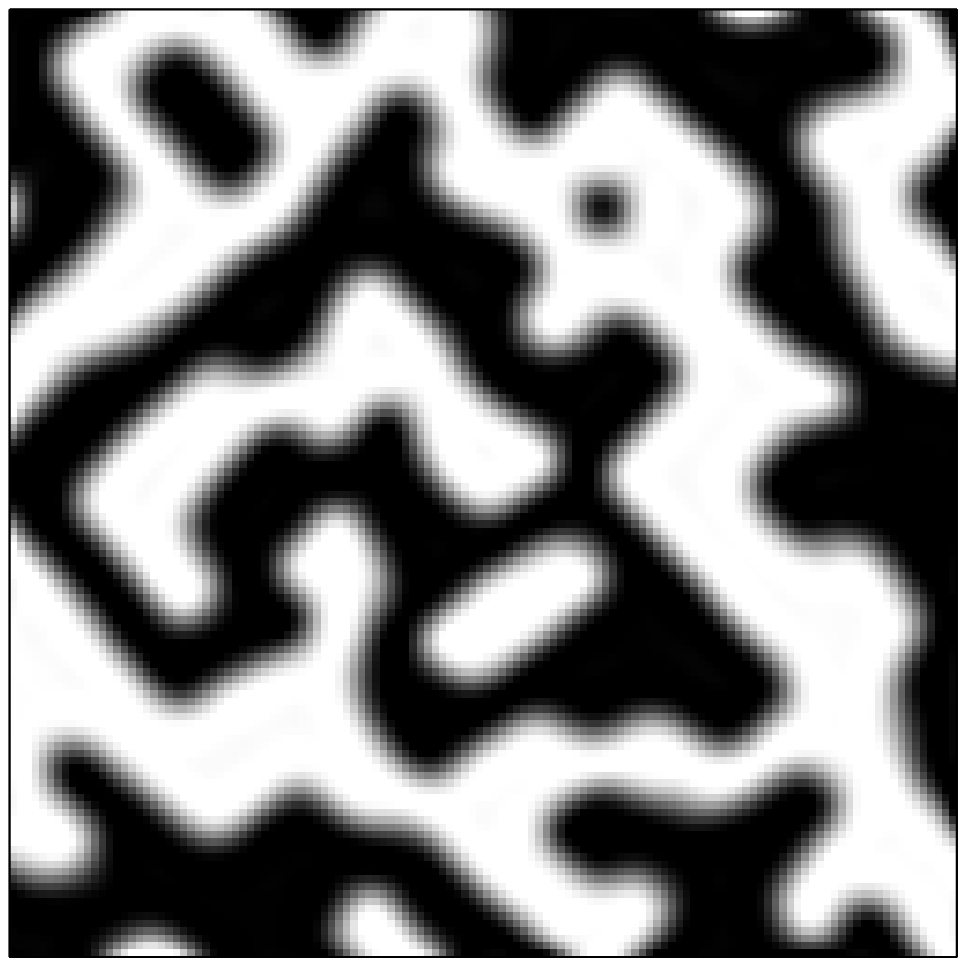} &
      \includegraphics[height=.1200\textheight]{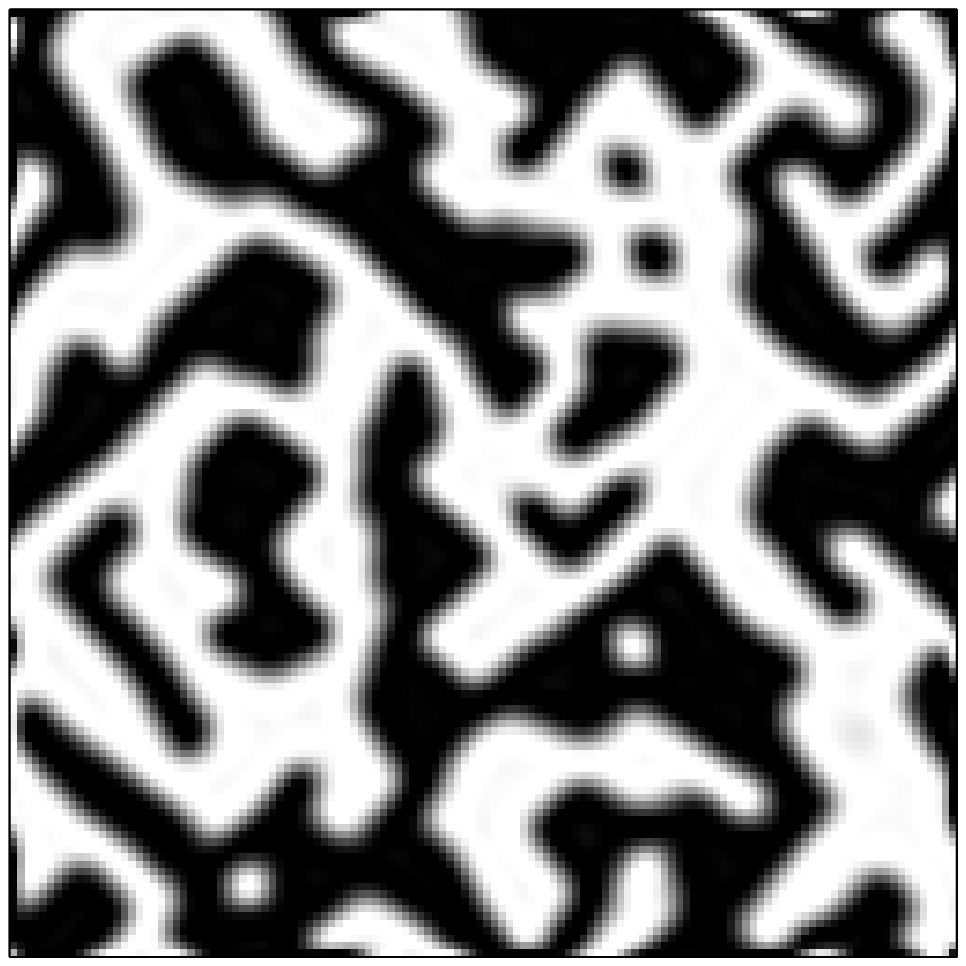} \\[-1ex]
      \rotatebox{90}{\makebox[.1200\textheight][c]{$\beta = 10^{4}$}} &
      \includegraphics[height=.1200\textheight]{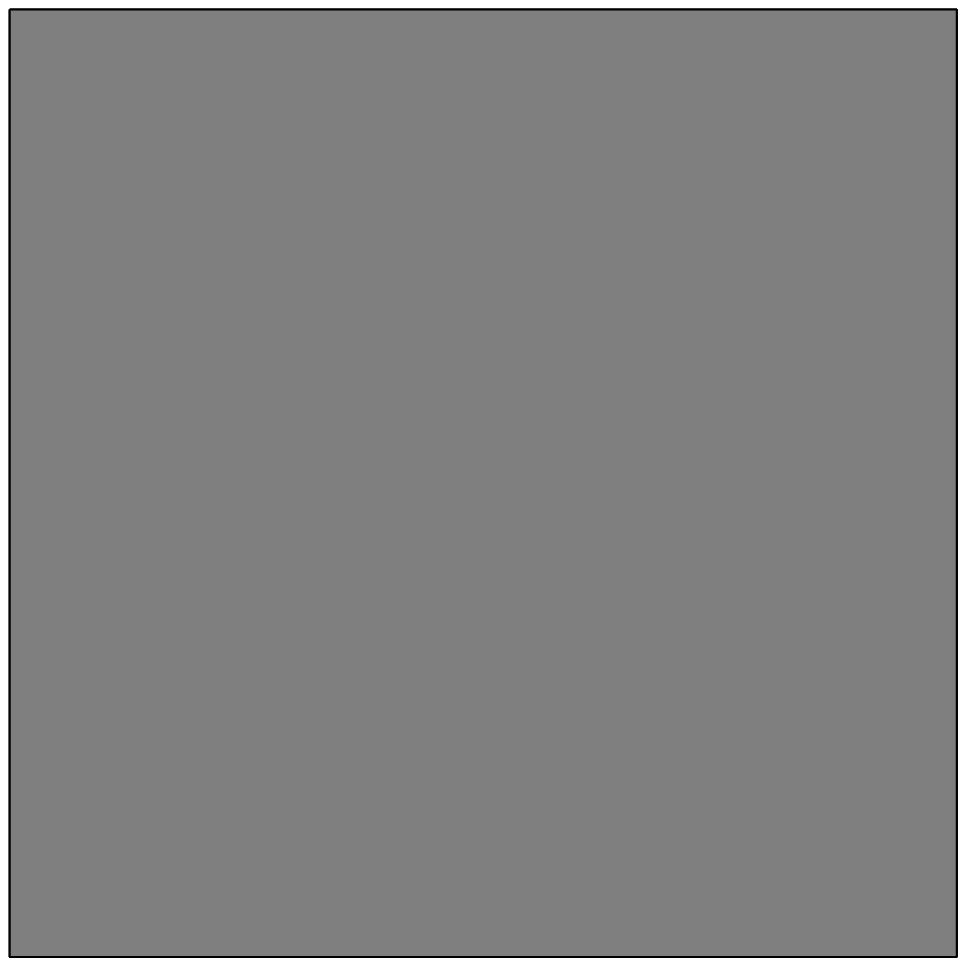} &
      \includegraphics[height=.1200\textheight]{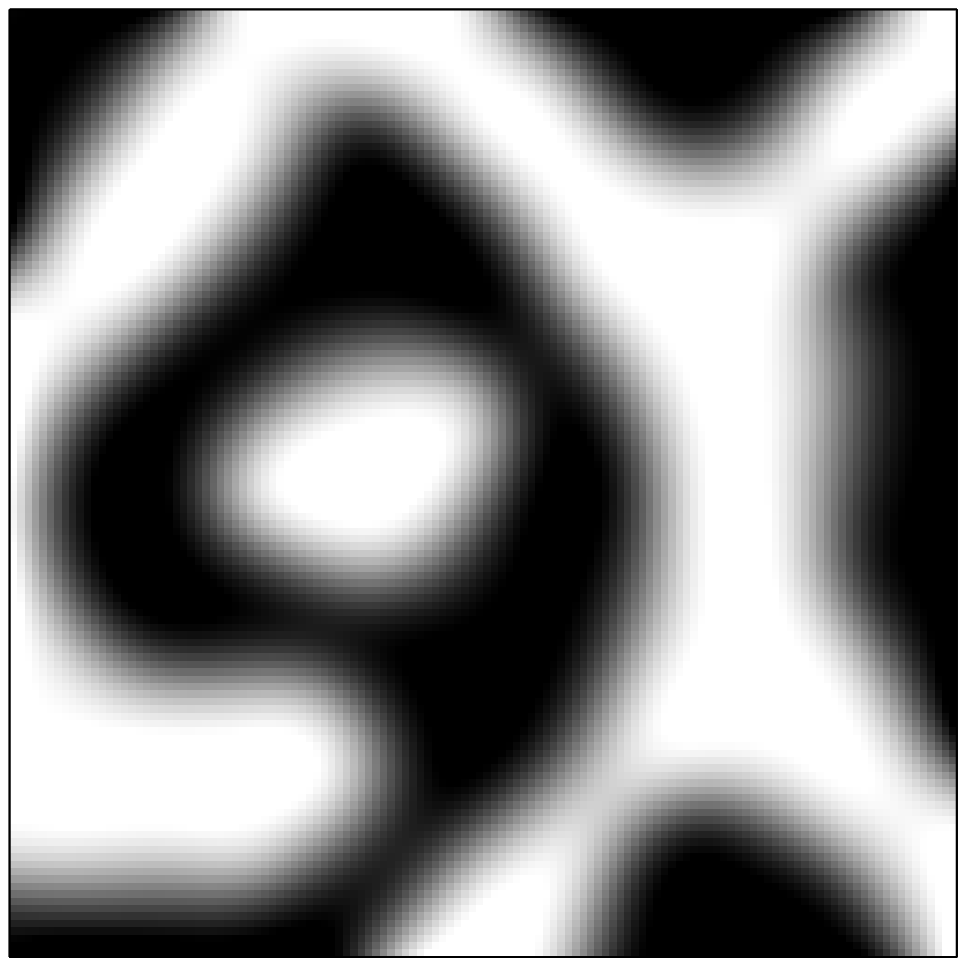} &
      \includegraphics[height=.1200\textheight]{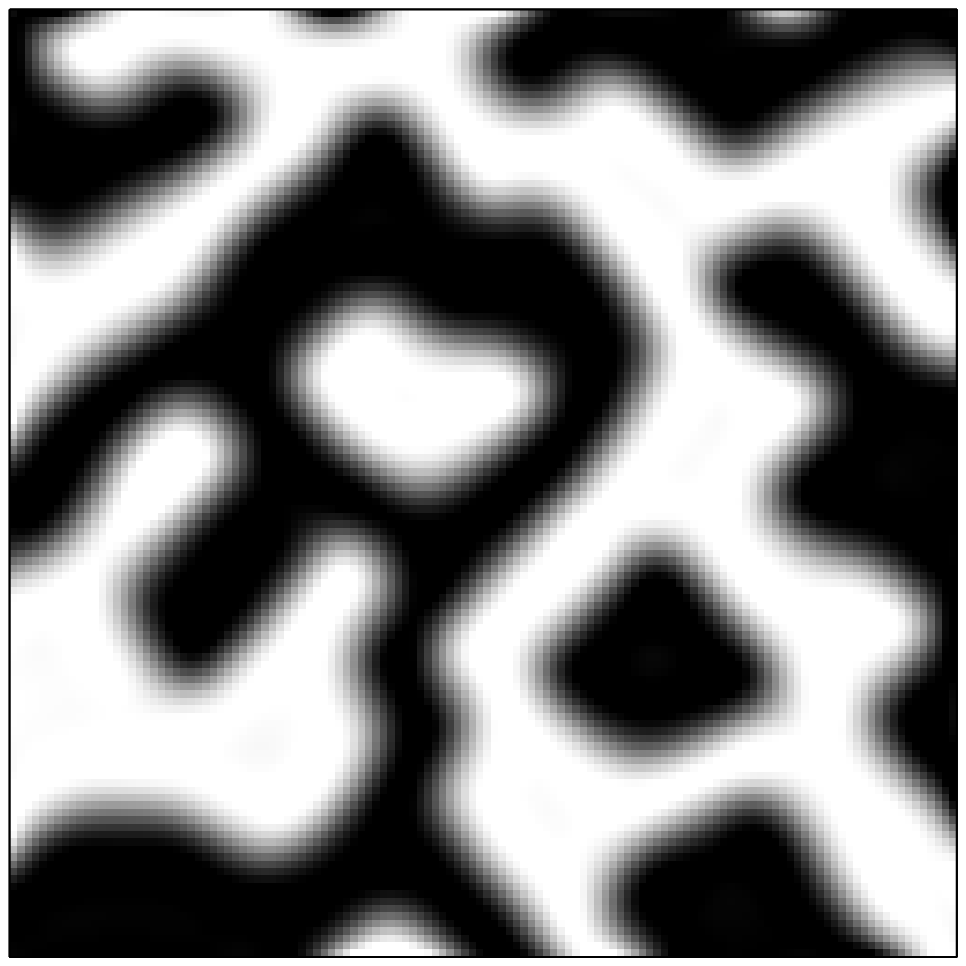} &
      \includegraphics[height=.1200\textheight]{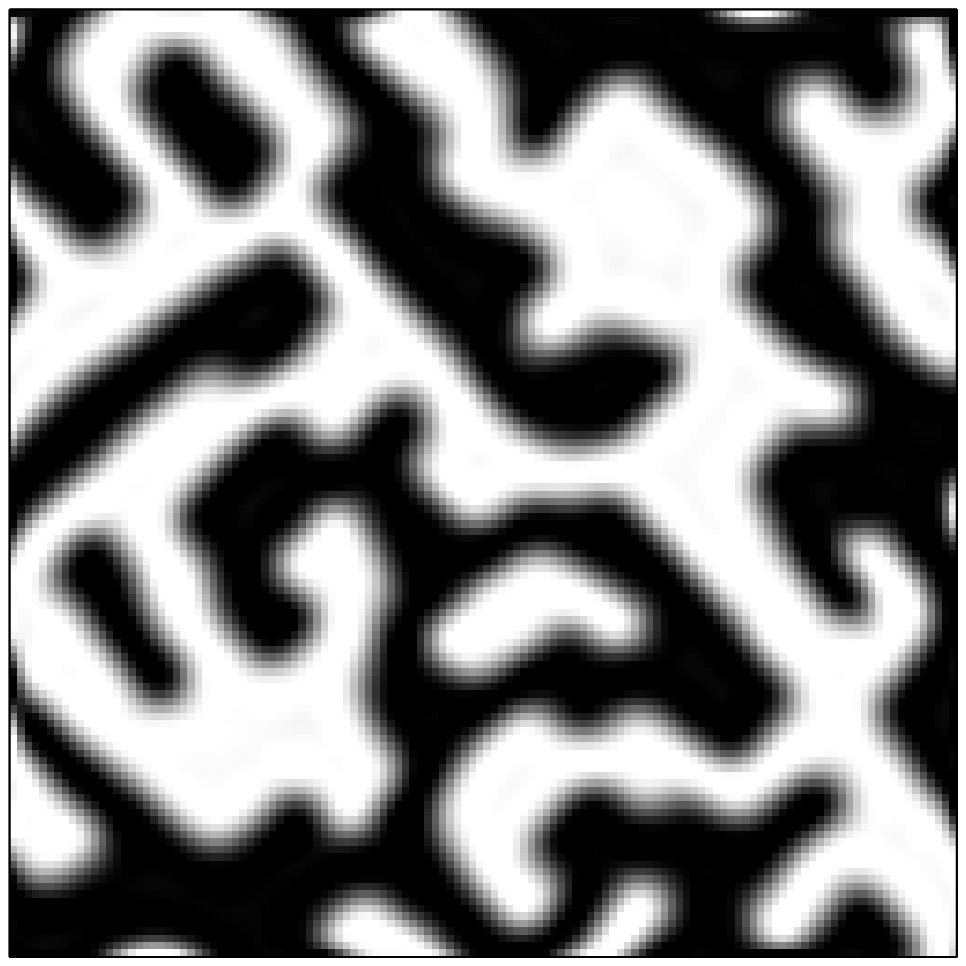} \\[-1ex]
      \rotatebox{90}{\makebox[.1200\textheight][c]{$\beta = 10^{5}$}} &
      \includegraphics[height=.1200\textheight]{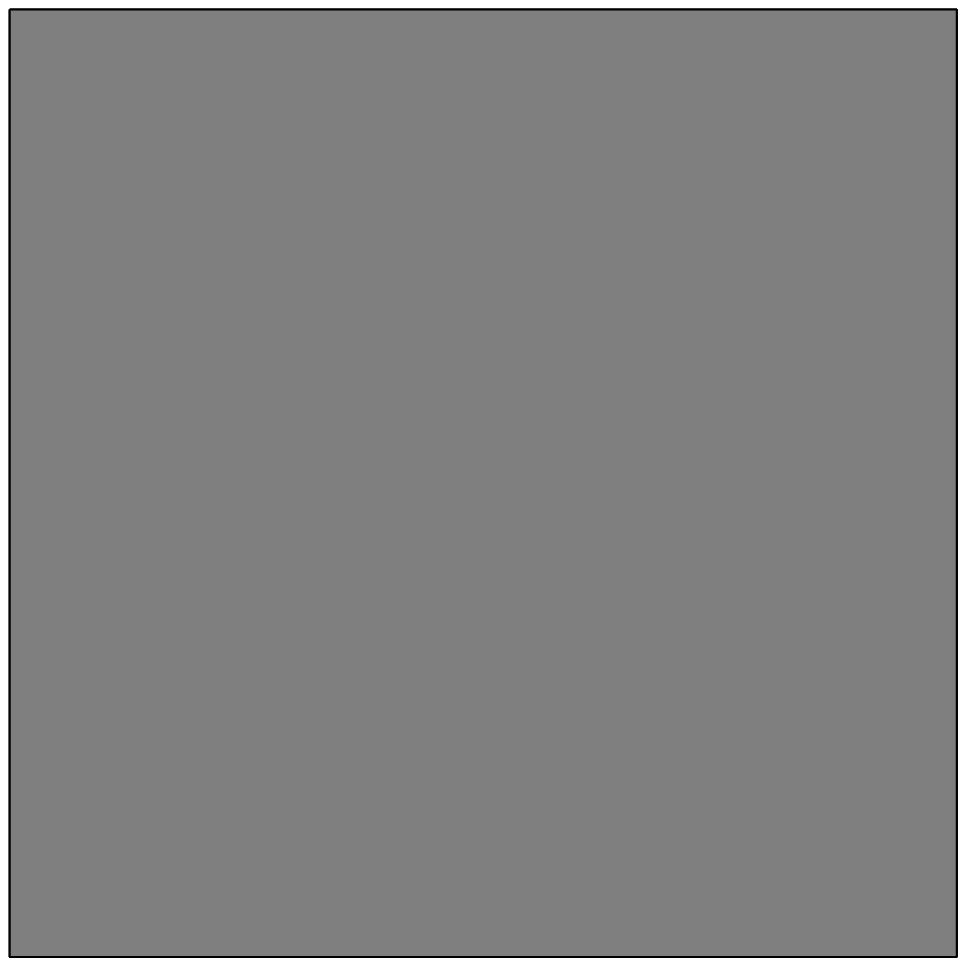} &
      \includegraphics[height=.1200\textheight]{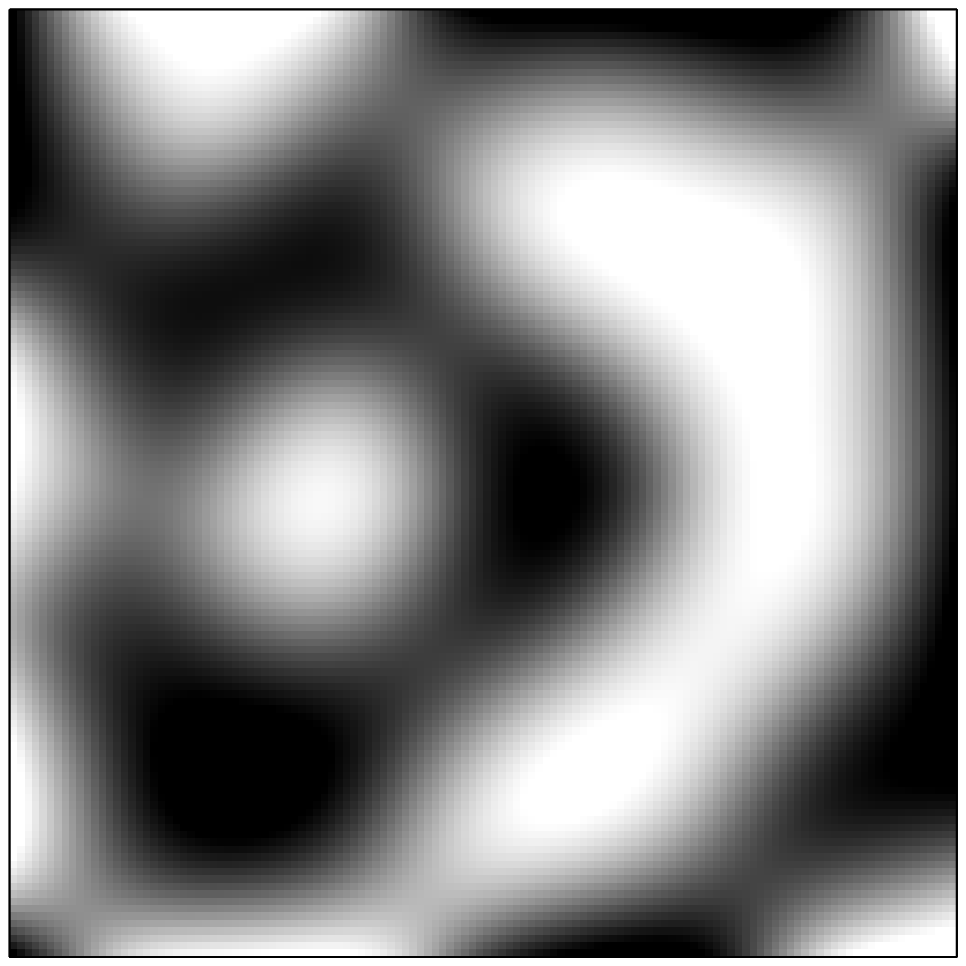} &
      \includegraphics[height=.1200\textheight]{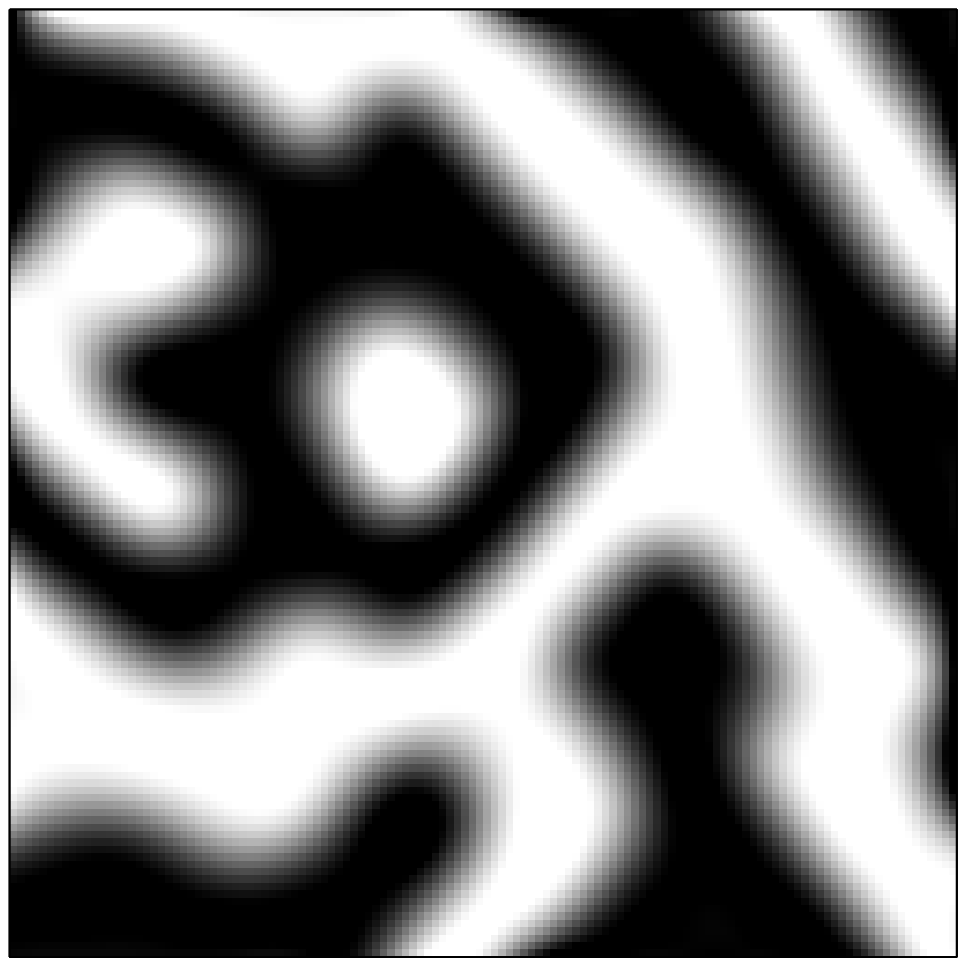} &
      \includegraphics[height=.1200\textheight]{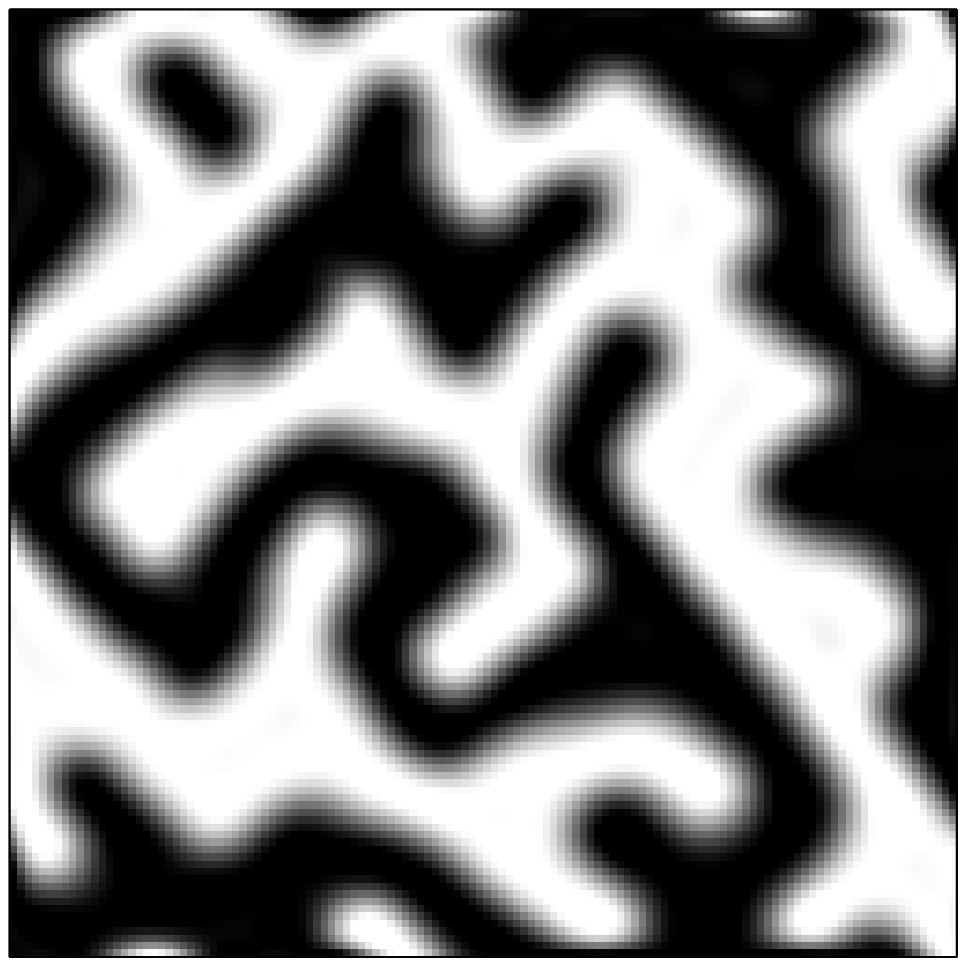} \\[-1ex]
      \rotatebox{90}{\makebox[.1200\textheight][c]{$\beta = 10^{6}$}} &
      \includegraphics[height=.1200\textheight]{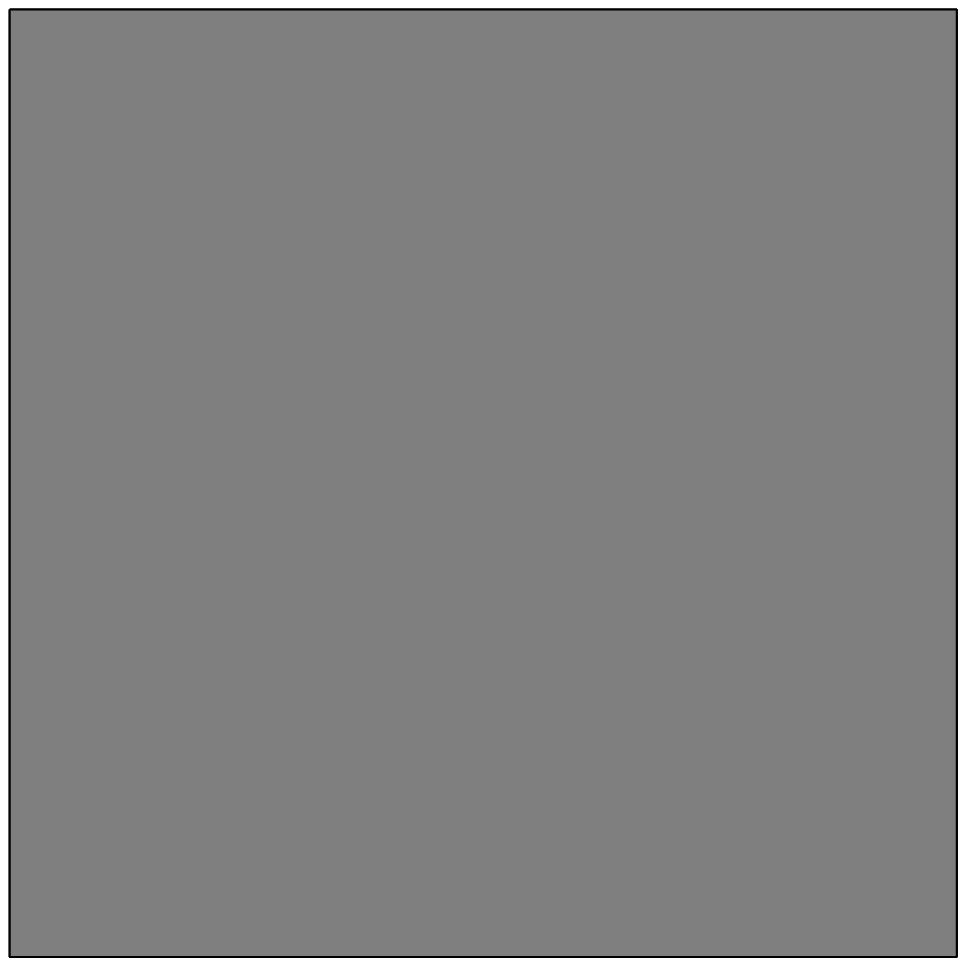} &
      \includegraphics[height=.1200\textheight]{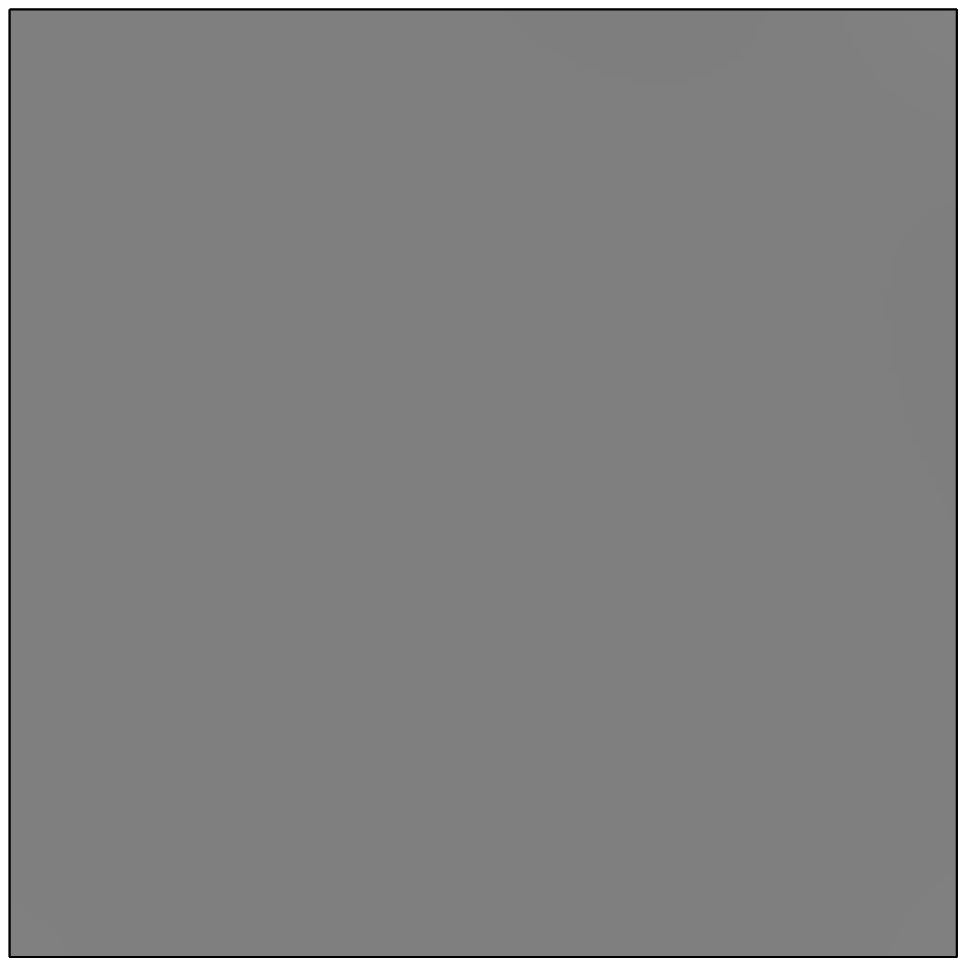} &
      \includegraphics[height=.1200\textheight]{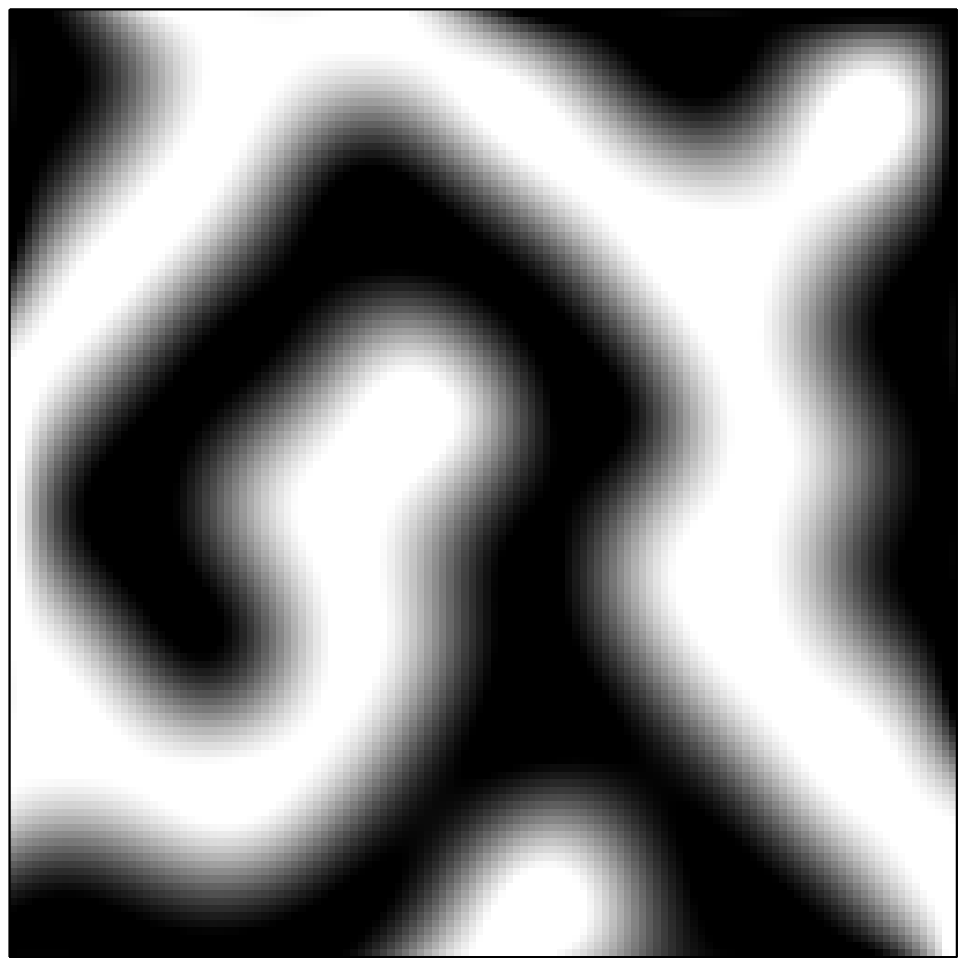} &
      \includegraphics[height=.1200\textheight]{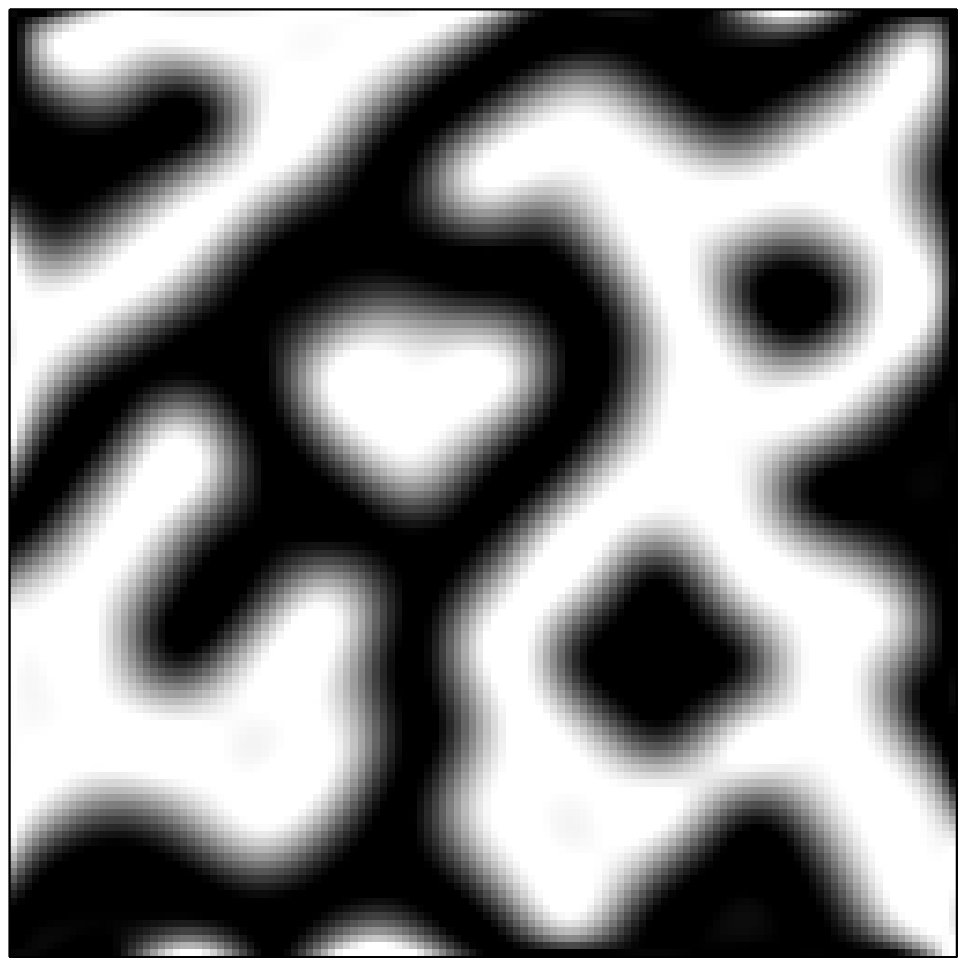} \\[-1ex]
    \end{tabular}
    \caption{Simulations for a 2D generalised elastic net as a model for maps of visual field VF, ocular dominance OD and orientation OR. This panel shows the OD maps obtained for a range of values of $\beta$ and stencil order $p$ (forward-difference family with normalisation to unit power), at a value $\sigma = 0.05$ small enough that the OD and OR maps have arisen, with all other parameters and the initial conditions being the same. The stripes are narrower for low $\beta$ and high $p$. Simulation details: square net with $M \times M$ centroids ($M = 128$); nonperiodic b.c.; annealing schedule: from $\sigma = 0.2$ down to $\sigma = 0.05$ in a geometric progression with $40$ iterations. The training set consisted of a regular grid of $(10,10,2,12)$ values in $[0,1] \times [0,1] \times [-l,l] \times [-\frac{\pi}{2},\frac{\pi}{2}]$ for the 2D visual field, OD ($l = 0.07$) and OR, respectively; the periodic OR variable is implemented as a ring of radius $r = 0.2$ in 2D Cartesian space (see \citealp{CarreirGoodhil03b} for details). The relative stripe widths of the OD and OR maps depend on the correlation parameters of OD $l$ and OR $r$ and has been studied elsewhere \citep{GoodhilCimpon00a}.}
    \label{f:simul2D:OD}
\end{FPfigure}
            
\begin{figure}
  \begin{center}
    \begin{tabular}{@{}c@{\hspace{0.5cm}}c@{\hspace{0.5cm}}c@{\hspace{0.5cm}}c@{\hspace{0.5cm}}c@{}}
      & $p = 1$ & $p = 2$ & $p = 3$ & $p = 4$ \\
      \rotatebox{90}{\makebox[.1200\textheight][c]{$\beta = 10^{-1}$}} &
      \includegraphics[height=.1200\textheight]{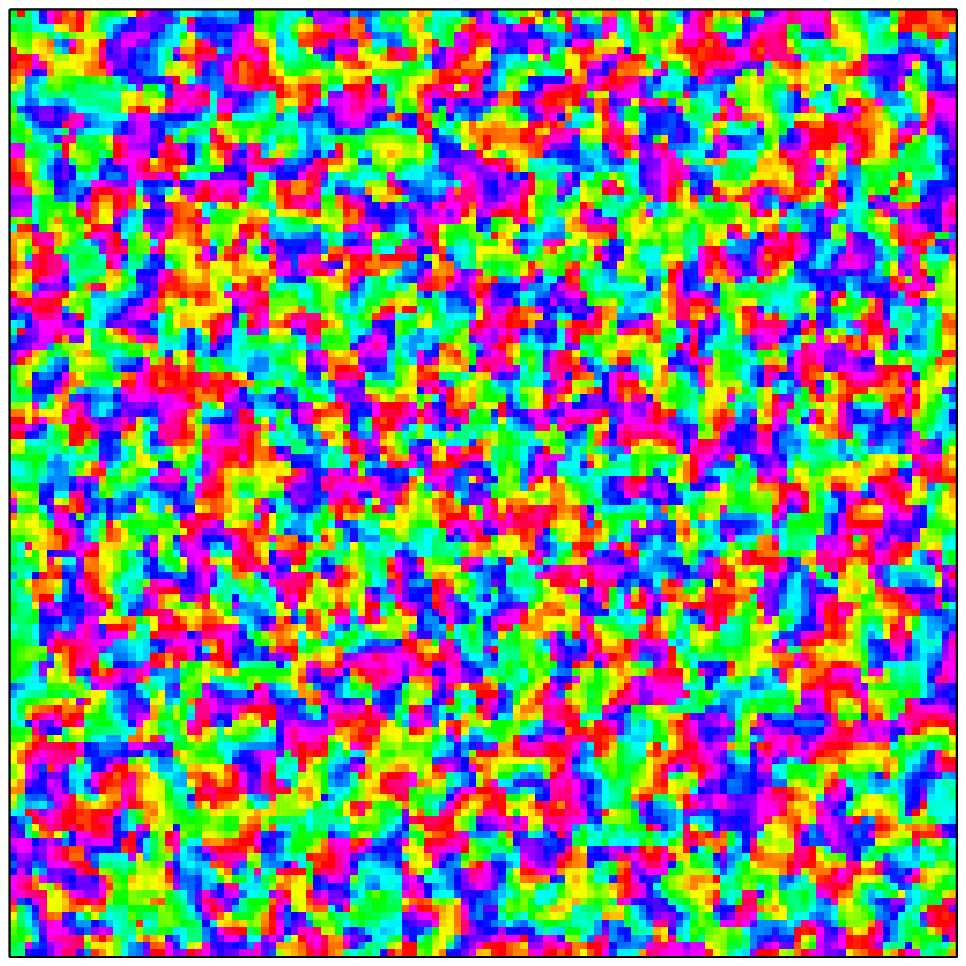} &
      \includegraphics[height=.1200\textheight]{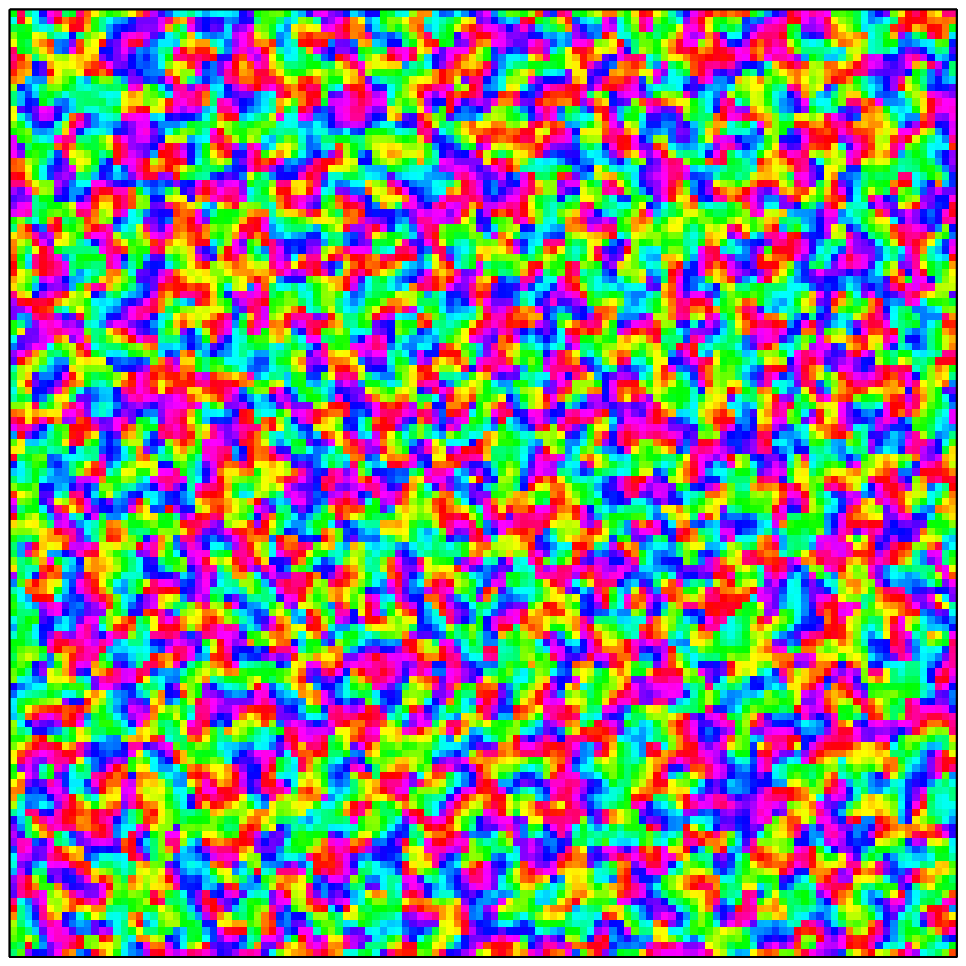} &
      \includegraphics[height=.1200\textheight]{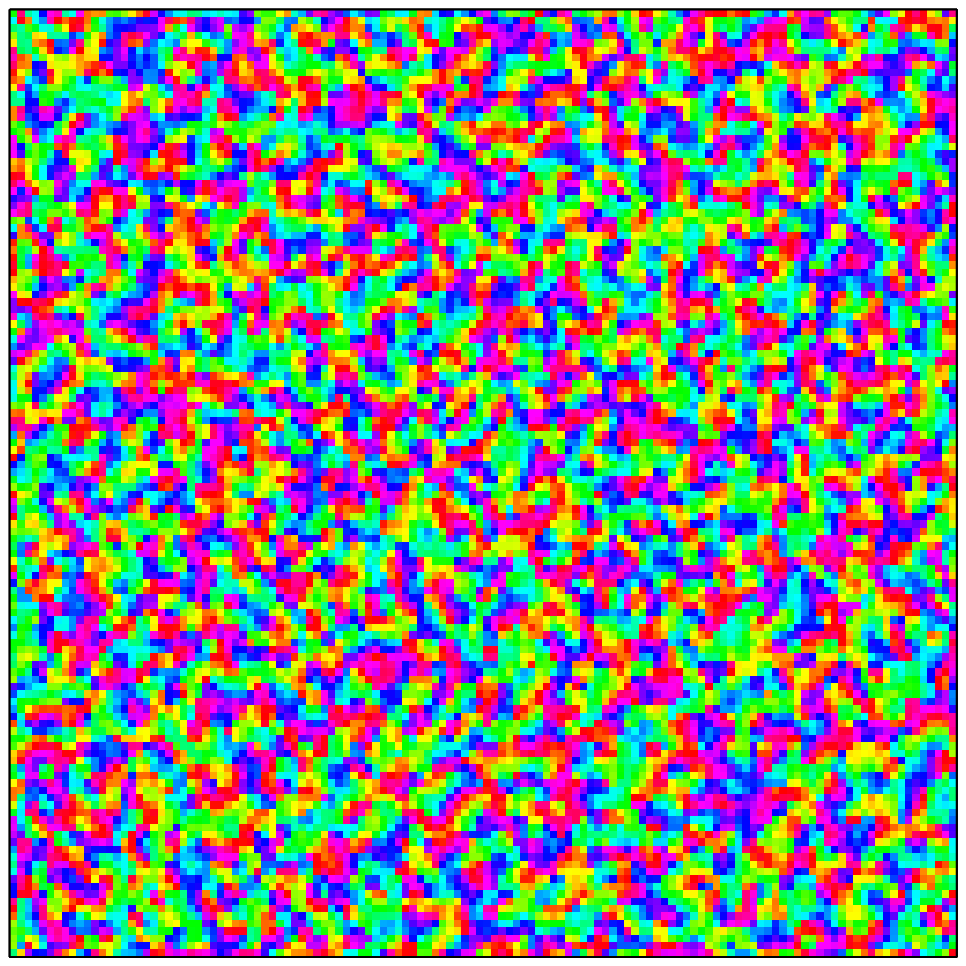} &
      \includegraphics[height=.1200\textheight]{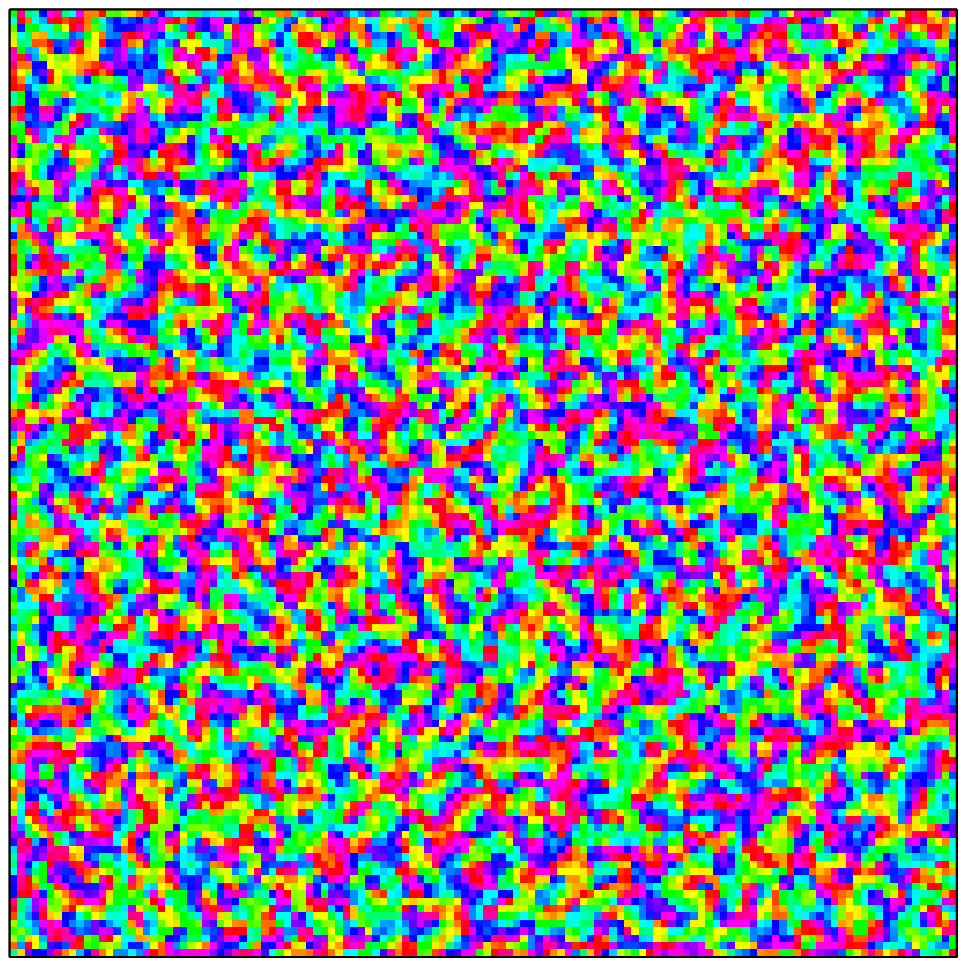} \\[-1ex]
      \rotatebox{90}{\makebox[.1200\textheight][c]{$\beta = 10^{0}$}} &
      \includegraphics[height=.1200\textheight]{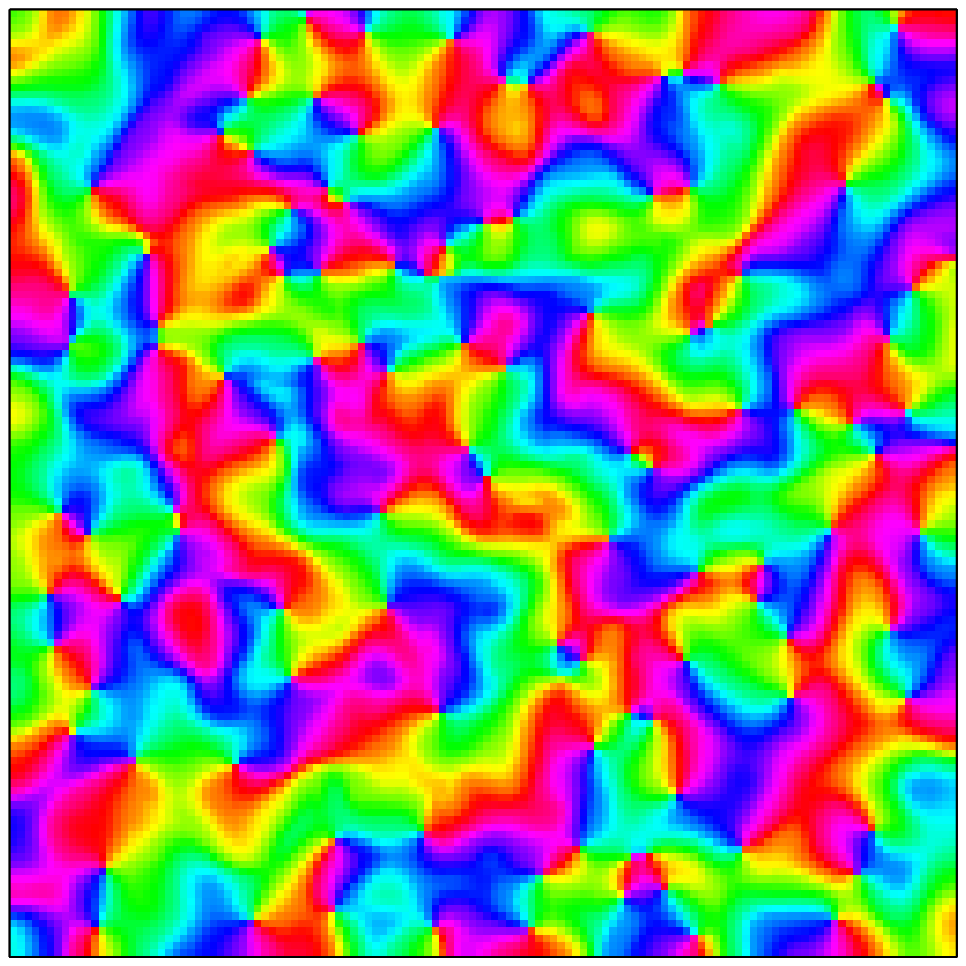} &
      \includegraphics[height=.1200\textheight]{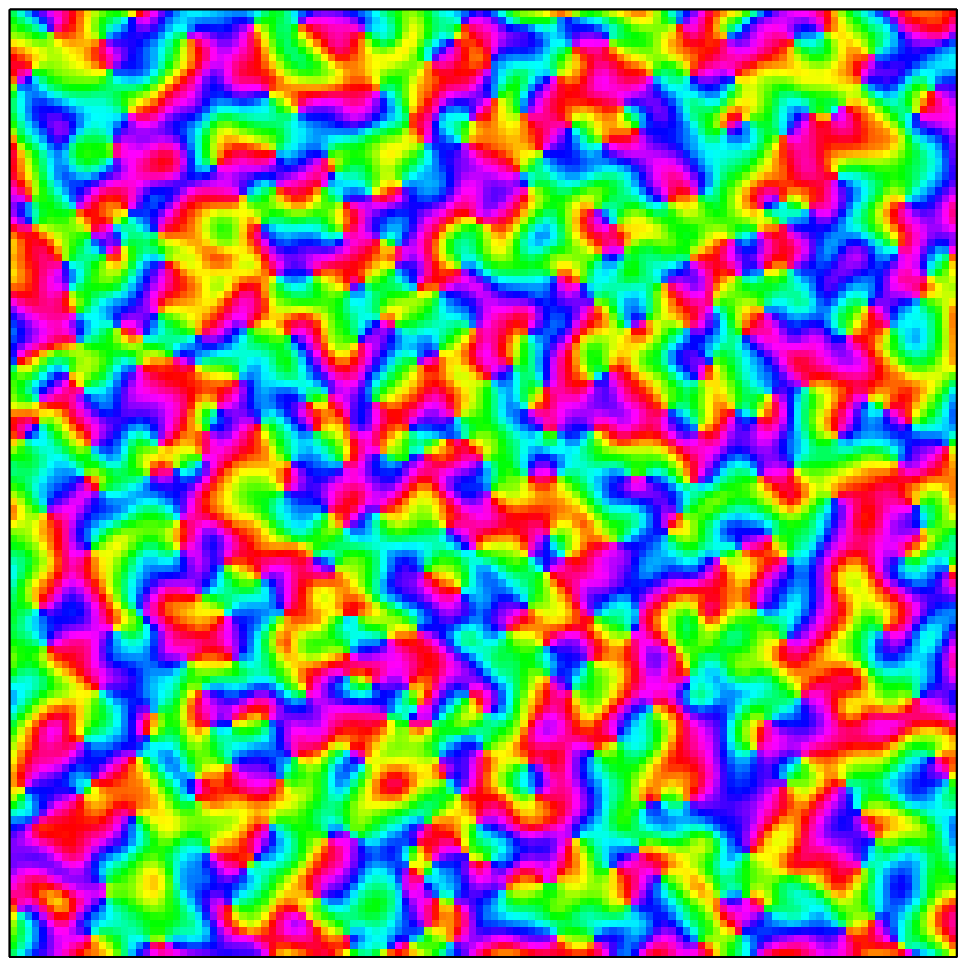} &
      \includegraphics[height=.1200\textheight]{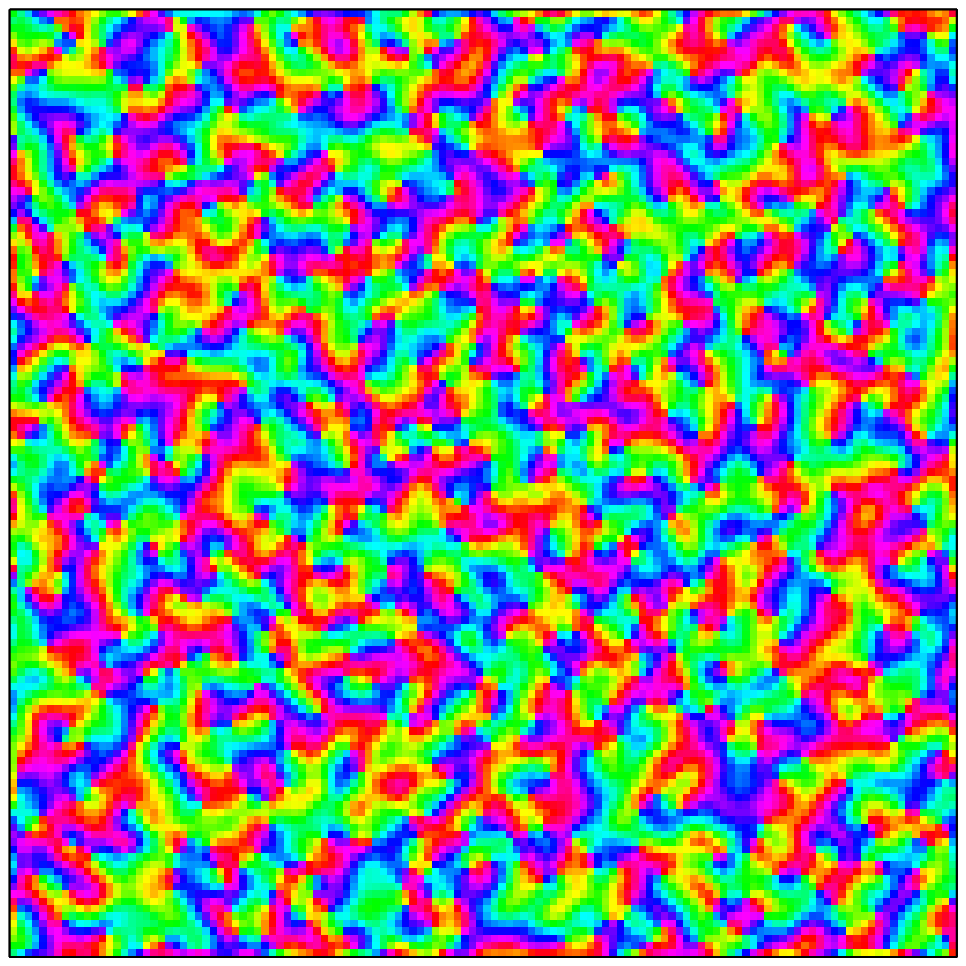} &
      \includegraphics[height=.1200\textheight]{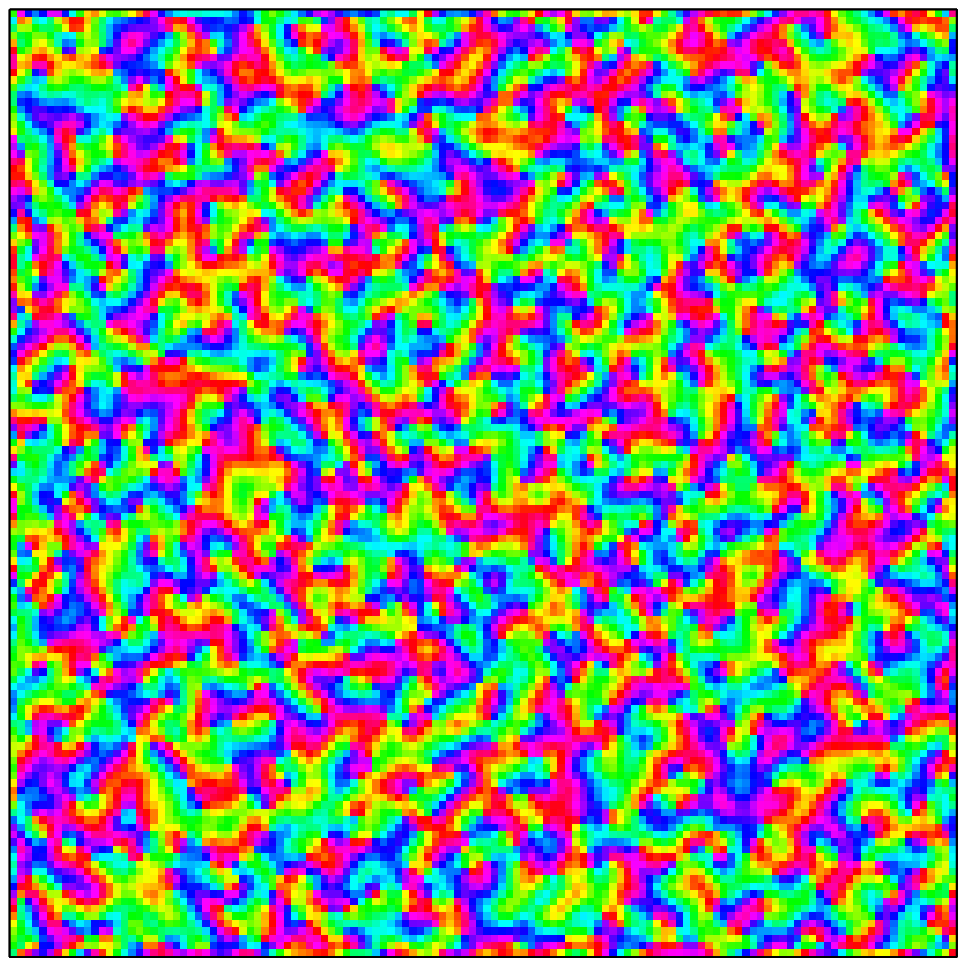} \\[-1ex]
      \rotatebox{90}{\makebox[.1200\textheight][c]{$\beta = 10^{1}$}} &
      \includegraphics[height=.1200\textheight]{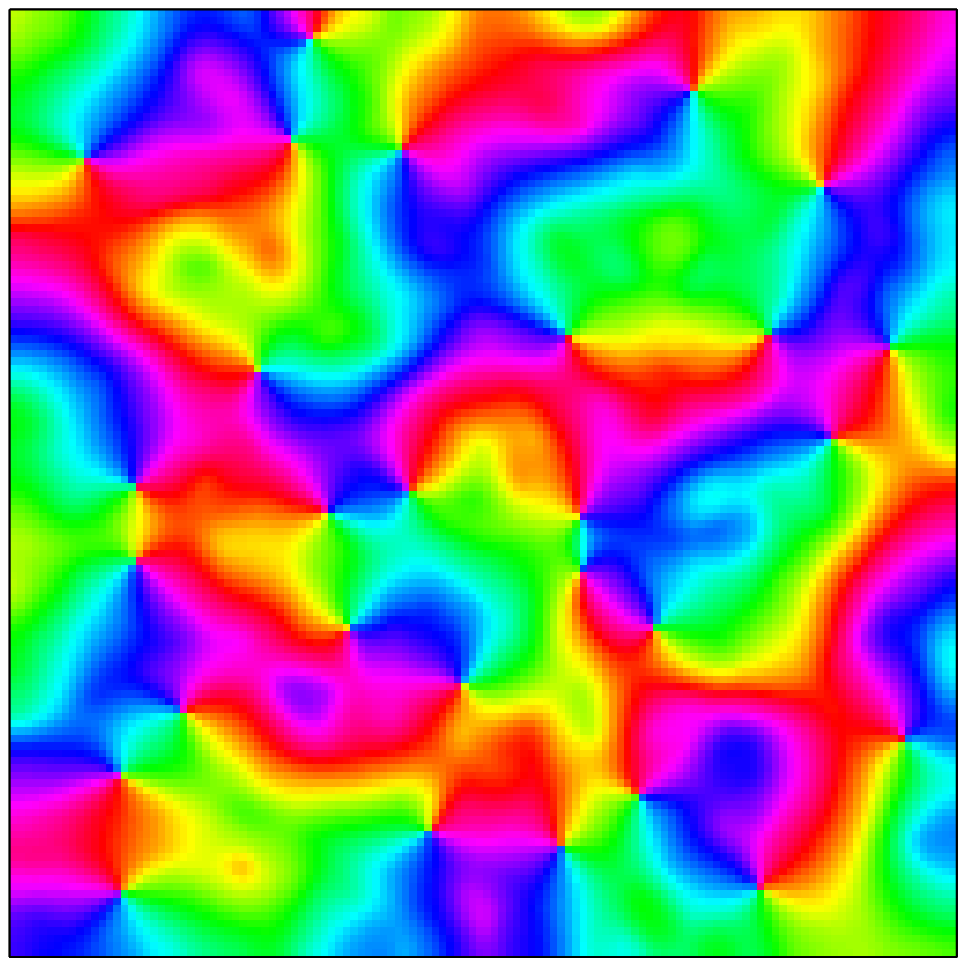} &
      \includegraphics[height=.1200\textheight]{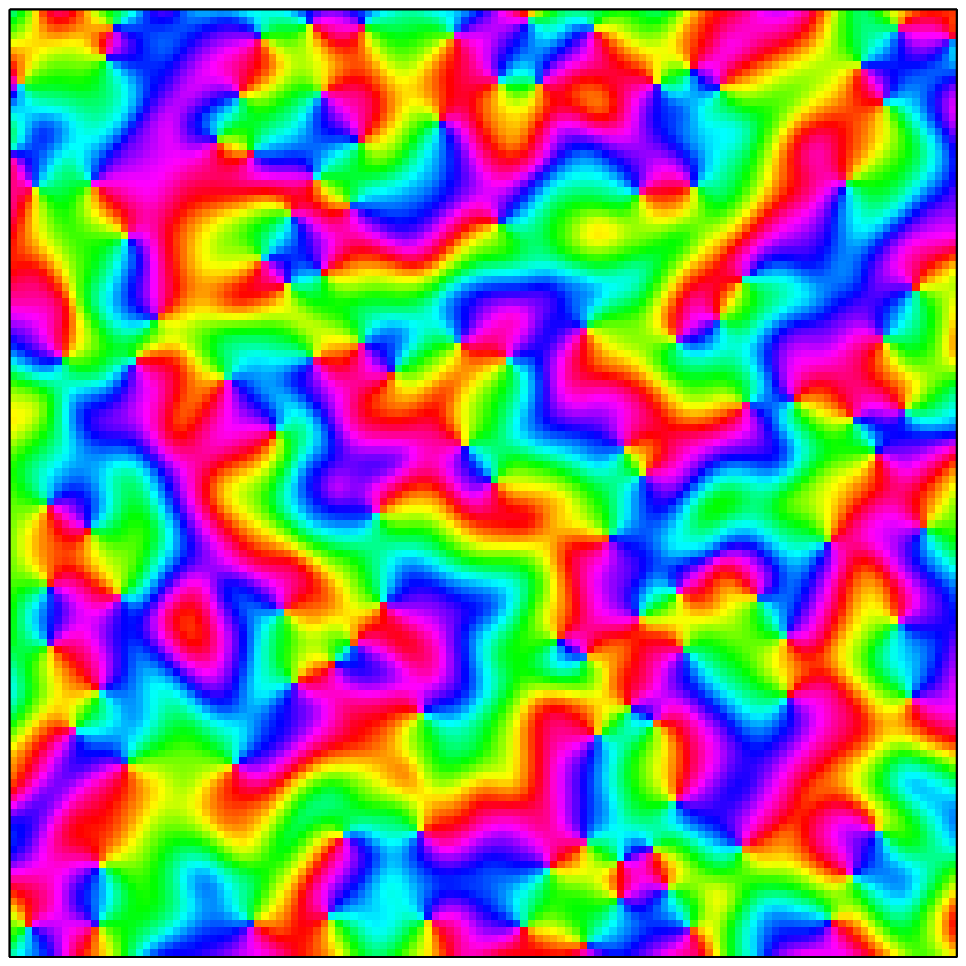} &
      \includegraphics[height=.1200\textheight]{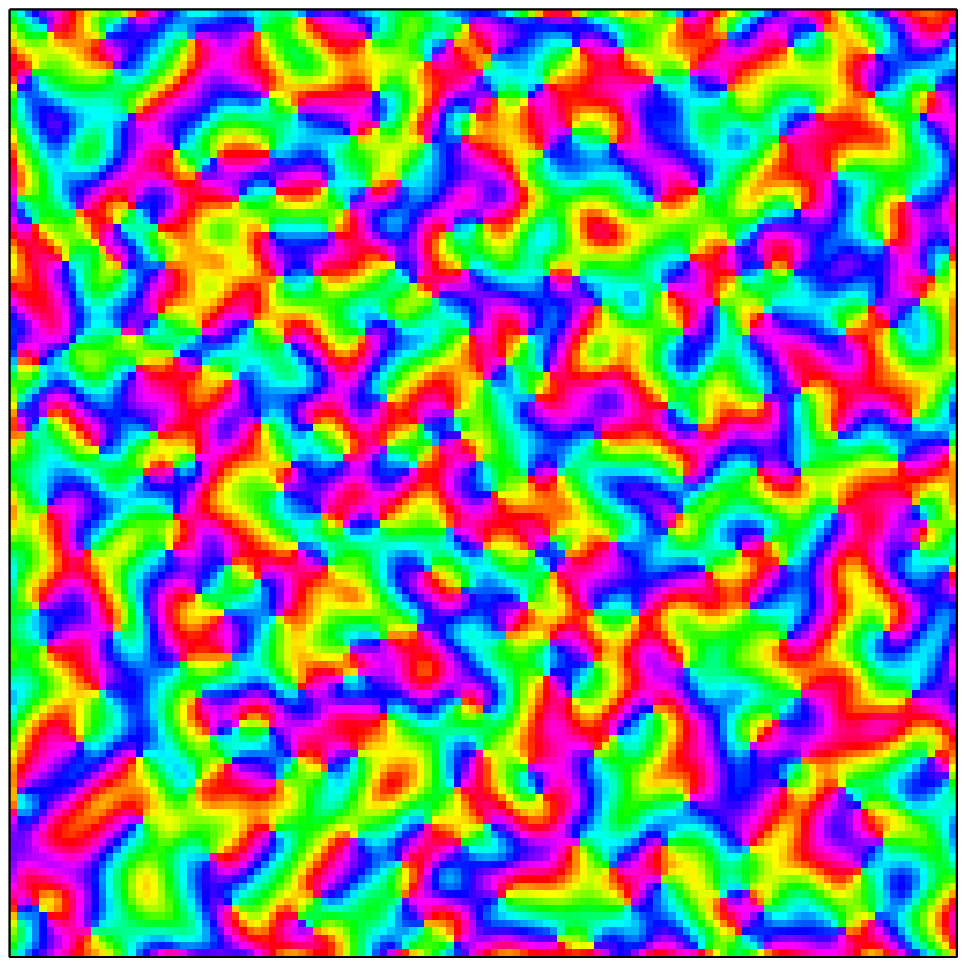} &
      \includegraphics[height=.1200\textheight]{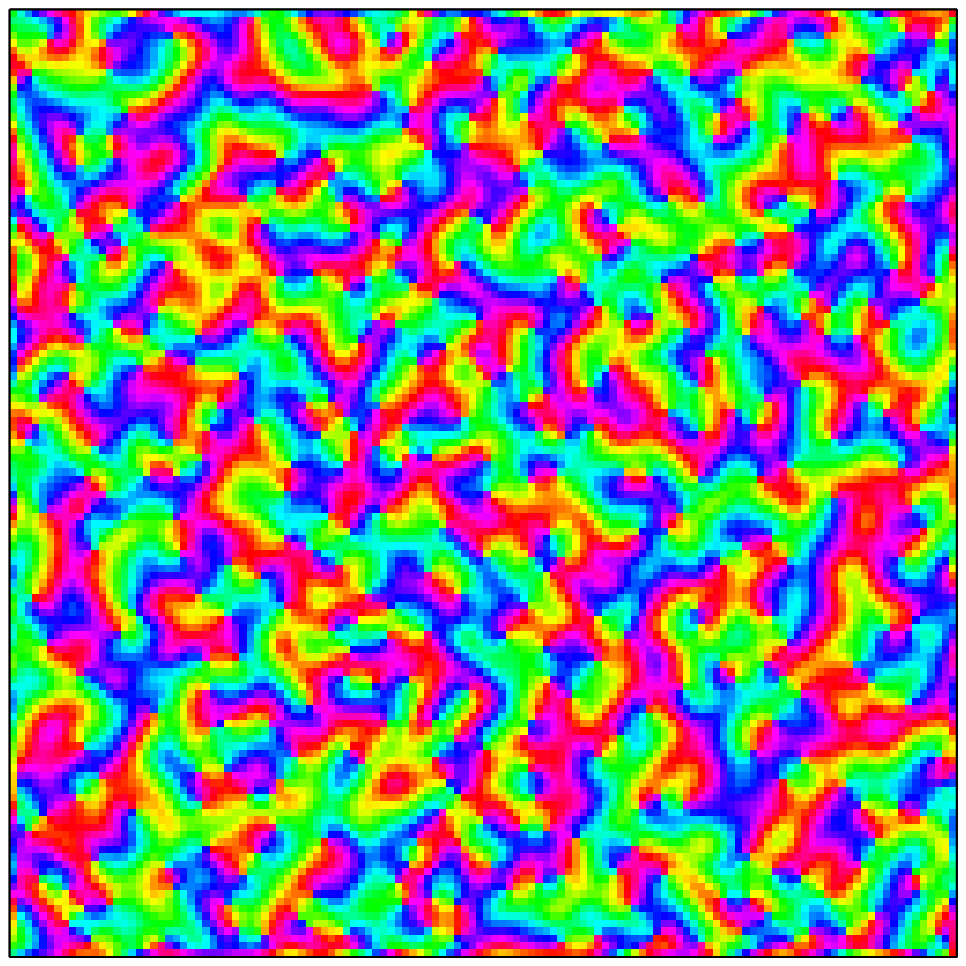} \\[-1ex]
      \rotatebox{90}{\makebox[.1200\textheight][c]{$\beta = 10^{2}$}} &
      \includegraphics[height=.1200\textheight]{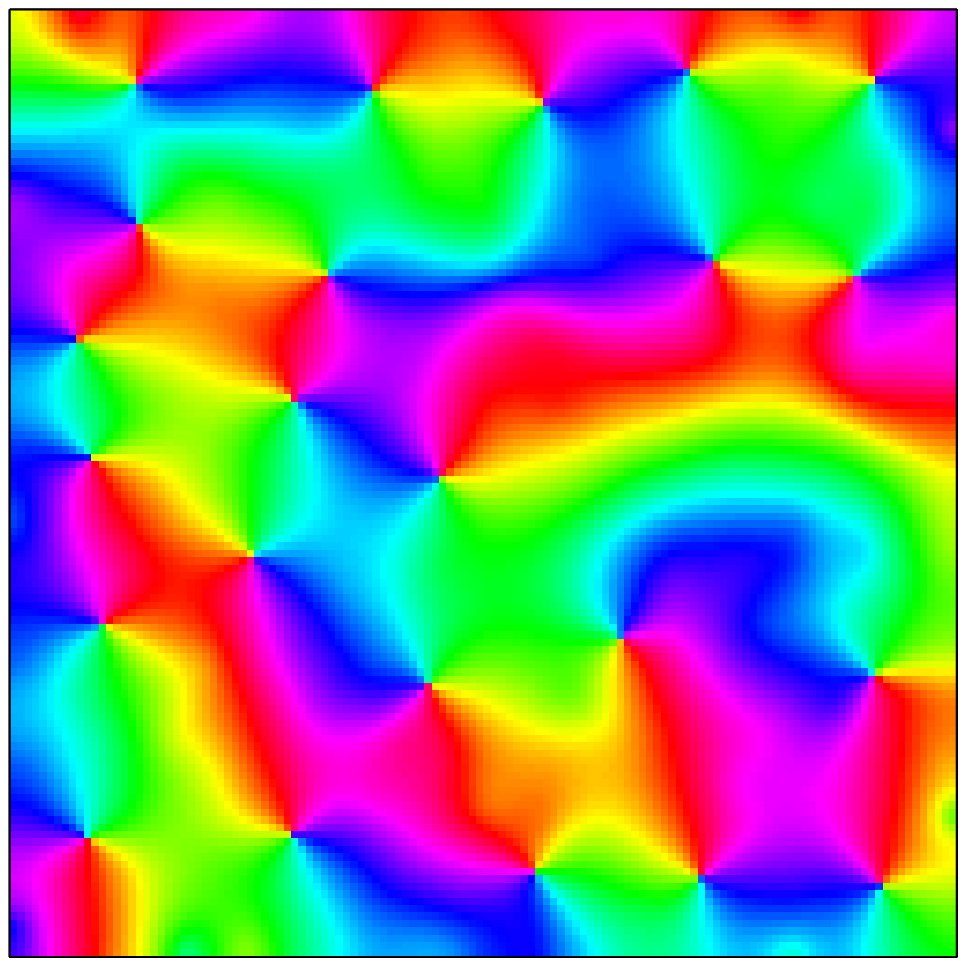} &
      \includegraphics[height=.1200\textheight]{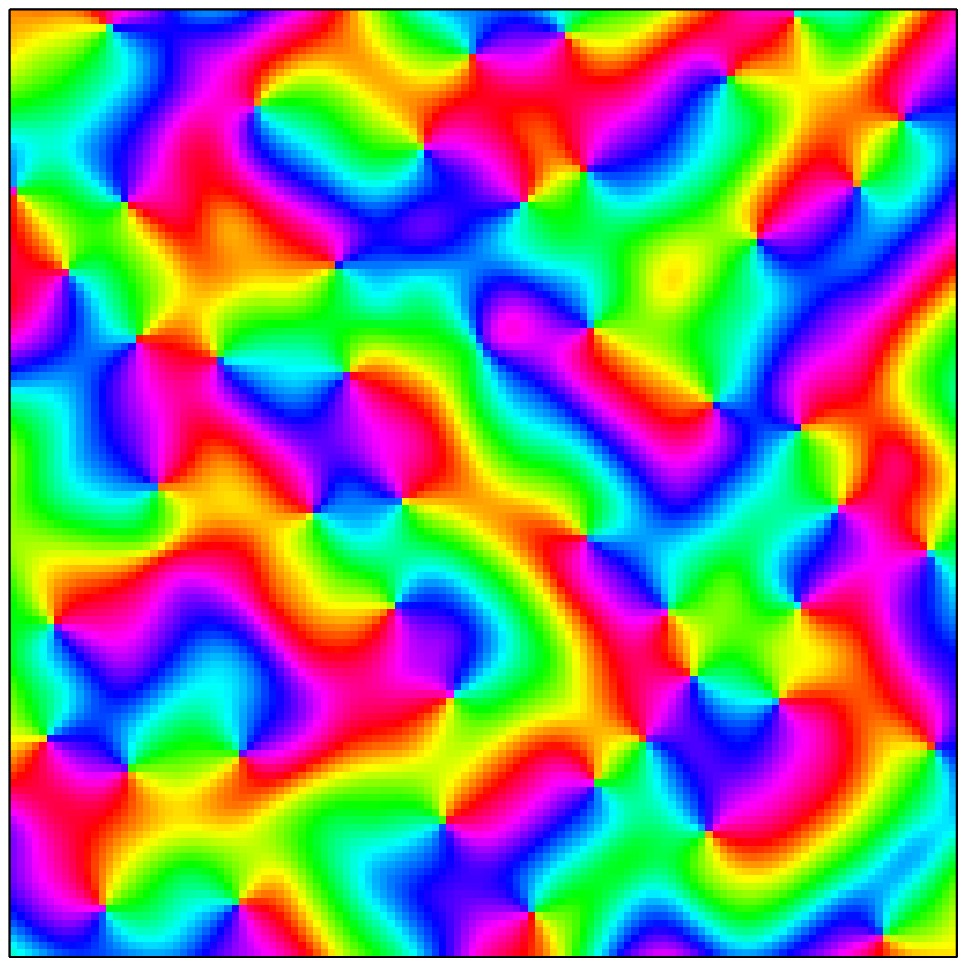} &
      \includegraphics[height=.1200\textheight]{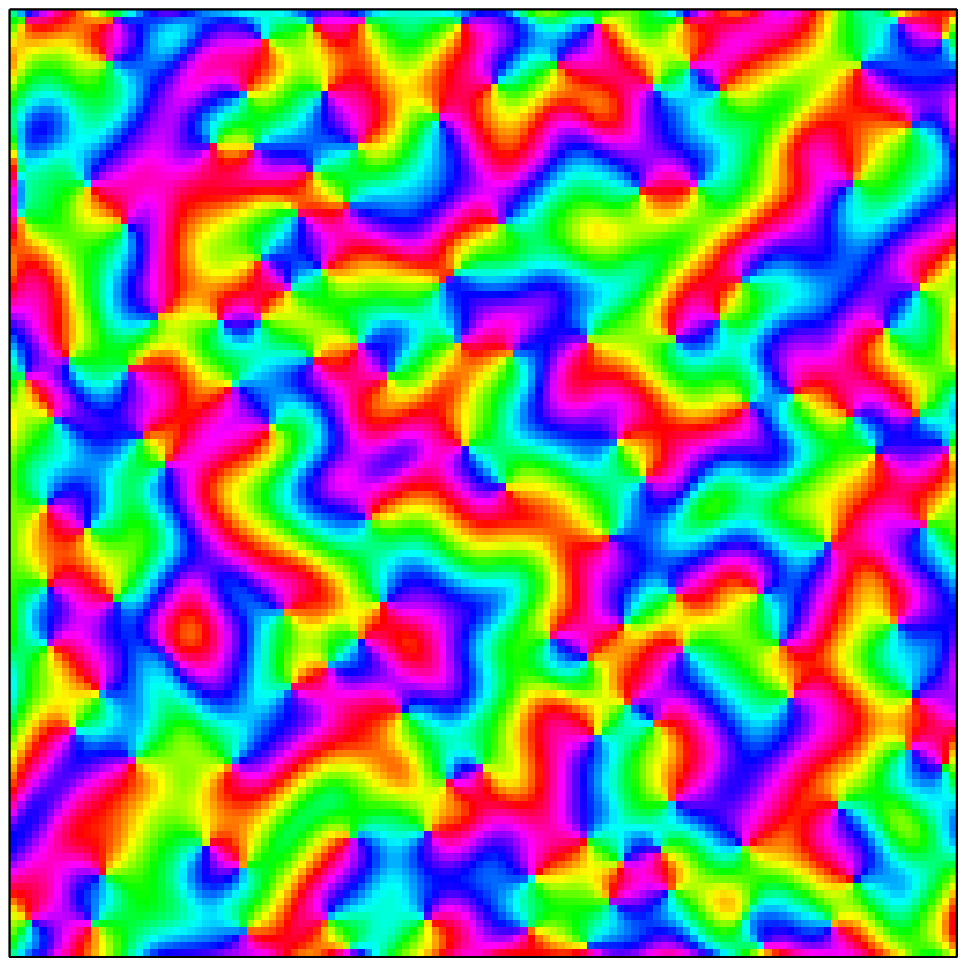} &
      \includegraphics[height=.1200\textheight]{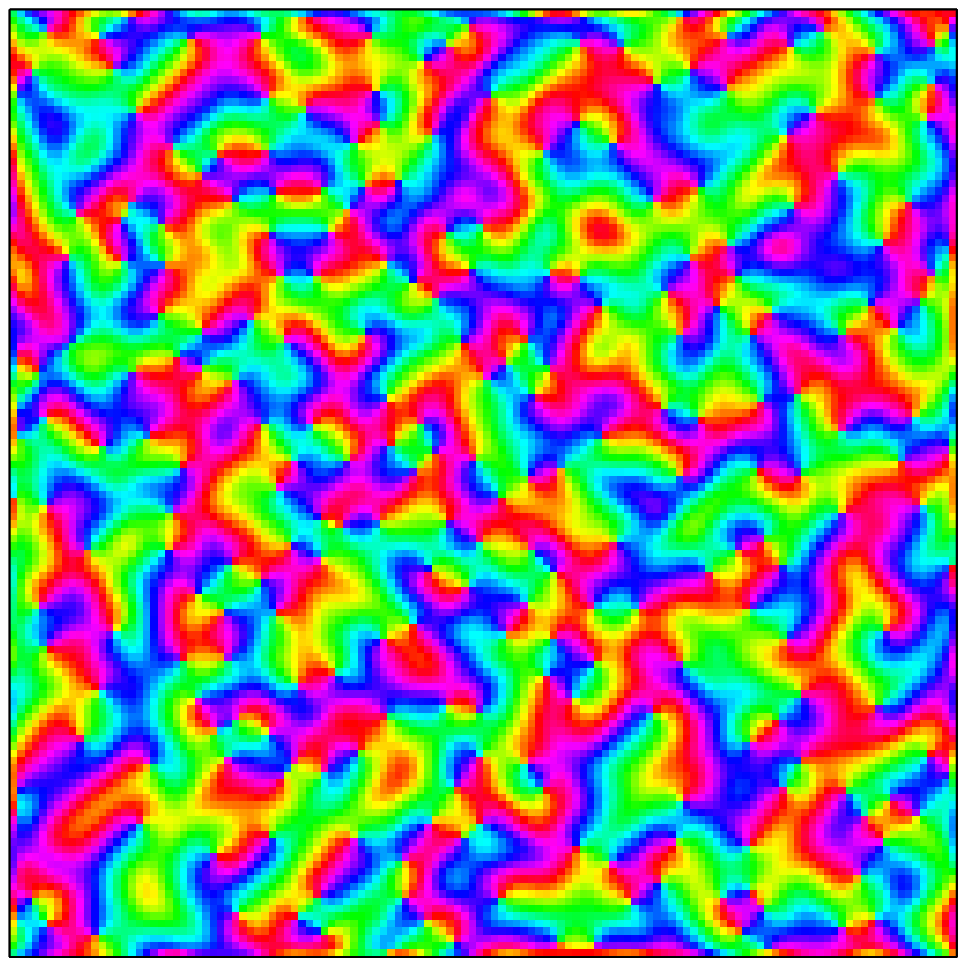} \\[-1ex]
      \rotatebox{90}{\makebox[.1200\textheight][c]{$\beta = 10^{3}$}} &
      \includegraphics[height=.1200\textheight]{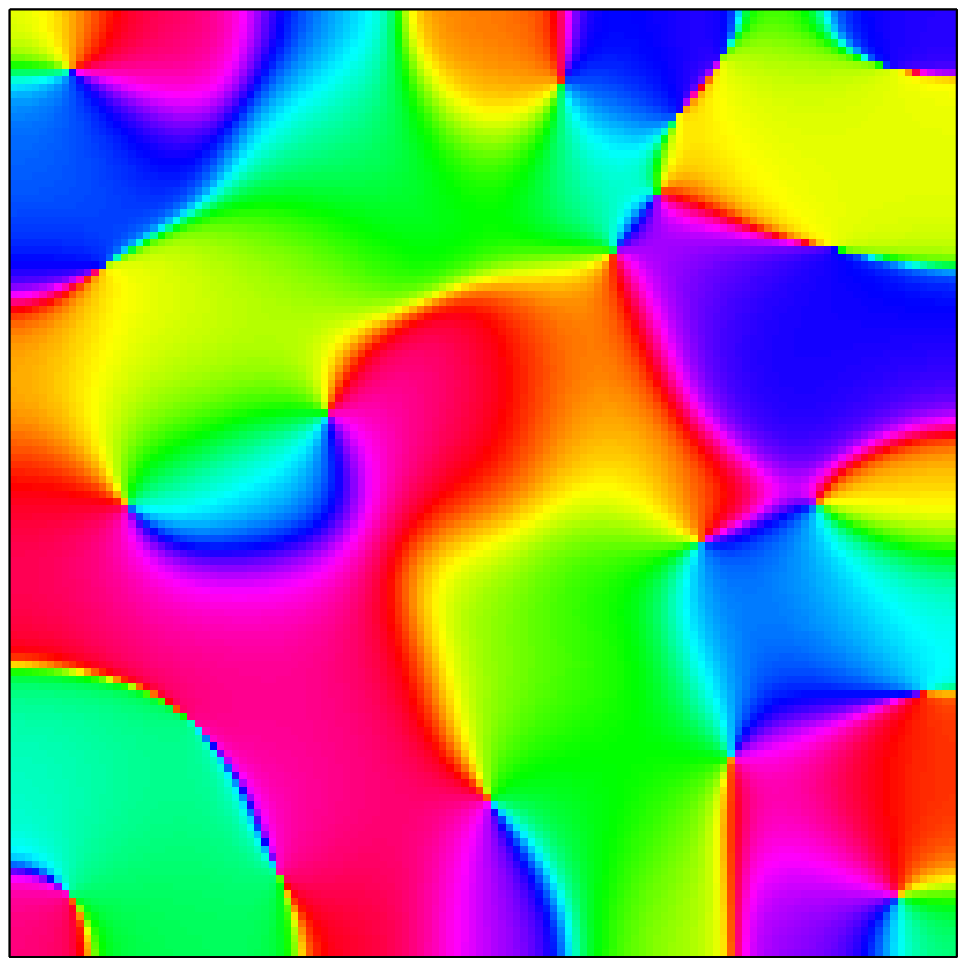} &
      \includegraphics[height=.1200\textheight]{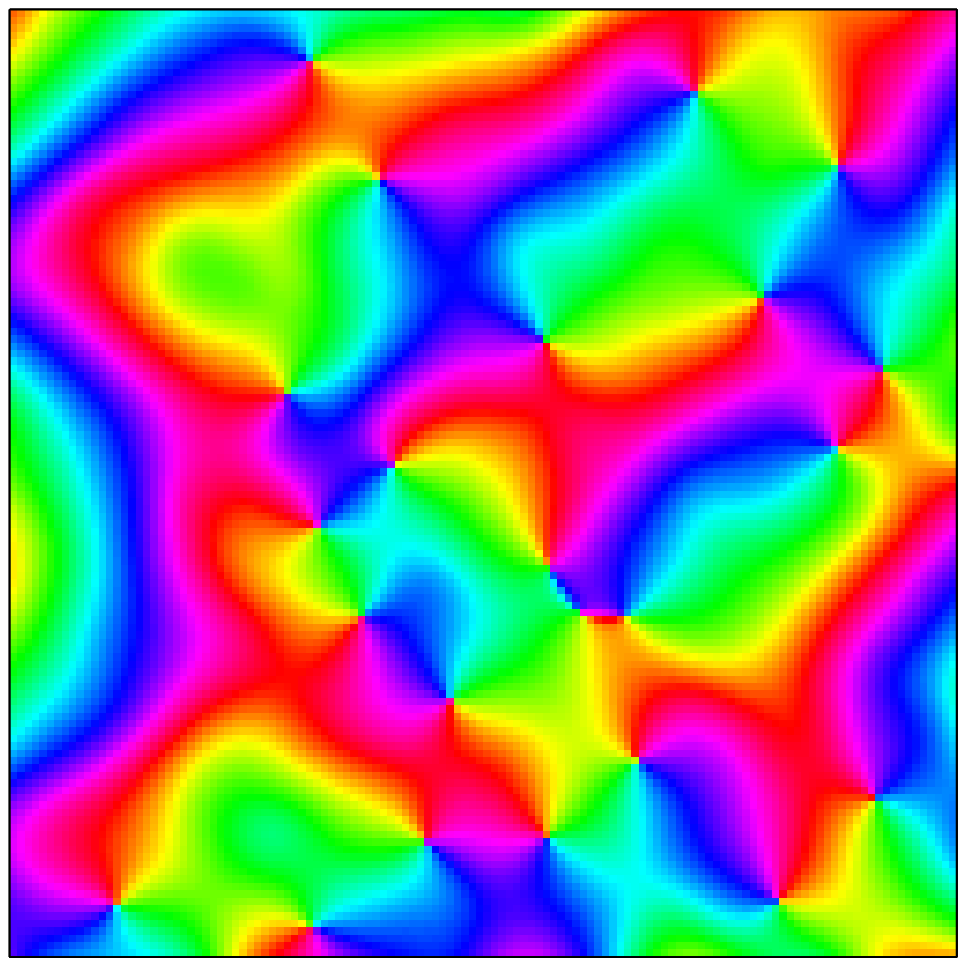} &
      \includegraphics[height=.1200\textheight]{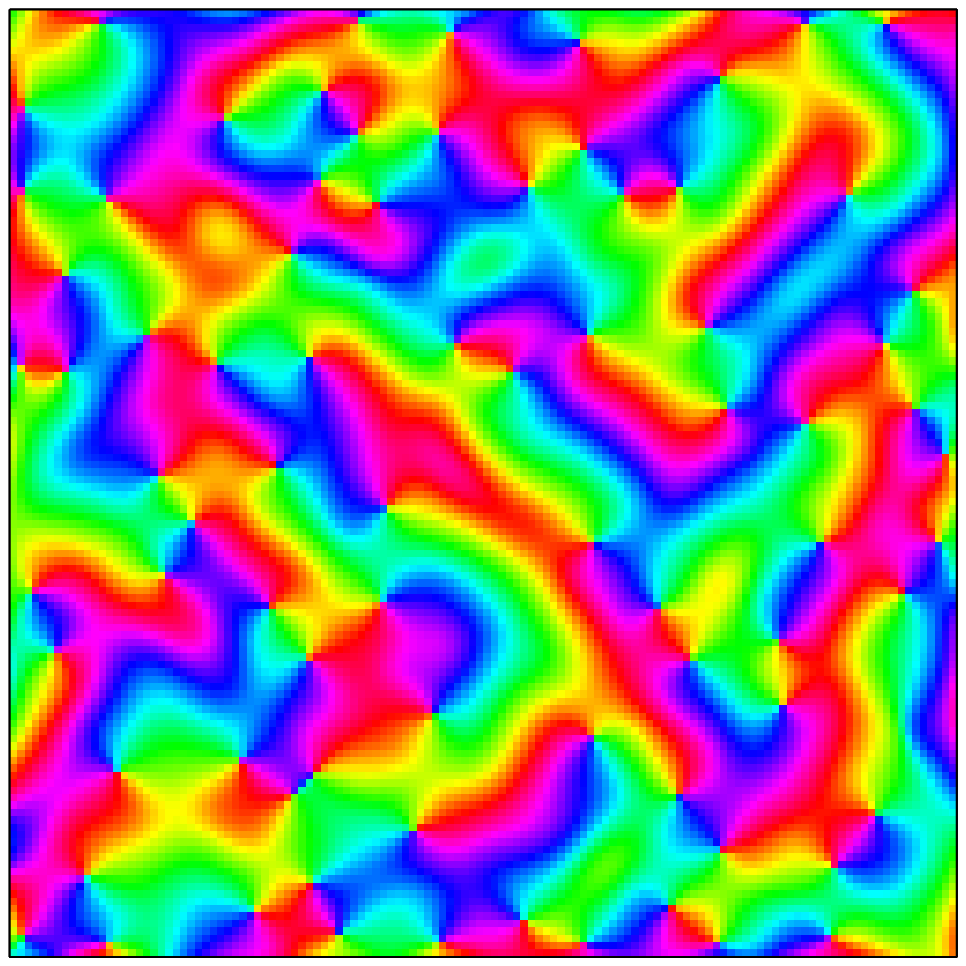} &
      \includegraphics[height=.1200\textheight]{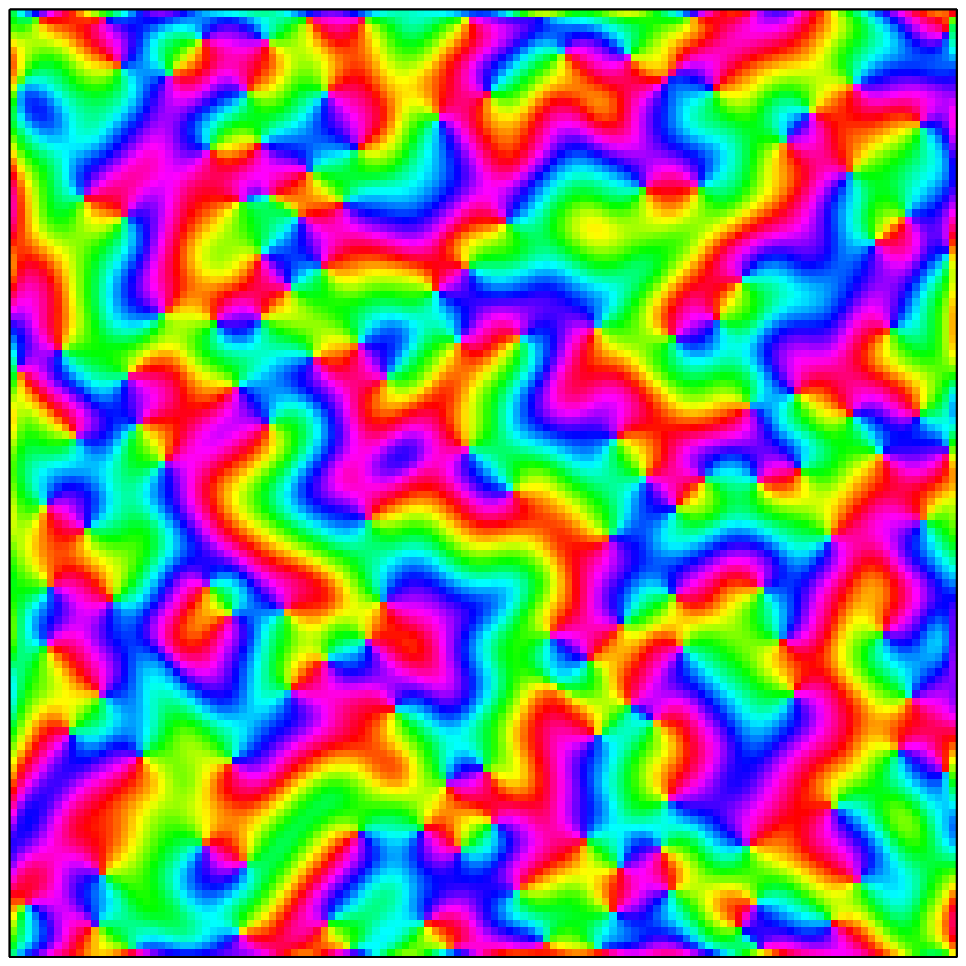} \\[-1ex]
      \rotatebox{90}{\makebox[.1200\textheight][c]{$\beta = 10^{4}$}} &
      \includegraphics[height=.1200\textheight]{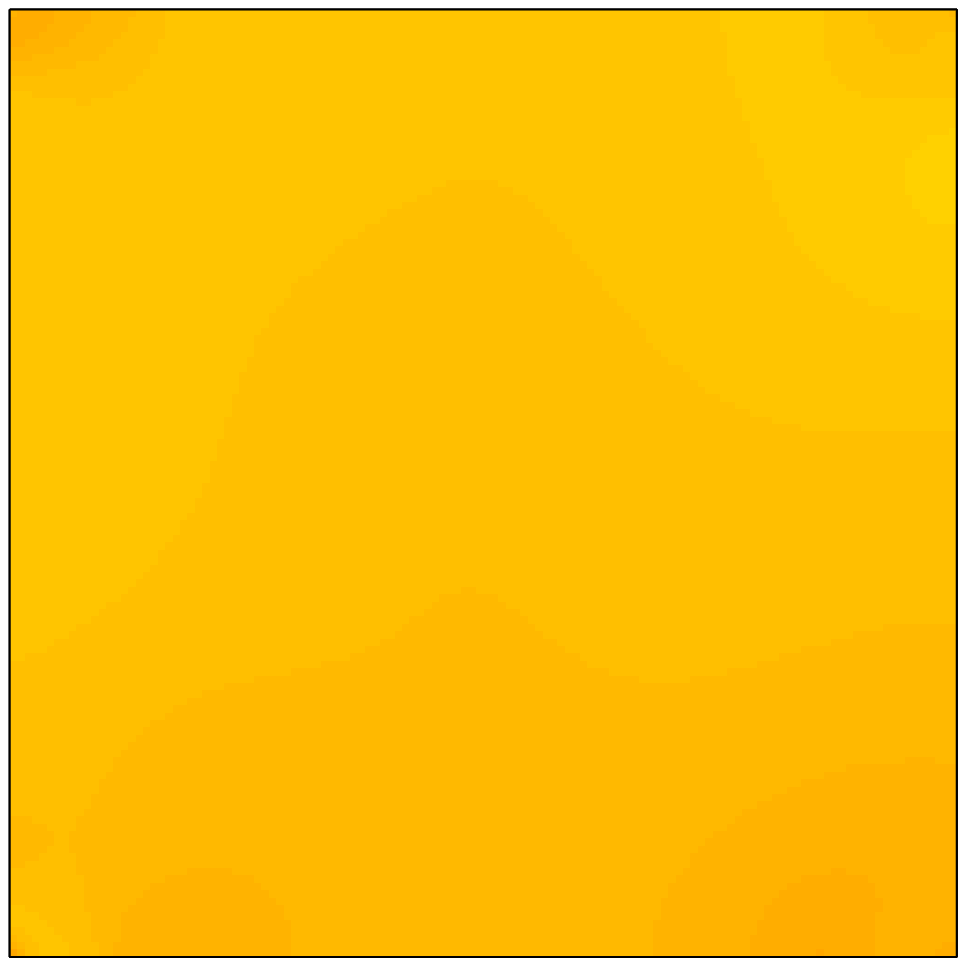} &
      \includegraphics[height=.1200\textheight]{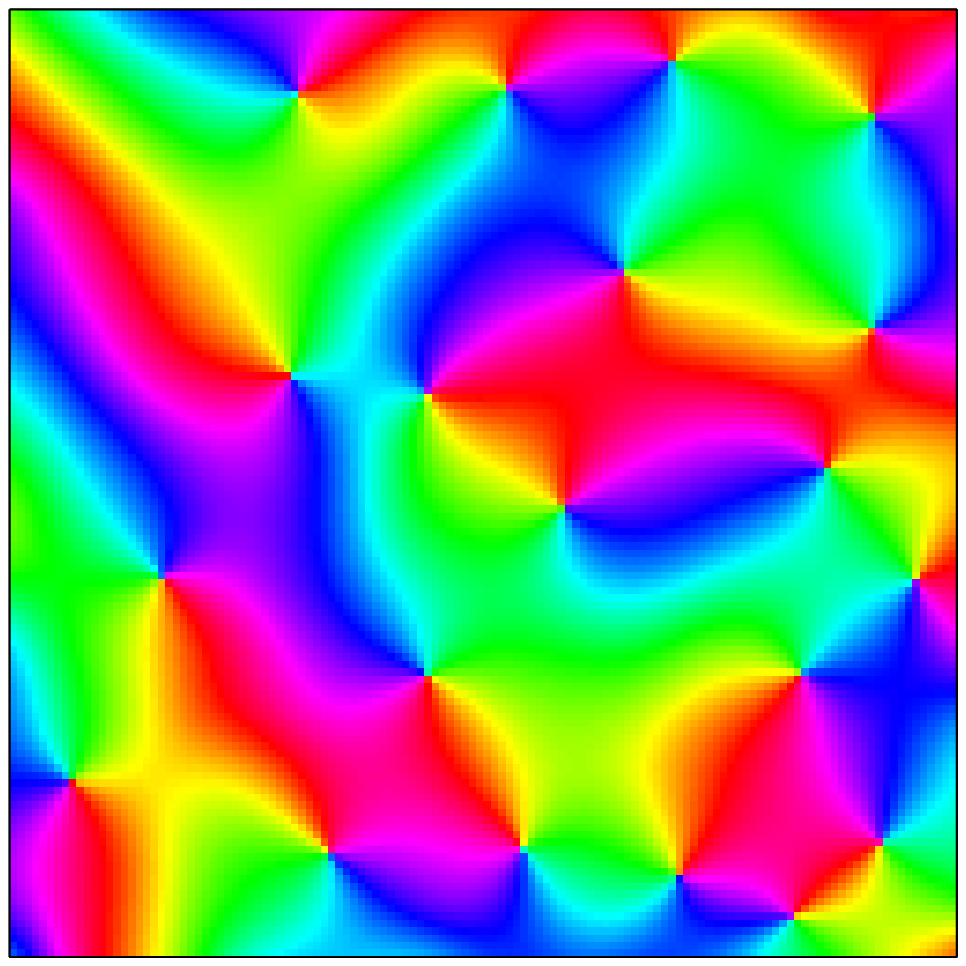} &
      \includegraphics[height=.1200\textheight]{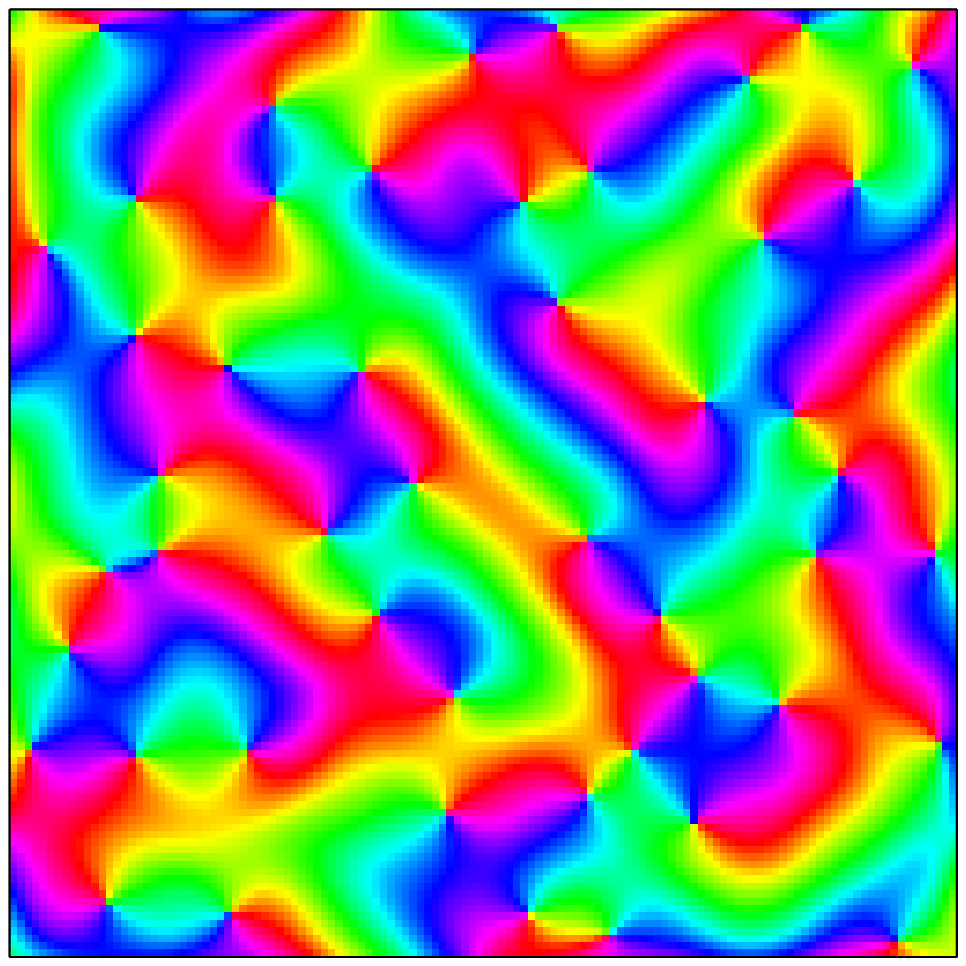} &
      \includegraphics[height=.1200\textheight]{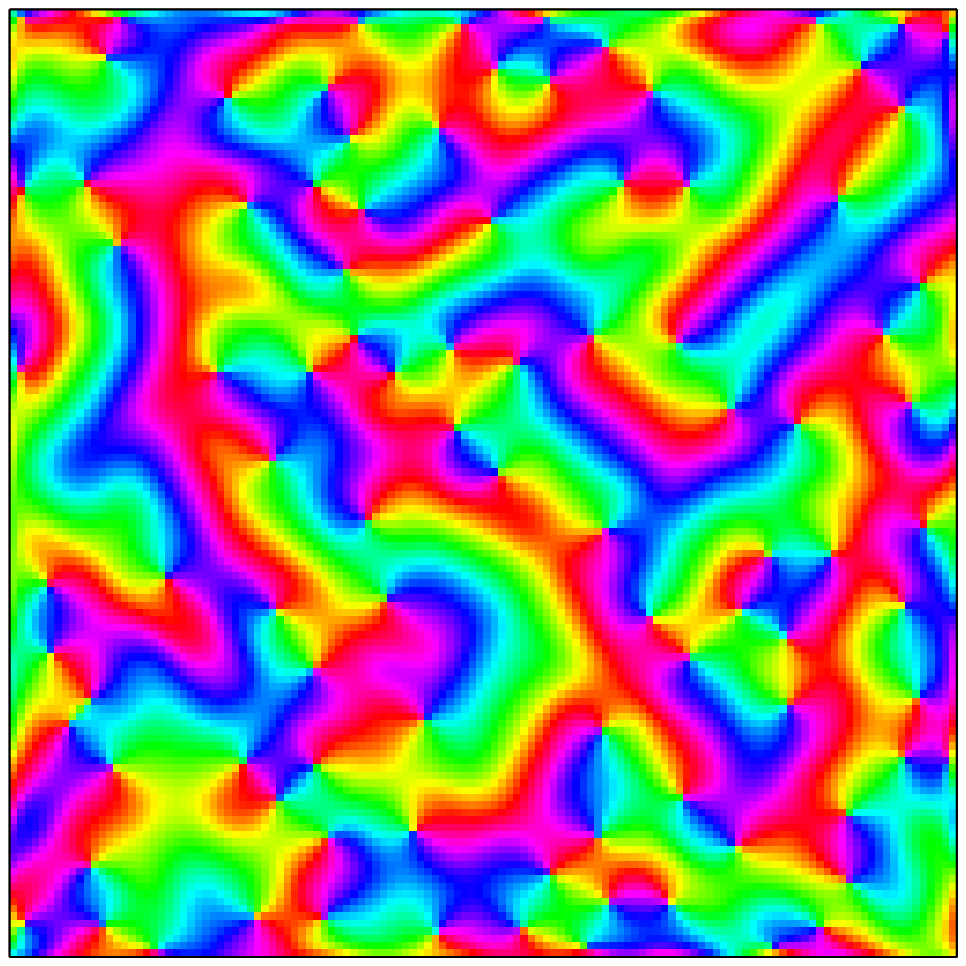} \\[-1ex]
      \rotatebox{90}{\makebox[.1200\textheight][c]{$\beta = 10^{5}$}} &
      \includegraphics[height=.1200\textheight]{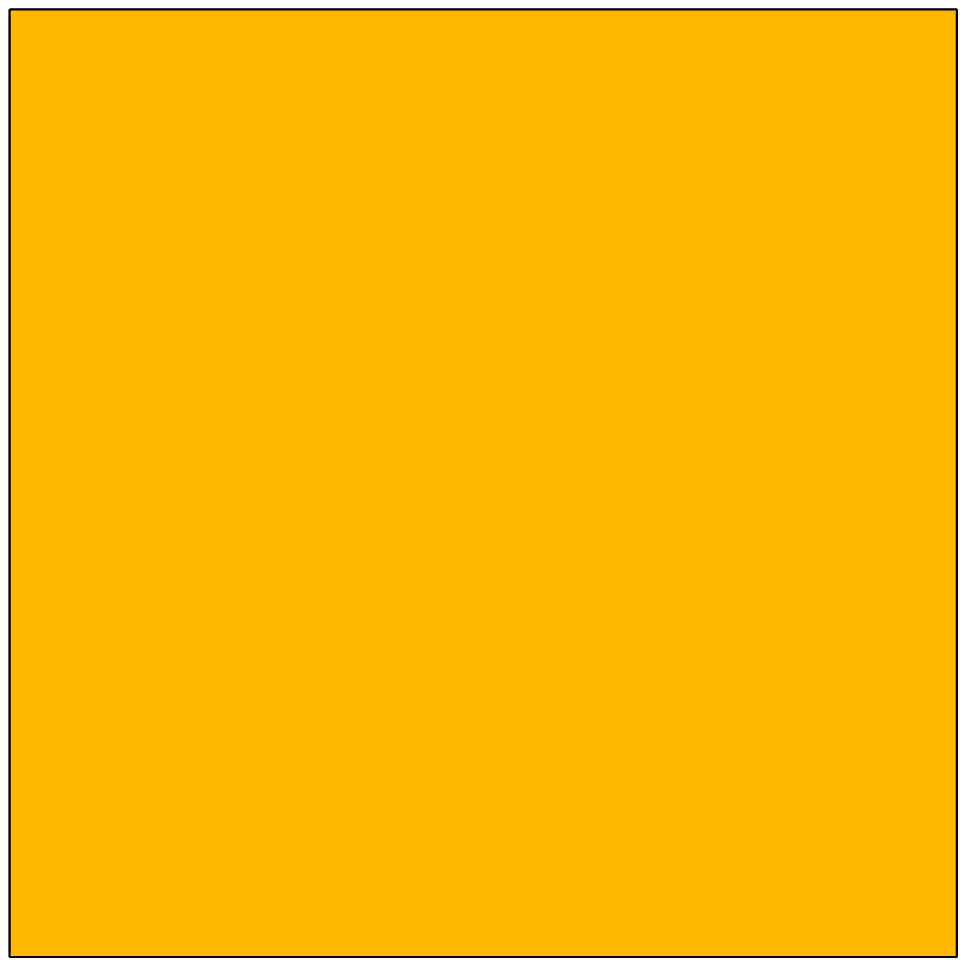} &
      \includegraphics[height=.1200\textheight]{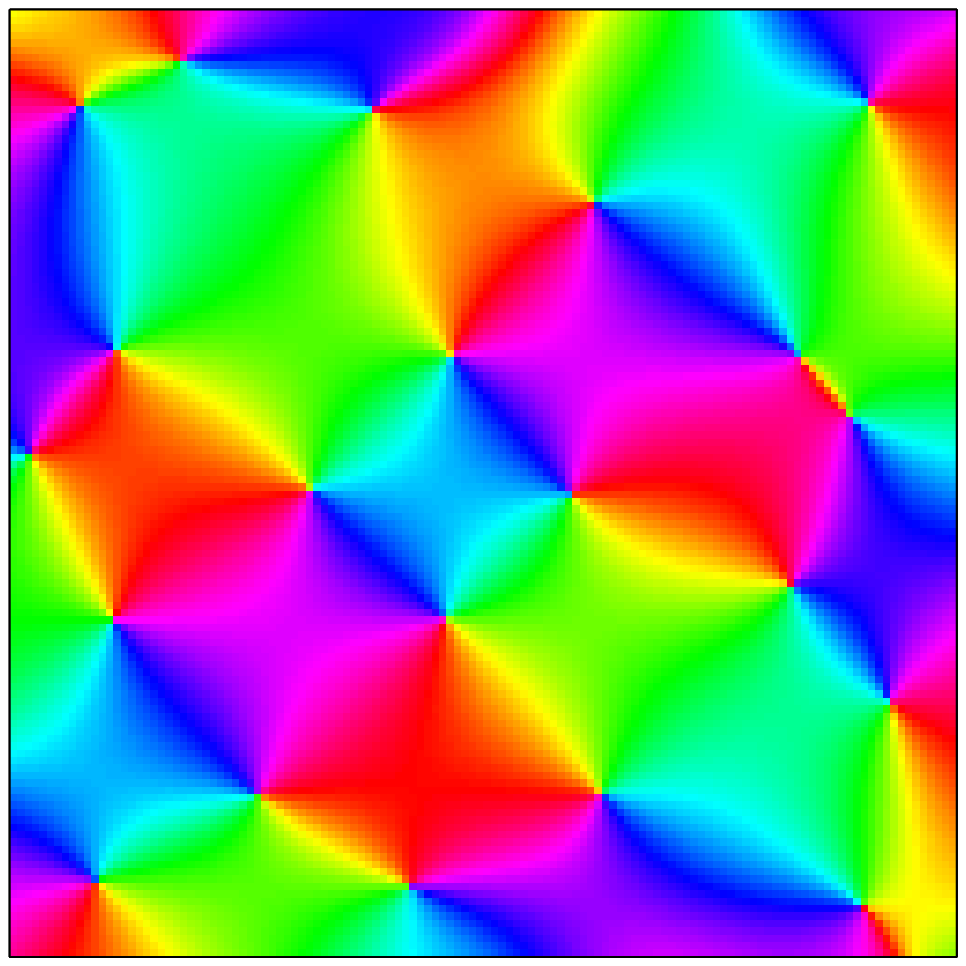} &
      \includegraphics[height=.1200\textheight]{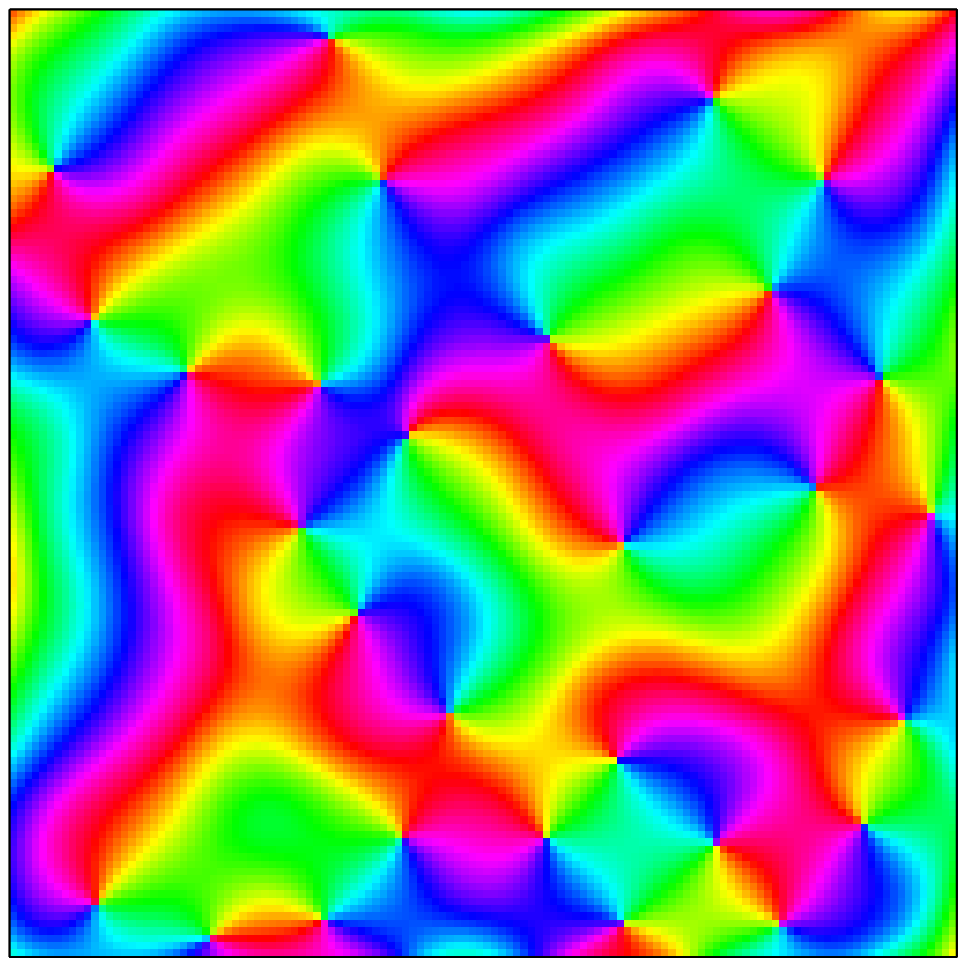} &
      \includegraphics[height=.1200\textheight]{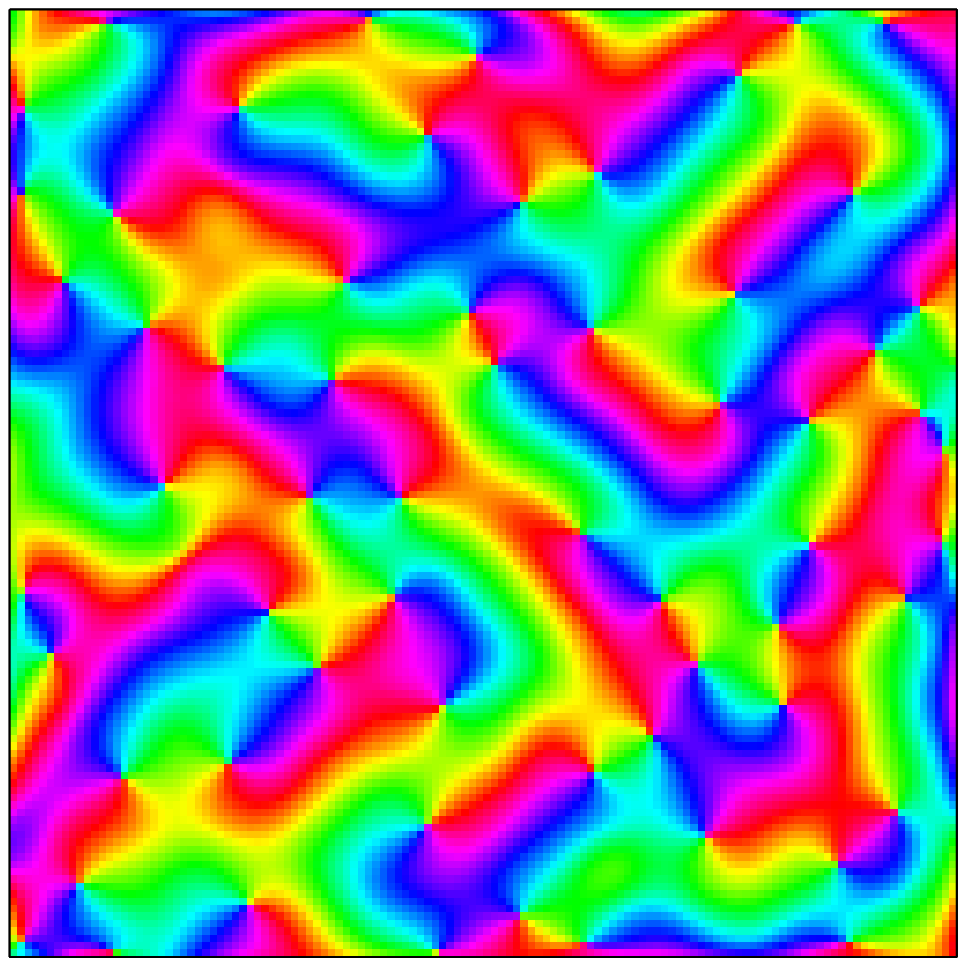} \\[-1ex]
      \rotatebox{90}{\makebox[.1200\textheight][c]{$\beta = 10^{6}$}} &
      \includegraphics[height=.1200\textheight]{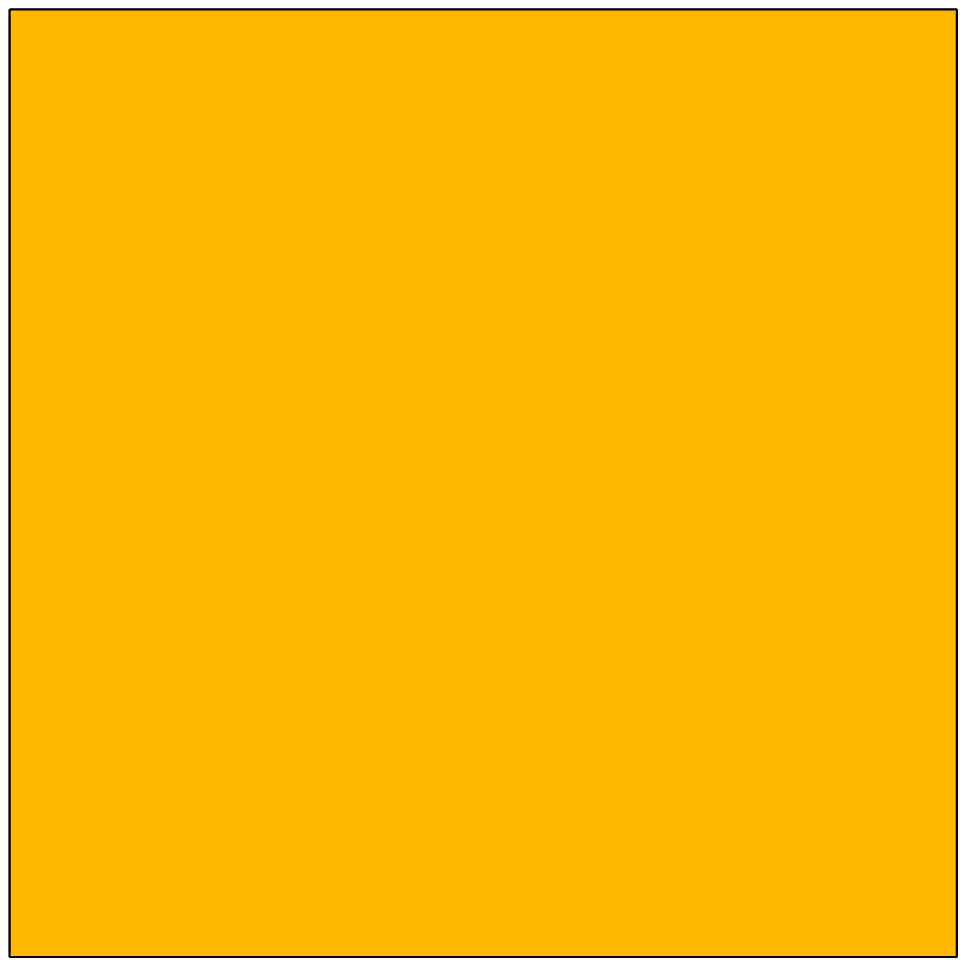} &
      \includegraphics[height=.1200\textheight]{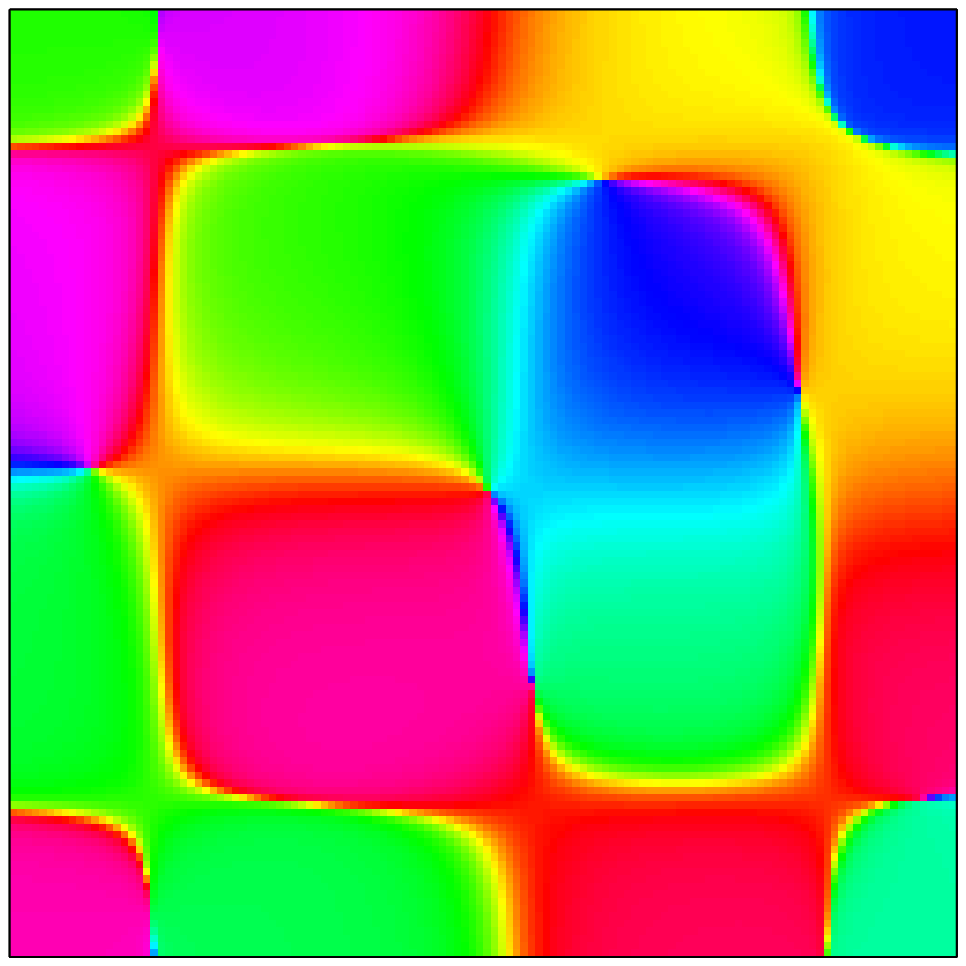} &
      \includegraphics[height=.1200\textheight]{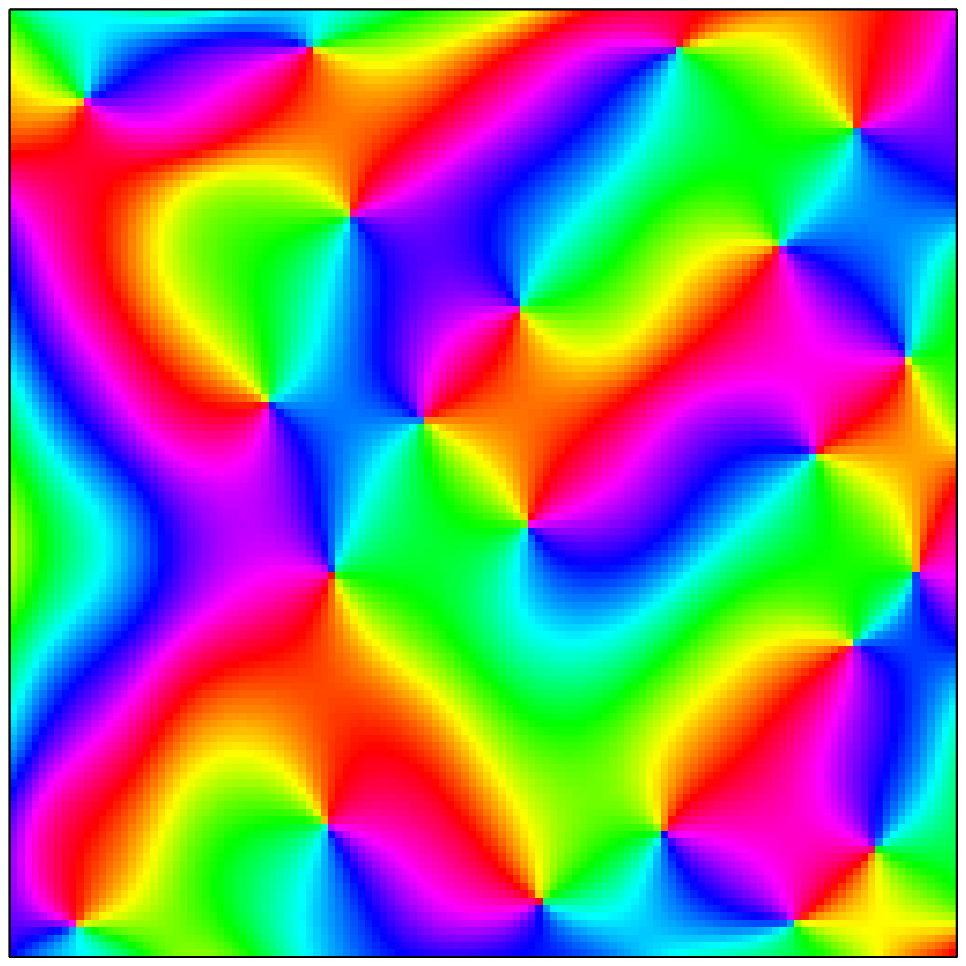} &
      \includegraphics[height=.1200\textheight]{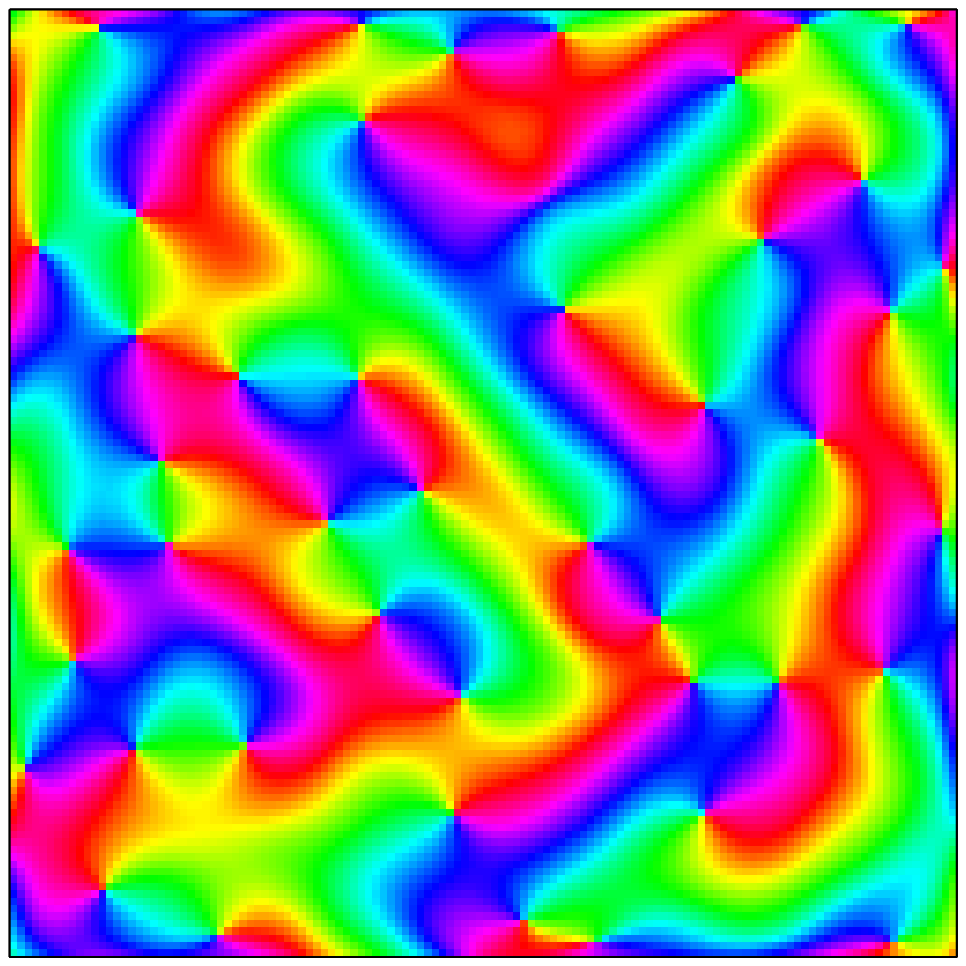} \\[-1ex]
    \end{tabular}
    \caption{As fig.~\ref{f:simul2D:OD} but for orientation maps.}
    \label{f:simul2D:OR}
  \end{center}
\end{figure}
            
\begin{figure}
  \begin{center}
    \begin{tabular}{@{}c@{\hspace{0.5cm}}c@{\hspace{0.5cm}}c@{\hspace{0.5cm}}c@{\hspace{0.5cm}}c@{}}
      & $p = 1$ & $p = 2$ & $p = 3$ & $p = 4$ \\
      \rotatebox{90}{\makebox[.1200\textheight][c]{$\beta = 10^{-1}$}} &
      \includegraphics[height=.1200\textheight]{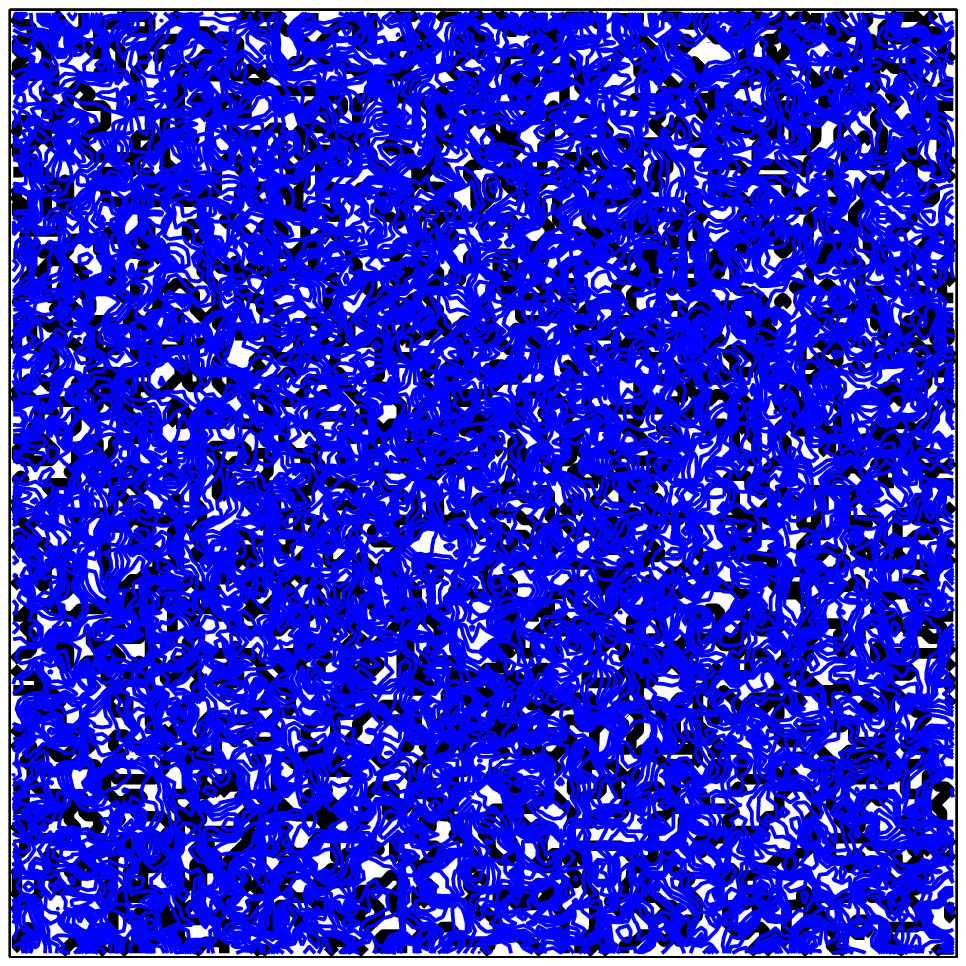} &
      \includegraphics[height=.1200\textheight]{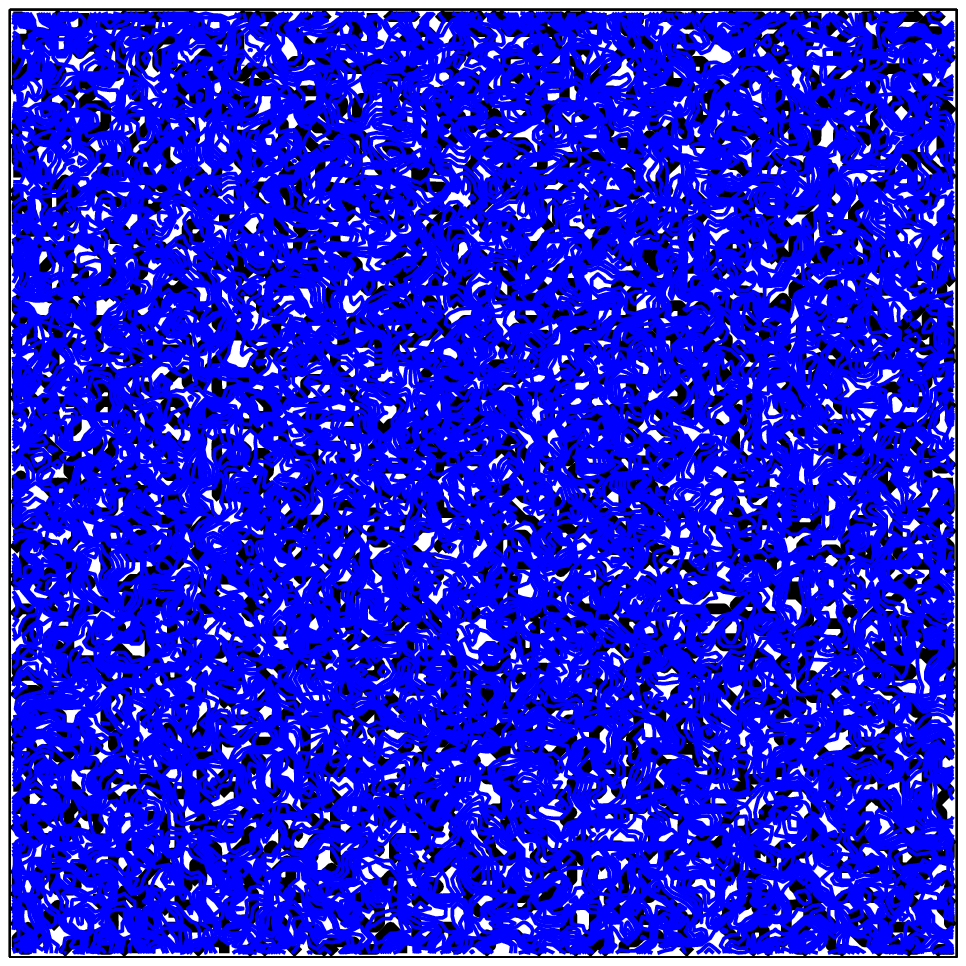} &
      \includegraphics[height=.1200\textheight]{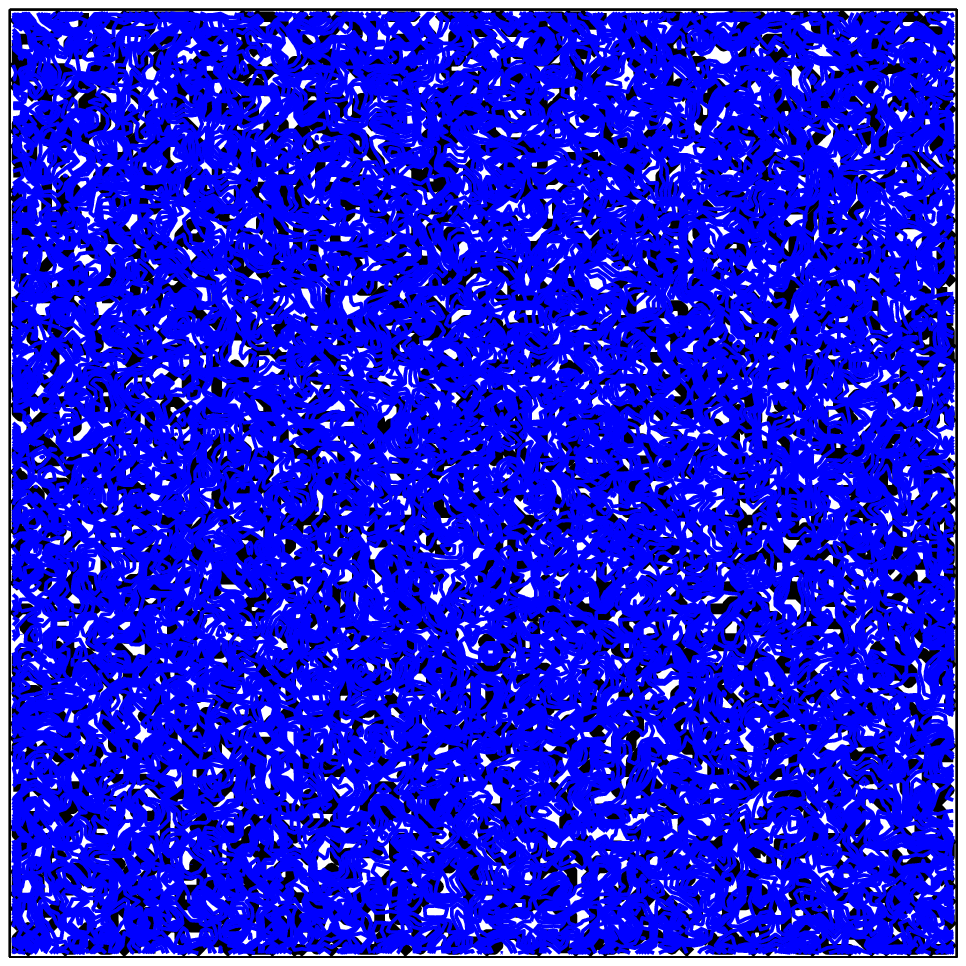} &
      \includegraphics[height=.1200\textheight]{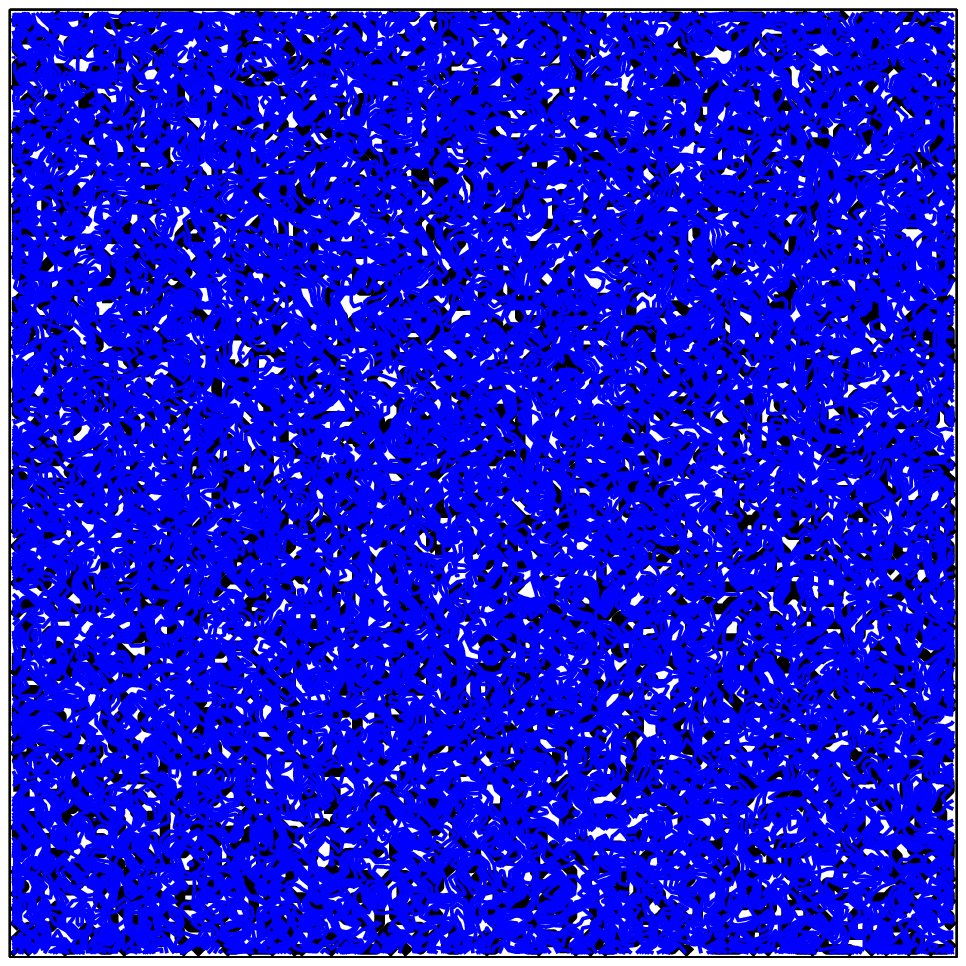} \\[-1ex]
      \rotatebox{90}{\makebox[.1200\textheight][c]{$\beta = 10^{0}$}} &
      \includegraphics[height=.1200\textheight]{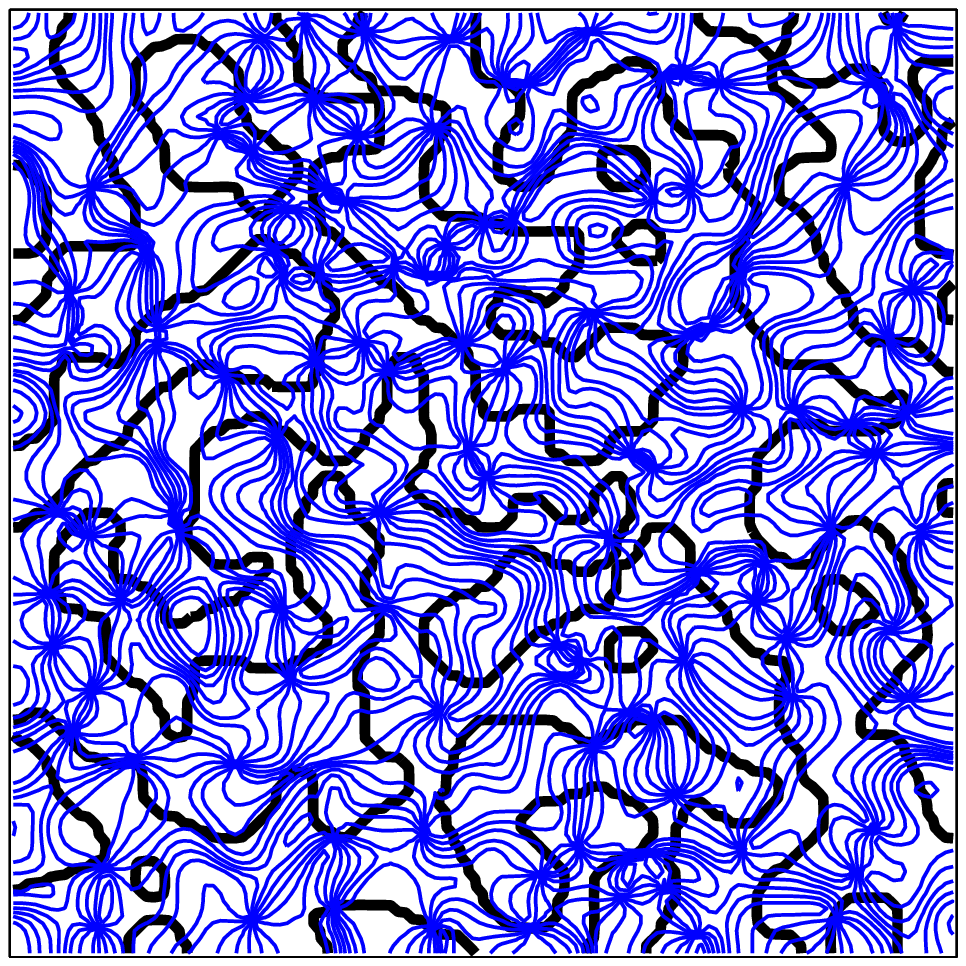} &
      \includegraphics[height=.1200\textheight]{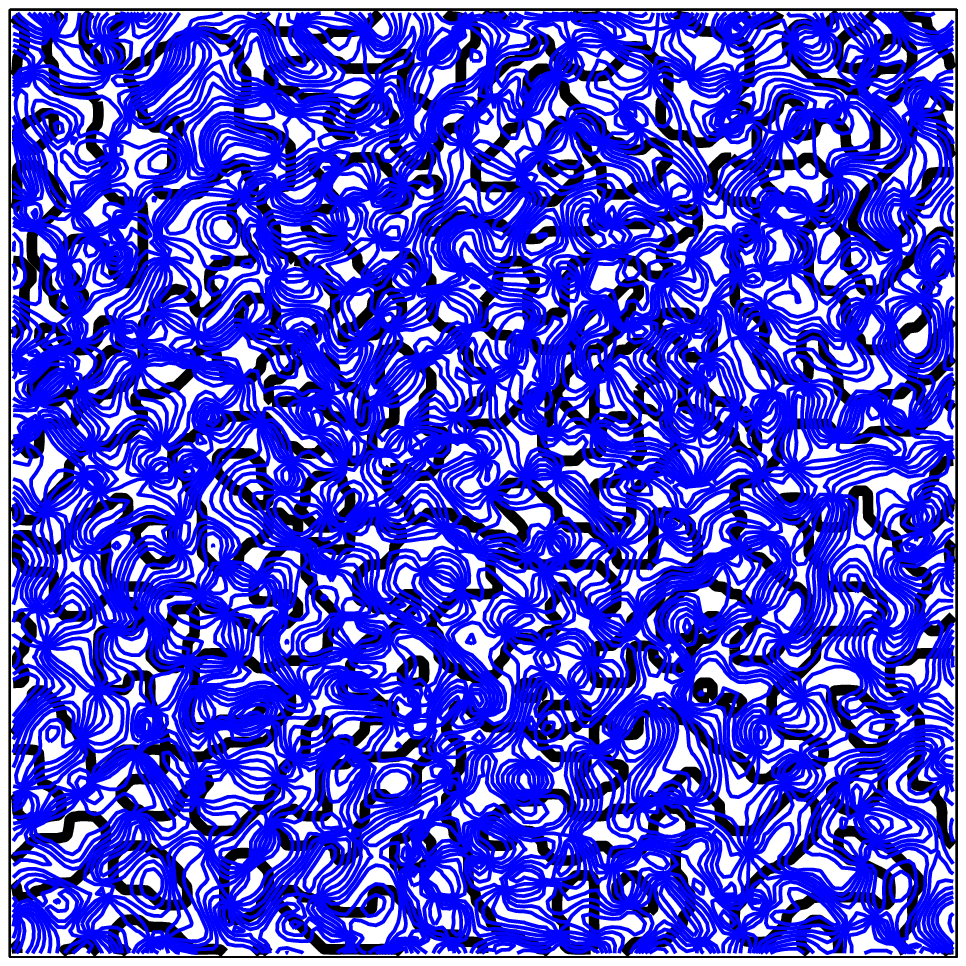} &
      \includegraphics[height=.1200\textheight]{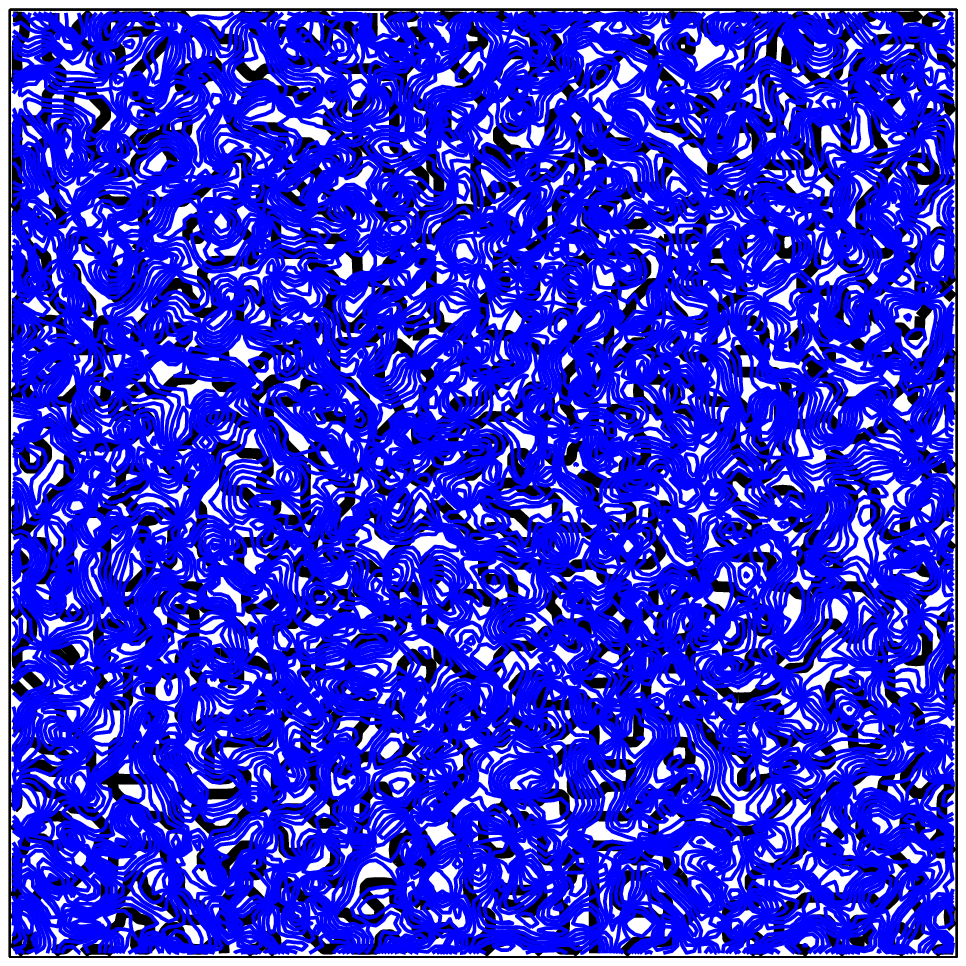} &
      \includegraphics[height=.1200\textheight]{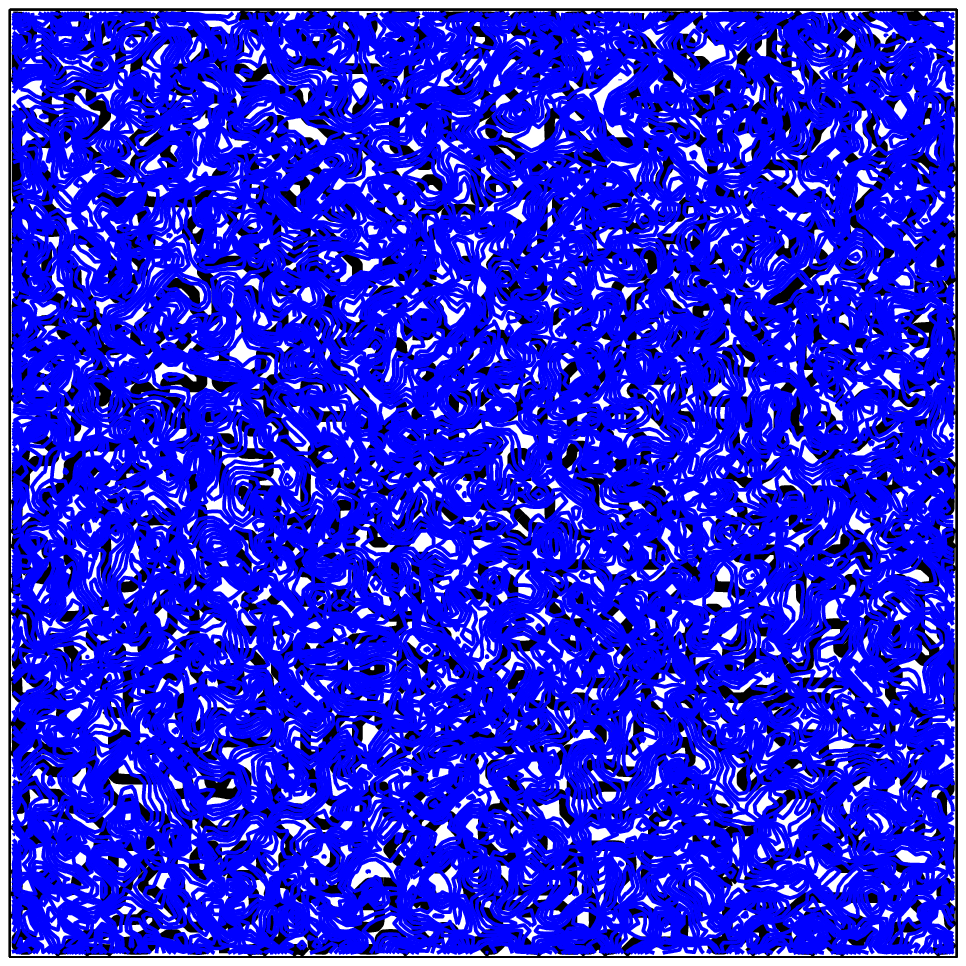} \\[-1ex]
      \rotatebox{90}{\makebox[.1200\textheight][c]{$\beta = 10^{1}$}} &
      \includegraphics[height=.1200\textheight]{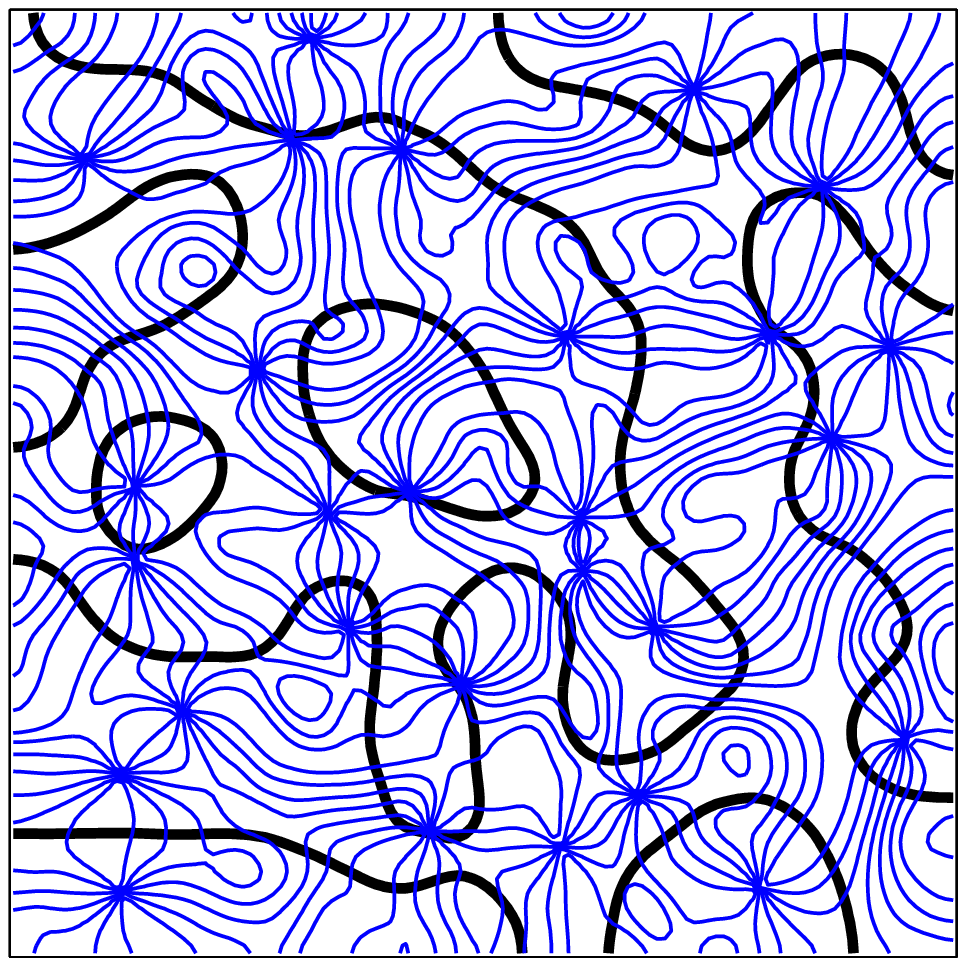} &
      \includegraphics[height=.1200\textheight]{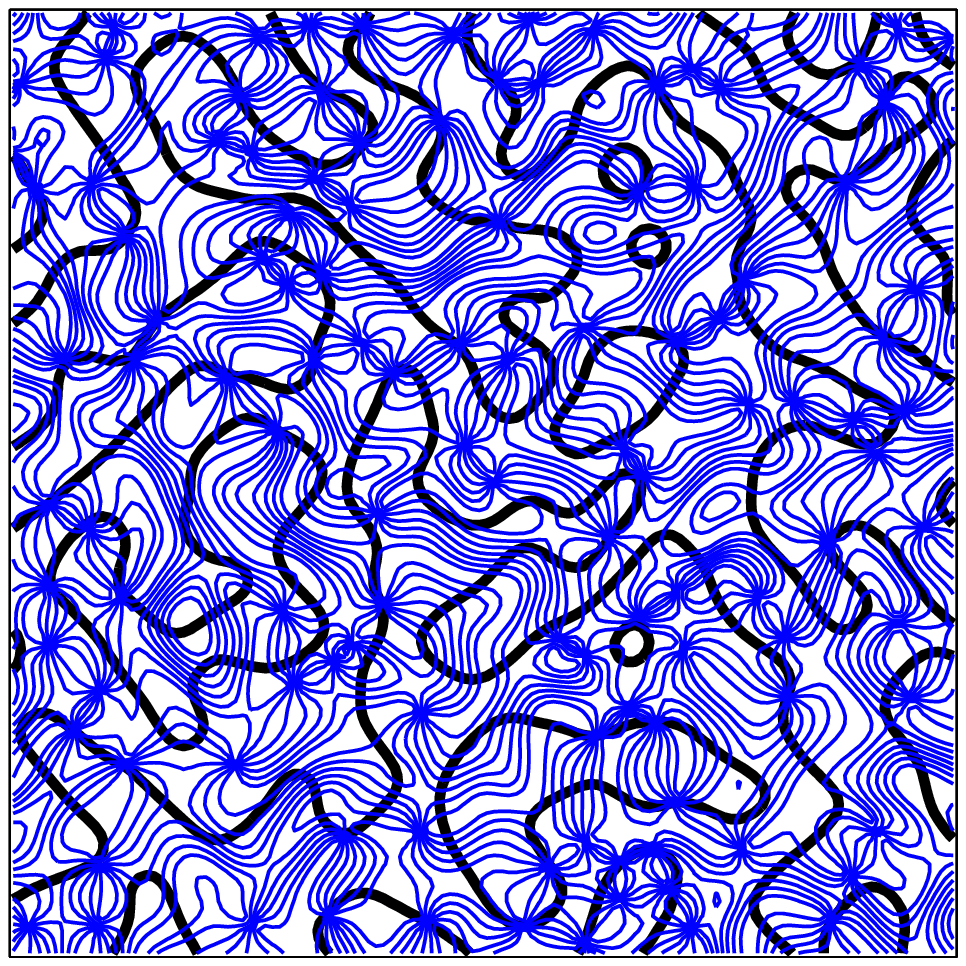} &
      \includegraphics[height=.1200\textheight]{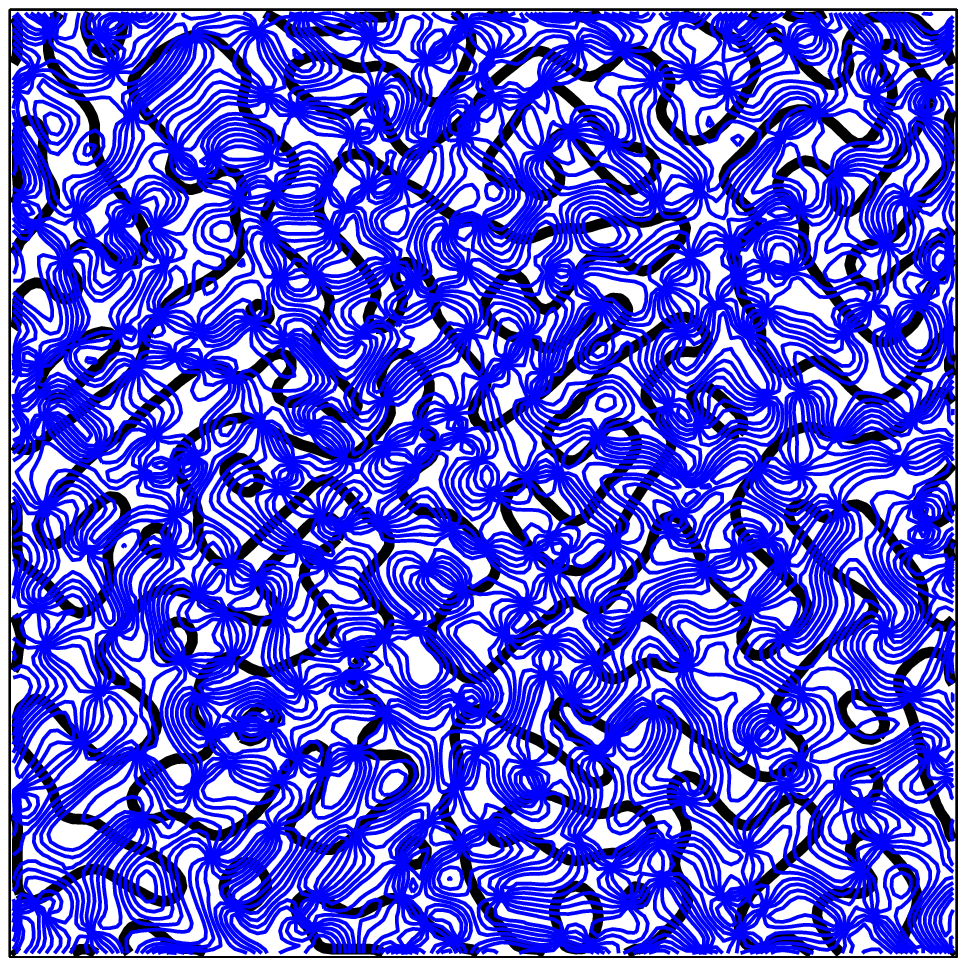} &
      \includegraphics[height=.1200\textheight]{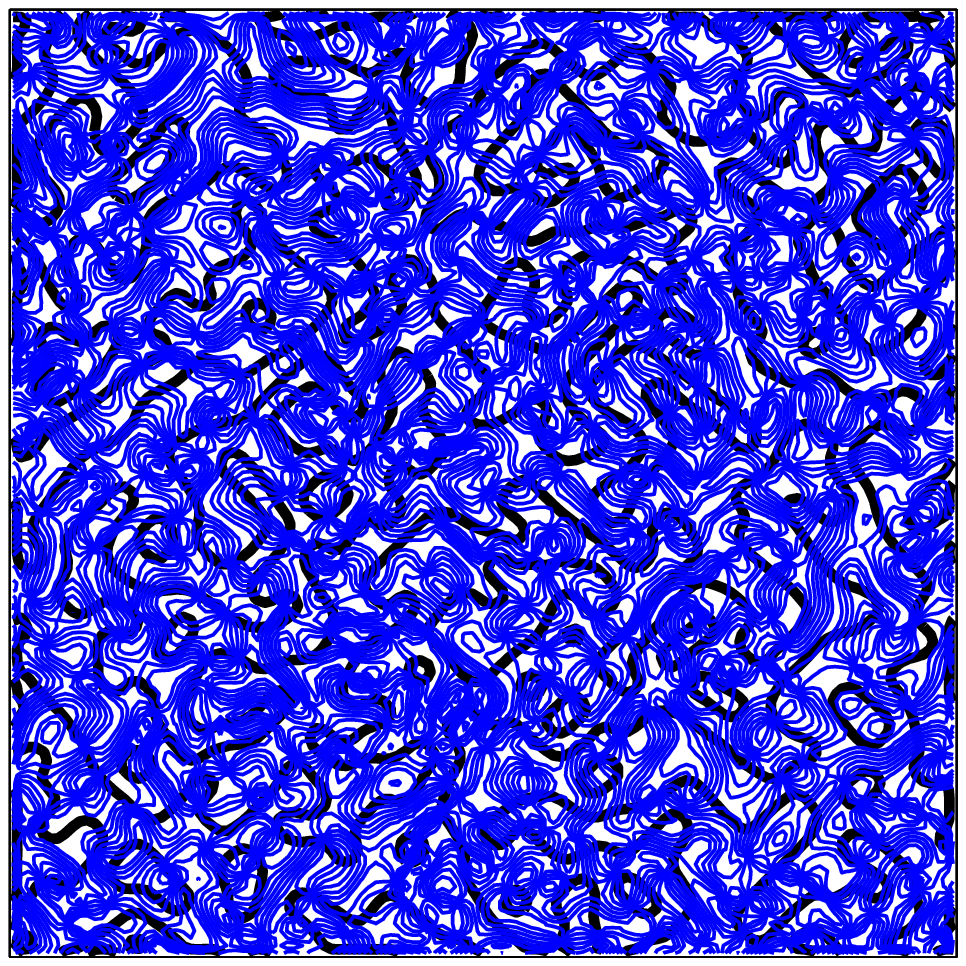} \\[-1ex]
      \rotatebox{90}{\makebox[.1200\textheight][c]{$\beta = 10^{2}$}} &
      \includegraphics[height=.1200\textheight]{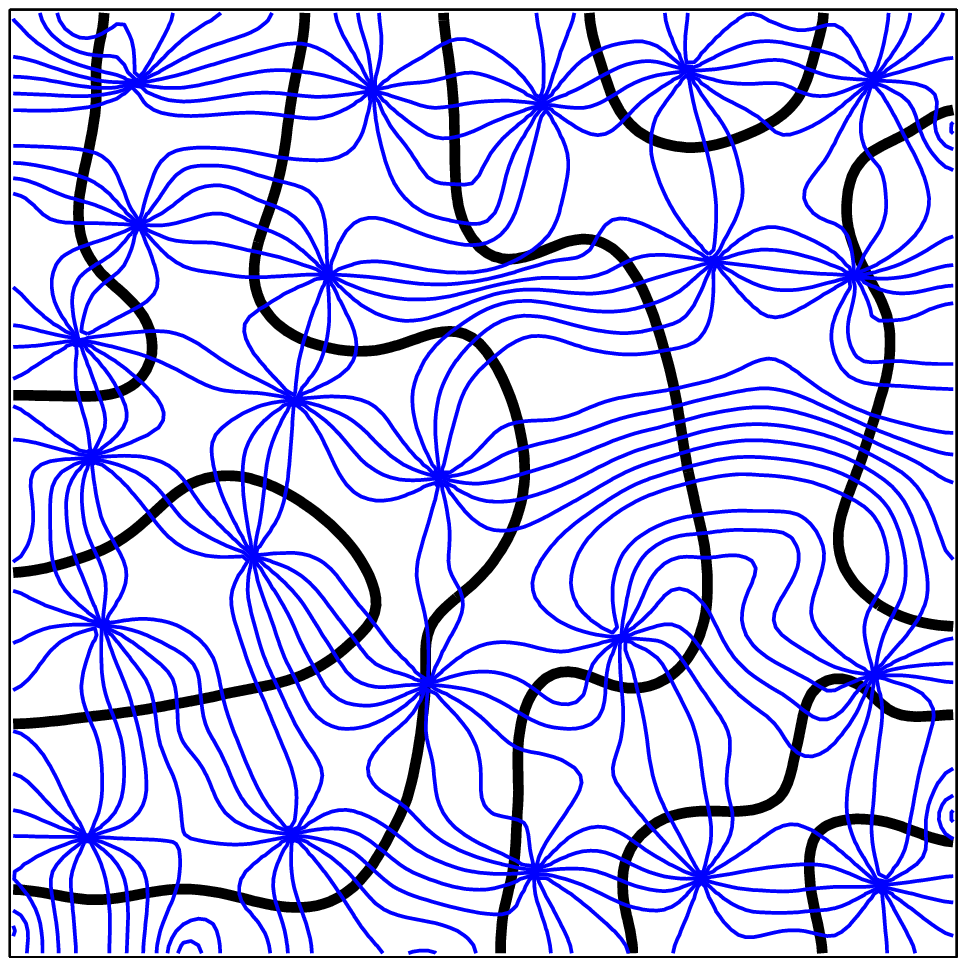} &
      \includegraphics[height=.1200\textheight]{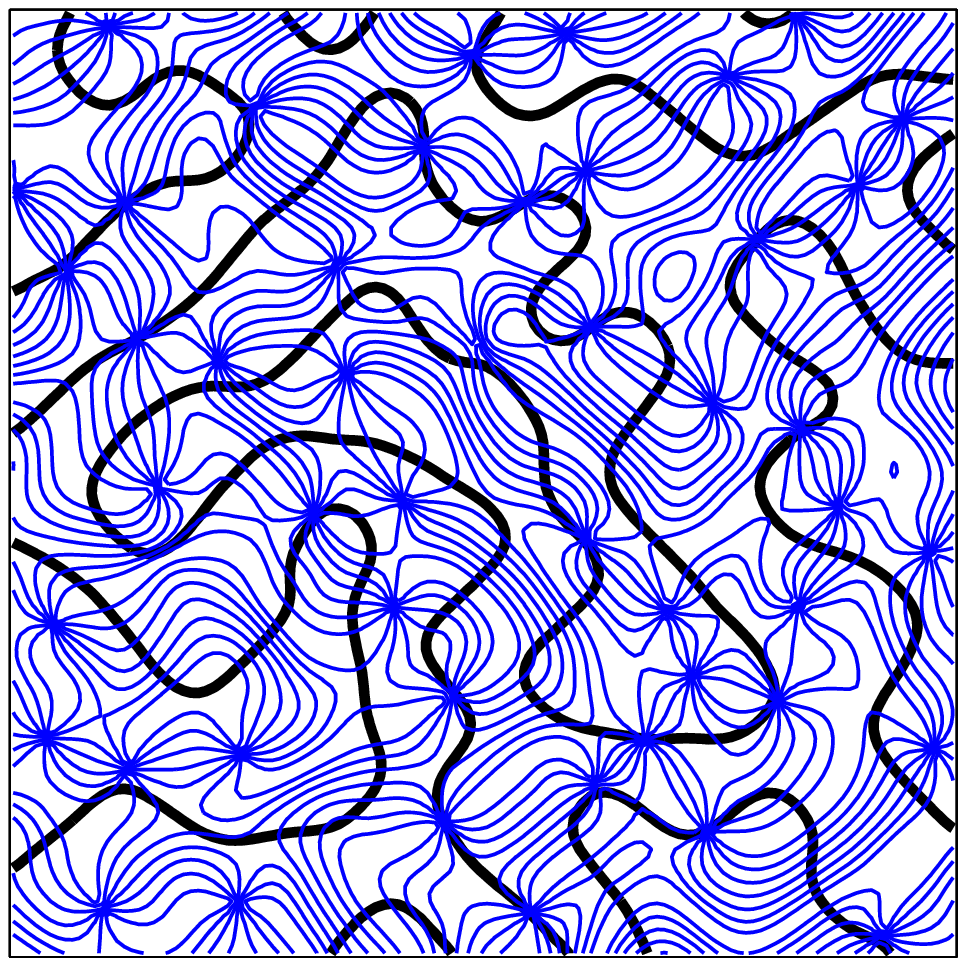} &
      \includegraphics[height=.1200\textheight]{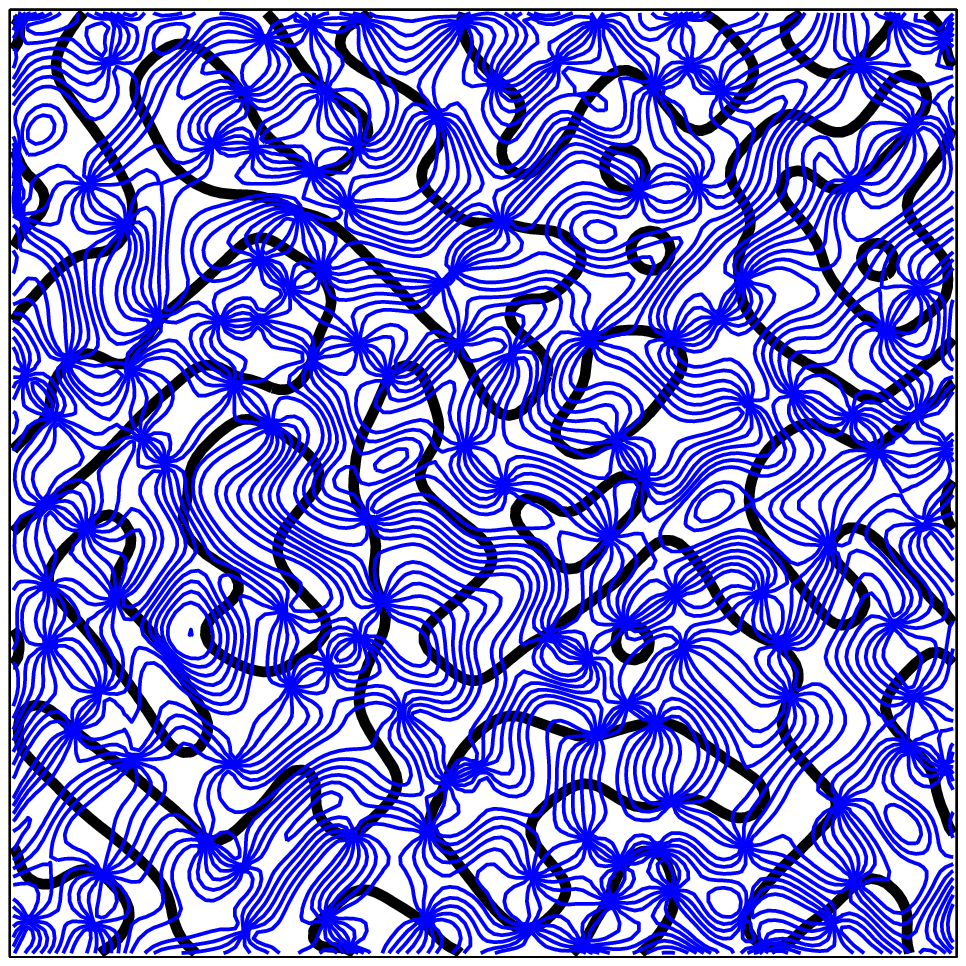} &
      \includegraphics[height=.1200\textheight]{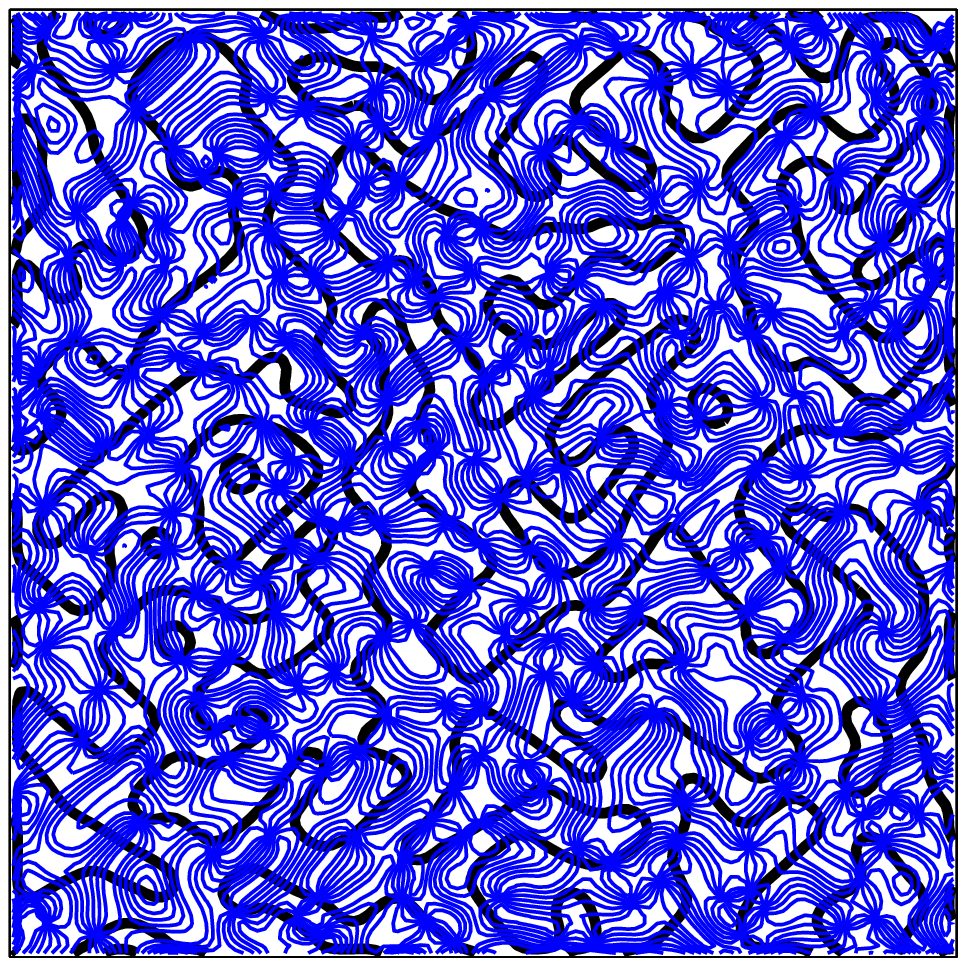} \\[-1ex]
      \rotatebox{90}{\makebox[.1200\textheight][c]{$\beta = 10^{3}$}} &
      \includegraphics[height=.1200\textheight]{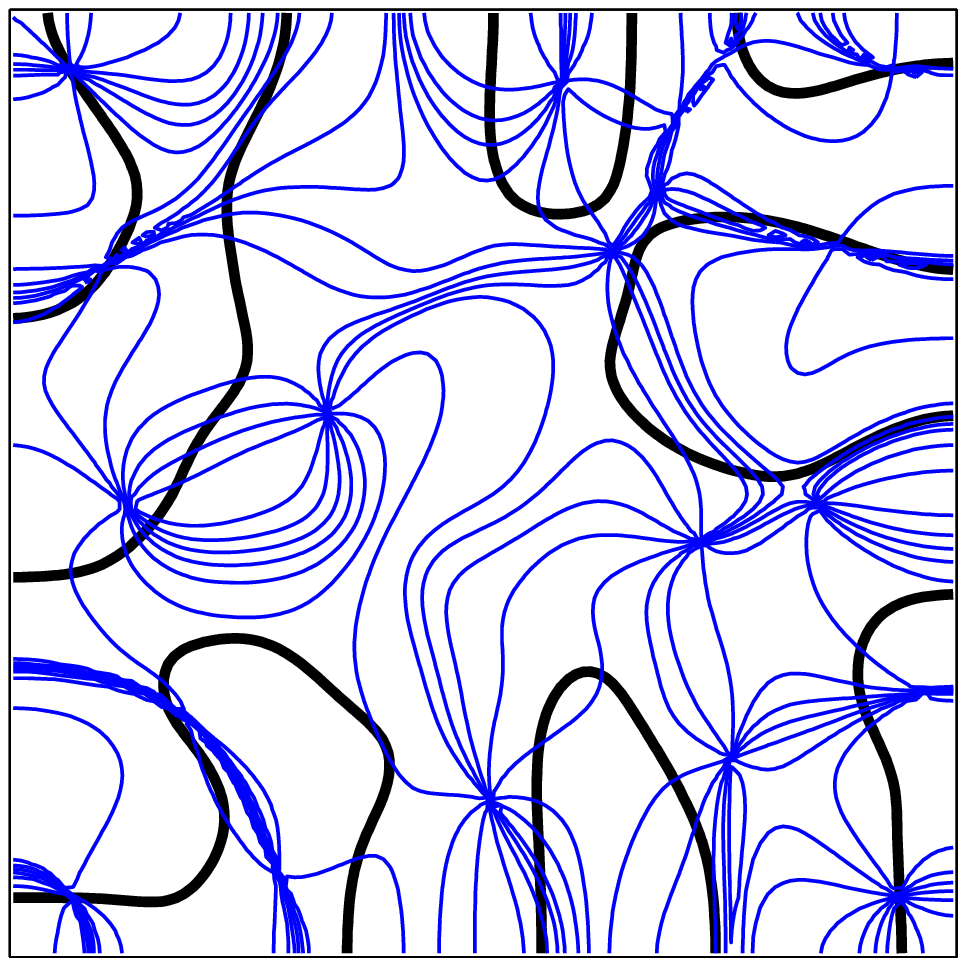} &
      \includegraphics[height=.1200\textheight]{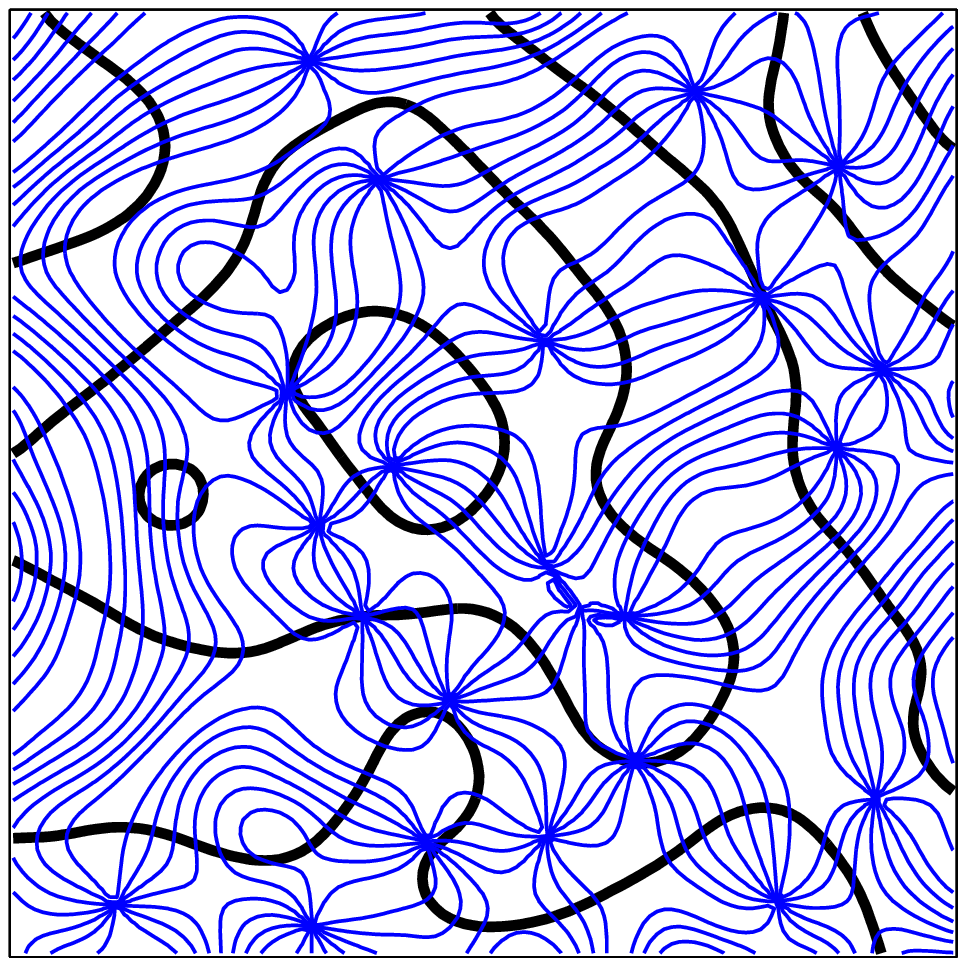} &
      \includegraphics[height=.1200\textheight]{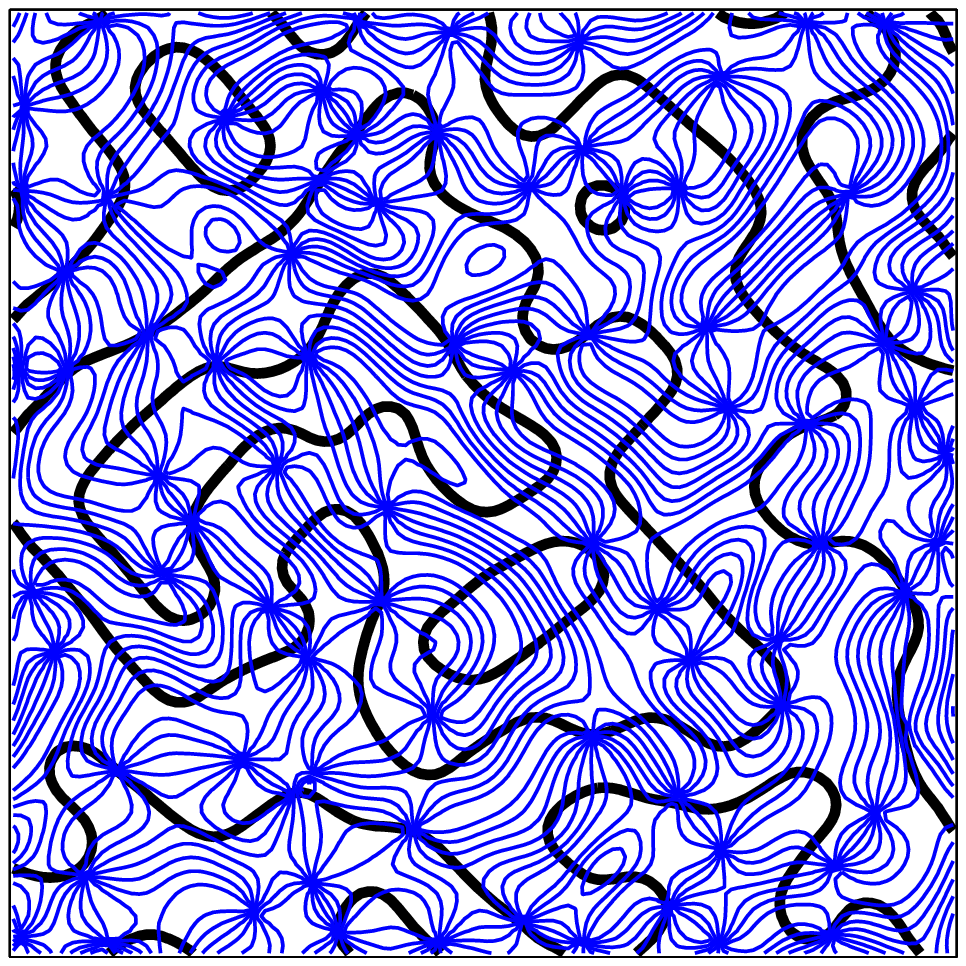} &
      \includegraphics[height=.1200\textheight]{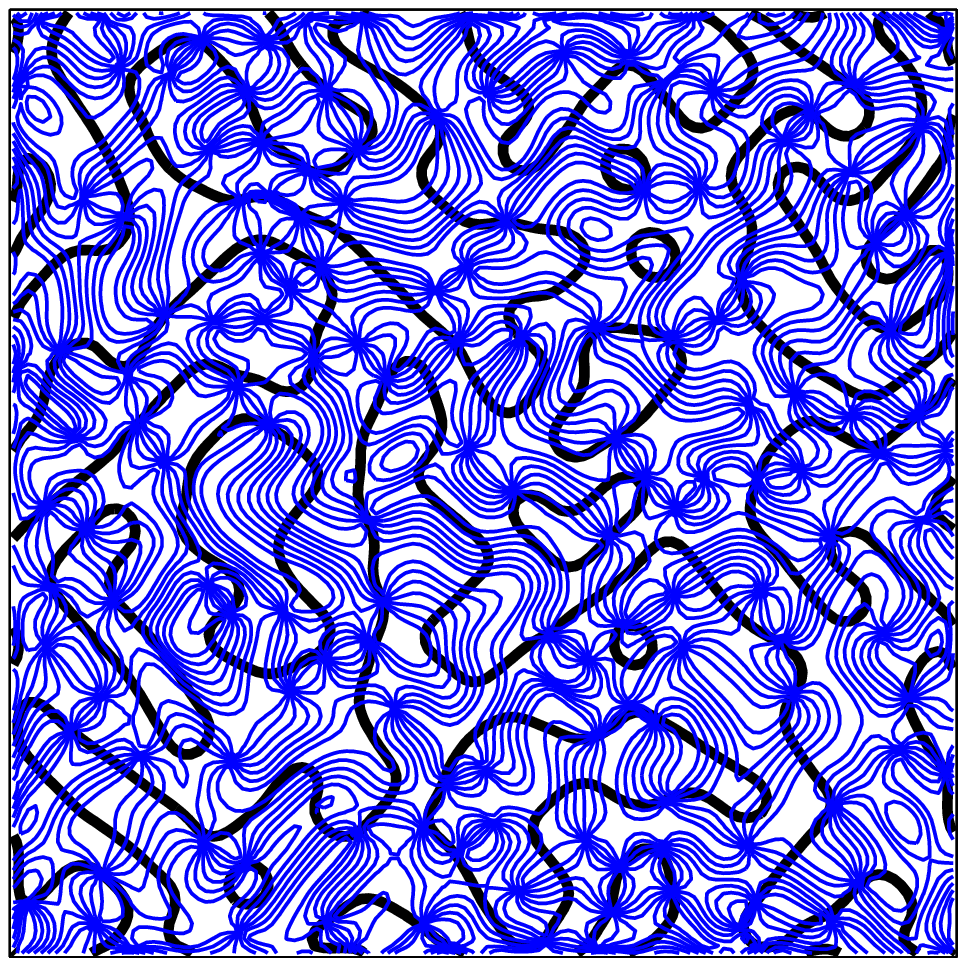} \\[-1ex]
      \rotatebox{90}{\makebox[.1200\textheight][c]{$\beta = 10^{4}$}} &
      \includegraphics[height=.1200\textheight]{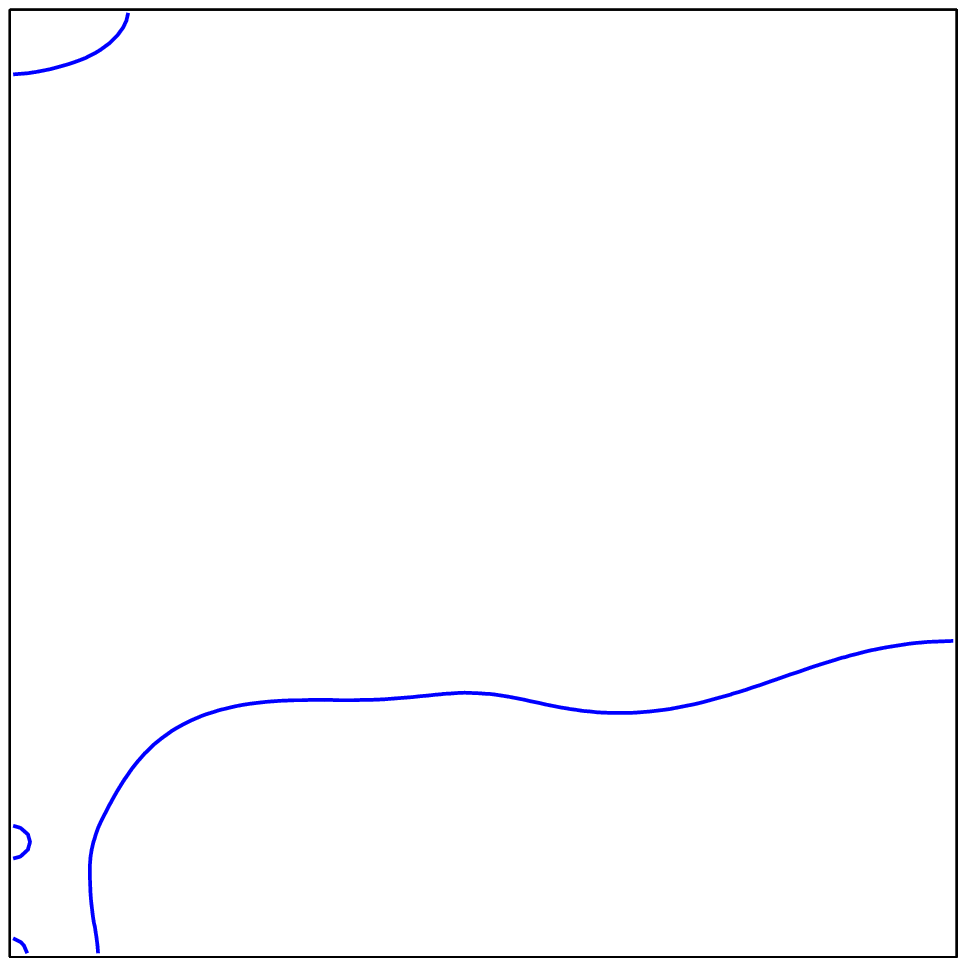} &
      \includegraphics[height=.1200\textheight]{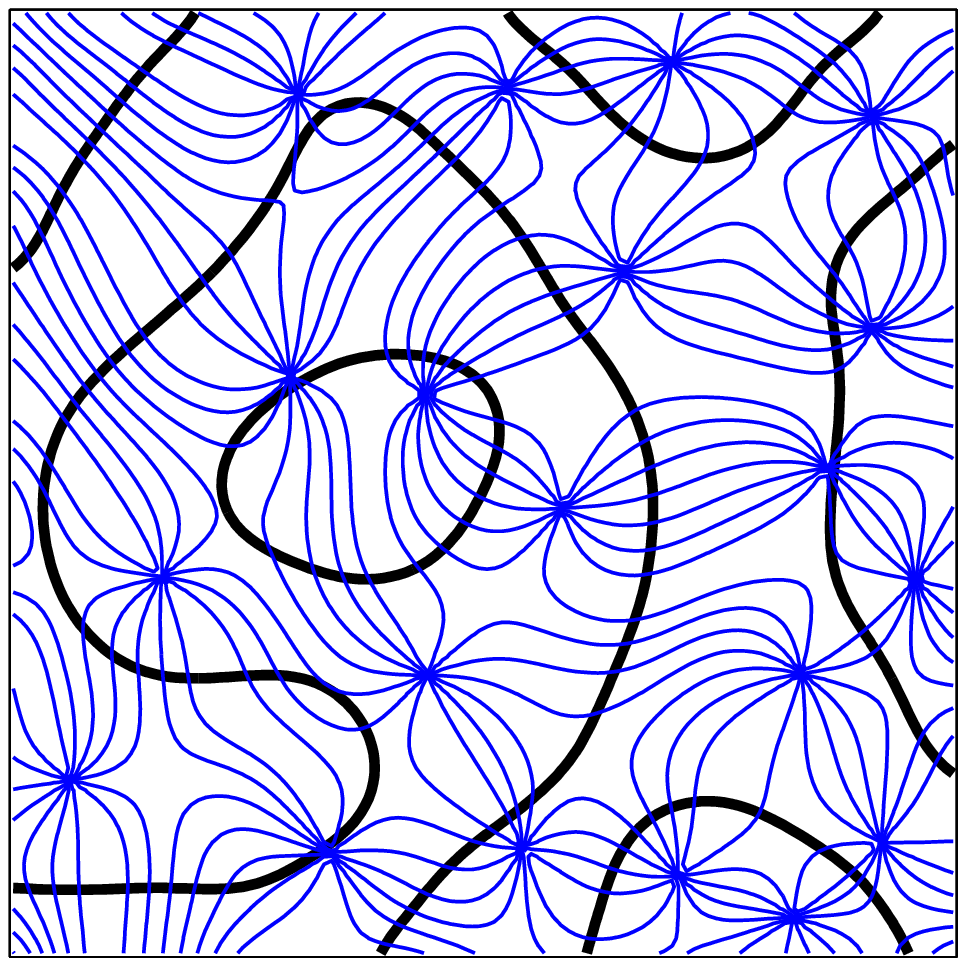} &
      \includegraphics[height=.1200\textheight]{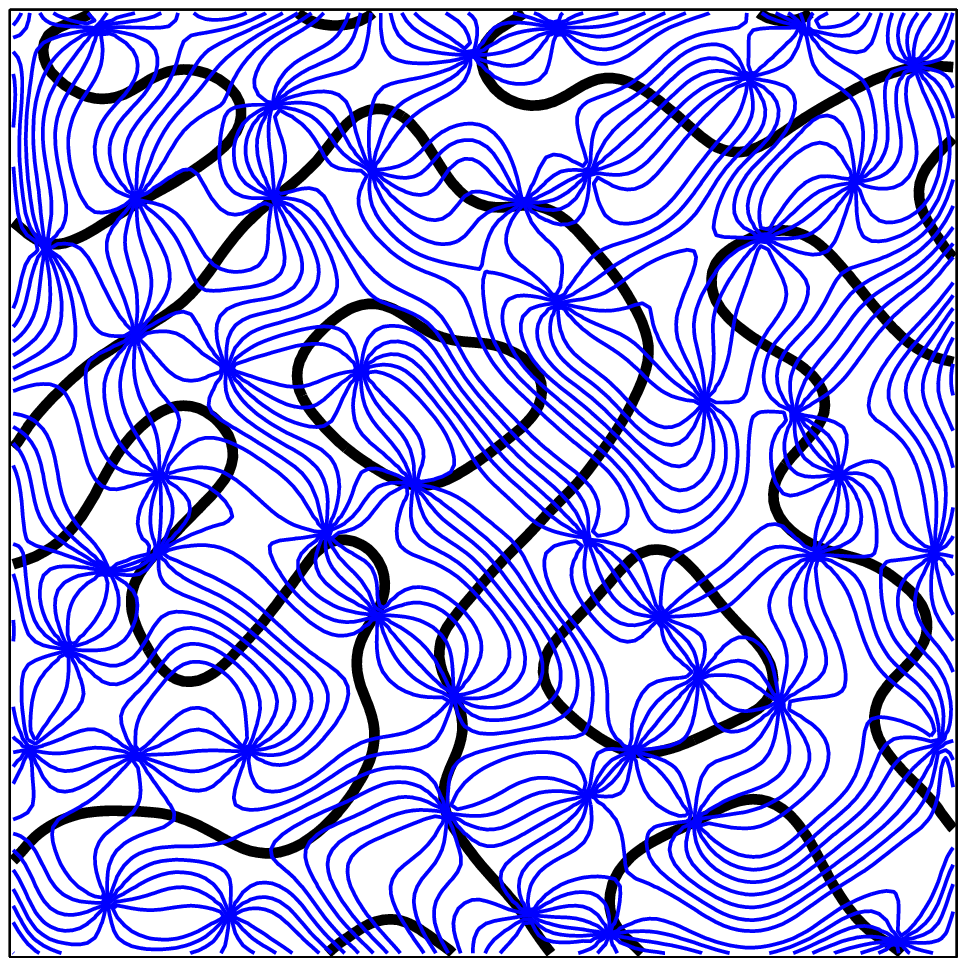} &
      \includegraphics[height=.1200\textheight]{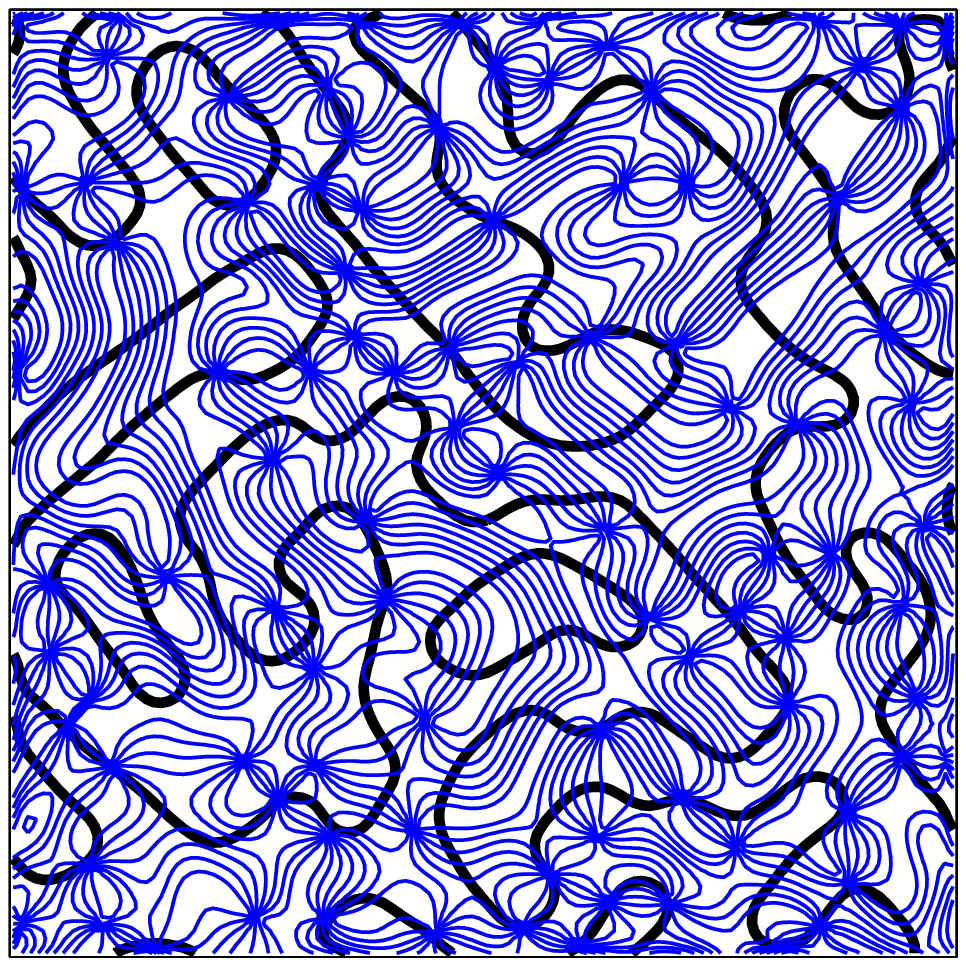} \\[-1ex]
      \rotatebox{90}{\makebox[.1200\textheight][c]{$\beta = 10^{5}$}} &
      \includegraphics[height=.1200\textheight]{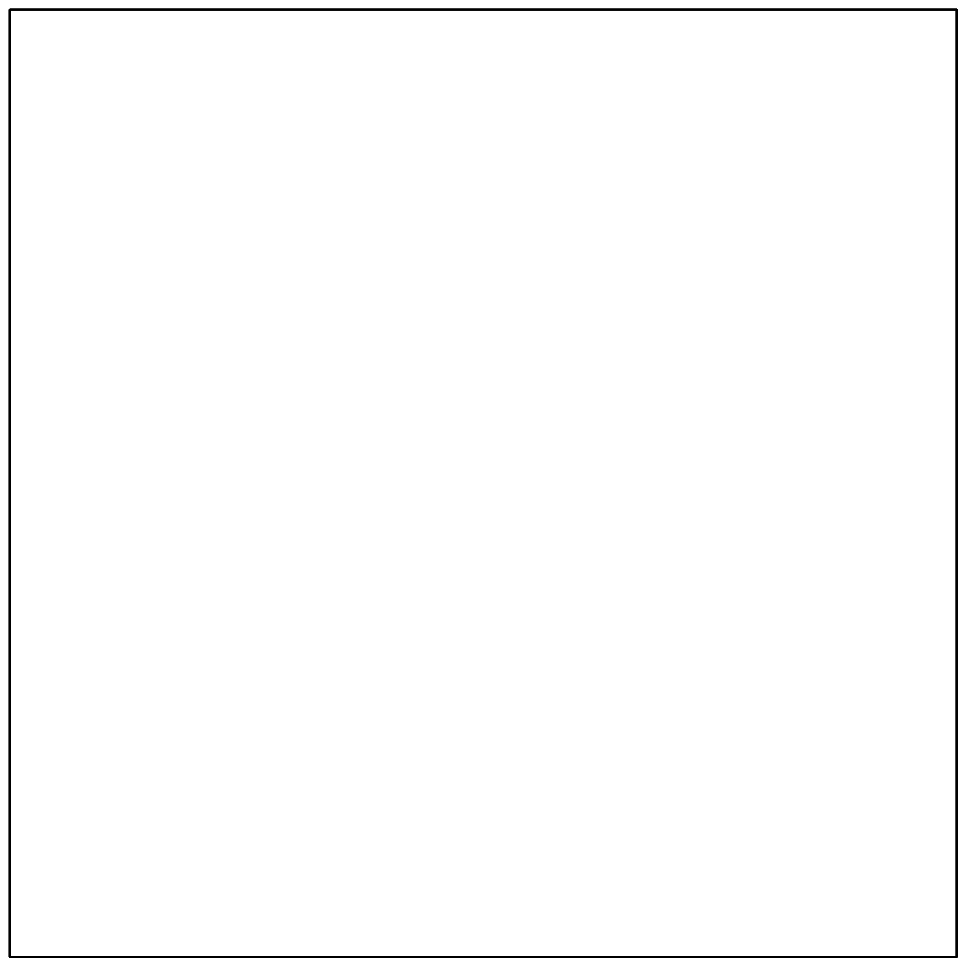} &
      \includegraphics[height=.1200\textheight]{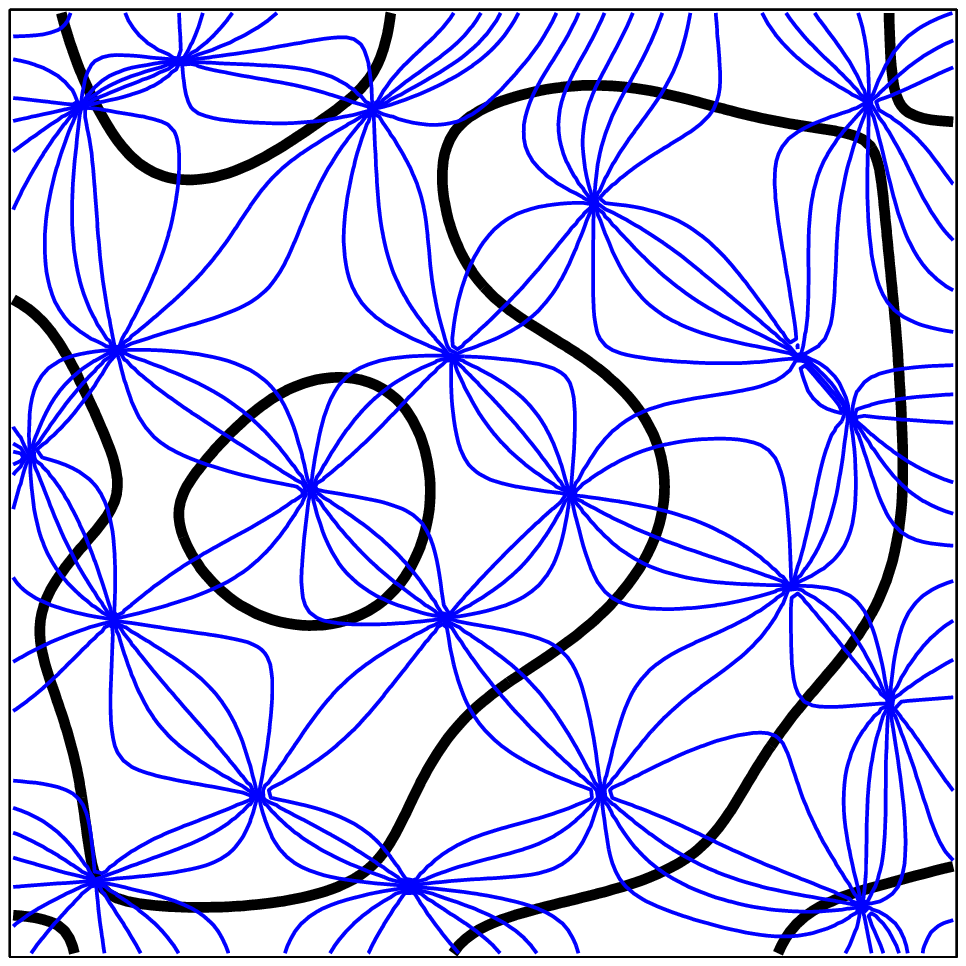} &
      \includegraphics[height=.1200\textheight]{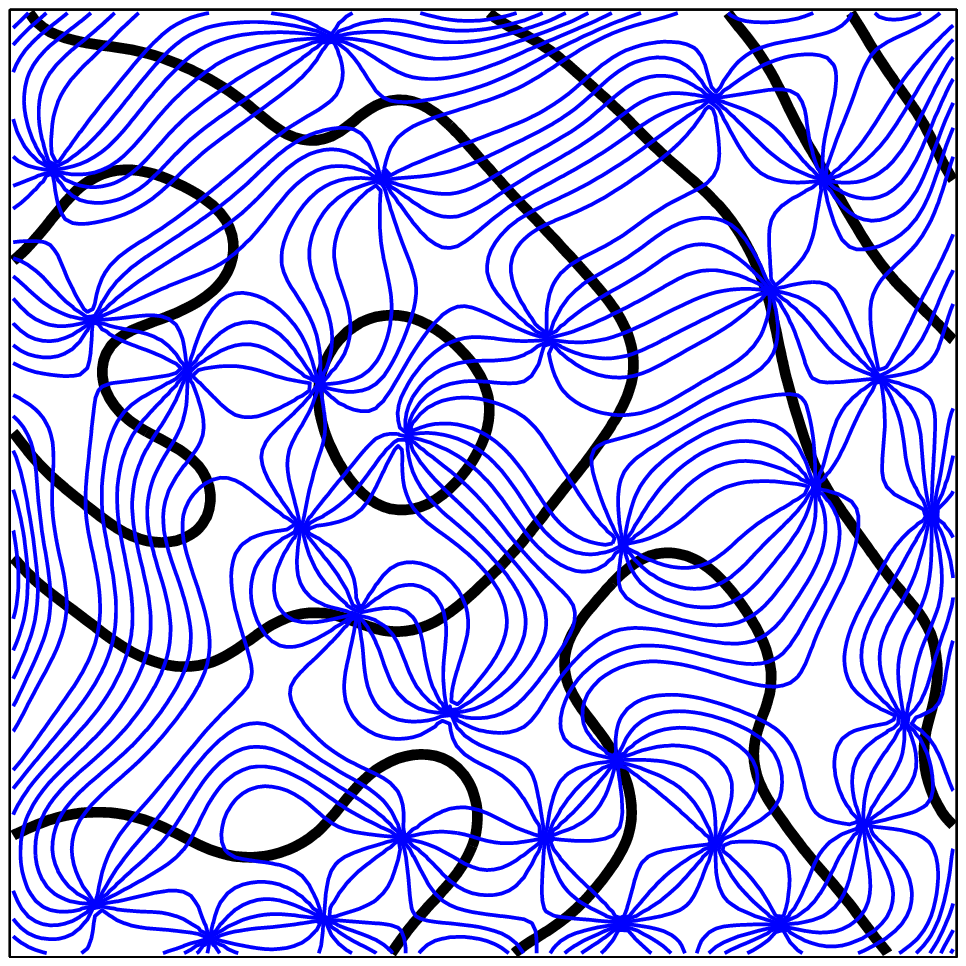} &
      \includegraphics[height=.1200\textheight]{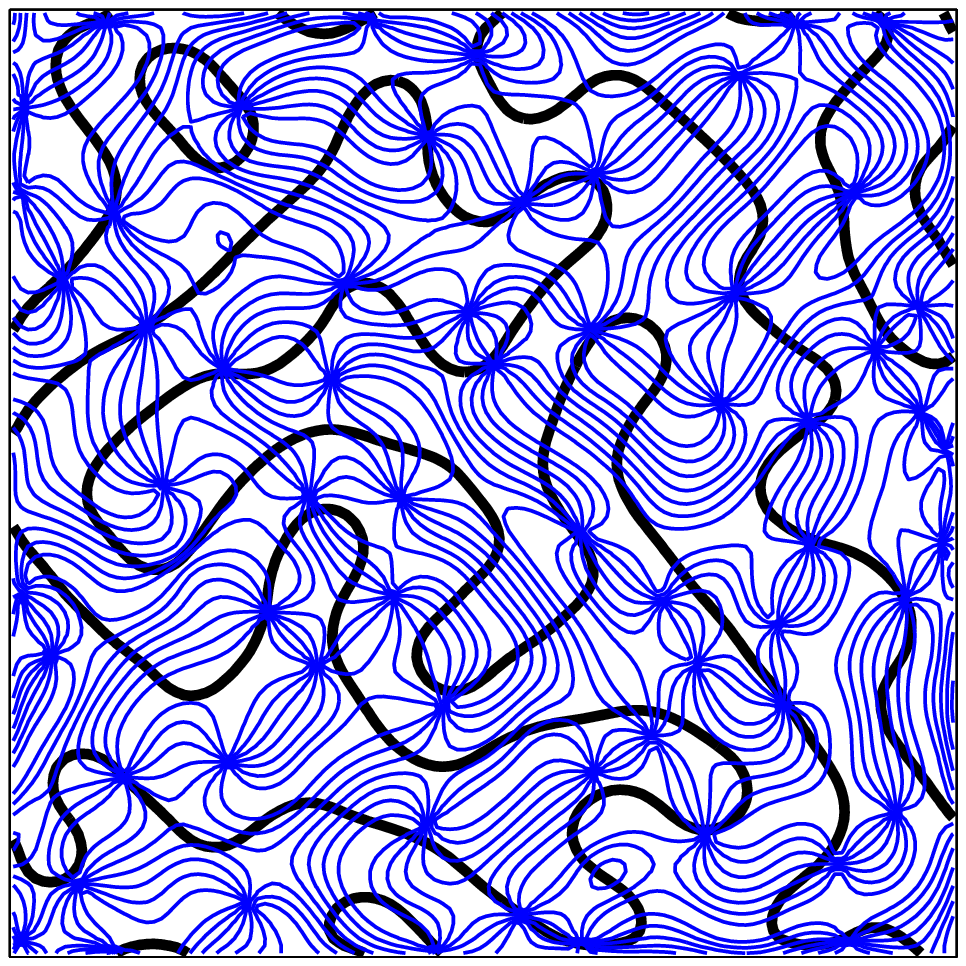} \\[-1ex]
      \rotatebox{90}{\makebox[.1200\textheight][c]{$\beta = 10^{6}$}} &
      \includegraphics[height=.1200\textheight]{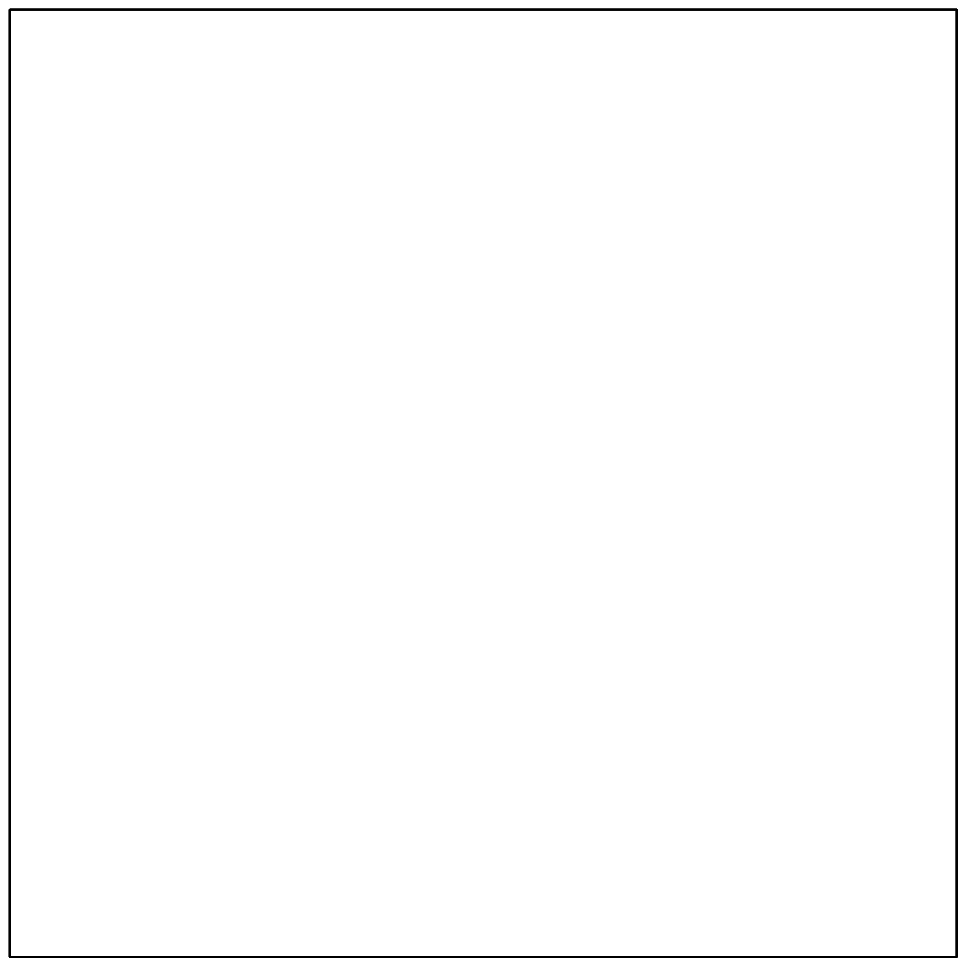} &
      \includegraphics[height=.1200\textheight]{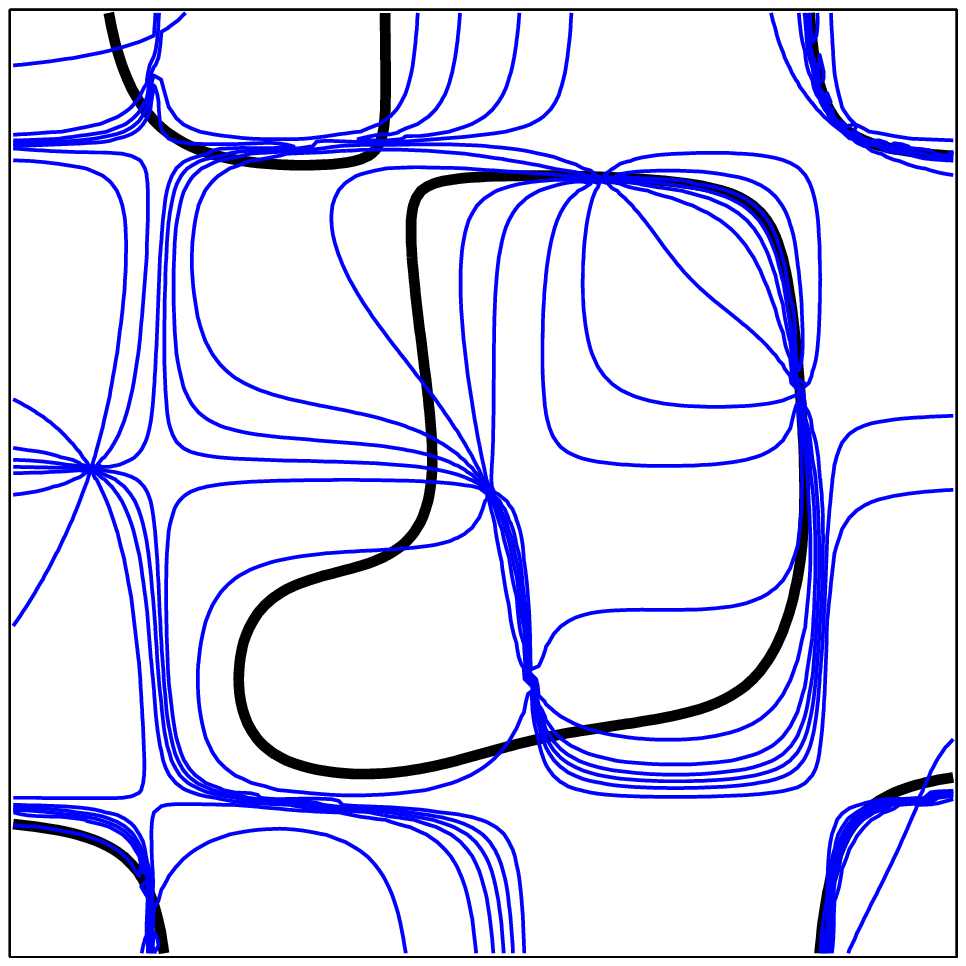} &
      \includegraphics[height=.1200\textheight]{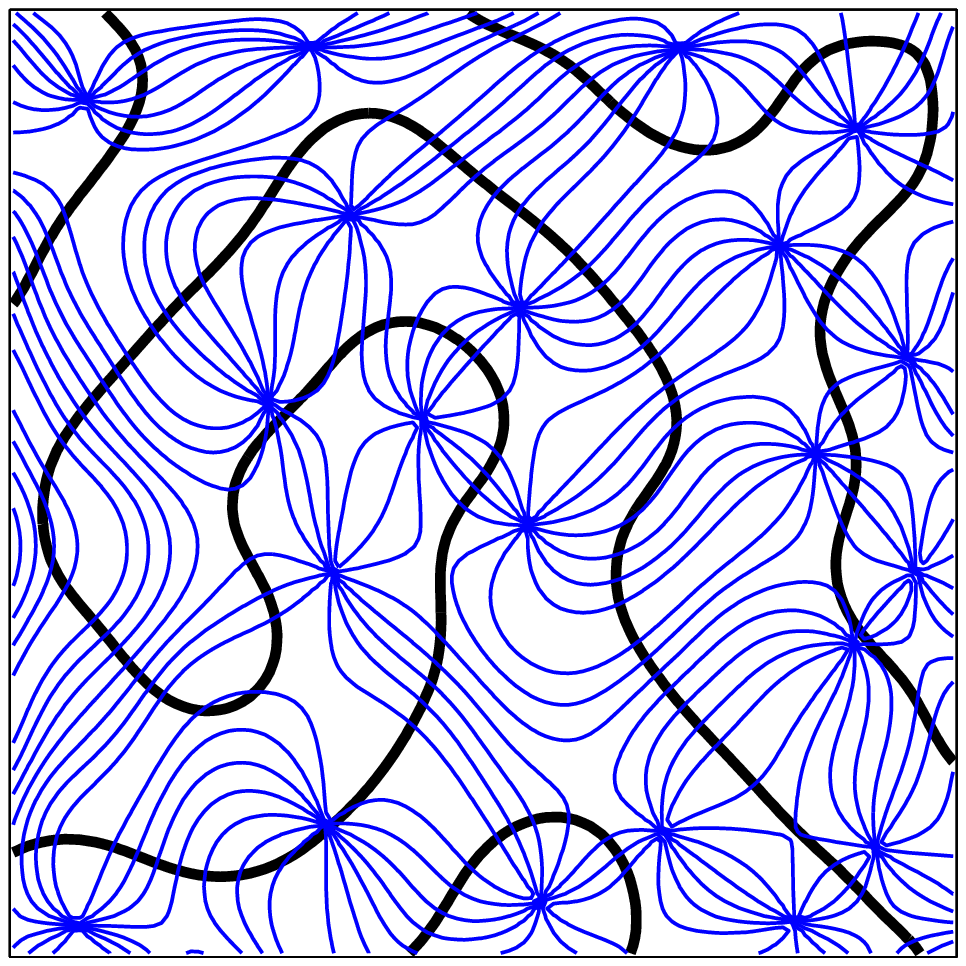} &
      \includegraphics[height=.1200\textheight]{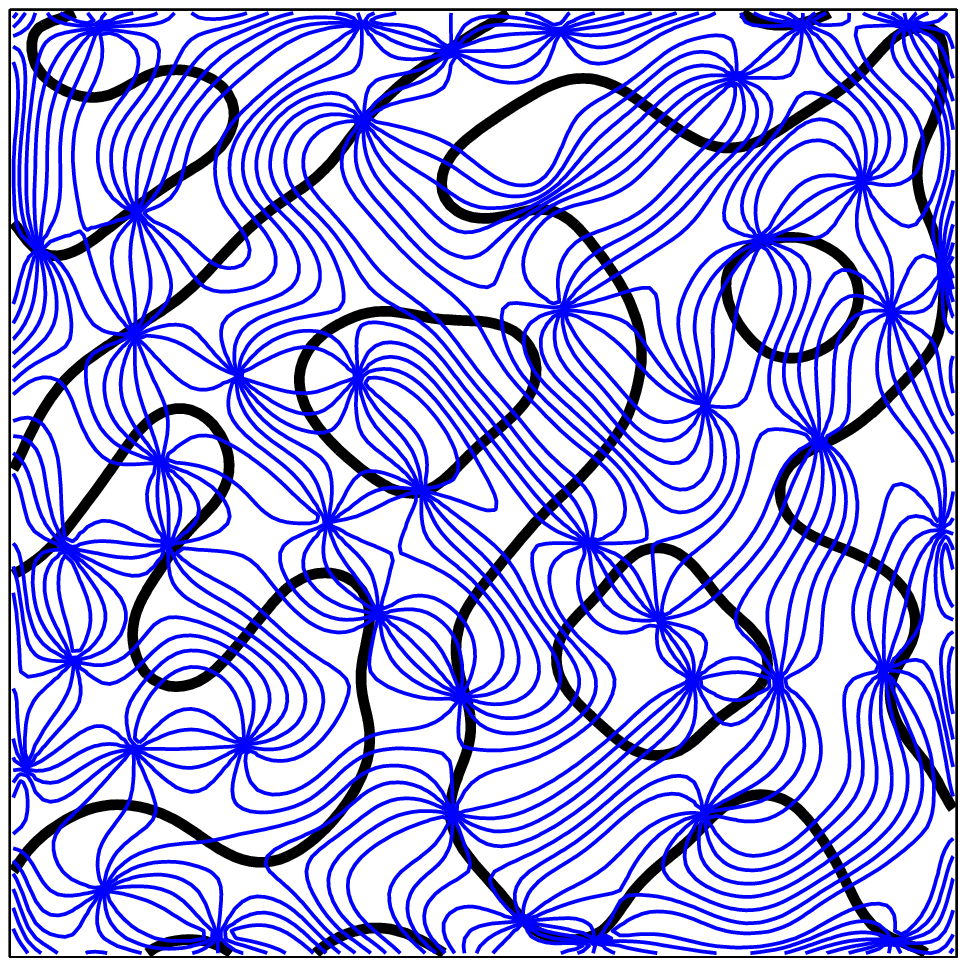} \\[-1ex]
    \end{tabular}
    \caption{As fig.~\ref{f:simul2D:OD} but for contours of ocular dominance and orientation.}
    \label{f:simul2D:ODOR}
  \end{center}
\end{figure}

\section{Discussion and related work}
\label{e:discussion}

\paragraph{Dynamical system analysis of continuous elastic nets}

The most important question in the context of cortical map modelling is, given a stencil, to characterise the emerging net in terms of stripe width, orthogonality of maps, etc. We have given a qualitative explanation of the characteristics of the GEN for different stencil orders and values of $\beta$ (for periodic b.c.). Although our analysis of the frequency spectrum of the stencils is rigorous, the resulting nets depend on the combined effect of the fitness and tension terms. In particular, the interrelations that arise between different maps (such as the crossing angles and matching of gradients in 2D nets) cannot be explained by the tension term alone since different dimensions are independent in our formulation of the tension term. The analysis is simpler if one considers continuous elastic nets, by considering the gradient descent optimisation as a dynamical system $\dot{\y} = -\sigma\, \nabla{E}$ and then linearising it. Some analyses of this type have been done for the continuous first-order 1D net in a 2D space of VF and OD with periodic b.c.\ (self-organising map: \citealp{Obermay_92a}; elastic net: \citealp{WolfGeisel98a,Scherf_99a,Wolf_00a,Dayan01a}). Here one posits an equilibrium solution (inferred by symmetry considerations) consisting of a topographic net unsegregated in ocular dominance, linearises the dynamics at the bifurcation of OD and then studies the eigenvalues of the corresponding eigenfunctions of the linearised operator (which are sine waves). It is then possible to derive a relation for the value $\sigma_{\text{c}}$ at which the OD bifurcation occurs and the wave frequency $k_{\text{c}}$ that dominates.

This analysis is more difficult for the discrete case and depends on the specific stencil used in the tension term. From the power spectrum, the forward-difference family is the closest analog of the continuous case, and thus analysis of the continuous case version should correspond to the limit $M \rightarrow \infty$ in this case.
Linear combinations (in stencil space) of forward-difference stencils should yield to this analysis too. However, for stencils of other types (e.g.\ central-difference ones) is it not obvious how to transform the net into a continuous one. Also, some effects that do not appear at map formation time but in the (much) longer term, such as the loop elimination, may not be explained by the linear approach; and the understanding of the geometric relations between maps requires analysing 2D nets. A theoretical understanding of this topic is left for future research.

\paragraph{Connection with Hebbian models}

Although originally introduced in the TSP context, most work in the
elastic net has been carried out in the cortical map literature.
\citet{Yuille_96a} (see also \citealp{Yuille90a}) sought to unify two types of cortical map models:
the elastic net, and Hebbian models such as that of
\citet{Miller_89a}. They did this by defining a generalised deformable
model whose energy depends on two types of variables: a set of binary
variables \V\ that define constraints on the solution (that each
neuron maps to a unit in either the left or the right eye but not
both, and that each eye unit maps to a neuron); and the centroids \Y.
Eliminating \V\ from the energy leads to the elastic net energy, while
eliminating \Y\ leads (through several approximations, such as a mean
field one) to a model closely related to that of \citet{Miller_89a}.
\citet{Dayan93a} went further and suggested the use of a more general
quadratic form \SS\ for the tension term as we do (though he did not
derive a training algorithm). He then derived an equation similar to
\eqref{e:ann:zerograd} and, based in Yuille et al.'s link to the Hebbian
model, suggested an explicit correspondence between $-\beta \SS$ and
the matrix \B\ in \citet{Miller_89a}, which represents the lateral
connection pattern.

However, the link between both models is rather indirect because of
the approximations involved with the Hebbian model, and also with the
elastic net: in order to obtain the latter from the generalised
deformable model, one needs to assume the binary constraints \V\ hold.
But the OD variable in elastic net simulations is continuous, the
assignment of probability to each centroid is soft (at least for
$\sigma > 0$) and the number of centroids differs in general from that
of stimuli values. This makes it hard to compare directly the lateral
connection matrices in both models. Instead, we propose to compare our \D\ matrix with \B\ (see below).

The effort in comparing elastic and Hebbian matrices was partly
motivated by the fact that in appearance the elastic net defined a
limited lateral connection pattern, where only nearest-neighbouring
cells were connected. We have shown that this nearest-neighbour
connection pattern is exactly equivalent to a certain Mexican-hat
lateral connection pattern (proposition~\ref{p:evenised-fwddiff}). Further, we can draw a formal comparison
between the GEN and the model of
\citet{Miller_89a}. In the GEN, the \D\ matrix
represents the convolution with the neuron preferences (represented by
the centroids \Y); this is then squared and its terms added to obtain the
tension term. In the model of \citet{Miller_89a}, the role of the
convolution with the cortical neuron preferences is played by the
matrix \B. This suggests we should compare \D\ with \B\ (or their
stencils). Also note that in the model of \citet{Miller_89a} the
matrix that is actually taken as a Mexican hat is not the true lateral
connection matrix \B\ but the rather more abstract $(\I - \B)^{-1}$
(which \citet{Miller_89a} call ``cortical interaction function;''
it operates not on the neuron preferences but on the net input
that a neuron receives, roughly equivalent to our \X). However, using
a Mexican hat in $(\I - \B)^{-1}$ results in highly oscillating
stencils in \B, which---given our results---may be one reason why this
model does not appear to reproduce robustly either the orthogonality between OD
and OR or the location of pinwheels at the centre of OD columns
\citep{ErwinMiller98a}.

\paragraph{Diffusion equations}

Consider the continuous version of the tension term of eq.~\eqref{e:tension-cont} for the $p$th-order derivative, with periodic b.c.\ and in 1D for simplicity:
\begin{equation*}
  E_{\text{t}}(y) = \frac{\beta}{2} \int{(y^{(p)}(u))^2 \, du}
\end{equation*}
on the class of periodic functions $y$ with continuous derivatives up to order $2p$. Consider the variational problem of finding the net $y$ in that class that minimises the energy $E = E_{\text{f}}(y) + E_{\text{t}}(y)$ ($E_{\text{f}}(y)$ being the continuous version of the fitness term). The first variation of $E_{\text{t}}(y)$ with respect to $y$ can be found by standard variational calculus.
Let $\delta y$ be an arbitrary function of the same type as $y$. Then:
\begin{equation*}
  \frac{\delta E_{\text{t}}(y)}{\delta y} = \lim_{h \rightarrow 0}{\frac{E_{\text{t}}(y+h\delta y) - E_{\text{t}}(y)}{h}} = \beta \int{y^{(p)}(u) \delta y^{(p)}(u) \, du} = (-1)^p \beta \int{y^{(2p)}(u) \delta y(u) \, du} = (-1)^p \beta y^{(2p)}
\end{equation*}
where we have integrated by parts $p$ times and taken the limit over a test function $\delta y$ that is nonzero near one point $u$ and zero elsewhere.
We now minimise the corresponding Euler-Lagrange equation by gradient descent and obtain:
\begin{equation*}
  \frac{1}{\sigma} \frac{\partial y}{\partial t} = - \frac{\delta E}{\delta y} = (-1)^{p+1} \beta \frac{\partial^{2p} y}{\partial u^{2p}} - \frac{\delta E_{\text{f}}(y)}{\delta y}
\end{equation*}
where $t$ is the continuous iteration time. Disregarding the fitness term, for constant $\sigma$ and $p = 1$ (the original elastic net) this is a diffusion equation for the net $y$ with diffusion constant $\sigma \beta$. That is, a tension term of the form $\frac{\beta}{2} \int{\norm{\nabla y}^2}$ results in a diffusion term $\nabla^2 y$ that opposes the attraction towards the cities coming from the (nonlinear) fitness term. Note that the elastic net was originally derived from the tea-trade model of topographic map formation \citep{MalsburWillsh77a}, based on the diffusion through the brain of hypothesised chemical markers. The learning equation~\eqref{e:ann:zerograd} can also be seen in this light, with the matrix \SS\ acting as a discretised derivative of order $2p$. However, for order $p > 1$, the differential equation has order $2p > 2$, so it no longer corresponds to diffusion.

Note also the correspondence between the continuous and discrete tension terms: in the continuous case, again integrating by parts we obtain $\int{(y^{(p)})^2} = (-1)^p \int{y y^{(2p)}}$, or using linear operators in a Hilbert space, $(\calD_p y, \calD_p y) = (-1)^p (y, \calD_{2p} y)$.
Correspondingly, in the discrete case we have $\norm{\D^p \y}^2 = \y^T \D^{2p} \y$ by representing the first-order matrix \D\ with an evenised stencil (so \D\ is symmetric).

\paragraph{Differential equations}

Difference schemes have, of course, been used extensively in the approximate solution of ordinary and partial differential equations \citep{GodunovRyaben87a,Samars01a}, e.g.\ when approximating a boundary value problem by finite differences over a mesh. In this context, the main concerns regarding the choice of difference scheme are: accuracy (whether the truncation error is of high order, which will mean a faster convergence), stability (whether the approximate solution converges to the true one as the step size tends to zero) and sparsity (whether the scheme has few nonzero coefficients). For example, the central difference (quadratic error) is preferable to the forward difference (linear error) in terms of accuracy, and equivalent in terms of sparsity (both have two nonzero coefficients). Sparsity is desirable to minimise computation time and memory storage, but has to be traded off with accuracy, since a high-order truncation error (as well as a high-order derivative) will require more nonzero coefficients. As for the stability, it depends not only on the scheme itself but on the problem to which it is applied, that is, the combination of type of the differential equation and b.c., difference scheme and step size. For example, given a particular differential equation to be solved, the forward and central difference may both, none or either one or the other result in a stable method \citep{GodunovRyaben87a}. This contrasts with the use of stencils as quadratic penalties, where accuracy and stability do not apply (since for any stencil the matrix \SS\ is positive (semi)definite and so the algorithm converges), while whether the stencil is e.g.\ sawtooth is crucial. Sparsity remains important computationally, although as we showed some penalties admit a dual representation by sparse and nonsparse stencils.

\paragraph{Maximum penalised likelihood estimation (MPLE)}

This consists of the following nonparametric density estimation problem \citep{Silver86a,ThompsTapia90a}:
\begin{equation*}
  \max_{f}{\left( \sum^N_{n=1}{\log{f(\x_n)}} - \beta R(f) \right)}
\end{equation*}
where $\{\x_n\}^N_{n=1}$ is the data sample, $f$ is a density in a suitable function class and $R$ is the smoothness penalty. Note that without the penalty the problem is ill-posed because the likelihood becomes infinite as $f \rightarrow \sum^N_{n=1}{\delta(\x-\x_n)}$. The fact that $f$ must be nonnegative and integrate to $1$ so that it is a density makes this problem mathematically difficult and most of the existing work concerns the 1D case. One approach \citep{GoodGaskin71a} consists of defining $f = \gamma^2$ to remove the nonnegativity constraint. \citet{Scott_80a} consider a discretised MPLE problem where $R$ is based on the first or second forward differences and take $f$ as piecewise constant or linear.

In the GEN, the problem is of density estimation too, but we do not directly penalise the density. If we consider a Gaussian-mixture density $f$ whose centroids are a function of a lower-dimensional latent variable \z:
\begin{equation*}
  f(\x) \propto \int{e^{-\frac{1}{2} \norm{\frac{\x - \bmu(\z)}{\sigma}}^2} \, d\z}
\end{equation*}
then the GEN results from discretising the latent variable \z\ and placing the penalty on the discretised centroid function $\bmu(\z)$.

\paragraph{Other learning algorithms}

The deterministic annealing algorithm of section~\ref{s:ann} has two bottlenecks: one is the computation of the weights \W\ (a very large nonsparse matrix); the other, the solution of the system~\eqref{e:ann:zerograd}. We consider the latter here. In this paper we have dealt with tension terms that result in a sparse, structured, positive semidefinite matrix \SS, as a result of the locality of the stencils implementing differential operators and the spatial structure of the net. For this particular type of \SS\ (or even for a centroid-dependent stencil), the sparse Cholesky factorisation for solving system~\eqref{e:ann:zerograd} works very robustly and efficiently, and the computation time is dominated by the weight computation. However, one may wish to use nonsparse stencils---e.g.\ if one designs the stencil in power space, or uses a stencil derived by discretising a continuous kernel with unbounded support. In this case the Cholesky factorisation will be slower, taking $\calO(M^3)$ operations (where \SS\ is $M \times M$), and it becomes necessary to develop a more sophisticated algorithm to solve the sytem~\eqref{e:ann:zerograd}.

One method that has been used in Bayesian problems and Gaussian processes to invert a covariance matrix is conjugate gradients \citep{Skillin93a,GibbsMackay97a}. This constructs a finite sequence with monotonically decreasing error that converges to $\A^{-1} \uu$ (or in general $\phi(\A) \uu$ for any function $\phi$). An approximate solution is then obtained in $\calO(\text{several} \times M^2)$ rather than $\calO(M^3)$. Unfortunately the number of iterations ``several'' may be difficult to determine in general. Besides, conjugate gradients do not take advantage explicitly of the structure of the matrix \SS, which is in general close%
\footnote{Strictly, the matrix that matters is $\A = \G + \sigma \frac{\beta}{\alpha} \left( \frac{\SS+\SS^T}{2} \right)$. This has the same structure as \SS\ but adds a different value to each element of the diagonal. Thus, even if \SS\ is circulant, \A\ is not circulant or Toeplitz unless $\G \propto \I$.}
to circulant or Toeplitz (by blocks in 2D). Intuitively, such a matrix has only $\calO(M)$ degrees of freedom rather than $\calO(M^2)$. In fact, see \citet[pp.~193ff]{GolubLoan96a}, it is possible to invert Toeplitz matrices of order $M\times M$ in $\calO(M^2)$ using the Trench algorithm and to compute products of a circulant matrix times a vector via the fast Fourier transform in $\calO(M \log{M})$ (this also applies to Toeplitz-vector products by embedding the $M\times M$ Toeplitz matrix in a $(2M-1)\times (2M-1)$ circulant matrix). Extensions of this type of algorithms to near-Toeplitz and other structured matrices are given by \citet{KailatSayed99a}.

\paragraph{Topographic maps and dimensionality reduction}

From a statistical learning point of view, the GEN is both a density estimation and a dimensionality reduction method. As a dimensionality reduction method, the GEN can be seen as a discretisation of a continuous latent variable model with a uniform density in latent space of dimension $L$, a nonparametric mapping from latent to data space (defined by the centroids \Y) and an isotropic Gaussian noise model in data space of dimension $D$ \citep{Carreir01a}. A discretised dimensionality reduction mapping for a data point \x\ can be defined as either the mean or a mode of the posterior distribution $p(m|\x)$. Like the dimensionality reduction mapping, the reconstruction mapping (from latent to data space) is discretised, implicitly defined by the locations of the centroids. Intermediate values can be defined by linear interpolation and we have done so in linking neighbouring centroids in figures~\ref{f:simul1D-nonperiodic}--\ref{f:loop-elim} for visualisation purposes.

Traditionally the elastic net has not been used for density estimation or dimensionality reduction, but rather in applications where generalisation to unseen data was not a concern: the TSP (where the goal is data interpolation with minimal length) and cortical maps (where the goal is to replicate biological maps and the training set is just a computationally convenient scaffolding for a uniform distribution). In fact, in both cases the intrinsic dimensionality of the data (cities and visual stimuli, respectively) is higher than that of the net, which results in the net behaving like a space-filling curve or surface (thus the stripes).

However, the GEN could perfectly be used for data modelling (e.g.\ as in \citealp{Utsugi97a}). It could also be used in computer vision and computer graphics, where models (often implemented as a spline) penalised by curvature terms have been very successful (e.g.\ the snakes of \citealp{Kass_88a} or the functional of \citealp{MumforShah89a})---the GEN having the advantage of defining a density. The GEN is most closely related to the self-organising map (SOM) and the generative topographic mapping (GTM), with which we provide a brief comparison (\citealp{Bishop_98a} give a comparison of the SOM and GTM). Like the GTM and unlike the SOM, the GEN defines a probabilistic model and so an objective function; this gives a principled approach to learning algorithms, determination of convergence and inference. GTM has the advantage over both GEN and SOM of defining a continuous reconstruction mapping (that passes through the Gaussian centroids). The existence of an objective function also allows to define penalties (priors on the mapping or centroids) in a natural way. In the GEN, it is very easy to design a derivative penalty of any order in any dimension of the net, either by passing 1D versions of a forward-difference stencil, or designing a stencil from scratch. In GTM, derivative penalties (e.g.\ to penalise curvature) can in principle be implemented exactly, since the reconstruction mapping is continuous and parametric; but this may be practically cumbersome, so that they may be more conveniently approximated by discrete stencils as we do with the GEN. For the SOM, such penalties might be defined directly in the learning rule (similarly to momentum terms) but this seems more difficult. Alternatively, one could define an objective function that results in a SOM-like algorithm and then apply penalties to it \citep{Luttrel94a,Pascual_01a}; such an approach is likely to end up in something like the GEN. Computationally, all three models suffer from the curse of dimensionality since the latent space is a grid whose number of centroids grows exponentially with its dimensionality.

Two recent methods for manifold learning are Isomap \citep{Tenenb_00a} and locally linear embedding (LLE; \citealp{RoweisSaul00a}), neither of which defines a density. Isomap attempts to model global manifold structure by preserving approximate geodetic distances between all pairs of data points. In contrast, the GEN's tension term uses only local information. LLE is closer to the GEN in that it assigns to each manifold point a squared zero-sum linear combination of neighbouring points. This might be considered as a local metric estimate, similar to the GEN stencil. However, while the latter applies truly to local points in the manifold (by construction) in LLE some neighbouring points may in practice not be local in the manifold.

\paragraph{Combinatorial optimisation and generalised TSPs}

The original elastic net was proposed at first as a continuous optimisation method (by deterministic annealing) to solve the Euclidean travelling salesman problem (TSP; \citealp{Lawler_85a}) of finding the shortest tour of $N$ given cities (although the elastic net tries to minimise the sum of squared distances rather than the sum of distances). The TSP is an NP-complete problem, with a search space of $\frac{(N-1)!}{2}$ different tours. Although the elastic net does find good tours in a reasonable time, it is not really competitive with the leading heuristic TSP methods \citep{Potvin93a,Reinel94a,JohnsonMcgeoc97a,GutinPunnen02a}. These include  $n$\nobreakdash-opt, \citet{LinKernig73a} and variations of them, and their basic idea is to replace $n$ arcs locally in the current tour (``perform a move'') and iterate till a local optimum is found; usually $n$ is $2$ or $3$.

The GEN proposed here can solve a correspondingly generalised TSP (likewise an NP-complete problem, to which $n$\nobreakdash-\hspace{0pt}opt-type algorithms are applicable). We show here how an appropriate definition of the \D\ matrix may be used for other TSP-type problems, in particular the multiple TSP. In the $K$\nobreakdash-TSP \citep[pp.~169--170]{Lawler_85a}, we are looking for a collection of $K$ subtours, each containing a fixed start city, such that each city is in at least one subtour. This problem models the situation where $K$ salesmen work for a company with home office in the start city, and between them each city must be visited at least once. \citet{Lawler_85a} propose the length of the maximum length subtour as quality measure of the collection of routes; and claim that the best algorithm known is a tour-splitting one, i.e., one that starts with a standard travelling salesman tour and proceeds to partition it into $K$ pieces.

The GEN framework defined in this work can be used for a $K$\nobreakdash-TSP where the number of centroids $M_k$ in each subtour $k$ is given in advance. We simply partition the matrix \D\ diagonally in $K$ submatrices; submatrix $k$ is $M_k \times M_k$. For example, if there are $K = 3$ subtours with $k_1 = 3$, $k_2 = 5$ and $k_3 = 4$ centroids, respectively, then:
\begin{equation*}
  \tiny
  \D = \left( \begin{array}{@{}rrr|rrrrr|rrrr@{}}
      -1 &  1 &  0 &  0 &  0 &  0 &  0 &  0 &  0 &  0 &  0 &  0 \\
      0 & -1 &  \phantom{-}1 &  0 &  0 &  0 &  0 &  0 &  0 &  0 &  0 &  0 \\
      0 &  0 &  0 &  0 &  0 &  0 &  0 &  0 &  0 &  0 &  0 &  0 \\
      \hline
      0 &  0 &  0 & -1 &  1 &  0 &  0 &  0 &  0 &  0 &  0 &  0 \\
      0 &  0 &  0 &  0 & -1 &  1 &  0 &  0 &  0 &  0 &  0 &  0 \\
      0 &  0 &  0 &  0 &  0 & -1 &  1 &  0 &  0 &  0 &  0 &  0 \\
      0 &  0 &  0 &  0 &  0 &  0 & -1 &  \phantom{-}1 &  0 &  0 &  0 &  0 \\
      0 &  0 &  0 &  0 &  0 &  0 &  0 &  0 &  0 &  0 &  0 &  0 \\
      \hline
      0 &  0 &  0 &  0 &  0 &  0 &  0 &  0 & -1 &  1 &  0 &  0 \\
      0 &  0 &  0 &  0 &  0 &  0 &  0 &  0 &  0 & -1 &  1 &  0 \\
      0 &  0 &  0 &  0 &  0 &  0 &  0 &  0 &  0 &  0 & -1 &  \phantom{-}1 \\
      0 &  0 &  0 &  0 &  0 &  0 &  0 &  0 &  0 &  0 &  0 &  0
    \end{array} \right)
\end{equation*}
where we have used nonperiodic b.c.\ for all subnets. Clearly, we can create more complex combinations of topology (e.g.\ have each subnet have its own b.c.\ and dimension). As we pointed out in section~\ref{s:cendiff-family}, if the number of centroids $M$ is odd the \D\ matrix associated with the central-difference stencil $\frac{1}{2} (-1,\ 0,\ 1)$ is equivalent to a $2$\nobreakdash-block \D\ matrix (as can be seen by permuting the centroid indices) and results in two disjoint subnets (if $M$ is even we get the usual elastic net but with permuted centroids). These two nets occupy disjoint regions of city space, which can be qualitatively justified as follows. Assume, for a random distribution of $N$ cities over an area $V$ (of any dimensionality $D > 1$), that the expected shortest tour length is approximately proportional to $l \bydef N^p V^q$ with $p, q > 0$, where the actual $p$ and $q$ depend on the distribution of the cities. For example, for uniformly random cities $p = 1-\frac{1}{D}$ and $q = \frac{1}{D}$ (\citealp{Beardw_59a}; \citealp{Lawler_85a}, p.~186ff). If we have $K$ tours each on a disjoint region with the same area and number of cities then the total tour length is proportional to $K (N/K)^p (V/K)^q = K^{1-p-q} l$, while if all tours interleave over the whole area we have a total tour length proportional to $K (N/K)^p V^q = K^{1-p} l$. Thus, it pays better to have disjoint tours, and if $p+q > 1$ it pays better to have as many tours as possible. This is demonstrated in fig.~\ref{f:sawtooth}A, where the two subtours are simply lines along the upper and lower rows of cities, and in fig.~\ref{f:sawtooth-TSP}; and is consistent with the sawteeth linking faraway centroids.

To get the subtours to share a fixed city, we can add a constant centroid $\y_0$ and connect it to every subtour. In general, we can extend the model of section~\ref{s:GEN} by adding ``bias terms'' to the prior on the centroids \Y, i.e., taking $\Y = (\Y_0\ \Y_1)$ with $\Y_0$ constant and $\Y_1$ variable. The \D\ matrix can connect the variable centroids with each other as usual and also any constant centroid in $\Y_0$ to one or more variable centroids in $\Y_1$ (connecting constant centroids with each other clearly has no effect on the optimisation).

The strategy of separating the net into several uncoupled nets is equivalent to allowing discontinuities to occur unpenalised at specific points; see \citet{Terzop86a} for an analogous idea in a continuous setting.

Finally note that, from a TSP perspective, the globally optimal solution of the 1D OD problem of fig.~\ref{f:simul1D-nonperiodic} is a net with a $\sqsubset$ form (i.e., one eye is traversed across VF, then the other eye). In all our simulations, no matter how slow the annealing the original elastic net always results in a wavy net (e.g.\ fig.~\ref{f:simul1D-nonperiodic}), while the second-order forward-difference stencil can obtain the $\sqsubset$ net through loop elimination as in fig.~\ref{f:loop-elim} (if annealing very slowly and using a high value of $\beta$).

\paragraph{Relation with the $C$ measure of \citet{GoodhilSejnow97a}}

\citet{GoodhilSejnow97a} introduced a measure $C$ of the degree of similarity preservation, or neighbourhood preservation, between two discrete spaces mapped by a bijection $H$. $C$ was defined as the following cost functional:
  \begin{equation}
    \label{e:C-measure}
    C \bydef \sum^N_{i=1}{\sum_{j<i}{F(i,j) G(H(i),H(j))}}
  \end{equation}
where $i,j \in \{1,\dots,N\}$ label points in one space and $F$ and $G$ measure the similarity between two points in each of the spaces, respectively. If we take $i$ and $j$ in the cortical space, with $H(i) \bydef \y_i$ being the centroid in city space, and define the similarity functions as follows:
\begin{equation*}
  F(i,j) \bydef 
  \begin{cases}
    1, & \text{$i$ and $j$ are neighbouring} \\ 0, & \text{otherwise}
  \end{cases} \qquad \qquad
  G(H(i),H(j)) \bydef \norm{H(i)-H(j)}^p_p
\end{equation*}
where ``neighbouring'' means adjacent in the cortical space grid, then we get the minimal wiring definition \citep{MitchisDurbin86a,DurbinMitchis90a}. And for $p=2$, minimising $C$ is equivalent to solving the TSP with the original elastic net. Other choices of $F$ and $G$ coincide (approximately) with well-known measures of neighbourhood preservation.

However, eq.~\eqref{e:C-measure} cannot accommodate a GEN tension term $\trace{\Y^T\Y\SS}$ or $\norm{\D\Y^T}^2$ because $C$ is defined by products of binary (pairwise) relations. It is possible to extend $C$ to accommodate it by using multiple relations:
\begin{gather*}
  C \bydef \sum^N_{i=1}{\sum_{j<i}{\sum_{k<j}{\dots F(i,j,k,\dots) G(H(i),H(j),H(k),\dots)}}} \\
  F(i,j,k,\dots) \bydef 
  \begin{cases}
    1, & \text{$i,j,k,\dots$ are neighbouring} \\ 0, & \text{otherwise}
  \end{cases} \qquad
  G(H(i),H(j),H(k),\dots) \bydef \norm{\text{\caja{c}{c}{linear combination of \\ $H(i),H(j),H(k),\dots$}}}^p_p
\end{gather*}
but there seems to be no particular benefit in doing this.

\paragraph{What stencil to use?}

As in other work where differential penalties are used, there is not a general answer to this question---it depends on the particular application. In the discrete case the problem is compounded by the choice not only of differential operator but also of finite difference scheme. In this paper we have studied some properties of two families of stencils in the context of a density estimation problem (the elastic net, applied to cortical map modelling), and shown that the forward-difference family is closer to the continuous case. In general, we think that the choice of stencil should be given by its power spectrum, subject to the stencil being sparse for computational efficiency.

The work done in continuous problems, such as nonparametric regression with roughness penalties \citep{GreenSilver94a,Wahba90a} suggests other conditions that the appropriate operator should satisfy. For example, in dimension $2$ isotropy may be desired, which requires the penalty to be invariant to rotations---though note that this is not naturally defined in the discrete case. Also, the particular application may suggest what functions are natural solutions, and so this defines the nullspace of the operator; for example, if a periodic function is desired, then the operator $\frac{d^2}{du^2} + \omega^2$ should be used (where $\omega$ is a frequency). In practice, most work has restricted itself to 2nd-order derivatives at most, and in fact most differential equations of physics are also 2nd-order or less (with few exceptions, one being the 4th-order biharmonic equation satisfied by the deflection of a loaded beam in elasticity theory).

Higher-order derivatives may be necessary or advisable in higher dimensions, however. The continuous case provides a constraint on the derivative order given by the dimensionality. Consider a $p$th\nobreakdash-order differential operator in the $L$\nobreakdash-dimensional continuous case, such as the following rotationally and translationally invariant penalty:
\begin{equation*}
  R(y) = \int_{\bbR^L}{\sum_{\substack{d_1+\dots+d_L = p \\ d_1,\dots,d_L \ge 0}} \frac{p!}{d_1! \dots d_L!} \left( \frac{\partial^p y}{\partial u^{d_1}_1 \dots \partial u^{d_L}_L} \right)^2 \, du_1 \dots du_L}.
\end{equation*}
Surprisingly, this operator requires $p > L/2$, as the following example from \citet{GreenSilver94a} suggests (the argument can be made rigorously in terms of Beppo-Levy and Sobolev spaces). Consider the ``test function'' $g_s(\uu) = \exp(-\norm{\uu/s}^2)$. Then, by a change of variables $\uu \rightarrow \v/s$, we obtain $R(g_1) = s^{2p-L} R(g_s)$. For $R$ to measure indeed roughness, and since a delta function is maximally rough, $R(g_s)$ must increase as $s \rightarrow 0$ and so we must have $2p-L > 0$. In other words, we have to make sure that the squared magnitude of the derivative term ($s^{2p}$) dominates the loss in volume of the differential element $du_1 \dots du_L$ ($s^L$).

In the discrete case it is difficult to determine a concrete condition (if it exists at all) between the derivative order and the dimensionality of the net, partly because it depends on the stencil chosen to represent the derivative and on the test function, and partly because the finite size of both the stencil and the test function make the calculation complicated. Consider the forward-difference family. The tension of the delta net equals the stencil squared modulus (section~\ref{s:tension:circ}) and grows as $2^p$, so that the higher the derivative the higher the roughness value. But there is a dimensionality effect on the tension term as a measure of roughness. For example, consider a discrete version of the test function that is a triangular bump of height $1$ and width $W$ across all dimensions ($W = 1$ recovers the delta net). Then it is easy to see that for $p = 1$ the tension term is $\calO(W^{L-2})$, so that in 2D the tension is the same for the delta net as for a broad bump, and in $L \ge 3$ the delta net has lower tension. For $p = 2$ we obtain $\calO(W^{L-3})$. This might suggest the use of somewhat higher-order derivatives as the net dimensionality grows. However, this issue is less important than in the continuous case because the net resolution is lower bounded and so is the tension term value. Besides, in practice one is anyway limited to very small $L$ (since the number of centroids $M$ grows exponentially with $L$) and relatively small $p$ (because the size of the stencil grows with $p$ and should be much smaller than $M$). Finally, the use of a 1st-order forward difference in 2D for cortical map models (\citealp{DurbinMitchis90a,GoodhilWillsh90a}; section~\ref{s:cmap}) does not result in rough nets in comparison with higher orders.

\section{Conclusions}

We have studied algorithmically, theoretically and by simulation a generalisation of the elastic net and applied it to cortical map modelling. The original motivation of this work was to understand the role of the tension term, which is an abstract form of lateral interactions, in the development of visual cortical maps.

We have defined the generalised elastic net as a probabilistic model given by a Gaussian mixture (fitness term) whose centroids are subject to a Gaussian prior (tension term). We have given several deterministic annealing algorithms for learning the centroids of the net and argued that the one based on the sparse Cholesky factorisation is the most effective for the type of tension terms we favour. These result from the convolution of the net with a discrete, sparse stencil that approximates a differential operator in the Taylor sense.

We have given a theoretical analysis of this type of tension terms in the case of periodic boundary conditions in 1D. By analogy with the continuous case, the properties of the stencil are apparent in the Fourier power spectrum, or equivalently in the eigenvalues of its circulant matrix. However, in the discrete case the space of stencils is very large, since there are many different stencils that correspond to the same derivative order. We have studied two families obtained by repeated application of a given first-order stencil, forward and central differences, and suggested other ways to design stencils. The forward-difference family is the analogue of the continuous derivative, with stencils behaving as high-pass filters. The central-difference family results in band-pass filters which produce sawtooth patterns in the resulting nets---even as the number of centroids tends to infinity, in stark contrast with the continuous case---and can be seen as competing uncoupled nets. The discrete case is then qualitatively richer in the sense that it can present behaviours that do not have a correlate in the continuous case.

The original elastic net corresponds to the first-order forward-difference stencil. As a corollary of the analysis we have shown that the tension term in this case can be exactly rewritten in a different way to the sum-of-square-distances form. This alternative form has a Mexican-hat shape and results from evenising the stencil subject to having the same power spectrum. Higher-order stencils add more oscillations to this Mexican hat. In the context of previous elastic net work, it demonstrates the equivalence of squared wire length (in stimulus space) and Mexican-hat lateral interaction---and so that what seemingly looks like a sparse pattern of nearest-neighbour connections can be implemented with a denser net of connections (whose strength decays rapidly with distance).
This brings the elastic net closer to most other cortical map models (e.g.\ to that of \citealp{Miller_89a}), in which a dense pattern of connections with a Mexican-hat form is a crucial component.

Finally, we have studied the cortical maps that arise with the forward-difference family. Our Cholesky-based deterministic annealing algorithm allows to explore a wide range of tension terms and so provide an empirical phase diagram for the maps. Such maps result from the interaction of the fitness and tension terms and we have given a qualitative explanation of the striped maps based on a drifting cutoff argument and the stencil spectrum. This argument predicts that the resulting nets show a periodic structure where the frequency increases with the derivative order and decreases with the strength of the tension term. The power spectrum of the stencil explains the anisotropy of the maps. A full theoretical explanation of map development with arbitrary derivatives, in particular the geometric interrelations between maps in 2D nets, is a subject for future research.


\section*{Acknowledgements}

We thank Peter Dayan for many helpful discussions. This work was funded by NIH grant R01~EY12544.

\appendix

\section{Normalisation constant of the prior density}
\label{s:density-norm-cte}

The Gaussian prior on the centroids is proper if \SS\ is positive definite and improper if \SS\ is positive semidefinite (otherwise, \SS\ has a negative eigenvalue and so the prior cannot be normalised since it diverges exponentially along some directions). The full expression of the prior is:
\begin{equation}
  \label{e:tension-termS}
  p(\Y; \beta) = \left(\frac{\beta}{2\pi}\right)^{\frac{Dr}{2}} \abs{\SS}^{\frac{D}{2}}_{+} e^{-\frac{\beta}{2}\trace{\Y^T\Y\SS}}
\end{equation}
where $r \bydef \rank{\SS}$ and $\abs{\SS}_{+}$ is the product of the $r$ positive eigenvalues of \SS. If \SS\ is positive definite then $r = M$ and $\abs{\SS}_{+}$ is just the determinant of \SS. If \SS\ is positive semidefinite, and so has one or more zero eigenvalues, then the covariance $\bSigma = (\beta\SS)^{-1}$ is finite along $r$ eigenvectors and infinite along the rest, so the density is taken as $1$ along those directions. Specifically, calling \y\ a row of \Y\ and spectrally decomposing $\bSigma = \U\bLambda\U^T$ with \U\ orthogonal, $\bLambda = \left(\begin{smallmatrix} \bLambda_1 & \0 \\ \0 & \bLambda_2 \end{smallmatrix}\right)$ with diagonal blocks and $\bLambda^{-1}_2 = \0$, and defining $\z = \U^T\y$:
\begin{equation}
  p(\y) \propto \exp{\left(-\frac{1}{2} \y^T\bSigma^{-1}\y\right)} = \exp{\left(-\frac{1}{2} (\U^T\y)^T\bLambda^{-1}(\U^T\y)\right)} = \exp{\left(-\frac{1}{2} \z^T_1\bLambda^{-1}_1\z_1\right)} \exp{\left(-\frac{1}{2} \z^T_2\bLambda^{-1}_2\z_2\right)}
\end{equation}
where the term in $\bLambda_2$ is 1. Thus, $p(\Y; \beta)$ is given by eq.~\eqref{e:tension-termS} along directions of $\im{\SS}$, with dimension $r$; and $p(\Y; \beta) = 1$ improper along directions of $\ker{\SS}$, with dimension $M-r$.

\section{Stencil normalisation to unit power}
\label{s:stencil-norm}

In order to compare different stencils we need to normalise them, as discussed in section~\ref{s:tension:norm}. Since the eigenvectors of the stencil matrix are plane waves, we will normalise in the Fourier domain. By convention, we will assume that all stencils have unit power, i.e., that the integral of the power spectrum is one. This means that if a stencil favours high frequencies then it must compensate by disfavouring low ones. This is the same as forcing an impulse (delta function) to have unit power in Fourier space after being filtered by the stencil. Therefore we need to calculate the power of a given stencil.

Since the domain is discrete, it would be possible to consider the sum of the powers at each frequency rather than the integral, but by considering the stencil as a sum of delta functions, and thus defined over the real space, we can introduce the discretisation level of the net and so compare nets with different resolutions (e.g.\ the same piece of cortex at different resolutions). Besides, it turns out that both approaches---the discrete sum and the continuous integral---give exactly the same result for the power.

First we consider the continuous approach, i.e., we consider the stencil as a continuous function that is nonzero at the nodes of the grid only. Consider then an $L$\nobreakdash-dimensional net with $M_1 \times M_2 \times \dots \times M_L$ centroids that are a regular sample of an $L$\nobreakdash-dimensional hyperrectangle of lengths $W_1,\dots,W_L$ with $W_l \bydef h M_l$ and $h \in \bbR^+$ the step size. That is, we consider square voxels of size $h^L$ (in net space) for simplicity. The stencil is defined over $\bbR^L$ by the delta-mixture function
\begin{equation*}
  \varsigma(\uu) \bydef \sum^{\M-\1}_{\m=\0}{\varsigma_{\m} \delta(\uu-\uu_{\m})} \qquad \qquad \uu_{\m} = h \m = \frac{W_l}{M_l} \m
\end{equation*}
where we use boldface for real vectors and vectors of indices, e.g.\ $\m = (m_1,\dots,m_L)^T$. The origin of the coordinate system is irrelevant. The continuous Fourier transform of $\varsigma$ is then the plane-wave superposition
\begin{equation*}
  \hat{\varsigma}(\bk) = \int_{\bbR^L}{\varsigma(\uu) e^{-i 2 \pi \bk^T \uu} \, d\uu} = \sum^{\M-\1}_{\m=\0}{\varsigma_{\m} e^{-i 2 \pi \bk^T \uu_{\m}}} = \sum^{\M-\1}_{\m=\0}{\varsigma_{\m} e^{-i 2 \pi \sum^L_{l=1}{\frac{k_l W_l}{M_l} m_l}}}
\end{equation*}
where \uu\ is a length and \bk\ an inverse length (in $L$ dimensions). Both $\varsigma(\uu)$ and $\hat{\varsigma}(\bk)$ are defined in $\bbR^L$ and, while $\varsigma(\uu) = 0$ except at each $\uu_{\m}$, $\hat{\varsigma}(\bk) \neq 0$ in general. If we define $\kappa_l \bydef k_l W_l$ at the integer values $0,\dots,M_l-1$, then $\{\hat{\varsigma}_{\bkappa}\}^{\M-\1}_{\bkappa=\0}$ is the DFT of $\{\varsigma_{\m}\}^{\M-\1}_{\m=\0}$.

The continuous power spectrum, defined in $\bbR^L$, is
\begin{equation}
  \label{e:total-power}
  \begin{split}
    P(\bk) \bydef \abs{\hat{\varsigma}(\bk)}^2 &= \abs{\sum^{\M-\1}_{\m=\0}{\varsigma_{\m} e^{-i 2 \pi \bk^T \uu_{\m}}}}^2 = \sum^{\M-\1}_{\m,\n=\0}{\varsigma_{\m} \varsigma_{\n} e^{-i 2 \pi \bk^T (\uu_{\m} - \uu_{\n})}} \\
    &\stackrel{(*)}{=} \sum^{\M-\1}_{\m,\n=\0}{\varsigma_{\m} \varsigma_{\n} \cos{2 \pi \bk^T (\uu_{\m} - \uu_{\n})}} = \sum^{\M-\1}_{\m=\0}{\varsigma^2_{\m}} + \sum_{\m \neq \n}{\varsigma_{\m} \varsigma_{\n} \cos{2 \pi \bk^T (\uu_{\m} - \uu_{\n})}}
  \end{split}
\end{equation}
where step ($*$) holds because $P(\bk)$ is real. The total power spectrum over $\bbR^L$ diverges, but since $\hat{\varsigma}(\bk)$ is periodic, we can concentrate on the average power in a single period:
\begin{equation*}
  \overline{P} \bydef \int^{\frac{M_1}{W_1}}_0{\dots\int^{\frac{M_L}{W_L}}_0{P(\bk) \, d\bk}} / \int^{\frac{M_1}{W_1}}_0{\dots\int^{\frac{M_L}{W_L}}_0{d\bk}}.
\end{equation*}
To compute $\overline{P}$, note that the cosine term in~\eqref{e:total-power} cancels: changing variables we get
\begin{equation*}
  \int^{\frac{M_1}{W_1}}_0{\dots\int^{\frac{M_L}{W_L}}_0{\cos{2 \pi \bk^T (\uu_{\m} - \uu_{\n})} \, d\bk}} = \text{constant} \times \int^{2\pi}_0{\dots\int^{2\pi}_0{\cos{\left(\sum^L_{l=1}{(m_l - n_l) x_l}\right) \, d\x}}} = 0
\end{equation*}
by using $\cos{(\alpha+\beta)} = \cos{\alpha} \cos{\beta} - \sin{\alpha} \sin{\beta}$ and factoring out variables, and recalling that $m_l - n_l \in \bbZ$. Thus, the expression for the average power in a period, valid for any stencil and any boundary condition, is simply $\overline{P} = \sum^{\M-\1}_{\m=\0}{\varsigma^2_{\m}}$, i.e., the stencil squared modulus (equal to $\frac{1}{M} \trace{\SS}$ for the circulant case). When we have several stencils $\varsigma_j$ (section~\ref{s:lc:power}), since the power is additive we get
\begin{equation*}
  \overline{P} = \sum_j{\overline{P}_j} = \sum_j{\sum^{\M-\1}_{\m=\0}{\varsigma^2_{j,\m}}} = \frac{1}{M} \trace{\textstyle\sum_j{\SS_j}}.
\end{equation*}
As mentioned above, the same result for $\overline{P}$ is obtained by the discrete approach, i.e., approximating the integral by a sum over a grid of voxel $\Delta\bk = \bk_{\bkappa+\1} - \bk_{\bkappa} = \smash{\big(\frac{1}{W_1},\dots,\frac{1}{W_L}\big)}^T$ in Fourier space we obtain
\begin{equation*}
  \frac{1}{\volume{\Delta\bk}} \sum^{\M-\1}_{\bkappa=\0}{\abs{\hat{\varsigma}_{\bkappa}}^2 \volume{\Delta\bk}}  = \sum^{\M-\1}_{\m=\0}{\varsigma^2_{\m}}
\end{equation*}
again by Parseval's theorem.

Finally, we must take into account the step size for the finite difference (see eq.~\eqref{e:D:trunc}) and use $\varsigma_{\m} h^{-p}$ in place of $\varsigma_{\m}$. For example, for eq.~\eqref{e:D:trunc} with the second-order forward difference we have $\frac{1}{h^2} (1,\ -2,\ 1)$. Thus we obtain the final expression for the average power of the stencil $\varsigma$:
\begin{equation*}
  \overline{P} = h^{-2p} \sum^{\M-\1}_{\m=\0}{\varsigma^2_{\m}}
\end{equation*}
which provides the link with a continuous net with physical dimensions (e.g.\ a piece of cortex). However, in this paper we assume $h = 1$ and use a fixed resolution, since we are interested in comparing different values of $\beta$ and the order $p$ over the same net (i.e., at the same discretisation level).

In summary, the normalisation we use amounts to dividing the stencil by its squared modulus, and this is what we do in figures such as~\ref{f:stencil-families1Da} and~\ref{f:simul2D:OD}. For the forward-difference family in 1D (section~\ref{s:fwddiff-family}), the stencil squared modulus is $\binom{2p}{p}$, which grows exponentially as $2^{2p}/\sqrt{\pi p}$. For the central-difference family (section~\ref{s:cendiff-family}), it is $\frac{1}{2^{2p}} \binom{2p}{p}$, which decreases as $1/\sqrt{\pi p}$.

\section{Fourier transforms}
\label{s:fourier}

We use the following formulation for the continuous and discrete Fourier transforms, respectively:
\begin{align*}
  \text{Continuous:}& & \hat{y}(k) &\bydef \int^{\infty}_{-\infty}{y(u) e^{-i 2 \pi k u} \, du} & y(u) &= \int^{\infty}_{-\infty}{\hat{y}(k) e^{i 2 \pi k u} \, dk} & u, k &\in (-\infty,\infty)\\
  \text{Discrete:}& & \hat{y}_k &\bydef \sum^{M-1}_{m=0}{y_m e^{-i 2 \pi \frac{k}{M} m}} & y_m &= \frac{1}{M} \sum^{M-1}_{k=0}{\hat{y}_k e^{i 2 \pi \frac{k}{M} m}} & m, k &= 0,\dots,M-1.
\end{align*}
We use several well-known properties of the Fourier transform, such as Parseval's theorem:
\begin{equation*}
  \text{Continuous: } \int^{\infty}_{-\infty}{\abs{y(u)}^2 \, du} = \int^{\infty}_{-\infty}{\abs{\hat{y}(k)}^2 \, dk} \qquad \qquad \text{Discrete: } \sum^{M-1}_{m=0}{\abs{y_m}^2} = \frac{1}{M} \sum^{M-1}_{k=0}{\abs{\hat{y}_k}^2}
\end{equation*}
or the derivative theorem (continuous case only):
\begin{equation*}
  y^{(p)}(u) \stackrel{\calF}{\rightarrow} (i 2 \pi k)^p \hat{y}(k).
\end{equation*}
More details can be found in standard texts \citep{Bracew00a,Papoul62a}. Note that the mathematical expressions may differ by a constant factor depending on the Fourier transform formulation used.

\section{Proofs}
\label{s:proofs}

In the propositions of section~\ref{s:circ2sten} (and occasionally elsewhere) we encounter some series and integrals whose value can be found in mathematical handbooks, in particular \citet{Prudnik_86a} (abbreviated PBM) and \citet{GradshRyzhik94a} (abbreviated GR).

\begin{proof}[Proof of proposition~\ref{p:eig-circ}]
  By substitution. For $\alpha = 0,\dots,M-1$ and $m = 0,\dots,M-1$:
  \begin{equation*}
    \begin{split}
      (\D\f_m)_{\alpha} &= \sum^{M-1}_{\beta=0}{d_{\alpha\beta} f_{m\beta}} = \sum^{M-1}_{\beta=0}{d_{(\beta - \alpha) \bmod M} \omega^{\beta}_m} = \sum^{\alpha-1}_{\beta=0}{d_{\beta - \alpha + M} \omega^{\beta}_m} + \sum^{M-1}_{\beta=\alpha}{d_{\beta - \alpha} \omega^{\beta}_m} \\
      &= \sum^{M-1}_{\gamma=M-\alpha}{d_{\gamma} \omega^{\gamma+\alpha-M}_m} + \sum^{M-\alpha-1}_{\gamma=0}{d_{\gamma} \omega^{\gamma+\alpha}_m} = \left(\sum^{M-1}_{\gamma=0}{d_{\gamma} \omega^{\gamma}_m}\right) \omega^{\alpha}_m = \lambda_m f_{m\alpha}
    \end{split}
  \end{equation*}
where we have changed $\gamma = \beta - \alpha + M$ and $\gamma = \beta - \alpha$ in the first and second summations, respectively.

It is instructive to see that if a Toeplitz matrix \A\ has plane-wave eigenvectors then it must be circulant. Assuming $a_{nm} = a_{m-n}$ and trying a plane-wave eigenvector $\f = (f_{\beta})$ with $f_{\beta} = e^{i k \beta}$:
\begin{equation*}
  (\A\f)_{\alpha} = \sum^{M-1}_{\beta=0}{a_{\alpha\beta} f_{\beta}} = \sum^{M-1}_{\beta=0}{a_{\beta - \alpha} e^{i k \beta}} = e^{i k \alpha} \sum^{M-1}_{\beta=0}{a_{\beta - \alpha} e^{i k (\beta - \alpha)}} = \lambda_{\alpha} f_{\alpha} \qquad \text{for } \lambda_{\alpha} \bydef \sum^{M-\alpha-1}_{\gamma=-\alpha}{a_{\gamma} e^{i k \gamma}}.
\end{equation*}
Now $\lambda_{\alpha}$ must be independent of $\alpha$, so for all $\alpha = 0,\dots,M-2$ we must have
\begin{equation*}
  \begin{split}
    \lambda_{\alpha} = \lambda_{\alpha+1} &\Rightarrow 0 = \sum^{M-\alpha-1}_{\gamma=-\alpha}{a_{\gamma} e^{i k \gamma}} - \sum^{M-\alpha-2}_{\gamma=-\alpha-1}{a_{\gamma} e^{i k \gamma}} \stackrel{(*)}{=} a_{M-\alpha-1} e^{i k (M-\alpha-1)} - a_{-\alpha-1} e^{- i k (\alpha+1)} \\
    &\Rightarrow k = 2 \pi \frac{m}{M} \text{ for } m = 0,\dots,M-1 \text{ and } a_{M-\alpha-1} = a_{-\alpha-1} \Rightarrow a_{\alpha\beta} = a_{(\beta-\alpha) \bmod M}
  \end{split}
\end{equation*}
where in step $(*)$ the inner summands cancel.
\end{proof}

\begin{proof}[Proof of proposition~\ref{p:eig-circ2}]
  It is straightforward to see that $\D^T$ is circulant with eigenvectors $\f_m$ associated with eigenvalues $\lambda^*_m$. Hence:
  \begin{equation*}
    \SS \f_m = \D^T\D \f_m = \lambda_m \D^T \f_m = \lambda_m \lambda^*_m \f_m = \abs{\lambda_m}^2 \f_m
  \end{equation*}
and likewise $\SS \f^*_m = \abs{\lambda_m}^2 \f^*_m$. So $\f_m$ and $\f^*_m$ are eigenvectors of \SS\ associated with the real eigenvalue $\nu_m = \abs{\lambda_m}^2 \ge 0$. Thus, $\v_m \bydef \frac{1}{2}(\f_m + \f^*_m) = \Re(\f_m)$, with $v_{mn} = \cos{\left( 2 \pi \frac{m}{M} n \right)}$, and $\w_m \bydef \frac{1}{2i}(\f_m - \f^*_m) = \Im(\f_m)$, with $w_{mn} = \sin{\left( 2 \pi \frac{m}{M} n \right)}$, are eigenvectors associated with $\nu_m$.

In terms of the first row of \SS, $(s_0,s_1,\dots,s_{M-1})$, the eigenvalues are (noting that $s_n = s_{M-n}$ because \SS\ is symmetric), for $m=0,\dots,M-1$:
\begin{equation*}
  2 \nu_m = \sum^{M-1}_{n=0}{(s_n + s_{M-n}) \omega^n_m} = \sum^{M-1}_{n=0}{s_n \omega^n_m} + \sum^1_{l=M}{s_l \omega^{M-l}_m} = \sum^{M-1}_{n=0}{s_n (\omega^n_m + \omega^{-n}_m)} = 2 \sum^{M-1}_{n=0}{s_n \cos{\left( 2 \pi \frac{m}{M} n \right)}} \qquad
\end{equation*}
where we have changed $l = M - n$ and $s_0 \equiv s_M$. The first row of \SS\ can be seen to be as follows in terms of that of \D:
\begin{equation}
  \label{e:s-from-d}
  s_{0m} = s_m = \sum^{M-1}_{l=0}{d_{l0} d_{lm}} = \sum^{M-1}_{l=0}{d_l d_{(l-m) \bmod M}} = \sum^{m-1}_{l=0}{d_l d_{l-m+M}} + \sum^{M-1}_{l=m}{d_l d_{l-m}}.
\end{equation}
\end{proof}

\begin{proof}[Proof of proposition~\ref{p:stencil-power}]
  The DFT $\hat{\varsigma}$ of the left-aligned stencil (i.e., padded with zeroes to the right) $\varsigma = (\varsigma_0,\varsigma_1,\dots,\varsigma_{M-1})$ is
  \begin{equation*}
    \hat{\varsigma}_k \bydef \sum^{M-1}_{m=0}{\varsigma_m e^{-i 2 \pi \frac{k}{M} m}} = \sum^{M-1}_{m=0}{\varsigma_m \omega^{*m}_k} \qquad k = 0,1,\dots,M-1.
  \end{equation*}
  Therefore $\hat{\varsigma}_k$ times an offset phase factor $e^{i 2 \pi \frac{k}{M} m'}$ (for some integer $m'$) is equal to $\lambda^*_k$ of the \D\ matrix and the power spectrum is $\abs{\hat{\varsigma}_k}^2 = \abs{\lambda_k}^2 = \nu_k$ of the matrix \SS. Note that the power spectrum is invariant to shifts $n \rightarrow m+m'$, which corresponds to the invariance of \SS\ to row permutations in \D.
\end{proof}

\begin{proof}[Proof of proposition~\ref{p:stencil-sqmodulus}]
  By Parseval's theorem we have $P \bydef \sum^{M-1}_{k=0}{\abs{\hat{\varsigma}_k}^2} = M \sum_m{\varsigma^2_m}$, and since the eigenvalues of \SS\ are $\nu_k = \abs{\hat{\varsigma}_k}^2$, we have $\trace{\SS} = \sum^{M-1}_{k=0}{\nu_k} = M \sum_m{\varsigma^2_m}$. The proposition can also be proven by noting that for $n=0,\dots,M-1$, $s_{nn} = s_{00} = \sum^{M-1}_{m=0}{d^2_{m0}} = \sum^{M-1}_{m=0}{\varsigma^2_m}$ and so $\trace{\SS} = \sum^{M-1}_{n=0}{s_{nn}} = M \sum_m{\varsigma^2_m}$.
\end{proof}

\begin{proof}[Proof of proposition~\ref{p:sawtooth1D}]
  The sawtooth frequency is $m = \frac{M}{2}$ and so $\omega_{\frac{M}{2}} = e^{i \pi} = -1$. We have that its associated eigenvalue in the \D\ matrix is $\lambda_{\frac{M}{2}} = \sum^{M-1}_{n=0}{d_n (-1)^n}$ and its power in the Fourier domain is $\abs{\lambda_{\smash{\frac{M}{2}}}}^2$. Since $(d_0,\dots)$ is the stencil padded with zeroes, $\lambda_{\frac{M}{2}}$ is zero when $\sum_{n \text{ even}}{\varsigma_n} = \sum_{n \text{ odd}}{\varsigma_n}$. Since the stencil is differential it must satisfy $\sum_n{\varsigma_n} = 0$, so $\sum_{n \text{ even}}{\varsigma_n} = \sum_{n \text{ odd}}{\varsigma_n} = 0$.

The same result is obtained in the net domain by forcing $\overleftarrow{\varsigma} \ast f$ to be identically zero, where $f = (1,-1,1,-1,\dots,1,-1)$ is the sawtooth wave.
\end{proof}

\begin{proof}[Proof of proposition~\ref{p:sawtooth2D}]
  Analogous to the 1D case. The 2D DFT of the stencil $\varsigma = (\varsigma_{mn})$ is
  \begin{equation*}
    \hat{\varsigma}_{kl} \bydef \sum^{M-1}_{m=0}{\sum^{N-1}_{n=0}{\varsigma_{mn} e^{-i 2 \pi \left( \frac{k}{M}m + \frac{l}{N}n \right)}}}
  \end{equation*}
which gives
\begin{equation*}
  \hat{\varsigma}_{\frac{M}{2},\frac{N}{2}} = \sum^{M-1}_{m=0}{\sum^{N-1}_{n=0}{\varsigma_{mn} e^{-i \pi (m + n)}}} = \sum^{M-1}_{m=0}{\sum^{N-1}_{n=0}{\varsigma_{mn} (-1)^{(m + n)}}}.
\end{equation*}
Since $\varsigma$ is differential, it also verifies $\sum^{M-1}_{m=0}{\sum^{N-1}_{n=0}{\varsigma_{mn}}} = 0$, hence $\hat{\varsigma}_{\frac{M}{2},\frac{N}{2}} = 0$ if and only if $\sum_{m+n \text{ even}}{\varsigma_{mn}} = \sum_{m+n \text{ odd}}{\varsigma_{mn}} = 0$.
\end{proof}

\begin{proof}[Proof of proposition~\ref{p:stencil-comp}]
  Obviously $\overleftarrow{\varsigma} \ast \overleftarrow{\varrho} = \overleftarrow{\varrho} \ast \overleftarrow{\varsigma}$ (note the respective circulant matrices commute too). Now, from eq.~\eqref{e:D:trunc} with $\alpha_p = 1$:
  \begin{equation*}
    \frac{(\overleftarrow{\varrho} \ast f)(t)}{h^q} = f^{(q)}(t) + \calO(h^{q'})
  \end{equation*}
where $\calO(h^{q'}) = f^{(q+q')}(\xi) \alpha_{q+q'} h^{q'}$ is also a function of $t$ since $\xi$ depends on $t$. Then:
\begin{equation*}
  \begin{split}
    \frac{\Big(\overleftarrow{\varsigma} \ast \frac{\left(\overleftarrow{\varrho} \ast f\right)}{h^q}\Big)(t)}{h^p} &= \frac{((\overleftarrow{\varsigma} \ast \overleftarrow{\varrho}) \ast f)(t)}{h^{p+q}} = \frac{d^p}{dt^p}\left(f^{(q)}(t) + \calO(h^{q'})\right) + \calO(h^{p'}) \\
    &= f^{(p+q)}(t) + \calO(h^{q'}) + \calO(h^{p'}) = f^{(p+q)}(t) + \calO(h^{\min(p',q')}).
  \end{split}
\end{equation*}
\end{proof}

\begin{proof}[Proof of proposition~\ref{p:stencil-comp2}]
  By definition of matrix associated with a stencil, $\D\y = \overleftarrow{\varsigma} \ast \y$ and $\E\y = \overleftarrow{\varrho} \ast \y$ for any $\y \in \bbR^M$. Thus $\D\E\y = \overleftarrow{\varsigma} \ast (\overleftarrow{\varrho} \ast \y) = (\overleftarrow{\varsigma} \ast \overleftarrow{\varrho}) \ast \y$.
\end{proof}

\begin{proof}[Proof of proposition~\ref{p:sawtooth-dom}]
  Call $\overleftarrow{\varpi} = \overleftarrow{\varrho} \ast \overleftarrow{\varsigma}$. Then $\overleftarrow{\varpi}_m = \sum_n{\overleftarrow{\varrho}_{\!\! n} \overleftarrow{\varsigma}_{\!\! m-n}}$ and
\begin{equation*}
  \sum_m{\overleftarrow{\varpi}_{\!\! m} (-1)^m} = \sum_n{\overleftarrow{\varrho}_{\!\! n} \sum_m{\overleftarrow{\varsigma}_{\!\! m-n} (-1)^m}} = \sum_n{\overleftarrow{\varrho}_{\!\! n} (-1)^{-n} \sum_m{\overleftarrow{\varsigma}_{\!\! m} (-1)^m}} = 0.
\end{equation*}
\end{proof}

\begin{proof}[Proof of proposition~\ref{p:fwd-Pascal}]
  By induction. For $p = 0$, necessarily $m = 0$ and so $\leftexp{(0)}{\varsigma_0} = 1$ (the delta function). Assume that the statement holds for $\leftexp{(p)}{\varsigma}$. Then:
  \begin{equation*}
    \leftexp{(p+1)}{\varsigma_m} = 
    \begin{cases}
      m = 0: & (-1)^p \\
      m = 1,\dots,p-1: & -\leftexp{(p)}{\varsigma_m} + \leftexp{(p)}{\varsigma_{m-1}} = - (-1)^{m+p} \binom{p}{m} + (-1)^{m-1+p} \binom{p}{m-1} = \\
      & (-1)^{m+p+1} \left( \binom{p}{m} + \binom{p}{m-1} \right) = (-1)^{m+p+1} \binom{p+1}{m} \\
      m = p: & 1.
    \end{cases}
  \end{equation*}
\end{proof}

\begin{proof}[Proof of proposition~\ref{p:fwd-sqmodulus}]
  $\sum^{M-1}_{m=0}{\leftexp{(p)}{\varsigma^2_m}} = \sum^p_{m=0}{\binom{p}{m}^2} = \binom{2p}{p}$ (GR~0.157:1). 
  Since $\trace{\smash{\leftexp{(p)}{\SS}}} = \sum^{M-1}_{n=0}{\nu_n}$, this also proves the formula $\sum^{M-1}_{n=0}{\left(2 \sin{\left( \pi \frac{n}{M} \right)} \right)^{2p}} = M \binom{2p}{p}$.
\end{proof}

\begin{proof}[Proof of proposition~\ref{p:cen-Pascal}]
  By induction. For $p = 0$, necessarily $m = 0$ and so $\leftexp{(0)}{\varsigma_0} = 1$ (the delta function). Assume that the statement holds for $\leftexp{(p)}{\varsigma}$. Then:
  \begin{equation*}
    \leftexp{(p+1)}{\varsigma_m} = 
    \begin{cases}
      m = 0: & - \frac{1}{2} \leftexp{(p)}{\varsigma_0} = \frac{1}{2^{p+1}} (-1)^{p+1} \\
      m = 1: & - \frac{1}{2} \leftexp{(p)}{\varsigma_1} = 0 \\
      m = 2,\dots,2p: & - \frac{1}{2} \leftexp{(p)}{\varsigma_m} + \frac{1}{2} \leftexp{(p)}{\varsigma_{m-2}} = 
      \begin{cases}
        m = 2n+1 \text{ odd:} & 0 \\
        m = 2n \text{ even:} & - \frac{1}{2} (-1)^{p+n} \binom{p}{n} \frac{1}{2^p} + \frac{1}{2} (-1)^{p+n-1} \binom{p}{n-1} \frac{1}{2^p} = \\
        & (-1)^{p+1+n} \binom{p+1}{n} \frac{1}{2^{p+1}}
      \end{cases}
\\
      m = 2p+1: & \phantom{-} \frac{1}{2} \leftexp{(p)}{\varsigma_{2p-1}} = 0 \\
      m = 2p+2: & \phantom{-} \frac{1}{2} \leftexp{(p)}{\varsigma_{2p}} = \frac{1}{2^{p+1}}.
    \end{cases}
  \end{equation*}
\end{proof}

\begin{proof}[Proof of proposition~\ref{p:sym-circ}]
  Since \SS\ is real circulant, $\nu_{M-k} = \nu^{*}_k$.

  $(\Rightarrow)$ Since $\nu_m = \nu_{M-m} = \nu^{*}_m$, the eigenvalues are real. Diagonalising $\SS = \frac{1}{M} \F \N \F^{*}$, then $\SS^T = \SS^H = (\SS^{*})^T = \frac{1}{M} \F \N \F^{*} = \SS$.

  $(\Leftarrow)$ Since \SS\ is symmetric, its eigenvalues are real, so $\nu^{*}_k = \nu_k$. Thus $\nu_k = \nu_{M-k}$.
\end{proof}

\begin{proof}[Proof of proposition~\ref{p:evenised-fwddiff-inf}]
  Consider the integral $I \bydef \int^{\pi}_0{f(x) \, dx}$ where $f$ is a continuous function in $[0,\pi]$. We can construct a Riemann sum that converges to the integral \citep{Apostol57a} 
by dividing the interval $[0,\pi]$ into $M$ bins of width $\Delta x = \frac{\pi}{M}$, with $x_k \bydef k \Delta x$:
  \begin{equation*}
    I = \lim_{M \rightarrow \infty}{\sum^{M-1}_{k=0}{f(x_k) \Delta x}} = \lim_{M \rightarrow \infty}{ \frac{\pi}{M} \sum^{M-1}_{k=0}{f(x_k)}}.
  \end{equation*}
  Thus, for $\leftexp{(p)}{d_m}$ as given by eq.~\eqref{e:evenised-fwddiff} we have
  \begin{equation*}
    \leftexp{(p)}{d^{\infty}_m} = \frac{2^p}{\pi} I(p,m) \qquad \text{with} \qquad I(p,m) \bydef \int^{\pi}_0{ \sin^p{x} \cos{2 m x} \, dx} \qquad m = 0,1,2,\dots,\infty.
  \end{equation*}
  \citet{GradshRyzhik94a} give%
  \footnote{\citet{Prudnik_86a} give a correct value for the first integral (PBM1~2.5.12:38--39) but an incorrect one for the second (PBM1~2.5.12:13--16,37).}:
  \begin{equation*}
    \begin{array}{ll}
      \text{GR~3.631:12:} & \int^{\pi}_0{ \sin^{2n}{x} \cos{2 m x} \, dx} =
      \begin{cases}
        (-1)^m \, 2^{-2n} \, \binom{2n}{n - m} \, \pi, & n \ge m \\
        0, & n < m
      \end{cases} \\[3ex]
      \text{GR~3.631:13:} & \int^{\pi}_0{ \sin^{2n+1}{x} \cos{2 m x} \, dx} =
      \begin{cases}
        \displaystyle
        \frac{(-1)^m \, 2^{n+1} \, n! \, (2n+1)!!}{(2n-2m+1)!! \, (2m+2n+1)!!}, & n \ge m-1 \\[2ex]
        \displaystyle
        \frac{(-1)^{n+1} \, 2^{n+1} \, n! \, (2n-2m-3)!! \, (2n+1)!!}{(2m+2n+1)!!}, & n < m-1.
      \end{cases}
    \end{array}
  \end{equation*}
  We can simplify GR~3.631:13 (using induction on $n$ for the case $n \ge m-1$) to give
  \begin{equation*}
    I(p,m) = \frac{(-1)^{n+1} \, 2^{n+1} \, n! \, (2n+1)!!}{\prod^n_{k=0}{( 4m^2 - (2k+1)^2 )}} \qquad m = 0,1,2,\dots
  \end{equation*}
  The statement now follows trivially.
\end{proof}

\begin{proof}[Proof of proposition~\ref{p:evenised-fwddiff}]
  By summing the following geometric series for $n = 0,\dots,M-1$:
  \begin{equation*}
    \sum^{M-1}_{m=0}{\left(e^{i \pi \frac{n}{M}}\right)^m} = \frac{1 - e^{i \pi n}}{1 - e^{i \pi \frac{n}{M}}} =
    \begin{cases}
      M, & n = 0 \\
      0, & n > 0 \text{ even} \\
      \frac{2}{1 - e^{i \pi \frac{n}{M}}}, & n \text{ odd}
    \end{cases}
  \end{equation*}
and taking its imaginary part we obtain
\begin{equation*}
  \sum^{M-1}_{m=0}{\sin{\left( \pi \frac{m}{M} n \right)}} =
  \begin{cases}
    0, & n \text{ even} \\
    \frac{\sin{\left( \pi \frac{n}{M} \right)}}{1 - \cos{\left( \pi \frac{n}{M} \right)}} = \frac{1 + \cos{\left( \pi \frac{n}{M} \right)}}{\sin{\left( \pi \frac{n}{M} \right)}} = \cot{\left( \pi \frac{n}{2M} \right)}, & n \text{ odd}.
  \end{cases}
\end{equation*}
The statement follows by applying this to
\begin{equation*}
  \sum^{M-1}_{k=0}{\sin{\left( \pi \frac{k}{M} \right)} \cos{\left( 2 \pi \frac{m}{M} k \right)}} = \sum^{M-1}_{k=0}{\frac{1}{2} \left( \sin{\left( \pi \frac{k}{M} (2m+1) \right)} - \sin{\left( \pi \frac{k}{M} (2m-1) \right)} \right)}
\end{equation*}
and simplifying.
\end{proof}

\begin{proof}[Proof of proposition~\ref{p:evenised-fwddiff-signs}]
  Trivial by induction on $n$, noting that
  \begin{equation*}
    \leftexp{(2n+1)}{d^{\infty}_m} = \frac{-K}{4m^2 - (2n+1)^2} \leftexp{(2n-1)}{d^{\infty}_m}
  \end{equation*}
  (for some $K > 0$ from proposition~\ref{p:evenised-fwddiff-inf}) and so $\leftexp{(2n+1)}{d^{\infty}_m}$ has the same sign as $\leftexp{(2n-1)}{d^{\infty}_m}$ for $m \le n$ and the opposite sign for $m > n$.
\end{proof}

\begin{proof}[Proof of proposition~\ref{p:evenised-cendiff-inf}]
  Analogously to the proof of proposition~\ref{p:evenised-fwddiff-inf}, for $\leftexp{(p)}{d_m}$ as given by eq.~\eqref{e:evenised-cendiff} we have
  \begin{equation*}
    \leftexp{(p)}{d^{\infty}_m} = \frac{1}{\pi} I(p,m) \qquad \text{with} \qquad I(p,m) \bydef \int^{\pi}_0{ \abs{\sin^p{2x}} \cos{2 m x} \, dx} \qquad m = 0,1,2,\dots,\infty.
  \end{equation*}
  For odd $m$, the integral vanishes because the integrand is an odd function of $x$ around $x = \frac{\pi}{2}$. For $m$ even, the integrand is even, and changing variables $x$ to $\frac{y}{2}$ the integral becomes $\int^{\pi}_0{ \sin^p{y} \cos{m y} \, dy}$ which can be calculated with GR~3.631:12--13 as before.
\end{proof}

\ifx\undefined\allcaps\def\allcaps#1{#1}\fi

\end{document}